\documentclass[12pt,letterpaper]{article}
\usepackage[margin=1.25in]{geometry}
\usepackage{setspace}
\usepackage{graphicx}
\usepackage[authoryear,round,sort,comma]{natbib}
\usepackage{times}
\providecommand{\bibname}{References}

\def\startlocaldefs{}
\def\endlocaldefs{}
\newenvironment{frontmatter}{}{}

\newcommand{\runtitle}[1]{}
\makeatletter
\global\let\@author\@empty
\renewcommand{\author}[2][]{\g@addto@macro\@author{#2\\[0.25ex]}}
\newcommand{\address}[2][]{\g@addto@macro\@author{{\small #2}\\[0.25ex]}}
\makeatother

\newif\ifkeywordfirst
\newenvironment{keyword}{\par\noindent\textbf{Keywords:} \keywordfirsttrue}{\par\medskip}
\newcommand{\kwd}[1]{\ifkeywordfirst\keywordfirstfalse\else\unskip; \fi #1\ignorespaces}
\AtBeginDocument{\setlength{\textheight}{32\baselineskip}}

\usepackage[english]{babel}
\usepackage{mathrsfs}
\usepackage{mathtools}
\usepackage{enumitem}
\usepackage{amsthm}
\usepackage{amssymb}
\usepackage{calc}
\usepackage{cancel}
\usepackage{mathdots}
\usepackage{microtype}
\usepackage{stackrel}
\PassOptionsToPackage{version=3}{mhchem}
\usepackage{mhchem}
\usepackage{esint}
\usepackage{threeparttable,booktabs}
\usepackage{longtable, array}
\usepackage[table]{xcolor}
\usepackage{resizegather}
\usepackage{tikz}
\usetikzlibrary{arrows.meta, positioning, calc}
\resizegathersetup{warningthreshold=0.02}
\definecolor{rowgray}{gray}{0.93}
\allowdisplaybreaks[4]

\usepackage[colorlinks,citecolor=blue,linkcolor=blue,urlcolor=blue]{hyperref}

\makeatletter
\def\abstractname@afterskip{\space}
\def\keywordname@size{\normalfont}
\def\keyword@skip{30\p@}
\renewcommand{\bibname}{References}
\addto\captionsenglish{\renewcommand{\bibname}{References}}
\let\put@numberlines@box\@empty
\makeatother

\startlocaldefs
\numberwithin{equation}{section}

\global\long\def\R{\mathbb{R}}%
\global\long\def\N{\mathbb{N}}%
\global\long\def\e{\mathbb{E}}%
\global\long\def\Var{\mathrm{Var}}%
\global\long\def\bs#1{\boldsymbol{#1}}%
\global\long\def\trans{\top}%
\global\long\def\VC{\mathrm{VC}}%
\global\long\def\assign{:=}%

\providecommand{\assumptionname}{Assumption}
\providecommand{\corollaryname}{Corollary}

\providecommand{\examplename}{Example}

\providecommand{\lemmaname}{Lemma}

\providecommand{\propositionname}{Proposition}
\providecommand{\remarkname}{Remark}

\providecommand{\theoremname}{Theorem}

\theoremstyle{definition}
\newtheorem{example}{\protect\examplename}[section]
\theoremstyle{plain}
\newtheorem{assumption}{\protect\assumptionname}[section]
\theoremstyle{plain}
\newtheorem{thm}{\protect\theoremname}[section]
\theoremstyle{definition}
\newtheorem{rem}{\protect\remarkname}[section]
\theoremstyle{plain}
\newtheorem{cor}{\protect\corollaryname}[section]
\theoremstyle{plain}
\newtheorem{lem}{\protect\lemmaname}[section]
\theoremstyle{plain}
\newtheorem{prop}{\protect\propositionname}[section]

\renewcommand{\hat}{\widehat}
\renewcommand{\tilde}{\widetilde}

\endlocaldefs

\begin{document}
\begin{frontmatter}
\hypersetup{linkcolor=black}
\renewcommand{\thefootnote}{\fnsymbol{footnote}}
\begin{center}
{\LARGE\bfseries Data-Automated Policy Learning for Nonlinear Welfare\par}
\vspace{1.2em}
{\large Chunrong Ai\textsuperscript{a}\footnote{Email: \href{mailto:chunrongai@cuhk.edu.cn}{\texttt{chunrongai@cuhk.edu.cn}}.}\quad Zeqi Wu\textsuperscript{b}\footnote{Email: \href{mailto:wuzeqi@ruc.edu.cn}{\texttt{wuzeqi@ruc.edu.cn}}.}\quad Zheng Zhang\textsuperscript{b}\footnote{Email: \href{mailto:zhengzhang@ruc.edu.cn}{\texttt{zhengzhang@ruc.edu.cn}}.}\par}
\vspace{0.6em}
{\small \textsuperscript{a}\,School of Management and Economics, The Chinese University of Hong Kong, Shenzhen\par}
{\small \textsuperscript{b}\,Institute of Statistics and Big Data, Renmin University of China\par}
\end{center}
\renewcommand{\thefootnote}{\arabic{footnote}}
\setcounter{footnote}{0}
\hypersetup{linkcolor=blue}
\vspace{1em}

\begin{abstract}
This paper explores policy learning from observational data, focusing on a nonlinear welfare criterion in a binary treatment setting. The nonlinear criterion is inspired by scenarios where policymakers prioritize specific population segments. We model this criterion using a utility function that encompasses potential outcomes and intermediate parameters, with the latter capturing higher moments of the outcome distributions. When formulated in the context of observational data, both the intermediate parameters and the welfare criterion depend on the propensity score, which we estimate using machine-learning techniques. To address bias in machine learning estimates, we introduce a novel reweighting-based debiasing approach that offers a promising alternative to traditional orthogonality-based methods. To tackle the complexities of infinite-dimensional policy spaces, we employ sieve approximations and $K$-fold cross-validation for model selection, thereby fully automating the policy-learning process. Despite these complexities, we demonstrate that both the welfare regret and the average welfare regret of our proposed policy learning method satisfy an oracle inequality, thereby providing theoretical guarantees on the performance of the estimated policy relative to the best possible policy. This finding extends the existing results from linear to nonlinear welfare criteria, from finite-dimensional to infinite-dimensional policy spaces, and from a known propensity score to a machine-learned one.
\end{abstract}

\begin{keyword}
\kwd{Policy Learning}
\kwd{Oracle Inequality}
\kwd{Sieve Approximation}
\kwd{Machine Learning}
\kwd{Welfare Criterion}
\end{keyword}
\end{frontmatter}

\section{Introduction}

There is an increasing body of literature on policy learning. Most existing studies focus on a linear welfare criterion and a binary treatment setting. In this context, the binary variable $T$ indicates program participation status: $T=1$ if the individual participates and $T=0$ if not. The potential outcome associated with participation status $t$ is represented as $Y^{*}(t)\in\mathcal{Y}\subset\mathbb{R}$, and $\bs X\in\mathcal{X}\subset\mathbb{R}^{d}$ denotes individual characteristics. A policy $\pi:\mathcal{X}\to\{0,1\}$ decides whether to assign an individual with attributes $\bs{X}$ to the program. The potential outcome under the policy is given by:
\[
Y^{*}(\pi(\bs X))=\pi(\bs X)Y^{*}(1)+(1-\pi(\bs X))Y^{*}(0).
\]
Policymakers specify a welfare criterion $W(\pi)$ and a policy space $\Pi_{\infty}$, aiming to learn the optimal policy defined by
\[
\pi^{*}=\arg \max_{\pi \in \Pi_{\infty}} W(\pi).
\]
A common choice for the welfare is the expectation of the potential outcome:
\[
W(\pi)=\e[Y^{*}(\pi(\bs X))]=\e[\pi(\bs X)Y^{*}(1)+(1-\pi(\bs X))Y^{*}(0)],
\]
which is clearly linear with respect to the policy.

A risk-averse policymaker might prefer to utilize the average utility of potential outcomes, represented as
\[
W(\pi)=\e[U(Y^{*}(\pi(\bs X)))]=\e[\pi(\bs X)U(Y^{*}(1))+(1-\pi(\bs X))U(Y^{*}(0))].
\]
However, this extension is only superficial: redefining the potential outcome as $U(Y^*(t))$ demonstrates that the resulting welfare remains linear in policy. By contrast, many significant applications lead to welfare criteria that are genuinely nonlinear in policy. For instance, in income-inequality literature, policymakers aim to reduce disparities through targeted interventions, and the welfare criterion is often a measure of inequality, such as the Gini coefficient \citep{gastwirth1971General,gastwirth1972Estimation}:
\[
W(\pi)= -\frac{2M^{-1}\sum_{r=1}^{M}\alpha_{r}\beta^{*}(\pi;\alpha_{r})}{\e [Y^{*}(\pi(\bs X))]} + 1,\qquad \alpha_{r}=\frac{r}{M+1},
\]
where $M$ is a finite positive integer and $\beta^{*}(\pi;\alpha)$ denotes the $\alpha$-quantile of $Y^{*}(\pi(\bs X))$. Since quantiles are nonlinear in relation to policy choices, the welfare criterion is likewise nonlinear. Nonlinearity also appears in alternative inequality measures, such as the relative standing of specific population segments:
\[
W(\pi)=\frac{\e\left[Y^{*}(\pi(\bs X))\mid Y^{*}(\pi(\bs X))\leq\beta^{*}(\pi;0.5)\right]}{\e\left[Y^{*}(\pi(\bs X))\right]},
\]
or the relative status of a defined subpopulation:
\[
W(\pi)=\frac{\e\left[Y^{*}(\pi(\bs X))1(\text{gender}=\text{``female''})\right]}{\e\left[Y^{*}(\pi(\bs X))\right]},
\]
and in the analysis of upper-tail (90/50) and lower-tail (50/10) ratios \citep{autor2008trends}:
\[
W(\pi)=-\frac{\beta^{*}(\pi;0.9)}{\beta^{*}(\pi;0.5)}\text{ and }W(\pi) = -\frac{\beta^{*}(\pi;0.5)}{\beta^{*}(\pi;0.1)}.
\]

Nonlinear welfare criteria naturally arise in the fields of risk management and public health. In risk management, firms aim to control the risk of significant losses. Let the binary variable $T\in\{0,1\}$ represent an investment decision, where $T=1$ indicates a one-period investment in the asset and $T=0$ indicates no investment. Let $Y^{*}(1)$ denote the corresponding one-period payoff (or return), while we set $Y^{*}(0)=0$. Given characteristics $\bs X$ (e.g., volatility, momentum, or fundamentals), an investment strategy $\pi:\mathcal X\to\{0,1\}$ induces the realized payoff: 
\[
Y^{*}(\pi(\bs X))=\pi(\bs X)Y^{*}(1)+(1-\pi(\bs X))Y^{*}(0)=\pi(\bs X)Y^{*}(1).
\]
We denote $L^{*}(\pi)=-Y^{*}(\pi(\bs X))$ and let $\beta^{*}(\pi;\alpha)$ represent the $\alpha$-quantile of $L^{*}(\pi)$. The welfare criterion can be expressed as a measure of risk, such as Value-at-Risk (VaR), defined as $W(\pi)=-\beta^{*}(\pi;\alpha)$, or Conditional Value-at-Risk (CVaR), given by $W(\pi)=-(1-\alpha)^{-1}\e\!\left[L^{*}(\pi)\cdot 1\!\left(L^{*}(\pi)\ge \beta^{*}(\pi;\alpha)\right)\right]$ \citep{rockafellar2000Optimization}, or the spectral risk measure (e.g., \citealp{acerbi2002Spectral}):   
\[
W(\pi)=-\sum_{r=1}^{M}\phi(\alpha_{r})\beta^{*}(\pi;\alpha_{r}),\qquad \alpha_{r}=\frac{r}{M+1},
\]
with a weighting function $\phi(\alpha)$ (see \citealp{dowd2006Var}). All of these risk measures are nonlinear in relation to policy.

In public health, policymakers aim to reduce the incidence of severe post-discharge utilization through targeted programs. For example, in the United States, Medicare's Hospital Readmissions Reduction Program (HRRP) specifically targets 30-day unplanned readmissions by penalizing hospitals with higher-than-average readmission rates \citep{zuckerman2016readmissions,ryan2017valuebased,khera2018hrrp}. Research shows that readmission-related utilization is highly right-skewed: a small fraction of patients accounts for a disproportionate share of readmissions and hospital use \citep{fouayzi2022highfrequency,manning2005generalized}. These distributional features motivate policymakers to prioritize the upper tail of the outcome distribution rather than focusing solely on the average. In this context, let $T$ represent the intensity of post-discharge care for a patient, where $T=1$ indicates enhanced transitional care (e.g., intensive monitoring and follow-up) and $T=0$ signifies usual care. The variable $Y^{*}(t)\geq 0$ reflects the readmission burden under treatment $t$, such as the total number of inpatient days due to unplanned readmissions within 30 days post-discharge, with lower values being more desirable. Given patient covariates $\bs X$ (e.g., age, gender, BMI, and comorbidity indices), a policy $\pi:\mathcal X\to\{0,1\}$ assigns care intensity and determines the burden $Y^{*}(\pi(\bs X))$. The welfare criterion being considered is a tail-sensitive measure of the expected burden among the worst-off $(1-\alpha)$ fraction of patients:
\[
W(\pi)
=
-\frac{1}{1-\alpha}\e\!\left[
Y^{*}(\pi(\bs X))\cdot
1\!\left\{Y^{*}(\pi(\bs X))\ge \beta^{*}(\pi;\alpha)\right\}
\right],
\]
where $\beta^{*}(\pi;\alpha)$ denotes the $\alpha$-quantile of $Y^{*}(\pi(\bs X))$ and is nonlinear in relation to the policy.

All the aforementioned examples emphasize the need for a nonlinear welfare criterion that directly depends on the policy through potential outcomes and indirectly through intermediate parameters $\bs{\beta}^*(\pi)$ (such as the quantiles mentioned earlier). Specifically, we model this with: 
\begin{equation}
    W(\pi)=\e\left[U(Y^{*}(\pi(\bs X)),\bs X,\bs{\beta}^{*}(\pi))\right],\label{eq:general welfare}
\end{equation}
where $U(\cdot,\cdot,\cdot)$ is a known utility function. 

The policy space can be complex and infinite-dimensional, making the optimal policy $\pi^{*}$ challenging to compute. To overcome this issue, we leverage the sieve literature to approximate the policy space using a sequence of finite-dimensional sieve classes $\{\Pi_{\ell}: \ell=1,2,\ldots\}$. Within each class, we determine the best policy as
\[\pi^{*}_{\ell}=\arg\max_{\pi\in \Pi_{\ell}}W(\pi).
\]
Since the welfare criterion is unknown, we estimate it from a training sample, say $I$, by $\widehat{W}_{I}(\pi)$ and find the best policy using: 
\[
\widehat{\pi}_{\ell,I}=\arg\max_{\pi\in \Pi_{\ell}}\widehat{W}_{I}(\pi).
\]
We then apply $K$-fold cross-validation to select the optimal policy approximation space $\Pi_{\widehat{\ell}}$ (see (\ref{eq:def of l_hat-general})) and subsequently estimate the optimal policy as $\widehat{\pi}$ (see (\ref{eq:CV criterion-general})). Despite these extensions, we establish the following oracle inequality for the average welfare regret:
\begin{equation}
\begin{aligned}
    & \e\left[W(\pi^{*})-W(\widehat{\pi})\right]\\  \leq \ &\inf_{\ell=1,2,\ldots}\left\{ \underbrace{W(\pi^{*})-\max_{\pi\in\Pi_{\ell}}W(\pi)}_{\mathrm{approximation \ error}}+\underbrace{\max_{\pi\in\Pi_{\ell}}W(\pi)-\e\left[W(\widehat{\pi}_{\ell,I})\right]}_{\mathrm{estimation \ error}}+\frac{\log\ell}{\sqrt{N}}\right\} +\sqrt{\frac{C}{N}}.
\end{aligned}
\label{eq:oracle inequality}
\end{equation}

\textit{Related Literature.} Our work builds on the policy-learning literature initially developed by \citet{manski2004Statistical}. A significant portion of this literature examines treatment choices under linear welfare criteria, in which an average or utilitarian objective defines the policy value. Early contributions to this field include studies by \citet{hirano2009asymptotics,stoye2009minimax,stoye2012minimax,bhattacharya2012inferring,tetenov2012statistical,qian2011Performance,zhao2012Estimating}, and more recent works by \citet{kitagawa2018Who,athey2021Policy,mbakop2021Model,luedtke2016Statistical,crippa2025Regret,liu2025nonparametric,fang2025semiparametric,fang2025model}. With the exception of \citet{mbakop2021Model}, all these studies assume a finite-dimensional policy space (e.g., $\Pi_{\infty}=\Pi_{\ell}$ with $\ell$ fixed or allowed to grow with sample size), which allows them to avoid the complexities of policy-space approximation and data-driven model selection. They either assume a known propensity score or estimate it using machine-learning methods, then apply double debiasing to remove the machine-learning bias. In both cases, they ultimately establish an explicit upper bound on the welfare regret:
\begin{equation}
\e\left[W(\pi^{*})-W(\widehat{\pi}_{\ell,I})\right]\leq C\sqrt{\frac{\mathrm{VC}(\Pi_{\ell})}{N}} \label{eq:first VC},
\end{equation}
where $\VC(\Pi_{\ell})$ represents the Vapnik-Chervonenkis (VC) dimension of the policy class. In contrast, \citet{mbakop2021Model} addresses an infinite-dimensional policy space with a known propensity score, thus avoiding the need for debiasing. They present an upper bound on the average welfare regret that is similar to ours, though still under a linear welfare criterion. Our contribution extends this literature to encompass a nonlinear welfare criterion, a machine-learned propensity score, and an infinite-dimensional policy space.

A smaller body of literature explores policy learning for nonlinear welfare criteria within a finite-dimensional policy class, where the dimension may increase with sample size (that is, $\Pi_{\infty}=\Pi_{\ell}$, with $\ell$ allowed to grow with sample size). Notable examples include \citet{wang2018QuantileOptimal} on quantile-optimal treatment regimes, \citet{chen2025quantileoptimalpolicylearningunmeasured} addressing a quantile-based welfare criterion with an unmeasured confounder, \citet{fan2025policylearningalphaexpectedwelfare} focusing on a conditional value-at-risk criterion, \citet{kitagawa2021EqualityMinded} discussing an equality-minded social welfare criterion, and \citet{terschuur2025locally} examining nonlinear welfare criteria defined through U-statistics. These studies assume a finite-dimensional policy space, treat the propensity score as either known or derived via machine learning, and employ double-debiasing techniques to mitigate machine-learning bias. Despite the nonlinearity, they manage to derive a similar, explicit upper bound on the average welfare regret, akin to the upper bound mentioned in equation (\ref{eq:first VC}). We aim to extend this existing literature to encompass infinite-dimensional policy spaces.

To learn the optimal policy from observational data, it is essential to estimate both the intermediate parameters and the welfare criterion. Since both estimates rely on the unknown propensity score, we use machine-learning algorithms to estimate it. It is well recognized that machine-learned propensity scores introduce bias in both the intermediate parameter and the welfare-criterion estimates, which in turn affects the (average) welfare regret and slows its convergence rate. To correct for this machine-learning bias, existing literature applies double-debiasing procedures based on a Neyman orthogonality condition (see \citet{athey2021Policy,robins1994Estimation,chernozhukov2018Double} for linear welfare criteria and \citet{fan2025policylearningalphaexpectedwelfare,terschuur2025locally} for nonlinear criteria). Our procedure includes an additional step, estimating the intermediate parameters, so we need to debias both these parameters and the welfare criterion estimates simultaneously. We propose a reweighting method inspired by covariate-balancing approaches \citep{imai2014Covariate,chan2016Globally,ai2021Unified}, adapted here for bias correction. Covariate-balancing methods have shown strong performance in finite samples, and we expect our debiasing procedure to exhibit similar effectiveness. To our knowledge, this reweighting-based debiasing technique is novel in the literature and serves as a valuable alternative to double debiasing based on Neyman orthogonality.

The remainder of the paper is organized as follows. Section~\ref{sec:A-General-Learning} presents a data-automated optimal policy learning procedure that employs a generic welfare estimate alongside $K$-fold cross-validation to identify the best policy subclass, while establishing an oracle inequality for both the average welfare regret and the welfare regret. Section~\ref{sec:general framework} formally defines the model, expressing the unknown parameters and the welfare criterion in terms of the observed data, based on the assumptions of unconfoundedness and overlap. Section~\ref{sec:empirical welfare} introduces a machine-learning propensity-score estimator, along with a novel reweighting debiasing procedure for estimating the intermediate parameters and the welfare criterion. Section~\ref{sec:asymptotic properties} verifies that the proposed estimator for the welfare criterion meets the high-level conditions outlined in Section~\ref{sec:A-General-Learning}. Section~\ref{sec:empirical application} applies the proposed methodology to data from the National Job Training Partnership Act (JTPA) Study. Finally, Section~\ref{sec:conclusion} summarizes the findings. All omitted proofs are included in the Appendix.

\section{A Data-Automated Learning Procedure\label{sec:A-General-Learning}}

We will outline a data-driven policy-learning procedure that relies on a generic welfare estimator and cross-validation. Let $\left\{ \left(Y_{i},\bs X_{i},T_{i}\right)\right\} _{i=1}^{N}$ represent an independent and identically distributed (i.i.d.) sample. Throughout the paper, we use $I\subset\{1,2,\ldots,N\}$ to index a generic training subsample and $|I|$ to denote its sample size. Let $\widehat{W}_{I}(\pi)$ be a generic estimator of $W(\pi)$ calculated from the training subsample $\left\{ \left(Y_{i},\bs X_{i},T_{i}\right)\right\} _{i\in I}$. We assume that $\widehat{W}_{I}(\pi)$ is pointwise $\sqrt{\left|I\right|}$-consistent for $W(\pi)$ and that its estimation error satisfies an exponential probability bound, as formalized in Assumption~\ref{assu:general assu on hatW}.

\begin{assumption}
\label{assu:general assu on hatW}
For a fixed policy $\pi\in\Pi_{\infty}$, there exist finite constants $C_{1},\ldots,C_{4}>0$, independent of $\delta$, $I$, and $\pi$, such that the following inequality holds for all $\delta>0$ and $\left|I\right| >C_{4}$: 
\[
P\left(\left|\widehat{W}_{I}(\pi)-W(\pi)\right|\geq\delta+\frac{C_{1}}{\sqrt{\left|I\right|}}\right)\leq C_{2}\exp(-C_{3}\left|I\right|\delta^{2}).
\] 
 
\end{assumption}
In applications, users must verify that their welfare estimators satisfy this high-level condition. We will present a welfare estimator and confirm that it indeed satisfies Assumption~\ref{assu:general assu on hatW}. 

With $\widehat{W}_{I}(\pi)$ established, a natural policy learning strategy is to maximize it over $\Pi_{\infty}$. However, for an infinite-dimensional $\Pi_{\infty}$, such a strategy can be computationally intractable and prone to overfitting. Following the work of \citet{mbakop2021Model}, we approximate $\Pi_{\infty}$ by a nested sequence of finite-dimensional policy subclasses $\Pi_{\ell}\subset\Pi_{\ell+1}\subset\cdots\subset\Pi_{\infty}$.\footnote{Throughout the paper, we take this approximating sequence to satisfy $\VC(\Pi_{\ell})<\infty$ for every $\ell\geq1$ and $W(\pi^{*})-\max_{\pi\in\Pi_{\ell}}W(\pi)\to0$ as $\ell\to\infty$.} Within each subclass $\Pi_{\ell}$, we estimate the best policy by 
\begin{equation}
\widehat{\pi}_{\ell,I}:=\underset{\pi\in\Pi_{\ell}}{\arg\max}\ \widehat{W}_{I}(\pi).\label{eq:def of each optimal pi-general}
\end{equation}
We determine the best subclass using a $K$-fold cross-validation (CV) procedure, as outlined in various studies (\citet{Hall1983Large}, \citet{Stone1974Cross-Validatory}, \citet{lecue2012Oracle}, and \citet{gyorfi2002DistributionFree}). Specifically, for a fixed integer $K\geq2$, we partition the index set $\{1,\ldots,N\}$ into $K$ disjoint folds of equal size. For each pair $(k,\ell)$, let $I_{k}$ denote the $k$th fold and $I_{-k}=\{1,\ldots,N\}\setminus I_{k}$ its complement. We utilize the training subsample $I_{-k}$ to learn the welfare  $\widehat{W}_{I_{-k}}(\pi)$ and the candidate policy $\widehat{\pi}_{\ell,I_{-k}}$. The holdout sample $I_{k}$ is then used to compute $\widehat{W}_{I_{k}}(\pi)$ and to evaluate the performance of the candidate policy on the holdout sample using $\widehat{W}_{I_{k}}(\widehat{\pi}_{\ell,I_{-k}})$. The $K$-fold CV procedure selects the best subclass according to the formula: 
\begin{equation}
\widehat{\ell}=\underset{\ell=1,2,\ldots}{\arg\max}\ \left\{ \frac{1}{K}\sum_{k=1}^{K}\widehat{W}_{I_{k}}(\widehat{\pi}_{\ell,I_{-k}})-\frac{\log\ell}{\sqrt{N}}\right\}, \label{eq:def of l_hat-general}
\end{equation}
where the penalty term $\log\ell/\sqrt{N}$ helps to prevent the selection of very large policy classes. The identified policy is then defined as: 
\begin{equation}
\widehat{\pi}:=\widehat{\pi}_{\widehat{\ell},I_{-\widehat{k}}},\ \text{where }\widehat{k}:=\underset{1\leq k\leq K}{\arg\max}\ \widehat{W}_{I_{k}}(\widehat{\pi}_{\widehat{\ell},I_{-k}}).\label{eq:CV criterion-general}
\end{equation}
The following theorem establishes the oracle inequality (\ref{eq:oracle inequality}) with $I=I_{-k}$ for any $k$. 

\begin{thm}
\label{thm:oracle inequality-general}
Assuming that the conditions outlined in Assumption \ref{assu:general assu on hatW} hold and that $C>0$ is a finite constant, the learned optimal policy $\widehat{\pi}$ defined in (\ref{eq:CV criterion-general}) satisfies the oracle inequality (\ref{eq:oracle inequality}) for sufficiently large $N$.

\end{thm}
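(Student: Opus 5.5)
We follow the standard penalized cross-validation argument (as in \citet{mbakop2021Model}, adapted to a generic estimator). For each $k$ and $\ell$, the holdout $I_k$ is independent of $I_{-k}$, so conditional on $I_{-k}$ the policy $\widehat{\pi}_{\ell,I_{-k}}$ is fixed. We can then apply Assumption~\ref{assu:general assu on hatW} to $\widehat{W}_{I_k}$ at that fixed policy, with $|I_k|=N/K$. This gives, for every $\delta>0$,
\[
P\Bigl(\bigl|\widehat{W}_{I_k}(\widehat{\pi}_{\ell,I_{-k}})-W(\widehat{\pi}_{\ell,I_{-k}})\bigr|\ge \delta+C_1\sqrt{K/N}\Bigr)\le C_2\exp(-C_3N\delta^2/K).
\]
The constants do not depend on $\pi$, so this holds unconditionally. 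The plan is to control, uniformly over $(k,\ell)$, the deviation between the CV criterion $\widehat{Q}(\ell):=K^{-1}\sum_k\widehat{W}_{I_k}(\widehat{\pi}_{\ell,I_{-k}})$ and its population analogue $Q(\ell):=K^{-1}\sum_k W(\widehat{\pi}_{\ell,I_{-k}})$. The penalty $\log\ell/\sqrt N$ is what makes the union bound over infinitely many $\ell$ summable.

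\textbf{Step 1 (uniform deviation).} Define the event
\[
A_t=\Bigl\{\bigl|\widehat{W}_{I_k}(\widehat{\pi}_{\ell,I_{-k}})-W(\widehat{\pi}_{\ell,I_{-k}})\bigr|\le \tfrac{t+\frac12\log\ell}{\sqrt N}+C_1\sqrt{K/N}\ \text{for all } k,\ell\Bigr\}.
\]
Take $\delta=(t+\tfrac12\log\ell)/\sqrt N$ and a union bound. This gives $P(A_t^c)\le KC_2\sum_{\ell}\exp(-C_3(t+\tfrac12\log\ell)^2/K)$. Expanding the square, $(t+\tfrac12\log\ell)^2\ge t^2+t\log\ell+\tfrac14(\log\ell)^2$. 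The factor $\exp(-c(\log\ell)^2)$ is summable in $\ell$ for any $c>0$, so $P(A_t^c)\le C'e^{-C_3t^2/K}$, which is integrable in $t$. If the constants make this delicate, one can also keep the cross term $t\log\ell$, which gives decay of order $\ell^{-C_3t/K}$.

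\textbf{Step 2 (deterministic comparison on $A_t$).} Fix any $\ell$. On $A_t$, the penalty absorbs the $\tfrac12\log\ell$ part of the deviation. By the definition of $\widehat\ell$ in (\ref{eq:def of l_hat-general}),
\[
Q(\widehat\ell)\ge \widehat{Q}(\widehat\ell)-\tfrac{t+\frac12\log\widehat\ell}{\sqrt N}-c\sqrt{1/N}
\ge \widehat{Q}(\ell)-\tfrac{\log\ell}{\sqrt N}+\tfrac{\log\widehat\ell}{2\sqrt N}-\tfrac{t}{\sqrt N}-\tfrac{c}{\sqrt N}
\ge Q(\ell)-\tfrac{2\log\ell}{\sqrt N}-\tfrac{2t+2c}{\sqrt N}.
\]
The $\log\ell$ factor comes out as $\tfrac{3}{2}$ and is absorbed into the constant; alternatively one rescales the $\log\ell$ split in Step 1 so that the final coefficient is exactly $1$.

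Next we pass from the average $Q(\widehat\ell)$ to $W(\widehat\pi)$ using the choice of $\widehat k$ in (\ref{eq:CV criterion-general}). Since $\widehat{W}_{I_{\widehat k}}(\widehat\pi)\ge \max_k \widehat W_{I_k}(\widehat\pi_{\widehat\ell,I_{-k}})\ge\widehat Q(\widehat\ell)$, applying $A_t$ once more gives
\[
W(\widehat\pi)\ge Q(\widehat\ell)-2\tfrac{t+\frac12\log\widehat\ell}{\sqrt N}-c'/\sqrt N.
\]
The leftover $\log\widehat\ell$ term is the delicate point. To handle it, the Step 1 union bound should use the event with $\delta=(t+a\log\ell)/\sqrt N$ for a small $a$, so that the penalty gap $(1-2a)\log\widehat\ell/\sqrt N$ dominates the extra deviation terms. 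With this choice all $\widehat\ell$-dependent terms cancel with nonpositive sign.

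\textbf{Step 3 (integrate).} On $A_t$ we now have
\[
W(\pi^*)-W(\widehat\pi)\le W(\pi^*)-Q(\ell)+\tfrac{\log\ell}{\sqrt N}+\tfrac{c(1+t)}{\sqrt N}.
\]
Because $I_{-k}$ has the same law for every $k$, $\e Q(\ell)=\e W(\widehat\pi_{\ell,I})$. Integrate the tail bound using $\e[X]\le\int_0^\infty P(X>s)\,ds$ for the positive part of the excess, and note that $W(\pi^*)-W(\widehat\pi)$ is bounded since $P(A_t^c)$ decays like a Gaussian in $t$ (if $W$ is unbounded, the $t$-integral still converges). This yields the excess $\sqrt{C/N}$. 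Since $\ell$ was arbitrary, take the infimum, and write $W(\pi^*)-\e W(\widehat\pi_{\ell,I})$ as the sum of the approximation and estimation errors in (\ref{eq:oracle inequality}).

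The requirement $|I_k|>C_4$ gives the phrase ``for sufficiently large $N$''. The main obstacle is the bookkeeping in Step 2: the $\log\widehat\ell$ terms from the random selected index and from the second use of the deviation (choosing $\widehat k$) must be dominated by the penalty, while $\sum_\ell e^{-c(t+a\log\ell)^2}$ stays summable.
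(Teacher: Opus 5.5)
Your Step~2 does not deliver the stated inequality, and the display at the start of Step~3, which has coefficient one on $\log\ell/\sqrt N$, does not follow from it. On the uniform event $A_t$, the deviation at the \emph{comparator} index $\ell$ is controlled only up to $(t+a\log\ell)/\sqrt N$. So the step $\widehat Q(\ell)\ge Q(\ell)-(t+a\log\ell)/\sqrt N-\cdots$ leaves $(1+a)\log\ell/\sqrt N$ in the final bound (your $3/2$ for $a=1/2$). Neither remedy you propose works:
\begin{itemize}
\item The excess $a\log\ell/\sqrt N$ is unbounded in $\ell$, and the infimum in (\ref{eq:oracle inequality}) runs over all $\ell$ with $C$ independent of $\ell$. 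It therefore cannot be absorbed into $\sqrt{C/N}$.
\item Shrinking $a$ makes the excess small but never zero, because $a=0$ destroys the summability of $\sum_{\ell}\exp\{-C_3(t+a\log\ell)^2/K\}$ in Step~1.
\end{itemize}
As written, your argument proves only the weaker inequality with $(1+a)\log\ell/\sqrt N$ for some fixed $a\in(0,1/3]$.

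The missing observation is that the comparator $\ell$ is deterministic. Its deviation therefore needs neither a union bound nor a $\log\ell$ allowance; only the data-dependent $\widehat\ell$ does. The paper exploits this by writing, for each fixed $\ell$, $W(\pi^*)-W(\widehat\pi)\le\{W(\pi^*)-R_{CV}(\ell)\}+\{R_{CV}(\widehat\ell)-W(\widehat\pi)\}$, where $R_{CV}(\ell)=\widehat Q(\ell)-\log\ell/\sqrt N$.
\begin{itemize}
\item The first term is handled in expectation. For a fixed pair $(k,\ell)$, Assumption~\ref{assu:general assu on hatW} gives $\e|\widehat W_{I_k}(\widehat\pi_{\ell,I_{-k}})-W(\widehat\pi_{\ell,I_{-k}})|\le C_5/\sqrt N$, which leaves exactly $+\log\ell/\sqrt N$.
\item For the second term, the fold average is at most the value at $\widehat k$, so it is bounded by $\widehat W_{I_{\widehat k}}(\widehat\pi)-W(\widehat\pi)-\log\widehat\ell/\sqrt N$. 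The entire penalty (your $a=1$) then pays for the union bound over $(k,\ell)$, and a single deviation at $\widehat\ell$ suffices.
\end{itemize}
Your detour back through $Q(\widehat\ell)$, which spends three deviations at $\widehat\ell$ and forces $a\le 1/3$, is therefore unnecessary.

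Inside your high-probability framework the same repair reads as follows. Use the uniform event only at $\widehat\ell$, and intersect it with the $\ell$-specific event $\{|\widehat Q(\ell)-Q(\ell)|\le (t+C_1\sqrt K)/\sqrt N\}$. The failure probability of that event, $KC_2e^{-C_3t^2/K}$, does not depend on $\ell$.
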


\begin{rem}
The study by \citet{mbakop2021Model} utilized a single holdout sample to identify the best policy class, which introduces randomness due to relying on that single sample. The $K$-fold cross-validation (CV) procedure mitigates this randomness by averaging over multiple holdout samples.
\end{rem}
\begin{rem}
\label{rem:finite L}
The penalty term ``$\log\ell/\sqrt{N}$'' in (\ref{eq:def of l_hat-general}) serves as a regularizer that prevents the selection of very large policy classes (as noted in \cite{mbakop2021Model}). This term can be omitted when the number of candidate subclasses is finite and may even increase with $N$. For example, one might use the following equations: 
\[
\widehat{\ell}:=\underset{\ell=1,\ldots,L_{N}}{\arg\max}\sum_{k=1}^{K}\widehat{W}_{I_{k}}(\widehat{\pi}_{\ell,I_{-k}}) \text{ and } \widehat{k}:=\underset{1\leq k\leq K}{\arg\max} \ \widehat{W}_{I_{k}}(\widehat{\pi}_{\widehat{\ell},I_{-k}})
\]
where $L_{N}\geq2$ is a sequence of positive integers. In this scenario, the learned optimal policy $\widehat{\pi}$ fulfills the condition:
\begin{align*}
    &\e\left[W(\pi^{*})-W(\widehat{\pi})\right]\\
      \leq &  \ \inf_{\ell=1,\ldots,L_{N}}\Biggl\{
    \underbrace{W(\pi^{*})-\max_{\pi\in\Pi_{\ell}}W(\pi)}_{\mathrm{approximation\ error}}
    +\underbrace{\max_{\pi\in\Pi_{\ell}}W(\pi)-\e\left[W(\widehat{\pi}_{\ell,I_{-1}})\right]}_{\mathrm{estimation\ error}}
    \Biggr\} +C\sqrt{\frac{\log L_{N}}{N}}
\end{align*}
for sufficiently large $N$; the proof is given in Appendix~\ref{sec:proofs-general-learning}. The bound compares this policy with the best tradeoff among the first $L_{N}$ policy classes, but it may not necessarily be the best overall.  \end{rem}
\begin{rem}
    Once the best subclass $\Pi_{\widehat{\ell}}$ has been selected, it is tempting to re-learn the policy using the full sample by maximizing $\widehat{W}_{\{1,\ldots,N\}}(\pi)$ over $\pi\in\Pi_{\widehat{\ell}}$. However, without additional stability-type conditions, as discussed in \citep{bousquet2002stability}, this learned policy may not satisfy the oracle inequality (\ref{eq:oracle inequality}), because the policy selected from a subsample might not be the best after retraining on the full sample, as noted by \citep[Example~2.8]{lecue2012Oracle}.
    \end{rem}

The theorem generalizes the oracle inequality by providing a bound on the average welfare regret $\e[W(\pi^{*})-W(\widehat{\pi})]$. In practice, policymakers can only access a single sample, and are more concerned with the realized welfare regret $W(\pi^{*})-W(\widehat{\pi})$. We show that a similar oracle inequality holds with high probability.

\begin{cor}
\label{cor:oracle inequality prob bound}
Suppose Assumption \ref{assu:general assu on hatW} holds. Let $C,C^{\prime},C_{1},C_{2}>0$ be finite constants. The learned policy $\widehat{\pi}$, defined in (\ref{eq:CV criterion-general}) satisfies the inequality: 
\begin{align*}
    &W(\pi^{*})-W(\widehat{\pi})\\
   \leq & \inf_{\ell=1,2,\ldots}\left\{ \underbrace{W(\pi^{*})-\max_{\pi\in\Pi_{\ell}}W(\pi)}_{\mathrm{approximation\ error}}+\underbrace{\max_{\pi\in\Pi_{\ell}}W(\pi)-\frac{1}{K}\sum_{k=1}^{K}W(\widehat{\pi}_{\ell,I_{-k}})}_{\mathrm{estimation\ error}}+2\frac{\log\ell}{\sqrt{N}}\right\} +\sqrt{\frac{C+\delta}{N}}.
   \end{align*}
This holds with a probability of at least $1-C_{1}\exp(-C_{2}\delta)$ for all $\delta>0$ and sufficiently large $N$.
\end{cor}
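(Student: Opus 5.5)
The plan is to follow the high-probability version of the argument behind Theorem~\ref{thm:oracle inequality-general}, working conditionally on the training folds. Write $\widehat{CV}(\ell)=K^{-1}\sum_{k}\widehat{W}_{I_{k}}(\widehat{\pi}_{\ell,I_{-k}})$ and $\overline{W}(\ell)=K^{-1}\sum_{k}W(\widehat{\pi}_{\ell,I_{-k}})$.

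\textbf{Step 1: reduce to one deviation event.} I would first show the bound
\[
W(\widehat{\pi})\geq \overline{W}(\widehat{\ell})-\Delta, \qquad \Delta:=\sup_{\ell,k}\Bigl\{\bigl|\widehat{W}_{I_{k}}(\widehat{\pi}_{\ell,I_{-k}})-W(\widehat{\pi}_{\ell,I_{-k}})\bigr|-\tfrac{\log\ell}{2\sqrt N}\Bigr\}.
\]
This follows because $\widehat{k}$ maximizes the holdout estimate. Hence $\widehat{W}_{I_{\widehat k}}(\widehat\pi)\ge K^{-1}\sum_k \widehat W_{I_k}(\widehat\pi_{\widehat\ell,I_{-k}})$, and we then pass back to true welfare twice, each time at the cost of the deviation at index $\widehat\ell$.

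\textbf{Step 2: use the selection rule.} By the definition of $\widehat\ell$ in (\ref{eq:def of l_hat-general}), for every $\ell$,
\[
\widehat{CV}(\widehat\ell)-\tfrac{\log\widehat\ell}{\sqrt N}\ge \widehat{CV}(\ell)-\tfrac{\log\ell}{\sqrt N}.
\]
Converting both sides to true welfare, with deviations absorbed by the $\log\ell/\sqrt N$ slack and by $\Delta$, gives
\[
W(\widehat\pi)\ge \overline{W}(\ell)-2\tfrac{\log\ell}{\sqrt N}-c\Delta .
\]
The $\log\widehat\ell$ terms cancel against the penalty. Subtracting this from $W(\pi^*)$ and adding and subtracting $\max_{\Pi_\ell}W$ yields the approximation-plus-estimation decomposition with $2\log\ell/\sqrt N$. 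Taking the infimum over $\ell$ then gives the stated inequality, provided $c\Delta\le\sqrt{(C+\delta)/N}$.

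\textbf{Step 3: tail bound for $\Delta$, the main obstacle.} Conditional on $I_{-k}$, the policy $\widehat\pi_{\ell,I_{-k}}$ is fixed. Since $\widehat W_{I_k}$ uses only fold $I_k$, which is independent of $I_{-k}$, Assumption~\ref{assu:general assu on hatW} applies with $|I_k|=N/K$. This conditioning step is delicate: the constants in the assumption must be uniform in $\pi$, which they are by hypothesis. Each $(\ell,k)$ then gives
\[
P\Bigl(|\widehat W_{I_k}-W|\ge t+\tfrac{C_1\sqrt K}{\sqrt N}\Bigr)\le C_2e^{-C_3Nt^2/K}.
\]
I would set $t=\tfrac{\log\ell}{2\sqrt N}+\sqrt{\delta'/N}$. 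Then
\[
t^2\ge \frac{(\log\ell)^2}{4N}+\frac{\delta'}{N},
\]
so each term is at most $C_2 e^{-C_3\delta'/K}\,e^{-C_3(\log\ell)^2/(4K)}$. The factor $e^{-c(\log\ell)^2}$ is summable in $\ell$, because $(\log\ell)^2$ eventually dominates $2\log\ell$. The union bound over $\ell\ge1$ and $k\le K$ is therefore at most $C'e^{-C_3\delta'/K}$.

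\textbf{Choosing constants.} On the complement of this event,
\[
c\Delta\le c\Bigl(\sqrt{\delta'/N}+C_1\sqrt{K/N}\Bigr)\le\sqrt{(C+\delta)/N}.
\]
This holds for a suitable $C$ (absorbing $C_1^2K$) and $\delta'\asymp\delta$, using $\sqrt a+\sqrt b\le\sqrt{2(a+b)}$. It gives probability at least $1-C_1e^{-C_2\delta}$ after renaming constants. Here $N>KC_4$ ensures the assumption applies, which is the "sufficiently large $N$" requirement.
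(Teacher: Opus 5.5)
Your architecture matches the paper's: one union bound over $(\ell,k)$ with thresholds inflated by a multiple of $\log\ell/\sqrt{N}$, maximality of $\widehat{k}$, and the selection rule so that the penalty absorbs the random $\log\widehat{\ell}$. Step 3 is correct. The problem is Step 1, and your cancellation in Step 2 rests on it.

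Write $\mathcal{E}_{k\ell}:=\widehat{W}_{I_k}(\widehat{\pi}_{\ell,I_{-k}})-W(\widehat{\pi}_{\ell,I_{-k}})$. Your $\Delta$ only gives $|\mathcal{E}_{k\widehat{\ell}}|\le\Delta+\log\widehat{\ell}/(2\sqrt{N})$, so passing to true welfare twice yields
\[
W(\widehat{\pi})\ \ge\ \overline{W}(\widehat{\ell})-2\Delta-\frac{\log\widehat{\ell}}{\sqrt{N}},
\]
not $W(\widehat{\pi})\ge\overline{W}(\widehat{\ell})-\Delta$. The latter genuinely fails. Take $K=2$, $\widehat{\ell}\ge2$, $\mathcal{E}_{1\widehat{\ell}}=-\mathcal{E}_{2\widehat{\ell}}=\Delta+\log\widehat{\ell}/(2\sqrt{N})$, and $W(\widehat{\pi}_{\widehat{\ell},I_{-2}})-W(\widehat{\pi}_{\widehat{\ell},I_{-1}})$ strictly between $2\Delta$ and $2\Delta+\log\widehat{\ell}/\sqrt{N}$. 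Then $\widehat{k}=1$ and $W(\widehat{\pi})<\overline{W}(\widehat{\ell})-\Delta$.

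With the correct Step 1, your route $\widehat{CV}(\widehat{\ell})\to\overline{W}(\widehat{\ell})\to W(\widehat{\pi})$ for the left side of the selection inequality costs $3\Delta+\tfrac{3}{2}\log\widehat{\ell}/\sqrt{N}$. The penalty returns only $\log\widehat{\ell}/\sqrt{N}$. A residual $\tfrac{1}{2}\log\widehat{\ell}/\sqrt{N}$ survives, with $\widehat{\ell}$ random and unbounded. So $W(\widehat{\pi})\ge\overline{W}(\ell)-2\log\ell/\sqrt{N}-c\Delta$ does not follow from your argument as written; the $\log\widehat{\ell}$ terms ``cancel'' only because of the misstated Step 1.

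The repair is to skip the round trip through $\overline{W}(\widehat{\ell})$ and pay a single deviation at $\widehat{\ell}$, as the paper does when bounding its term~2. Use, in order, the deviation bound at $(\widehat{k},\widehat{\ell})$, maximality of $\widehat{k}$ together with $\log\widehat{\ell}\ge0$, the selection rule, and the deviation bound averaged over $k$ at index $\ell$:
\begin{align*}
W(\widehat{\pi})&\ge\widehat{W}_{I_{\widehat{k}}}(\widehat{\pi})-\Delta-\frac{\log\widehat{\ell}}{2\sqrt{N}}\ge\widehat{CV}(\widehat{\ell})-\frac{\log\widehat{\ell}}{\sqrt{N}}-\Delta\\
&\ge\widehat{CV}(\ell)-\frac{\log\ell}{\sqrt{N}}-\Delta\ge\overline{W}(\ell)-\frac{3\log\ell}{2\sqrt{N}}-2\Delta .
\end{align*}
Your Step 3 then bounds $2\Delta$ by $\sqrt{(C+\delta)/N}$ with the stated probability, which completes the proof. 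Alternatively, keep your detour but shrink the slack in $\Delta$ to $\log\ell/(3\sqrt{N})$, which your Step 3 handles equally well. The paper uses the full slack $\log\ell/\sqrt{N}$ in its supremum, which is where its constant $2$ comes from.
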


Corollary~\ref{cor:oracle inequality prob bound} establishes a probability oracle inequality. The constant ``2'' in front of $\log\ell/\sqrt{N}$ arises from a union bound used to control the concentration of $\frac{1}{K}\sum_{k=1}^{K}\widehat{W}_{I_{k}}(\widehat{\pi}_{\ell,I_{-k}})$ around $\frac{1}{K}\sum_{k=1}^{K}W(\widehat{\pi}_{\ell,I_{-k}})$. This constant can be replaced with any value greater than $1$, with corresponding adjustments to $C,C_1,C_2$.

The oracle inequality provides insights into the quality of the learned policy only when both the approximation and estimation errors are minimal. We can quantify the estimation error under a strengthened Assumption~\ref{assu:general assu on hatW}.

\begin{assumption}
\label{assu:general ass uniform convergence}
For the training sample $I$ and a subclass of policies $\Pi\subset\Pi_{\infty}$ with VC dimension satisfying $\mathrm{VC}(\Pi)/|I|\to0$, we can assume that: 
\[
P\left(\sup_{\pi\in\Pi}\left|\widehat{W}_{I}(\pi)-W(\pi)\right|\geq\delta+C_{1}\sqrt{\frac{\mathrm{VC}(\Pi)}{\left|I\right|}}\right)\leq C_{2}\exp(-C_{3}\left|I\right|\delta^{2})
\]
holds for all $\delta>0$ and $\left|I\right|>C_{4}$, where $C_{1},\ldots,C_{4}>0$ are finite constants independent of $\delta$, $I$, and $\Pi$.
\end{assumption}

Assumption~\ref{assu:general ass uniform convergence} imposes a uniform convergence rate on the welfare estimator. It allows the VC dimension of the policy class, $\Pi$, to increase with the sample size at a rate that is slower than the sample size itself: $\mathrm{VC}(\Pi)/|I|\to0$. This rate condition aligns, up to logarithmic factors, with the minimax rate for policy learning under linear welfare criteria \citep{athey2021Policy,kitagawa2018Who}. Under this strengthened assumption, we derive an upper bound on the estimation error. 

\begin{cor}
\label{cor:estimation-error-bound-general}
Suppose Assumption~\ref{assu:general ass uniform convergence}
holds. Then there exist finite constants $C,C^{\prime}>0$ such that the learned policy $\widehat{\pi}$ defined in (\ref{eq:CV criterion-general}) satisfies, for sufficiently large $N$,
\begin{align*}
 & \e\left[W(\pi^{*})-W(\widehat{\pi})\right]\\
\leq & \inf_{\ell=1,2,\ldots}\left\{ \underbrace{W(\pi^{*})-\max_{\pi\in\Pi_{\ell}}W(\pi)}_{\mathrm{approximation\ error}}+C^{\prime}\sqrt{\frac{K}{K-1}}\sqrt{\frac{\mathrm{VC}(\Pi_{\ell})}{N}}+\frac{\log\ell}{\sqrt{N}}\right\} +\sqrt{\frac{C}{N}}.
\end{align*}
\end{cor}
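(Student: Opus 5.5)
The plan is to combine Theorem~\ref{thm:oracle inequality-general} with a direct bound on the estimation error term $\max_{\pi\in\Pi_{\ell}}W(\pi)-\e[W(\widehat{\pi}_{\ell,I})]$ for $I=I_{-k}$, using Assumption~\ref{assu:general ass uniform convergence}. First, I would check that Assumption~\ref{assu:general ass uniform convergence} implies Assumption~\ref{assu:general assu on hatW}. Taking $\Pi=\{\pi\}$ gives a VC dimension of at most one, so the pointwise exponential bound holds with $C_1$ unchanged. Theorem~\ref{thm:oracle inequality-general} then applies and yields (\ref{eq:oracle inequality}) with $I=I_{-k}$, where $|I_{-k}|=N(K-1)/K$. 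It therefore suffices to show, for each $\ell$ with $\VC(\Pi_{\ell})/N$ small,
\[
\max_{\pi\in\Pi_{\ell}}W(\pi)-\e\left[W(\widehat{\pi}_{\ell,I})\right]\leq C^{\prime}\sqrt{\frac{K}{K-1}}\sqrt{\frac{\VC(\Pi_{\ell})}{N}}.
\]

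Second, I would use the standard empirical-maximizer decomposition. Let $\pi_{\ell}^{*}\in\arg\max_{\Pi_{\ell}}W$ (or a near-maximizer). Since $\widehat{W}_{I}(\widehat{\pi}_{\ell,I})\geq\widehat{W}_{I}(\pi_{\ell}^{*})$,
\[
W(\pi_{\ell}^{*})-W(\widehat{\pi}_{\ell,I})\leq 2\sup_{\pi\in\Pi_{\ell}}\left|\widehat{W}_{I}(\pi)-W(\pi)\right|=:2Z_{\ell}.
\]
Integrating the tail bound in Assumption~\ref{assu:general ass uniform convergence} gives
\[
\e[Z_{\ell}]\leq C_1\sqrt{\VC(\Pi_{\ell})/|I|}+\int_0^\infty C_2 e^{-C_3|I|\delta^2}\,d\delta=C_1\sqrt{\VC(\Pi_{\ell})/|I|}+C_2\sqrt{\pi/(4C_3|I|)}.
\]
Because $\VC(\Pi_{\ell})\geq1$, the second term is absorbed into the first after enlarging the constant. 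Substituting $|I|=N(K-1)/K$ then produces the factor $\sqrt{K/(K-1)}$.

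Third, I would handle the $\ell$ for which the condition $\VC(\Pi_{\ell})/|I|\to0$ (or $|I|>C_4$) is not available. For those indices I would bound the estimation error trivially by $2\sup_{\pi}|W(\pi)|$, which is a finite constant if $W$ is bounded. When $\VC(\Pi_{\ell})/N$ exceeds a fixed threshold, $C^{\prime}\sqrt{\VC(\Pi_{\ell})/N}$ already dominates this constant once $C^{\prime}$ is chosen large enough. The inequality therefore holds for every $\ell$, and taking the infimum over $\ell$ completes the argument.

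I expect the main obstacle to be making Assumption~\ref{assu:general ass uniform convergence}, which is stated as an asymptotic condition, uniform over the infinitely many $\ell$ in the infimum. The threshold step handles this only if $W$ is bounded, so I would state that boundedness explicitly or derive it from the earlier assumptions. I would also need to confirm that the constants from Theorem~\ref{thm:oracle inequality-general} do not depend on $\ell$, which they do not, since $\ell$ enters only through $\log\ell/\sqrt{N}$.
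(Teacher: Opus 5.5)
Your reduction to Theorem~\ref{thm:oracle inequality-general} and your estimation-error bound match the paper's. The paper uses the same bound $W(\pi_{\ell}^{*})-W(\widehat{\pi}_{\ell,I})\le 2\sup_{\pi\in\Pi_{\ell}}|\widehat{W}_{I}(\pi)-W(\pi)|$, integrates the tail bound, and uses $|I_{-1}|=(K-1)N/K$ to produce $\sqrt{K/(K-1)}$. Your explicit check that Assumption~\ref{assu:general ass uniform convergence} with $\Pi=\{\pi\}$ gives Assumption~\ref{assu:general assu on hatW} is a step the paper leaves implicit.

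The gap is in your third step. You try to prove the estimation-error bound for \emph{every} $\ell$, and this needs two things the corollary does not supply. First, you need $W$ to be bounded in order to dispose of indices with $\mathrm{VC}(\Pi_{\ell})/N\ge\tau$. Nothing in Assumption~\ref{assu:general ass uniform convergence} bounds $W$, and Assumption~\ref{ass:high-level}, which would bound its oscillation, is not assumed here. Second, you need Assumption~\ref{assu:general ass uniform convergence} to hold with common constants for all classes with $\mathrm{VC}(\Pi)/|I|<\tau$ at a \emph{fixed} $\tau$, whereas it is phrased for classes with $\mathrm{VC}(\Pi)/|I|\to0$. If you let the threshold shrink, $\tau_{N}\to0$, your trivial bound forces $C'\gtrsim 1/\sqrt{\tau_{N}}\to\infty$. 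The two halves therefore cannot be glued together with a fixed $C'$.

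The missing idea is that you only need the bound at one $\ell$ that attains the infimum on the right-hand side. The paper proceeds as follows:
\begin{enumerate}
\item Fix $C'$ from the class-independent constants of Assumption~\ref{assu:general ass uniform convergence}.
\item Let $\Psi_{N}(\ell)$ be the bracketed expression, and let $\ell_{N}$ be its smallest minimizer, which exists because $\log\ell/\sqrt{N}\to\infty$.
\item Using the standing condition that the approximation error vanishes, fix $L$ with approximation error below $\epsilon/2$. Since $\mathrm{VC}(\Pi_{L})<\infty$, $\Psi_{N}(\ell_{N})\le\Psi_{N}(L)<\epsilon$ for large $N$.
\item Because $C'\sqrt{K/(K-1)}\sqrt{\mathrm{VC}(\Pi_{\ell_{N}})/N}\le\Psi_{N}(\ell_{N})$, this gives $\mathrm{VC}(\Pi_{\ell_{N}})/N\to0$.
\end{enumerate}
Assumption~\ref{assu:general ass uniform convergence} is then invoked only along this single deterministic sequence. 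Theorem~\ref{thm:oracle inequality-general} is evaluated at $\ell=\ell_{N}$, and the bound equals $\inf_{\ell}\Psi_{N}(\ell)+\sqrt{C/N}$. Neither boundedness of $W$ nor uniformity over infinitely many $\ell$ is required.
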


To quantify the approximation error, however, we need to gather information about the policy space and its approximation. For illustrative purposes, we will compute the approximation error for three common policy classes and their approximations. Throughout this process, we will maintain the following Lipschitz condition on $W(\pi)$. 

\begin{assumption}
\label{ass:high-level}
There exists a finite constant $C_{W}>0$ such
that for any two policies $\pi_{1},\pi_{2}\in\Pi_{\infty}$, \(
\left|W(\pi_{1})-W(\pi_{2})\right|\le C_{W}P(\pi_{1}(\bs X)\neq\pi_{2}(\bs X))\).

\end{assumption}
\begin{example}[Monotone policies]
\label{exa:monotone policies}
In practice, shape restrictions on the policy class $\Pi_{\infty}$ capture constraints implied by economic theory or fairness considerations. A canonical example of this is monotonicity. After normalizing the supports, let $\bs X_i=(X_{i1},X_{i2})^{\trans}\in[0,1]^2$. The monotone policy class is written as:  
\(
\Pi_{\infty}=\left\{ \pi_f(x_1,x_2)=1\{x_2\le f(x_1)\}:f\text{ is non-increasing}\right\}
\).
We approximate this infinite-dimensional class using the monotone piecewise-linear sieve $\Pi_{\ell}$. Its elements are threshold rules defined by non-increasing, continuous, piecewise-linear boundaries with knots on a $2^{\ell}$-grid. Appendix~\ref{sec:approximation-errors} provides the formal definition, following the construction found in \citet[Example~3.2]{mbakop2021Model}. Under Assumption~\ref{ass:high-level}, if the conditional density of $X_1$ given $X_2$ is uniformly bounded, then $\VC(\Pi_{\ell})=O(2^{\ell})$ and the approximation error is $W(\pi^{*})-\max_{\pi\in\Pi_{\ell}}W(\pi)=O(2^{-\ell})$ (see Proposition~\ref{prop:monotone-approx-vc}).
\end{example}
In the next two examples, the sieve classes need not be subsets of $\Pi_{\infty}$. When these classes are used in the oracle bound, Assumptions~\ref{assu:general assu on hatW}--\ref{ass:high-level} are imposed on these sieve classes directly, so Theorem~\ref{thm:oracle inequality-general} continues to apply through the resulting approximation and estimation errors.
\begin{example}[Decision Trees]
\label{exa:Decision Trees}
Decision trees are another popular class of binary-valued policies \citep[e.g.,][]{athey2021Policy,zhou2023Offline}. One example policy space is a treatment rule represented as a smooth threshold in a single covariate, conditional on the remaining covariates. Specifically, with $\bs X=(\bs X_{-d},X_d)\in[0,1]^d$, the policy class is given by
\(
\Pi_{\infty}=\left\{ \pi_f(\bs x_{-d},x_d)=1\{x_d\le f(\bs x_{-d})\}:f\in C^{s}([0,1]^{d-1}),\ 0\le f\le1\right\}
\). 
Here $s\in\mathbb{N}_{+}$, and $C^{s}([0,1]^{d-1})$ denotes the class of functions with continuous partial derivatives up to order $s$. Let $\Pi_{\ell}$ be the class of binary decision trees of depth at most $\ell$. Each internal node selects a coordinate $j\in\{1,\ldots,d\}$ and a threshold $b\in\R$, routing observations according to whether $x_j<b$, with each leaf assigned a label in $\{0,1\}$. Under Assumption~\ref{ass:high-level}, if the conditional density of $X_d$ given $\bs X_{-d}$ is uniformly bounded, then $\VC(\Pi_{\ell})=O(2^{\ell}(\ell+\log d))$ and the approximation error is $W(\pi^{*})-\max_{\pi\in\Pi_{\ell}}W(\pi)=O\!\left(2^{-\left\lfloor(\ell-1)/(d-1)\right\rfloor}\right)$ (see Proposition~\ref{prop:dt-approx-vc}).
\end{example}
\begin{example}[Deep Neural Networks]
\label{exa:Neural Networks}
Deep neural networks (DNNs) provide a flexible framework for approximating complex decision boundaries. As in Example~\ref{exa:Decision Trees}, we also consider the smooth decision-boundary policy class, with $\bs X=(\bs X_{-d},X_d)\in[0,1]^d$, given by
\(
\Pi_{\infty}=\left\{ \pi_f(\bs x_{-d},x_d)=1\{x_d\le f(\bs x_{-d})\}:f\in C^{s}([0,1]^{d-1}),\ 0\le f\le1\right\}
\).
Let $\mathcal{F}_{\mathrm{DNN},\ell}$ denote the class of fully connected feedforward ReLU networks with width $\mathcal{H}_{\ell}$ and depth $\mathcal{D}_{\ell}$, where these architectural parameters grow with $\ell$. The sieve policy class is $\Pi_{\ell}=\{\pi_g(\bs x)=1\{g(\bs x)\ge0\}:g\in\mathcal{F}_{\mathrm{DNN},\ell}\}$. Under Assumption~\ref{ass:high-level}, if the conditional density of $X_d$ given $\bs X_{-d}$ is uniformly bounded, then Proposition~\ref{prop:nn-approx-vc} gives \(\VC(\Pi_{\ell})=O\{\mathcal{D}^{2}_{\ell}\mathcal{H}^{2}_{\ell}\allowbreak\log(\mathcal{D}_{\ell}\mathcal{H}^{2}_{\ell})\}\). The approximation error is \(W(\pi^{*})-\max_{\pi\in\Pi_{\ell}}W(\pi)\allowbreak=O\!\{(\mathcal{H}_{\ell}/\log\mathcal{H}_{\ell})^{-2s/(d-1)}\allowbreak(\mathcal{D}_{\ell}/\log\mathcal{D}_{\ell})^{-2s/(d-1)}\}\).
\end{example}

\section{Model}
\label{sec:general framework}

We now formally establish the model for the welfare criterion. We define the intermediate parameters $\bs{\beta}^{*}(\pi)\in\R^{p}$ as the minimizer of a sum of expected convex losses:
\[
\bs{\beta}^{*}(\pi):=\underset{\bs{\beta}=(\beta_{1},\ldots,\beta_{p})^{\trans}\in\R^{p}}{\arg\min}\sum_{j=1}^{p}\e\left[\mathcal{L}_{j}(Y^{*}(\pi(\bs X))-\beta_{j})\right],
\]
where $p\geq1$ is an integer and $\mathcal{L}_{1},\ldots,\mathcal{L}_{p}$ are known convex loss functions. Different selections of loss functions capture different distributional features. For example, when $p=2$, if we take $\mathcal{L}_{1}(v)=v^{2}/2$, and $\mathcal{L}_{2}(v)=v\cdot(0.5-1(v\leq0))$, the intermediate parameters correspond to the mean and the median: $\bs{\beta}^{*}(\pi)=\left(\e\left[Y^{*}(\pi(\bs X))\right],\text{median}[Y^{*}(\pi(\bs X))]\right)^{\trans}$. Similarly, with the check loss defined as $\mathcal{L}(v)=v\cdot(\alpha - 1(v\leq 0))$, the intermediate parameter corresponds to the $\alpha$-quantile.

Both the intermediate parameters and the welfare criterion are expressed in terms of potential outcomes. To reframe them using the observed data $(Y,\bs X,T)$, where $Y=Y^{*}(T)$ represents the observed outcome, we impose the Stable Unit Treatment Value Assumption (SUTVA) \citep{imbens2015Causal} and  the following conditions regarding the data-generating process \citep{athey2021Policy,kitagawa2018Who,mbakop2021Model}.

\begin{assumption}[Unconfoundedness and Overlap]\label{assu:unconfounded}
\begin{enumerate}[label=(\roman*)]
    \item \textbf{(Unconfoundedness)} Given the covariate $\bs X$, program participation $T$ is independent of the potential outcomes, meaning $Y^{*}(0),Y^{*}(1)\perp T\mid\bs X$.
    \item \textbf{(Overlap)} There exists a constant $0<\kappa<1/2$ such that $\kappa < e^{*}(\bs X) < 1-\kappa$ almost surely, where $e^{*}(\bs X):=\e\left[T\mid\bs X\right]$ is the propensity score.
\end{enumerate}
\end{assumption}

Under Assumption \ref{assu:unconfounded}, we can express the following optimization problem:
\begin{equation}
\bs{\beta}^{*}(\pi)=\underset{\bs{\beta}=(\beta_{1},\ldots,\beta_{p})^{\trans}\in\R^{p}}{\arg\min}\sum_{j=1}^{p}\e\left[\left\{ \frac{\pi(\bs X)T}{e^{*}(\bs X)}+\frac{\left(1-\pi(\bs X)\right)\left(1-T\right)}{1-e^{*}(\bs X)}\right\} \mathcal{L}_{j}(Y-\beta_{j})\right].\label{eq:def-of-beta-pi}
\end{equation}
Additionally, we define: 
\begin{align}
W(\pi)=\e\left[\left\{ \frac{\pi(\bs X)T}{e^{*}(\bs X)}+\frac{\left(1-\pi(\bs X)\right)\left(1-T\right)}{1-e^{*}(\bs X)}\right\} U(Y,\bs X,\bs{\beta}^{*}(\pi))\right].\label{eq:identification-of-W-pi}
\end{align}
The propensity score $e^{*}(\bs X)$ is unknown and is defined by $e^{*}(\bs X)=\e [T\mid\bs X]$, which can be found by solving the following optimization problem: 
\(
e^{*}(\bs X)=\underset{e(\cdot)}{\arg\min}\left\{ \e\left[(T-e(\bs{X}))^2\right]\right\}
\). 
However, a sample least-squares estimator based on this characterization may produce fitted values close to 0 or 1. We therefore use the following equivalent population formulation, understood over functions satisfying $0<e(\bs X)<1$: 
\begin{align}
e^{*}(\bs X) & =\underset{0<e(\cdot)<1}{\arg\min}\Biggl\{ \e\left[T\left(\frac{1}{e(\bs X)}-\frac{1}{e^{*}(\bs X)}\right)^{2}+(1-T)\left(\frac{1}{1-e(\bs X)}-\frac{1}{1-e^{*}(\bs X)}\right)^{2}\right]\Biggr\} \nonumber \\
  & =\underset{0<e(\cdot)<1}{\arg\min}\left\{ \e\left[\frac{T}{e(\bs X)^{2}}-\frac{2}{e(\bs X)}\right]+\e\left[\frac{1-T}{(1-e(\bs X))^{2}}-\frac{2}{1-e(\bs X)}\right]\right\}. \label{eq:propensity-objective}
\end{align}
Compared with the least-squares characterization, this criterion measures errors on the inverse-propensity scale that enters the inverse-probability-weighted (IPW) terms. Its population excess risk is a weighted sum of squared errors of $1/e(\bs X)$ and $1/\{1-e(\bs X)\}$, with the same minimizer $e^{*}$; in estimation, we minimize its sample analog over a bounded logistic DNN class to keep fitted propensity scores away from $0$ and $1$.

\section{Estimation of Welfare}\label{sec:empirical welfare}

We present an estimator for the welfare criterion based on a generic training sample $I$. Equations (\ref{eq:def-of-beta-pi})--(\ref{eq:propensity-objective}) suggest a three-step sequential estimation procedure. In the first step, we estimate the propensity score using a sample analog of (\ref{eq:propensity-objective}). We then substitute this estimate into a sample analog of (\ref{eq:def-of-beta-pi}) to estimate the intermediate parameters. Finally, we use both estimates to compute the welfare criterion using a sample analog of (\ref{eq:identification-of-W-pi}).

To estimate the propensity score from the training sample $I$, we utilize a deep neural network, denoting the estimate as $\widehat{e}_{I}(\cdot)$ (see (\ref{eq:logistic reg for propensity})). From the existing literature, it follows that $\left\Vert \widehat{e}_{I}-e^{*}\right\Vert_{P,2}= O_{P}\left(\left|I\right|^{-s_{e}/(2s_{e}+d)}\log^{3}\left|I\right|\right)$ (see the proof of Lemma~\ref{lem:converge of propensity}), where $s_{e}$ indicates the smoothness of $e^{*}(\bs X)$ (see Assumption~\ref{assu:Overlap assumption and smoothness}). 

It is well-documented that machine learning can induce bias, which then propagates to the welfare criterion through both direct bias in the IPW terms and indirect bias in estimates of intermediate parameters. Simply substituting $e^{*}$ with $\widehat{e}_{I}$ in the sample analog of equations (\ref{eq:def-of-beta-pi})--(\ref{eq:propensity-objective}) may introduce bias in the intermediate parameters and welfare estimates. This, in turn, leads to a violation of Assumption~\ref{assu:general assu on hatW}. To mitigate machine learning bias, we propose a weighted analog of equations (\ref{eq:def-of-beta-pi})--(\ref{eq:propensity-objective}): 
\begin{equation}
\widehat{\bs{\beta}}_{I}(\pi)=\underset{\bs{\beta}=(\beta_{1},\ldots,\beta_{p})^{\trans}\in\R^{p}}{\arg\min}\sum_{j=1}^{p}\frac{1}{\left|I\right|}\sum_{i\in I}\widehat{w}_{I,i}(\pi)\Bigl\{\frac{\pi(\bs X_{i})T_{i}}{\widehat{e}_{I}(\bs X_{i})}+\frac{(1-\pi(\bs X_{i}))(1-T_{i})}{1-\widehat{e}_{I}(\bs X_{i})}\Bigr\}\mathcal{L}_{j}(Y_{i}-\beta_{j}).\label{eq:def-of-beta-hat-pi-I-unknown-propensity}
\end{equation}
The welfare estimator is defined as: 
\begin{equation}
\widehat{W}_{I}(\pi)=\frac{1}{\left|I\right|}\sum_{i\in I}\widehat{w}_{I,i}(\pi)\Bigl\{\frac{\pi(\bs X_{i})T_{i}}{\widehat{e}_{I}(\bs X_{i})}+\frac{(1-\pi(\bs X_{i}))(1-T_{i})}{1-\widehat{e}_{I}(\bs X_{i})}\Bigr\} U(Y_{i},\bs X_{i},\widehat{\bs{\beta}}_{I}(\pi)).\label{eq:def-of-W-hat-pi-I-unknown-propensity}
\end{equation}
In this context, $\left\{ \widehat{w}_{I,i}(\pi):i\in I\right\}$ represents the calibrated weights. We show in Appendix~\ref{sec:app-nuisance-estimation} that these weights must satisfy the following conditions: 
\begin{equation}
\begin{aligned}
 & \frac{1}{\left|I\right|}\sum_{i\in I}(1-\pi(\bs X_{i}))\left\{ \frac{w_{i}(1-T_{i})}{1-\widehat{e}_{I}(\bs X_{i})}-1\right\} \mu_{j0}^{*}(\bs X_{i};\bs{\beta}^{*}(\pi))\\
 & +\frac{1}{\left|I\right|}\sum_{i\in I}\pi(\bs X_{i})\left\{ \frac{w_{i}T_{i}}{\widehat{e}_{I}(\bs X_{i})}-1\right\} \mu_{j1}^{*}(\bs X_{i};\bs{\beta}^{*}(\pi))=0,\quad j=0,1,\ldots,p.
\end{aligned}
\label{eq:calibrated-weight-constraints}
\end{equation}
For $t\in\{0,1\}$, we define: 
$\mu_{0t}^{*}(\bs x;\bs{\beta}):=\allowbreak\e\bigl[U(Y,\bs X,\bs{\beta})\allowbreak\mid\allowbreak\bs X=\bs x,T=t\bigr]$
and $\mu_{jt}^{*}(\bs x;\bs{\beta}):=\allowbreak\e\bigl[\mathcal{L}_{j}^{\prime}(Y-\beta_{j})\allowbreak\mid\allowbreak\bs X=\bs x,T=t\bigr]$
for $j=1,\ldots,p$, where $\mathcal{L}_{j}^{\prime}$ is understood as specified in Assumption~\ref{assu:regularity-conditions-on-L}.
In practice, $\bs{\beta}^{*}(\pi)$ and $\mu_{jt}^{*}(\cdot;\bs{\beta})$ are replaced with the initial estimator $\widehat{\bs{\beta}}_{I}^{\mathrm{init}}(\pi)$
and the conditional mean estimators $\widehat{\mu}_{I,jt}(\cdot;\widehat{\bs{\beta}}_{I}^{\mathrm{init}}(\pi))$, respectively,
as detailed in Appendix~\ref{sec:app-nuisance-estimation}.

Notice that the weights satisfying the equations (\ref{eq:calibrated-weight-constraints}) are generally not unique. We apply the entropy method to calibrate the weights as the solution to
\begin{equation}
\left(\widehat{w}_{I,i}(\pi):i\in I\right)=\underset{w_{i}>0:i\in I}{\arg\min}\sum_{i\in I}\left(w_{i}\log w_{i}-w_{i}\right)\text{ subject to }(w_{i}:i\in I)\text{ satisfying }(\ref{eq:calibrated-weight-constraints}).\label{eq:balancing for weights}
\end{equation}
In this context, the objective function $D(w)=w\log w-w$ measures the distance of $w$ from $1$, ensuring that the calibrated weights $\widehat{w}_{I,i}(\pi)$, $i\in I$, are unique and always non-negative. 

\begin{rem}
    From a computational perspective, the equation (\ref{eq:balancing for weights}) is a convex program with $p+1$ linear constraints, making it straightforward to compute. In particular, the problem can be solved efficiently through its dual formulation using standard convex-optimization solvers, as referenced in \citep{boyd2004Convex}.
    \end{rem}

\section{Properties of the Empirical Welfare Criterion}\label{sec:asymptotic properties} 

Having constructed the welfare criterion estimator, we will now verify that it meets the high-level condition outlined in Section~\ref{sec:A-General-Learning}. We require the following conditions. 

\begin{assumption}
\label{assu:Overlap assumption and smoothness} Assume that $\mathcal{X}=[0,1]^{d}$. Let $M>1$ be a finite constant. The true propensity score $e^{*}(\bs x)$ satisfies the following condition: 
\[
\log\frac{e^{*}(\bs x)}{1-e^{*}(\bs x)}\in C^{s_{e}}\left([0,1]^{d}\right):=\left\{ f:\max_{\bs{\alpha}\in\N^{d},\left\Vert \bs{\alpha}\right\Vert _{1}\leq s_{e}}\underset{\bs x\in[0,1]^{d}}{\sup}\left|\partial^{\bs{\alpha}}f(\bs x)\right|\leq M\right\} ,
\]
where $s_{e}>d/2$ is an integer, $\left\Vert \bs{\alpha}\right\Vert _{1}:=\alpha_{1}+\cdots+\alpha_{d}$ and $\partial^{\bs{\alpha}}f$ represents the partial derivative of $f$.
\end{assumption}
\begin{assumption}
[Regularity assumptions on $\mathcal{L}_{j}$]\label{assu:regularity-conditions-on-L} For
each $j=1,\ldots,p$, let $\beta_{j}^{*}(\pi)$ be the $j$th component of $\bs{\beta}^{*}(\pi)$. Let $c_{0}>0$ be a finite constant. The following conditions hold for any $j=1,\ldots,p$.
\begin{enumerate}[label=(\roman*)]
\item $\mathcal{L}_{j}(v)$ is convex on $\R$. There exists a non-decreasing function $\mathcal{L}_{j}^{\prime}(v):\R\to\R$ such that $\int_{a}^{b}\mathcal{L}_{j}^{\prime}(v)dv=\mathcal{L}_{j}(b)-\mathcal{L}_{j}(a)$ for any $a,b\in\R$. We refer to $\mathcal{L}_{j}^{\prime}(v)$ as the ``derivative'' of $\mathcal{L}_{j}(v)$.
\item Define
\(
Q_{j}(\beta;\pi):=\e\Bigl[\left\{ \frac{\pi(\bs X)T}{e^{*}(\bs X)}+\frac{\left(1-\pi(\bs X)\right)\left(1-T\right)}{1-e^{*}(\bs X)}\right\} \mathcal{L}_{j}(Y-\beta)\Bigr]
\). 
Let $\underline{Q}^{\prime\prime}$ and $Q_{lip}^{\prime\prime}$ be two positive and finite constants. $Q_{j}(\beta;\pi)$ is twice differentiable with respect to $\beta$, and we denote its second-order derivative by $Q_{j}^{\prime\prime}(\beta;\pi)$. It holds that 
$Q_{j}^{\prime\prime}(\beta_{j}^{*}(\pi);\pi)\geq\underline{Q}^{\prime\prime}$ uniformly over $\pi\in\Pi_{\infty}$ and 
$\left|Q_{j}^{\prime\prime}(\beta;\pi)-Q_{j}^{\prime\prime}(\beta_{j}^{*}(\pi);\pi)\right|\leq Q_{lip}^{\prime\prime}\cdot\left|\beta-\beta_{j}^{*}(\pi)\right|$ for all $\beta$ satisfying $\left|\beta-\beta_{j}^{*}(\pi)\right|\leq c_{0}$
and all $\pi\in\Pi_{\infty}$.
\item $\sup_{\pi\in\Pi_{\infty}}\sup_{\beta:\left|\beta-\beta_{j}^{*}(\pi)\right|\leq c_{0}}\left|\mathcal{L}_{j}^{\prime}(Y-\beta)\right|\leq M/4$ almost surely, where $0<M<\infty$ is a constant that may depend on $\underline{Q}^{\prime\prime}$ and $Q_{lip}^{\prime\prime}$.
\end{enumerate}
\end{assumption}
\begin{assumption}
[Regularity assumptions on $U$]\label{assu:regularity-assumptions-on-U} Denote $\Psi(\bs{\beta};\pi)$ by
\(
\e\bigl[\bigl\{ \pi(\bs X)T/e^{*}(\bs X)+\allowbreak(1-\pi(\bs X))(1-T)/(1-e^{*}(\bs X))\bigr\} \allowbreak U(Y,\bs X,\bs{\beta})\bigr]
\). 
The following conditions hold.
\begin{enumerate}[label=(\roman*)]
\item There exists a finite constant $M>0$ such that the bound
$\sup_{\pi\in\Pi_{\infty}}\sup_{\bs{\beta}:\left\Vert \bs{\beta}-\bs{\beta}^{*}(\pi)\right\Vert \leq c_{0}}\allowbreak\left|U(Y,\bs X,\bs{\beta})\right|\leq M/4$
holds almost surely.
\item The function class $\mathcal{U}:=\{(Y,\bs X)\mapsto U(Y,\bs X,\bs{\beta}):\allowbreak\left\Vert \bs{\beta}-\bs{\beta}^{*}(\pi)\right\Vert \leq c_{0},\allowbreak\pi\in\Pi_{\infty}\}$
satisfies $\sup_{Q}\log N\left(\frac{M}{4}\epsilon,\mathcal{U},\left\Vert \cdot\right\Vert _{Q,2}\right)\leq\nu\log\left(a/\epsilon\right)\text{ for all }0<\epsilon<1,$
where $a,\nu>0$ are finite constants, and $\sup_{Q}$ is taken over
all finitely discrete measures.
\item Given any policy $\pi\in\Pi_{\infty}$, $\Psi(\bs{\beta};\pi)$ is differentiable
with respect to $\bs{\beta}$, and we denote its gradient by $\nabla\Psi(\bs{\beta};\pi)$.
There exists a finite constant $\overline{\Psi}^{\prime}\geq0$ such that $\left\Vert \nabla\Psi(\bs{\beta};\pi)\right\Vert \leq\overline{\Psi}^{\prime}$
for all $\bs{\beta}$ satisfying $\left\Vert \bs{\beta}-\bs{\beta}^{*}(\pi)\right\Vert \leq c_{0}$ and all $\pi\in\Pi_{\infty}$.
\end{enumerate}
\end{assumption}
\begin{assumption}
\label{assu:smoothness for mu} Let $s_{\mu}>d/2$ be an integer. It
holds that 
\begin{align*}
 & \left\{ \bs X\mapsto\mu^{*}_{jt}(\bs X;\bs{\beta}):\left\Vert \bs{\beta}-\bs{\beta}^{*}(\pi)\right\Vert \leq c_{0},\pi\in\Pi_{\infty},t\in\{0,1\},j=0,\ldots,p\right\} \\
\subset & \  C^{s_{\mu}}\left([0,1]^{d}\right):=\left\{ f:\max_{\bs{\alpha}\in\N^{d},\left\Vert \bs{\alpha}\right\Vert _{1}\leq s_{\mu}}\underset{\bs x\in[0,1]^{d}}{\sup}\left|\partial^{\bs{\alpha}}f(\bs x)\right|\leq M\right\} 
\end{align*}
for some constant $M>0$.
\end{assumption}
\begin{assumption}
\label{assu:regularity on mu} There exist finite constants $L_{\mu}>0$ and
$c_{\xi}>0$ such that the following conditions hold.
\begin{enumerate}[label=(\roman*)]
\item     For any $\bs{\beta}_{1},\bs{\beta}_{2}\in\mathbb{R}^{p}$, $j=0,\ldots,p$, and $t\in\{0,1\}$: $\left\Vert \mu_{jt}^{*}(\bs X;\bs{\beta}_{1})-\mu_{jt}^{*}(\bs X;\bs{\beta}_{2})\right\Vert _{P,2}\leq L_{\mu}\left\Vert \bs{\beta}_{1}-\bs{\beta}_{2}\right\Vert$.
\item For $t\in\{0,1\}$, define $\bs{\xi}_{t}^{*}(\bs X;\pi):=\left(\mu_{jt}^{*}(\bs X;\bs{\beta}^{*}(\pi)):j=0,\ldots,p\right)^{\trans}$. Then
\[
\inf_{\pi\in\Pi_{\infty}}\lambda_{\min}\Bigl\{\e\bigl[(1-\pi(\bs X))\bs{\xi}_{0}^{*}(\bs X;\pi)\bs{\xi}_{0}^{*}(\bs X;\pi)^{\trans}+\pi(\bs X)\bs{\xi}_{1}^{*}(\bs X;\pi)\bs{\xi}_{1}^{*}(\bs X;\pi)^{\trans}\bigr]\Bigr\}\geq c_{\xi}.
\]
\end{enumerate}
\end{assumption}
\begin{assumption}
\label{assu:regularity-assumptions-on-nuisance} The training sample size satisfies $|I|\to\infty$ as $N\to\infty$.
Throughout this assumption, $\widehat{e}_{I}$, $\widehat{\bs{\beta}}_{I}^{\mathrm{init}}(\pi)$, and $\widehat{\mu}_{I,jt}$, $j=0,\ldots,p$ and $t\in\{0,1\}$, refer to the estimators defined in the respective equations (\ref{eq:logistic reg for propensity}), (\ref{eq:beta-init-pi-I}), and (\ref{eq:DNN reg for L})--(\ref{eq:DNN reg for U}).
\begin{enumerate}[label=(\roman*)]
\item Almost surely, $\left|\log\{\widehat{e}_{I}(\bs X)/(1-\widehat{e}_{I}(\bs X))\}\right|\leq M$, and $\left|\widehat{\mu}_{I,jt}(\bs X;\bs{\beta})\right|\leq M$ uniformly over $\pi\in\Pi_{\infty}$,
$\left\Vert \bs{\beta}-\bs{\beta}^{*}(\pi)\right\Vert \leq c_{0}$, $j=0,\ldots,p$, and $t=0,1$, where $c_{0}$ is from Assumption~\ref{assu:regularity-conditions-on-L}.
\item The DNN classes $\mathcal{F}_{\mathrm{DNN}}(\mathcal{H}_{e},\mathcal{D}_{e})$ and $\mathcal{F}_{\mathrm{DNN}}(\mathcal{H}_{\mu},\mathcal{D}_{\mu})$
defined in (\ref{eq:DNN class}) and used in (\ref{eq:DNN reg for L})--(\ref{eq:DNN reg for U}) are constructed as follows. The first uses $\mathcal{H}_{e}\mathcal{D}_{e}\asymp\left|I\right|^{d/(4s_{e}+2d)}(\log\left|I\right|)^{2}$, and the second uses $\mathcal{H}_{\mu}\mathcal{D}_{\mu}\asymp\left|I\right|^{d/(4s_{\mu}+2d)}(\log\left|I\right|)^{2}$. In addition, their minimum diverges to infinity, and their logarithms are $O(\log |I|)$.
\end{enumerate}
\end{assumption}
Assumption~\ref{assu:Overlap assumption and smoothness} is a smoothness condition that is commonly recognized in the literature on deep neural network estimation \citep{farrell2021Deepa,jiao2023Deep,schmidt-hieber2020Nonparametric}, as well as in the broader nonparametric-estimation literature \citep{chen2007Large}. The conditions outlined in Assumption~\ref{assu:regularity-conditions-on-L} are for estimating $\bs{\beta}^{*}(\pi)$, allowing for non-smooth objectives such as $\mathcal{L}_{j}(v)=v(0.5-1(v\leq0))$. Similarly, the conditions in Assumption~\ref{assu:regularity-assumptions-on-U} are for estimating $W(\pi)$ and are satisfied by many utility functions $U$. Both assumptions are well-established in the literature \citep{vaart1998Asymptotic,vaart1996Weak}.

It is important to note that Assumptions~\ref{assu:regularity-conditions-on-L} and~\ref{assu:regularity-assumptions-on-U} hold uniformly over $\pi\in\Pi_{\infty}$. The uniform condition is critical for Assumption~\ref{assu:general ass uniform convergence}, which requires uniform convergence of $\widehat{W}_{I}(\pi)$. If only Assumption~\ref{assu:general assu on hatW} is necessary, a pointwise-in-$\pi$ version of these conditions is sufficient. Assumption~\ref{assu:smoothness for mu} is a smoothness condition on the conditional mean function $\mu_{jt}^{*}(\bs X_{i};\bs{\beta})$, which is analogous to Assumption~\ref{assu:Overlap assumption and smoothness}. Assumption~\ref{assu:regularity on mu} imposes Lipschitz continuity on $\mu_{jt}^{*}(\bs X_{i};\bs{\beta})$ in relation to $\bs{\beta}$ and includes a population non-singularity condition involving $\mu_{jt}^{*}(\bs X;\bs{\beta}^{*}(\pi))$, $j=0,\ldots,p$ and $t\in\{0,1\}$. The Lipschitz condition is satisfied by both $\mathcal{L}_{j}(v)=v(0.5-1(v\leq0))$ and $\mathcal{L}_{j}(v)=v^{2}/2$. The non-singularity condition eliminates linear redundancy among the functions $\mu_{jt}^{*}(\bs X;\bs{\beta}^{*}(\pi))$; if this condition fails, redundant functions can be removed before applying the calibration step. Assumption~\ref{assu:regularity-assumptions-on-nuisance}(i) imposes boundedness on the nuisance estimates, while Assumption~\ref{assu:regularity-assumptions-on-nuisance}(ii) restricts the width and depth of the deep neural networks. This latter requirement is familiar in the deep neural network estimation literature \citep[see, e.g.,][]{farrell2021Deepa,schmidt-hieber2020Nonparametric,jiao2023Deep}.

Under these sufficient conditions, we show that the proposed welfare criterion estimator meets the high-level assumptions outlined in Section~\ref{sec:A-General-Learning}. 

\begin{thm}
\label{thm:oracle holdout unknown propensity} Suppose that Assumptions
\ref{assu:unconfounded}, \ref{assu:Overlap assumption and smoothness},
\ref{assu:regularity-conditions-on-L},
\ref{assu:regularity-assumptions-on-U},
\ref{assu:smoothness for mu}, \ref{assu:regularity on mu}, and
\ref{assu:regularity-assumptions-on-nuisance}
hold. Additionally, suppose that $\left|\widehat{W}_{I}(\pi)\right|\leq C$ almost surely for
any $\pi\in\Pi_{\infty}$ and $I\subset\{1,\ldots,N\}$, where $C>0$
is a finite constant. Under these conditions, the debiased welfare criterion estimator
$\widehat{W}_{I}(\pi)$, defined in (\ref{eq:def-of-W-hat-pi-I-unknown-propensity}),
satisfies Assumptions~\ref{assu:general assu on hatW} and \ref{assu:general ass uniform convergence}.
Furthermore, the welfare function $W(\pi)$, defined in (\ref{eq:identification-of-W-pi}),
satisfies Assumption~\ref{ass:high-level}. 
\end{thm}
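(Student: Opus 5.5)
The plan has three parts: the Lipschitz property of $W$, a decomposition of $\widehat{W}_I(\pi)-W(\pi)$, and exponential concentration of each piece. Since the uniform statement implies the pointwise one, I will prove the uniform version (Assumption~\ref{assu:general ass uniform convergence}) for a generic class $\Pi$ with $\VC(\Pi)/|I|\to0$, and then read off Assumption~\ref{assu:general assu on hatW} by taking $\Pi=\{\pi\}$, whose VC dimension is bounded.

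\textbf{Lipschitz property of $W$ (Assumption~\ref{ass:high-level}).} I would first bound $\|\bs{\beta}^{*}(\pi_1)-\bs{\beta}^{*}(\pi_2)\|\lesssim P(\pi_1(\bs X)\neq\pi_2(\bs X))$ via the first-order conditions in (\ref{eq:def-of-beta-pi}). The IPW weights of the two policies differ only on $\{\pi_1\neq\pi_2\}$, and they are bounded by $1/\kappa$. Because $|\mathcal{L}_j'|\le M/4$, the population scores differ by $O(P(\pi_1\neq\pi_2))$. The curvature bound $Q_j''\ge\underline{Q}''$, together with its local Lipschitz property from Assumption~\ref{assu:regularity-conditions-on-L}(ii), then transfers this gap to the roots. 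Next I write $W(\pi_1)-W(\pi_2)=\{\Psi(\bs\beta^*(\pi_1);\pi_1)-\Psi(\bs\beta^*(\pi_1);\pi_2)\}+\{\Psi(\bs\beta^*(\pi_1);\pi_2)-\Psi(\bs\beta^*(\pi_2);\pi_2)\}$. The first term is bounded by $(M/4)\kappa^{-1}P(\pi_1\ne\pi_2)$. The second is bounded by $\overline{\Psi}'\|\bs\beta^*(\pi_1)-\bs\beta^*(\pi_2)\|$ using Assumption~\ref{assu:regularity-assumptions-on-U}(iii). If $P(\pi_1\ne\pi_2)$ is not small enough for $\bs\beta^*(\pi_1)$ to lie in the $c_0$-ball around $\bs\beta^*(\pi_2)$, the trivial bound $|W|\le M/(4\kappa)$ absorbs that case.

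\textbf{Decomposition of the estimation error.} The calibration constraints (\ref{eq:calibrated-weight-constraints}) with $j=1,\dots,p$ make the weighted score for $\bs\beta$ orthogonal to propensity errors along the directions $\mu^*_{jt}$. I would linearize $\widehat{\bs\beta}_I(\pi)-\bs\beta^*(\pi)$ as $-\{Q''\}^{-1}$ times the weighted empirical score at $\bs\beta^*(\pi)$, plus a remainder. The weighted score splits into three parts. The first is an oracle centered empirical process $\frac1{|I|}\sum_i\{\pi T_i/e^*+(1-\pi)(1-T_i)/(1-e^*)\}\mathcal L_j'(Y_i-\beta_j^*)$, which has mean zero. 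The second is the calibration term, which I decompose as $\frac{1}{|I|}\sum_i(\widehat w_i\,\mathrm{IPW}_i(\widehat e)-\mathrm{IPW}_i(e^*))(\mathcal L'_j-\mu^*_{jt})$. The third consists of products of nuisance errors: $(\widehat\mu-\mu^*)$ times $(\widehat e-e^*)$, and the effect of $\widehat{\bs\beta}^{\mathrm{init}}$. The same decomposition applies to $\widehat W_I$, using the constraint with $j=0$; the additional term $\nabla\Psi\cdot(\widehat{\bs\beta}_I-\bs\beta^*)$ is controlled by the preceding step.

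\textbf{Concentration of each piece.} The oracle empirical processes are indexed by $\pi\in\Pi$, $\bs\beta$ in a $c_0$-ball, and the class $\mathcal U$. They are uniformly bounded, and their entropy is at most $\VC(\Pi)\log(1/\epsilon)$ plus the entropy of $\mathcal U$ from Assumption~\ref{assu:regularity-assumptions-on-U}(ii). Talagrand's (Bousquet's) inequality combined with a Dudley bound therefore gives deviations of order $\delta+C\sqrt{\VC(\Pi)/|I|}$ with tail $\exp(-c|I|\delta^2)$. For the dual problem of (\ref{eq:balancing for weights}), I would show that $\widehat w_i=\exp(\widehat{\bs\lambda}^\top\widehat{\bs\xi}_i)$ with $\widehat{\bs\lambda}$ close to the value that reproduces $1$-type weights. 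Strong convexity of the dual follows from the eigenvalue bound in Assumption~\ref{assu:regularity on mu}(ii), which yields $\max_i|\widehat w_i\,\widehat e/e^*-1|=O(\|\widehat e-e^*\|+\ldots)$ uniformly in $\pi$.

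\textbf{Main obstacle.} The cross-product terms are the hardest step, since they must be $O(|I|^{-1/2})$ with exponential tails uniformly over $\pi$. Because the nuisances and the weights are estimated on the same sample $I$, I would bound the products by $\|\widehat e-e^*\|_{P,2}\,\|\widehat\mu-\mu^*\|_{P,2}$ plus empirical-process corrections over the DNN classes. With the architectures in Assumption~\ref{assu:regularity-assumptions-on-nuisance}(ii), the rates are $|I|^{-s/(2s+d)}\log^3|I|$, and $s_e,s_\mu>d/2$ makes their product $o(|I|^{-1/2})$. Exponential tails come from the concentration inequalities for the DNN least-squares estimators. A deviation event of probability $\exp(-c|I|\delta^2)$ is absorbed using the bound $|\widehat W_I|\le C$, with $C_4$ chosen large. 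Finally, the remainder of the $\widehat{\bs\beta}$ linearization is handled by the convexity argument: the event $\|\widehat{\bs\beta}-\bs\beta^*\|>c_0$ has exponentially small probability, and within the ball the quadratic remainder has order $\delta^2$, which is dominated by $\delta$.
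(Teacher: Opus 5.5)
Your outline follows the same architecture as the paper's proof. The Lipschitz part matches the paper's argument: the score gap is at most $C_L P(\pi_1\neq\pi_2)$, it transfers to the roots, the mean value theorem with $\overline{\Psi}'$ finishes, and the trivial bound covers large $P(\pi_1\neq\pi_2)$. Make explicit that placing $\beta_j^*(\pi_2)$ in the region where $Q_j''(\cdot;\pi_1)\ge\underline{Q}''/2$ uses monotonicity of $Q_j'(\cdot;\pi_1)$, because the curvature bound is only local.

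The estimation part uses the same ingredients as the paper:
\begin{itemize}
\item exponential-tail $L_2$ rates for the DNN nuisances. For $\widehat\mu$ these must hold uniformly over $\boldsymbol\beta$ in the $c_0$-ball, since $\widehat\mu$ is evaluated at $\widehat{\boldsymbol\beta}_I^{\mathrm{init}}(\pi)$ for every $\pi$.
\item a dual-variable bound obtained from the eigenvalue condition;
\item localized maximal inequalities;
\item $L_2$ products of nuisance errors.
\end{itemize}

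There are two organizational differences. First, the paper compares the calibrated estimator with the AIPW form built from the same $(\widehat e,\widehat\mu)$. By the constraints, the difference is exactly $m^{-1}\sum_i(\widehat w_{I,i}-1)\,\mathrm{IPW}_i(\widehat e)\{R_i-\widehat\mu_{T_i}(\boldsymbol X_i)\}$, and what remains is ordinary double robustness. Your split around the oracle IPW is an equivalent rearrangement. Second, the paper does not linearize $\widehat{\boldsymbol\beta}_I$. It bounds $\sup_{|\beta-\beta^*(\pi)|\le\delta}|Q_m'(\beta;\pi)-Q'(\beta;\pi)|$ and applies the Hjort--Pollard convexity lemma. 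Only a rate is needed, and your remainder needs that same uniform-in-$\beta$ bound for the weighted score anyway (for the check loss the score is a step function). So the convexity route is leaner. The cross term of order $a_m(\Pi)\delta$ it produces is absorbed into the curvature using $a_m(\Pi)^2\lesssim\sqrt{\mathrm{VC}(\Pi)/m}$.

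One claim should be replaced. The dual analysis does not give $\max_i|\widehat w_i\widehat e/e^*-1|=O(\|\widehat e-e^*\|+\cdots)$ in any usable form.
\begin{itemize}
\item The weights are $\widehat w_{I,i}=\exp(-\widehat{\boldsymbol\lambda}^{\top}\boldsymbol A_i)$, a $(p+1)$-parameter tilt that in general cannot track $e^*/\widehat e$ pointwise.
\item A maximum over $i$ would need a sup-norm rate for $\widehat e$, which the DNN analysis does not provide.
\end{itemize}
What the eigenvalue argument actually yields is $\sup_{\pi\in\Pi}\|\widehat{\boldsymbol\lambda}_I(\pi)\|\lesssim\sqrt{\mathrm{VC}(\Pi)/m}+\rho_{e,m}+\rho_{\mu,m}+\delta$. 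Hence $\widehat w_{I,i}-1$ is of first order, not $O(m^{-1/2})$, because the mean of the dual gradient at zero is of order $\|\widehat e-e^*\|_{P,2}$. It becomes negligible only because it multiplies residuals whose conditional mean is $\mu^*(\cdot;\boldsymbol\beta)-\widehat\mu$. It should enter only through $L_2$ norms (as a variance proxy and via Cauchy--Schwarz). Any bound that uses only a max-norm on this multiplier would stay first order and fail.

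Finally, $\mathrm{VC}(\{\pi\})=0$. To read off Assumption~\ref{assu:general assu on hatW} from the uniform bound you must use $\max\{\mathrm{VC}(\Pi),1\}$. The paper's uniform lemma assumes $\mathrm{VC}(\Pi)\ge1$ precisely to absorb $o(m^{-1/2})$ terms such as $a_m(\Pi)^2$ into the offset.
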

Combined with Theorem~\ref{thm:oracle inequality-general}, Theorem~\ref{thm:oracle holdout unknown propensity} implies that the average welfare regret of the proposed policy-learning procedure adheres to the oracle inequality stated in~(\ref{eq:oracle inequality}); combined with Corollary~\ref{cor:oracle inequality prob bound}, it also yields the corresponding high-probability welfare regret bound.

\section{Empirical Application} \label{sec:empirical application}

To illustrate the practical value of the proposed policy learning procedure, we apply it to data from the National Job Training Partnership Act (JTPA) Study. This large-scale randomized controlled trial was commissioned by the U.S. Department of Labor to evaluate the effectiveness of publicly funded job-training programs. This dataset has become a benchmark in the policy evaluation and policy learning literature \citep{crippa2025Regret,ai2026data,abadie2002Instrumental,liu2025nonparametric, kitagawa2018Who, mbakop2021Model}.\footnote{The sample we use is taken from the supplementary materials of \cite{mbakop2021Model}, available at \url{https://onlinelibrary.wiley.com/doi/10.3982/ECTA16437}.}

Our analysis uses a sample of $N=11{,}008$ individuals. For each individual, we observe two baseline covariates: years of education ($X_{1}$) and pre-program earnings ($X_{2}$). The outcome of interest, $Y_i$, is the total earnings over the 30-month period following random assignment. Let $T\in\{0,1\}$ denote the randomized treatment assignment (the training offer), which means that the propensity score $e^{*}(\bs X)=P(T=1\mid \bs X)$ is constant and equal to $2/3$. While the true propensity score is known in this sample, we deliberately treat it as unknown to demonstrate the applicability of the method in situations where assignment probabilities are unavailable, partially observed, or require estimation.

We focus on a specific class of monotone, interpretable allocation rules, guided by the principle that, \emph{all else being equal}, individuals with lower socioeconomic status (such as less education or lower earnings) should be (weakly) prioritized for training. Let $\mathcal X_1$ and $\mathcal X_2$ represent the supports of education and pre-program earnings, respectively. We define the policy space as: 
\[
\Pi_{\infty}
=
\left\{
\pi: \mathcal{X}_1 \times \mathcal{X}_2 \to \{0,1\}
:
\pi(x_1, x_2) = 1\!\left(f(x_1) \geq x_2\right)
\text{ for some non-increasing } f
\right\}.
\]
The policy $\pi(x_1,x_2)=1\{x_2\le f(x_1)\}$ assigns an individual to training whenever their pre-program earnings fall below an education-specific cutoff $f(x_1)$. The restriction that $f$ is non-increasing ensures that the cutoff is (weakly) higher for individuals with less education, making the earnings criterion more lenient for them. This rule is transparent: it can be represented as a treatment region in the $(x_1,x_2)$-plane or, equivalently, as an estimated cutoff curve $\widehat f(x_1)$.

Previous studies (e.g., \citealp{mbakop2021Model}) have optimized this class of rules using a linear welfare criterion that maximizes average outcomes $\e[Y^{*}(\pi(\bs{X}))]$. However, such an objective neglects distributional concerns: a policy designed to maximize average income may inadvertently increase income disparities. To address this trade-off between \textit{efficiency} (aggregate income) and \textit{equity} (income dispersion), we adopt a nonlinear welfare criterion that penalizes outcome dispersion. Specifically, we aim to maximize the ratio of the mean outcome to its standard deviation, known as the inverse coefficient of variation:
\begin{equation}
    W_{\mathrm{ICV}}(\pi) = \frac{\e\left[Y^{*}(\pi(\bs{X}))\right]}{\sqrt{\Var\left(Y^{*}(\pi(\bs{X}))\right)}},
\end{equation}
where $\Var\left(Y^{*}(\pi(\bs{X}))\right) = \e\left[Y^{*}(\pi(\bs{X}))^2\right] - \left(\e\left[Y^{*}(\pi(\bs{X}))\right]\right)^2$. This objective is rooted in the axiomatic literature on inequality measurement (e.g., \citealp{atkinson1970Measurement}), which emphasizes that social welfare evaluations should balance efficiency (mean outcomes) against equity (distributional fairness). Maximizing $W_{\mathrm{ICV}}(\pi)$ is equivalent to minimizing the coefficient of variation, a scale-invariant measure of inequality that penalizes dispersion relative to the mean.

To reformulate this objective within our framework, we express it using the auxiliary parameters $\bs{\beta}^{*}(\pi)$ and the utility function $U(\cdot)$ introduced in Section~\ref{sec:general framework}. Since earnings are non-negative in our sample and the policy mean is positive for the policies considered here, maximizing $W_{\mathrm{ICV}}(\pi)$ is equivalent to maximizing its square. Simple algebra shows that maximizing $W_{\mathrm{ICV}}(\pi)^2$ is equivalent to maximizing the negative ratio of the second moment to the squared first moment:
\begin{equation} \label{eq:obj_transformed}
    \pi^* = \underset{\pi \in \Pi_{\infty}}{\arg\max}
    \left\{
    - \frac{\e\left[Y^*(\pi(\bs X))^2\right]}{\left(\e\left[Y^*(\pi(\bs X))\right]\right)^2}
    \right\}.
\end{equation}
This problem fits directly into our general framework, with a single auxiliary parameter ($p=1$). We define $\beta^{*}_1(\pi)$ to be the population mean of the potential outcome, corresponding to the quadratic loss $\mathcal{L}_1(v) = v^2/2$:
\[
    \beta_1^*(\pi) := \underset{\beta \in \R}{\arg\min}\ \e\left[ \frac{1}{2}\left( Y^*(\pi(\bs X)) - \beta \right)^2 \right]
    = \e\left[ Y^*(\pi(\bs X)) \right].
\]
The utility function is given by \(U(y, \bs x, \beta_1) := -y^2/\beta_1^2\).
With a slight abuse of notation, we denote the welfare criterion as:   
\[
    W(\pi)=\e\left[ U(Y^*(\pi(\bs X)), \bs X, \beta_1^*(\pi)) \right]
    = - \frac{\e[Y^*(\pi(\bs X))^2]}{(\e[Y^*(\pi(\bs X))])^2}.
\]
We learn the optimal policy $\pi^*$ from observational data by maximizing $\widehat{W}_{I}(\pi)$ as defined in equation (\ref{eq:def-of-W-hat-pi-I-unknown-propensity}) over the sieve approximating sequence described in Example~\ref{exa:monotone policies}. We use a $5$-fold cross-validation procedure (\ref{eq:CV criterion-general}) to select the best subclass.\footnote{Following \cite{mbakop2021Model}, this application contains only five candidate subclasses, $\Pi_1,\ldots,\Pi_{5}$.
}
To determine the best policy within each policy subclass, we employ the Strategic Monte Carlo Optimization (SMCO) algorithm as outlined in \citet{chen2026Optimization}. This algorithm demonstrates that, under suitable conditions, it converges to a local optimum from a single starting point and to a global optimum as the number of starting points increases. Consequently, we run SMCO from multiple starting points that are generated quasi-uniformly over a unit hypercube. This approach provides space-filling exploration of the parameter domain and enhances the robustness of the nonconvex search.\footnote{Following the subclass construction in \citet{mbakop2021Model}, the $\ell$th subclass can be indexed, after an appropriate reparametrization, by a vector $\bs{\theta}=(\theta_1,\ldots,\theta_{2^{\ell-1}+1})$ lying in a simplex-type set: $\theta_j\ge 0$ and $\sum_{j=1}^{2^{\ell-1}+1}\theta_j \le 1$. We construct an explicit mapping from the $(2^{\ell-1}+1)$-dimensional unit hypercube onto this set and run SMCO over the hypercube, which lets us generate quasi-uniform starting points while enforcing the simplex constraint by construction.}
 
 The welfare estimate $\widehat W_I(\pi)$ as described in equation (\ref{eq:def-of-W-hat-pi-I-unknown-propensity}) relies on several nuisance components, including the estimated propensity score $\widehat e_I(\cdot)$ and the estimated conditional mean functions $\widehat \mu_{I,jt}(\cdot;\bs\beta)$. We compute these estimates using deep neural networks, as detailed in Appendix~\ref{sec:app-nuisance-estimation}. The architecture of the network, including its depth and width, is determined through cross-validation on the same training sample used to train $\widehat W_I(\pi)$.

Figure~\ref{fig:best policy in Pi1 and Pi5} displays the best policies found in the simplest ($\Pi_{1}$) and the most complex ($\Pi_{5}$) subclasses of the approximating sequence.

\begin{figure}[h]
    \includegraphics[width=0.5\textwidth]{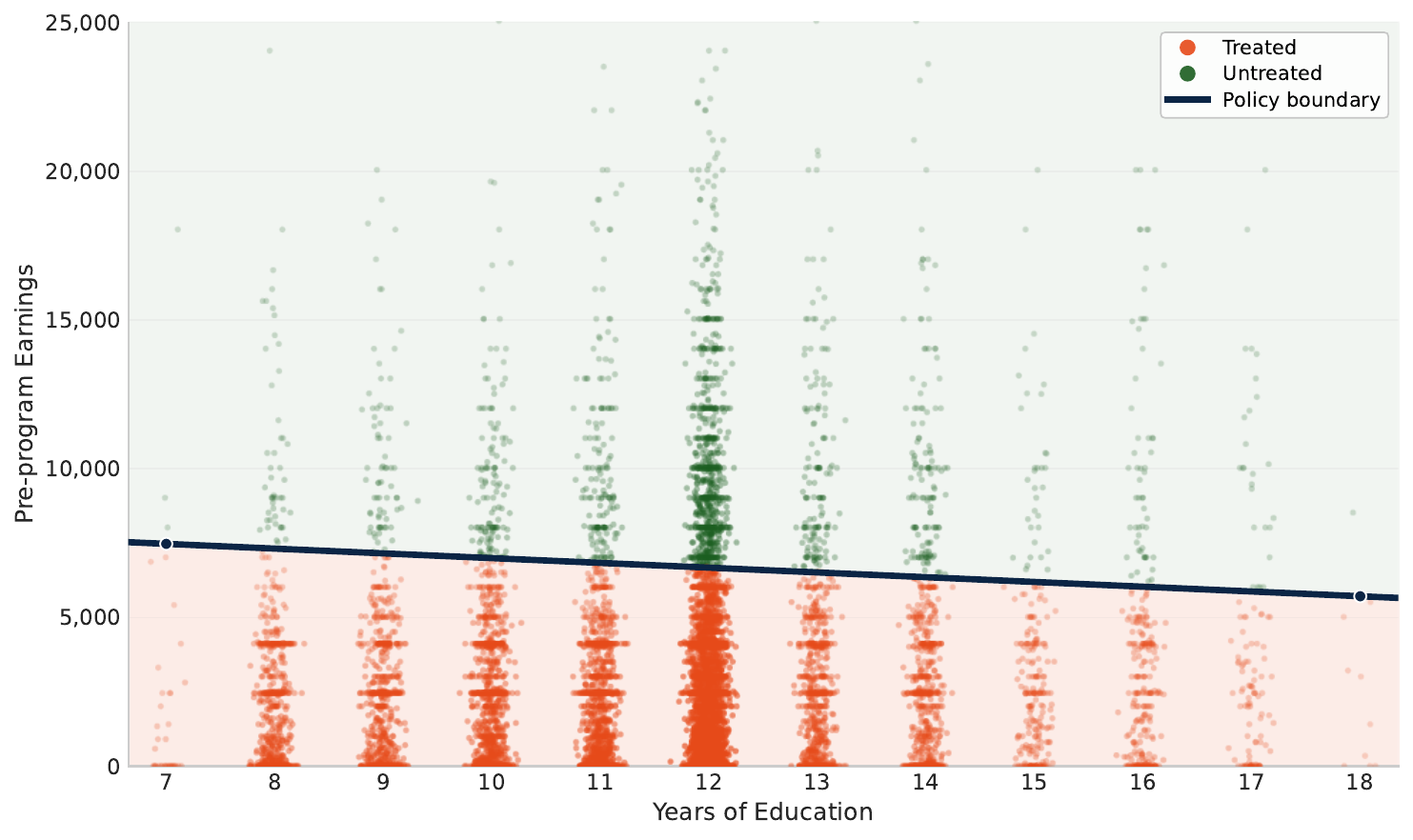}\includegraphics[width=0.5\textwidth]{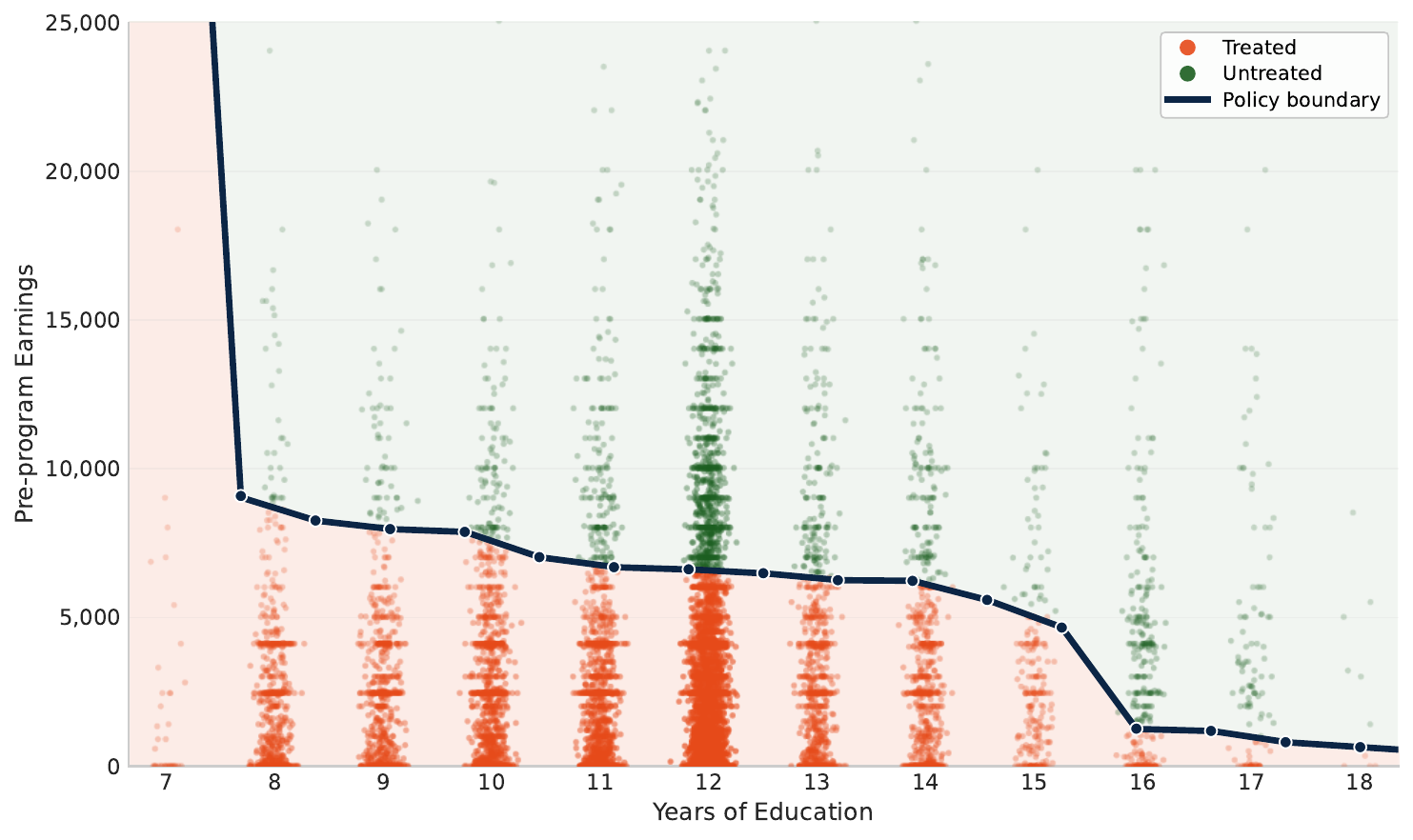}
    
\caption{\label{fig:best policy in Pi1 and Pi5}The best policy found in the
    simplest ($\Pi_{1}$) and most complicated ($\Pi_{5}$) classes. The $x$-axis represents years of education, while the $y$-axis indicates pre-program earnings. The red and green shaded areas represent individuals assigned to the treatment and control groups, respectively, under the estimated optimal policy. Left
    panel: best policy learned within $\Pi_{1}$; right panel: best policy
    learned within $\Pi_{5}$.}
    
    \end{figure}

    Our $5$-fold cross-validation procedure identifies $\Pi_{1}$ as the best subclass, and we denote the learned optimal policy as $\widehat{\pi}_{\mathrm{Nonlin}}$. We compare $\widehat{\pi}_{\mathrm{Nonlin}}$ with the benchmark policy $\widehat{\pi}_{\mathrm{Lin}}$, which was derived from penalized welfare maximization under the linear criterion $\e[Y^{*}(\pi(\bs{X}))]$ as discussed in \cite{mbakop2021Model}. To evaluate each policy, we compute the mean and standard deviation of its associated potential outcomes based on the full sample, using a de-biased estimator as described in Section~\ref{sec:empirical welfare}. For a given policy $\widehat{\pi}$, these estimators are defined as follows: 
    \[
    \widehat{\mathrm{Mean}}(\widehat{\pi}) = \frac{1}{N}\sum_{i=1}^{N}\widehat{w}_{i}(\widehat{\pi})\left\{\frac{\widehat{\pi}(\bs X_{i})T_{i}}{\widehat{e}(\bs X_{i})}+\frac{(1-\widehat{\pi}(\bs X_{i}))(1-T_{i})}{1-\widehat{e}(\bs X_{i})}\right\} Y_{i},
    \]
    and
    \[
    \widehat{\mathrm{SD}}(\widehat{\pi}) = \sqrt{\frac{1}{N}\sum_{i=1}^{N}\widehat{w}_{i}(\widehat{\pi})\left\{\frac{\widehat{\pi}(\bs X_{i})T_{i}}{\widehat{e}(\bs X_{i})}+\frac{(1-\widehat{\pi}(\bs X_{i}))(1-T_{i})}{1-\widehat{e}(\bs X_{i})}\right\} Y_{i}^{2} - \left(\widehat{\mathrm{Mean}}(\widehat{\pi})\right)^{2}},
    \]
    where $\widehat{e}(\bs X)$ is the DNN estimate of the propensity score using the full sample. The weights $\widehat{w}_{i}(\widehat{\pi})$ are determined by the following optimization problem: 
    \[
    \begin{cases}
        (\widehat{w}_{1}(\widehat{\pi}),\ldots,\widehat{w}_{N}(\widehat{\pi}))=\underset{w_{i}>0:i=1,\ldots,N}{\arg\min}\sum_{i=1}^{N}\left(w_{i}\log w_{i}-w_{i}\right)\text{ subject to}\\
    0=\frac{1}{N}\sum_{i=1}^{N}(1-\widehat{\pi}(\bs X_{i}))\left\{ \frac{w_{i}(1-T_{i})}{1-\widehat{e}(\bs X_{i})}-1\right\} \widehat{\e}\left[Y\mid\bs X=\bs X_i,T=0\right]\\
    \quad+\frac{1}{N}\sum_{i=1}^{N}\widehat{\pi}(\bs X_{i})\left\{ \frac{w_{i}T_{i}}{\widehat{e}(\bs X_{i})}-1\right\} \widehat{\e}\left[Y\mid\bs X=\bs X_i,T=1\right],\\
    0=\frac{1}{N}\sum_{i=1}^{N}(1-\widehat{\pi}(\bs X_{i}))\left\{ \frac{w_{i}(1-T_{i})}{1-\widehat{e}(\bs X_{i})}-1\right\} \widehat{\e}\left[Y^{2}\mid\bs X=\bs X_i,T=0\right]\\
    \quad+\frac{1}{N}\sum_{i=1}^{N}\widehat{\pi}(\bs X_{i})\left\{ \frac{w_{i}T_{i}}{\widehat{e}(\bs X_{i})}-1\right\} \widehat{\e}\left[Y^{2}\mid\bs X=\bs X_i,T=1\right],
    \end{cases}
    \]
    where $\widehat{\e}[Y\mid\bs X=\bs x,T=t]$ and $\widehat{\e}[Y^{2}\mid\bs X=\bs x,T=t]$ are the DNN estimates of $\e[Y\mid\bs X=\bs x,T=t]$ and $\e[Y^{2}\mid\bs X=\bs x,T=t]$, respectively.
    
    The empirical results highlight the trade-off inherent in our method. The benchmark policy $\widehat{\pi}_{\mathrm{Lin}}$ yields a mean outcome of $16{,}201.57$, with a standard deviation of $16{,}763.98$. In contrast, the policy $\widehat{\pi}_{\mathrm{Nonlin}}$, estimated under our nonlinear welfare criterion, yields a mean outcome of $16{,}132.78$ and a standard deviation of $16{,}617.18$. Relative to the benchmark, $\widehat{\pi}_{\mathrm{Nonlin}}$ reduces the mean outcome by approximately $0.42\%$ and the standard deviation by about $0.88\%$, reflecting our goal of achieving lower outcome dispersion under the nonlinear welfare criterion.

\section{Conclusion}\label{sec:conclusion}

This paper presents a data-driven policy-learning procedure that utilizes observational data to address a nonlinear welfare criterion within an infinite-dimensional policy space. The proposed learning procedure expands the existing literature on policy learning by moving from a linear (utilitarian) welfare criterion to a nonlinear one, transitioning from finite-dimensional to infinite-dimensional policy spaces, and shifting focus from a known propensity score to an unknown one. Additionally, we introduce a novel reweighting-based debiasing method, providing a valuable alternative to the current double debiasing approach. We applied this procedure to the JTPA study, where we found a balance between efficiency and equity.

However, a significant challenge remains in the computational aspect: determining the best policy $\widehat{\pi}_{\ell,I}$ is fundamentally a nonconvex optimization problem. Due to the nonlinearity of the welfare criterion and the structure of the policy space, multiple local optima may arise. Future research should aim to develop more efficient optimization techniques, such as tighter convex relaxations or advanced heuristic search algorithms, to tackle these computational challenges.

\appendix
\setlength{\abovedisplayskip}{2pt plus 1pt minus 1pt}
\setlength{\belowdisplayskip}{2pt plus 1pt minus 1pt}
\setlength{\abovedisplayshortskip}{0pt plus 1pt minus 1pt}
\setlength{\belowdisplayshortskip}{1pt plus 1pt minus 1pt}
\setlength{\jot}{0pt}
\begin{center}
{\LARGE\bfseries Appendix}
\end{center}
The appendices include the following components: DNN estimators, weight calibration, calculations of approximation errors, and proofs of the main results. Throughout the appendix, we write ``w.p.'' as shorthand for ``with probability.''

\textit{Notation}.
The notation used throughout this paper, and its appendices, is as follows. Let $\N:=\{0,1,2,\ldots\}$ and $\N_{+}:=\{1,2,\ldots\}$. For a scalar $a$, $\lfloor a\rfloor$ denotes the integer part of $a$. The indicator of an event $A$ is denoted by $1(A)$ or $1\{A\}$. For any set $D$, $|D|$ denotes its cardinality.

For a column vector $\bs x=(x_{1},\ldots,x_{d})^{\trans}\in\R^{d}$, the Euclidean norm is defined as $\|\bs x\|=(\bs x^{\trans}\bs x)^{1/2}$. The notation $\bs x_{-d}=(x_{1},\ldots,x_{d-1})^{\trans}$ separates the last coordinate from the remaining covariates. For a matrix $A$, $\|A\|$ denotes the operator norm, which is the largest singular value. When $A$ is symmetric and positive semidefinite, $\lambda_{\min}(A)$ indicates its smallest eigenvalue, which is also its smallest singular value. For a multi-index $\bs{\alpha}=(\alpha_{1},\ldots,\alpha_{d})\in\N^{d}$, we define $\|\bs{\alpha}\|_{1}:=\alpha_{1}+\cdots+\alpha_{d}$ and $\partial^{\bs{\alpha}}f$ as the partial derivative. For an integer $s\ge0$ and a domain $D$, $C^{s}(D)$ represents the class of functions that have continuous partial derivatives up to order $s$, equipped with the usual maximum sup-norm over those derivatives.

Let $(\Omega,\mathcal{F},P)$ be the underlying probability space, and let $\e$ denote the expectation under the probability measure $P$. For an integrable function $f$, we define $Pf:=\int f\,dP=\e[f]$. When there is no risk of confusion, we use $\bs Z$ to denote the observed data triple consisting of $Y$, $\bs X$, and $T$. For a training set $I\subset\{1,\ldots,N\}$, let $m=|I|$ represent the size of the training sample. We define the empirical measure based on the training sample as $P_{m}f:=m^{-1}\sum_{i\in I}f(\bs Z_{i})$.

For any random vector or matrix $\bs X$, we define the norm $\|\bs X\|_{P,q}:=(\int\|\bs X\|^{q}dP)^{1/q}$ for $q\ge1$. For any function $f:\mathcal{X}\subset\R^{d}\to\R^{d'}$, we define $\|f\|_{P,q}:=\|f(\bs X)\|_{P,q}$ and $\|f\|_{\infty}:=\sup_{\bs x\in\mathcal{X}}\|f(\bs x)\|$. For two sequences of random vectors $X_{n}$ and $Y_{n}$, we say $X_{n}=o_{P}(\|Y_{n}\|)$ if $P(\|X_{n}\|/\|Y_{n}\|>\epsilon)\to0$ for any $\epsilon>0$. Similarly, $X_{n}=O_{P}(\|Y_{n}\|)$ means that for any $\epsilon>0$ there exists $M<\infty$ such that $\limsup_{n\to\infty}P(\|X_{n}\|/\|Y_{n}\|\ge M)<\epsilon$. For two positive, non-random sequences $a_{n}$ and $b_{n}$, we write $a_{n}\lesssim b_{n}$ if $a_{n}\le Cb_{n}$ for some finite constant $C>0$ that is independent of $n$. We write $a_{n}\asymp b_{n}$ if both $a_{n}\lesssim b_{n}$ and $b_{n}\lesssim a_{n}$ hold.

For a function class $\mathcal{G}$ and a norm $\|\cdot\|_{Q,q}$, where $Q$ is a probability measure, the covering number $N(\epsilon,\mathcal{G},\|\cdot\|_{Q,q})$ is defined as the minimal cardinality of a finite set $\{g_{1},\ldots,g_{N_{\epsilon}}\}$ such that $\sup_{g\in\mathcal{G}}\inf_{1\le i\le N_{\epsilon}}\|g-g_{i}\|_{Q,q}<\epsilon$. Whenever $\sup_{Q}$ appears in entropy bounds, it refers to the supremum over finitely discrete probability measures. Following \citet{chernozhukov2014Gaussian}, when a class has a bounded envelope, we say it is VC-type with index $v$ if, for some constants $A>1$ and $C>0$, \(\sup_{Q}\log N(C\epsilon,\mathcal{G},\|\cdot\|_{Q,2})\leq Cv\log(A/\epsilon)\) for all \(0<\epsilon<1\). For a policy class $\Pi$, $\VC(\Pi)$ denotes its Vapnik--Chervonenkis dimension. For a function class $\mathcal{G}$ and a function $h$, we define $h\circ\mathcal{G}:=\{h\circ g:g\in\mathcal{G}\}$.

\section{DNN estimators}
\label{sec:app-nuisance-estimation}

This section presents DNN estimators for the propensity score and the conditional mean functions, as well as initial estimators of the intermediate parameters. Let
\begin{equation}
\begin{aligned}
& \mathcal{F}_{\mathrm{DNN}}(\mathcal{H},\mathcal{D})\\
:= & \Bigl\{\bs x\in\R^{d}\mapsto W_{\mathcal{D}}\bs{\sigma}\Big(\cdots\bs{\sigma}\Big(W_{3}\bs{\sigma}\Big(W_{2}\bs{\sigma}\Big(W_{1}\bs{\sigma}(W_{0}\bs x+b_{0})+b_{1}\Big)+b_{2}\Big)+b_{3}\Big)+\cdots\Big)+b_{\mathcal{D}}:\\
 & \quad W_{l}\in\R^{d_{l+1}\times d_{l}},b_{l}\in\R^{d_{l+1}},0\leq l\leq\mathcal{D},d_{0}=d,d_{\mathcal{D}+1}=1,\max\{d_{1},\ldots,d_{\mathcal{D}}\}\leq\mathcal{H}\Bigl\}
\end{aligned}
\label{eq:DNN class}
\end{equation}
denote a class of deep neural networks (DNNs) with depth $\mathcal{D}$ and width $\mathcal{H}$. These networks utilize fully connected feedforward architectures with the nonsmooth rectified linear unit (ReLU) activation function defined as $\bs{\sigma}(z)=\max\{z,0\}$, which is applied to each component of $z$ if $z$ is a vector.

To estimate the propensity score and ensure that the resulting estimates fall within the range $(0,1)$, we use a logistic transformation of the DNN class and estimate the propensity score function using a DNN and the subsample $I$ as follows:
\begin{equation}
  \begin{aligned}\widehat{e}_{I} & :=\underset{e\in\text{logistic}\circ\left\{ f\in\mathcal{F}_{\mathrm{DNN}}(\mathcal{H}_{e},\mathcal{D}_{e}):\|f\|_{\infty}\le M\right\} }{\arg\min}\sum_{i\in I}\left\{ \frac{T_{i}}{e^{2}(\bs X_{i})}-\frac{2}{e(\bs X_{i})}+\frac{1-T_{i}}{\left(1-e(\bs X_{i})\right)^{2}}-\frac{2}{1-e(\bs X_{i})}\right\} \end{aligned}
  \label{eq:logistic reg for propensity}
  \end{equation}
where $\mathcal{H}_{e}$ and $\mathcal{D}_{e}$ represent the width and depth of the DNN, respectively, and $\text{logistic}(x)=1/\{1+\exp(-x)\}$.

We estimate the conditional mean functions $\mu_{0t}^{*}(\bs x;\bs{\beta}):=\e[U(Y,\bs X,\bs{\beta})\mid\bs X=\bs x,T=t]$ and $\mu_{jt}^{*}(\bs x;\bs{\beta}):=\e[\mathcal{L}_{j}^{\prime}(Y-\beta_{j})\mid\bs X=\bs x,T=t]$, $j=1,\ldots,p$, using DNNs as follows, where $\mathcal{L}_{j}^{\prime}$ is understood in the context of Assumption~\ref{assu:regularity-conditions-on-L}:
\begin{align}
\widehat{\mu}_{I,jt}(\cdot;\bs{\beta})&:=\underset{f\in\mathcal{F}_{\mathrm{DNN}}(\mathcal{H}_{\mu},\mathcal{D}_{\mu})}{\arg\min}\frac{1}{\left|I\right|}\sum_{i\in I}1(T_{i}=t)(\mathcal{L}_{j}^{\prime}(Y_{i}-\beta_{j})-f(\bs X_{i}))^{2},\  1\leq j \leq p,\label{eq:DNN reg for L}\\
\widehat{\mu}_{I,0t}(\cdot;\bs{\beta})&:=\underset{f\in\mathcal{F}_{\mathrm{DNN}}(\mathcal{H}_{\mu},\mathcal{D}_{\mu})}{\arg\min}\frac{1}{\left|I\right|}\sum_{i\in I}1(T_{i}=t)(U(Y_{i},\bs X_{i},\bs{\beta})-f(\bs X_{i}))^{2}.\label{eq:DNN reg for U}
\end{align}
where $\mathcal{H}_{\mu}$ and $\mathcal{D}_{\mu}$ represent the width and depth of the DNN, respectively. Finally, we estimate the intermediate parameters, $\widehat{\bs{\beta}}_{I}^{\mathrm{init}}(\pi)$, as follows:
\begin{equation}
\widehat{\bs{\beta}}_{I}^{\mathrm{init}}(\pi)=\underset{\bs{\beta}=(\beta_{1},\ldots,\beta_{p})^{\trans}\in\R^{p}}{\arg\min}\sum_{j=1}^{p}\frac{1}{\left|I\right|}\sum_{i\in I}\Bigl\{\frac{\pi(\bs X_{i})T_{i}}{\widehat{e}_{I}(\bs X_{i})}+\frac{(1-\pi(\bs X_{i}))(1-T_{i})}{1-\widehat{e}_{I}(\bs X_{i})}\Bigr\}\mathcal{L}_{j}(Y_{i}-\beta_{j}).\label{eq:beta-init-pi-I}
\end{equation}

\subsection{Weights calibration}

To obtain the calibrated weights as defined in (\ref{eq:calibrated-weight-constraints}), we perform a first-order Taylor expansion of $\widehat{W}_{I}(\pi)-W(\pi)$ in $\widehat{\bs{\beta}}_{I}(\pi)$ around $\bs{\beta}^{*}(\pi)$:
\[
\begin{aligned} & \widehat{W}_{I}(\pi)-W(\pi)\\
= & \frac{1}{\left|I\right|}\sum\limits_{i\in I}\widehat{w}_{I,i}(\pi)\biggl\{\frac{(1-\pi(\bs X_{i}))(1-T_{i})}{1-\widehat{e}_{I}(\bs X_{i})}+\frac{\pi(\bs X_{i})T_{i}}{\widehat{e}_{I}(\bs X_{i})}\biggr\} U(Y_{i},\bs X_{i},\bs{\beta}^{*}(\pi))-W(\pi)\\
 & \ +\frac{1}{\left|I\right|}\sum\limits_{i\in I}\widehat{w}_{I,i}(\pi)\biggl\{\frac{(1-\pi(\bs X_{i}))(1-T_{i})}{1-\widehat{e}_{I}(\bs X_{i})}+\frac{\pi(\bs X_{i})T_{i}}{\widehat{e}_{I}(\bs X_{i})}\biggr\}\frac{\partial U(Y_{i},\bs X_{i},\bs{\beta}^{*}(\pi))}{\partial\bs{\beta}}\bigl\{\widehat{\bs{\beta}}_{I}(\pi)-\bs{\beta}^{*}(\pi)\bigr\}\\
 & \ +o(\|\widehat{\bs{\beta}}_{I}(\pi)-\bs{\beta}^{*}(\pi)\|).
\end{aligned}
\]
We refer to the first two displayed summands as the first and second terms, respectively. The bias in machine learning propagates directly through the estimated propensity score in the first term, and indirectly through $\widehat{\bs{\beta}}_{I}(\pi)-\bs{\beta}^{*}(\pi)$ in the second term. Decomposing the first term as $A_1+A_2$, where
\[
\begin{aligned}
A_{1}
&:=\frac{1}{\left|I\right|}\sum\limits_{i\in I}\widehat{w}_{I,i}(\pi)
\biggl\{\frac{(1-\pi(\bs X_{i}))(1-T_{i})}{1-\widehat{e}_{I}(\bs X_{i})}
+\frac{\pi(\bs X_{i})T_{i}}{\widehat{e}_{I}(\bs X_{i})}\biggr\}\\
&\qquad\times\bigl\{U(Y_{i},\bs X_{i},\bs{\beta}^{*}(\pi))-\mu_{0T_{i}}^{*}(\bs X_{i};\bs{\beta}^{*}(\pi))\bigr\},\\
A_{2}
&:=\frac{1}{\left|I\right|}\sum\limits_{i\in I}\widehat{w}_{I,i}(\pi)
\biggl\{\frac{(1-\pi(\bs X_{i}))(1-T_{i})}{1-\widehat{e}_{I}(\bs X_{i})}
+\frac{\pi(\bs X_{i})T_{i}}{\widehat{e}_{I}(\bs X_{i})}\biggr\}\\
&\qquad\times\mu_{0T_{i}}^{*}(\bs X_{i};\bs{\beta}^{*}(\pi))-W(\pi),
\end{aligned}
\]
we have by the definition of $\mu_{0t}^{*}$ that $\e[U(Y_{i},\bs X_{i},\bs{\beta}^{*}(\pi))-\mu_{0T_{i}}^{*}(\bs X_{i};\bs{\beta}^{*}(\pi))\mid T_{i},\bs X_{i}]=0$. Hence $A_{1}=O_{P}(|I|^{-1/2})$ by applying a central limit theorem. Now, we will rewrite the term $A_{2}$ as:
\[
\begin{aligned}
A_{2}&=\underbrace{\begin{aligned}[t]
\frac{1}{\left|I\right|}\sum\limits_{i\in I}\biggl[&(1-\pi(\bs X_{i}))\bigl\{\frac{\widehat{w}_{I,i}(\pi)(1-T_{i})}{1-\widehat{e}_{I}(\bs X_{i})}-1\bigr\}\mu_{00}^{*}(\bs X_{i};\bs{\beta}^{*}(\pi))\\
&+\pi(\bs X_{i})\bigl\{\frac{\widehat{w}_{I,i}(\pi)T_{i}}{\widehat{e}_{I}(\bs X_{i})}-1\bigr\}\mu_{01}^{*}(\bs X_{i};\bs{\beta}^{*}(\pi))\biggr]
\end{aligned}}_{:=A_{21}}\\
 & +\underbrace{\frac{1}{\left|I\right|}\sum\limits_{i\in I}\bigl\{(1-\pi(\bs X_{i}))\mu_{00}^{*}(\bs X_{i};\bs{\beta}^{*}(\pi))+\pi(\bs X_{i})\mu_{01}^{*}(\bs X_{i};\bs{\beta}^{*}(\pi))-W(\pi)\bigr\}}_{:=A_{22}}.
\end{aligned}
\]
Once again, $A_{22}=O_{P}(|I|^{-1/2})$ follows from a central limit theorem, indicating that the term $A_{21}$ represents the direct bias. To reduce this direct bias, we can adjust the weights to ensure that the direct bias becomes zero:
\[
\begin{aligned}
0 & =\frac{1}{\left|I\right|}\sum\limits_{i\in I}\biggl[(1-\pi(\bs X_{i}))\bigl\{\frac{\widehat{w}_{I,i}(\pi)(1-T_{i})}{1-\widehat{e}_{I}(\bs X_{i})}-1\bigr\}\mu_{00}^{*}(\bs X_{i};\bs{\beta}^{*}(\pi))\\
 & \qquad+\pi(\bs X_{i})\bigl\{\frac{\widehat{w}_{I,i}(\pi)T_{i}}{\widehat{e}_{I}(\bs X_{i})}-1\bigr\}\mu_{01}^{*}(\bs X_{i};\bs{\beta}^{*}(\pi))\biggr].
\end{aligned}
\]
By applying similar arguments to the first-order condition of (\ref{eq:def-of-beta-hat-pi-I-unknown-propensity}), we can determine the weights needed to eliminate the indirect bias for each $j=1,\ldots,p$:
\[
\begin{aligned}0= & \frac{1}{\left|I\right|}\sum\limits_{i\in I}(1-\pi(\bs X_{i}))\bigl\{\frac{\widehat{w}_{I,i}(\pi)(1-T_{i})}{1-\widehat{e}_{I}(\bs X_{i})}-1\bigr\}\mu_{j0}^{*}(\bs X_{i};\bs{\beta}^{*}(\pi))\\
 & +\frac{1}{\left|I\right|}\sum\limits_{i\in I}\pi(\bs X_{i})\bigl\{\frac{\widehat{w}_{I,i}(\pi)T_{i}}{\widehat{e}_{I}(\bs X_{i})}-1\bigr\}\mu_{j1}^{*}(\bs X_{i};\bs{\beta}^{*}(\pi)).
\end{aligned}
\]
Both $\mu_{jt}^{*}(\cdot;\bs{\beta})$ and $\bs{\beta}^{*}(\pi)$ are unknown, so we replace them with the initial estimator $\widehat{\bs{\beta}}_{I}^{\mathrm{init}}(\pi)$ as defined in (\ref{eq:beta-init-pi-I}) and with the DNN estimators as defined in (\ref{eq:DNN reg for L})--(\ref{eq:DNN reg for U}), and compute the calibrated weights from the minimization problem
\[
\begin{aligned}
  &\left(\widehat{w}_{I,i}(\pi):i\in I\right)=\underset{w_{i}>0:i\in I}{\arg\min}\sum_{i\in I}\left(w_{i}\log w_{i}-w_{i}\right)\\
  &\text{s.t.}\quad 0=\frac{1}{\left|I\right|}\sum\limits_{i\in I}(1-\pi(\bs X_{i}))\bigl\{\frac{w_{i}(1-T_{i})}{1-\widehat{e}_{I}(\bs X_{i})}-1\bigr\}\widehat{\mu}_{I,j0}(\bs X_{i};\widehat{\bs{\beta}}_{I}^{\mathrm{init}}(\pi))\\
&\quad+\frac{1}{\left|I\right|}\sum\limits_{i\in I}\pi(\bs X_{i})\bigl\{\frac{w_{i}T_{i}}{\widehat{e}_{I}(\bs X_{i})}-1\bigr\}\widehat{\mu}_{I,j1}(\bs X_{i};\widehat{\bs{\beta}}_{I}^{\mathrm{init}}(\pi)),\quad j=0,1,\ldots,p.
\end{aligned}
\]

\section{Calculation of approximation errors}
\label{sec:approximation-errors}
This section calculates the bounds on the approximation-error rates for example policy spaces, such as~\ref{exa:monotone policies}--\ref{exa:Neural Networks},
under the high-level Assumption~\ref{ass:high-level}.

Let $d_{\Delta}(\pi_{1},\pi_{2}):=P\bigl(\pi_{1}(\bs X)\neq\pi_{2}(\bs X)\bigr)$. It follows from Assumption~\ref{ass:high-level} that
$\inf_{\pi\in\Pi_{\ell}}|W(\pi^{*})-W(\pi)|\le C_{W}\inf_{\pi\in\Pi_{\ell}}d_{\Delta}(\pi^{*},\pi)$,
where $\pi^{*}:=\arg\max_{\pi\in\Pi_{\infty}}W(\pi)$. As a result, we obtain
\begin{equation}
W(\pi^{*})-\max_{\pi\in\Pi_{\ell}}W(\pi)\le \inf_{\pi\in\Pi_{\ell}}\left|W(\pi^{*})-W(\pi)\right|\le C_{W}\inf_{\pi\in\Pi_{\ell}}d_{\Delta}(\pi,\pi^{*}).\label{eq:approx-error-transfer}
\end{equation}

We study examples of monotone policies, decision trees, and neural networks from Section~\ref{sec:A-General-Learning}. In each example, we derive an explicit
bound for $\inf_{\pi\in\Pi_{\ell}}d_{\Delta}(\pi,\pi^{*})$ and then use
\eqref{eq:approx-error-transfer} to determine the corresponding welfare
approximation rate. We will start with the example of monotone policies.

\subsection{Example~\ref{exa:monotone policies}: monotone policies}

\subsubsection{Monotone policy class and the sieve}

In the case of monotone policies, it is convenient to normalize the supports such that $\bs X_{i}=(X_{i1},X_{i2})^{\trans}\in[0,1]^{2}$. Consider the monotone class defined as:
\[
\Pi_{\infty}=\Pi^{\mathrm{mon}}_{\infty}=\left\{ \pi_{f}(x_{1},x_{2})=1\{f(x_{1})\geq x_{2}\}:f:[0,1]\to[0,1]\ \text{is non-increasing}\right\} .
\]
This class has an infinite VC dimension; see \citet{devroye1996probabilistic}
and \citet{mbakop2021Model}.

The sieve used by \citet[Example 3.2]{mbakop2021Model} is constructed from
triangular basis functions. In their example, the sieve is defined as:
$G=\{(x_{1},x_{2}):x_{2}\ge f(x_{1}),\ f\text{ increasing}\}$. Using our notation, the same construction can be achieved by reversing the direction of
monotonicity, which simply changes the sign of the linear
inequality involving the coefficient vector.

Fix an integer $J\ge1$. For $j=0,\ldots,J$, define
\[
\psi_{J,j}(x)=\begin{cases}
1-\left|Jx-j\right|, & x\in\bigl[(j-1)/J,(j+1)/J\bigr]\cap[0,1],\\[0.3em]
0, & \text{otherwise}.
\end{cases}
\]
These are the standard ``hat'' or triangular basis functions. Define
the coefficient vector $\bs{\theta}_{J}=(\theta_{J,0},\ldots,\theta_{J,J})^{\trans}$
and the first-difference matrix
\[
D_{J}=\begin{bmatrix}-1 & 1 & 0 & \cdots & 0 & 0\\
0 & -1 & 1 & \cdots & 0 & 0\\
\vdots & \vdots & \vdots & \ddots & \vdots & \vdots\\
0 & 0 & 0 & \cdots & -1 & 1
\end{bmatrix}\in\R^{J\times(J+1)}.
\]
For our non-increasing class, the relevant coefficient restriction
is $D_{J}\bs{\theta}_{J}\le0$, i.e., $\theta_{J,0}\ge\theta_{J,1}\ge\cdots\ge\theta_{J,J}$.

We define the piecewise-linear function class and the associated policy class as
\[
\begin{aligned}
\mathcal{F}^{\mathrm{mon}}_{J}&:=\left\{ f_{\bs{\theta}}(x_{1}):=\sum^{J}_{j=0}\theta_{J,j}\psi_{J,j}(x_{1}):
\bs{\theta}_{J}\in[0,1]^{J+1},\ D_{J}\bs{\theta}_{J}\le0\right\},\\
\Pi^{\mathrm{mon}}_{J}&:=\left\{ \pi_{\bs{\theta}}(x_{1},x_{2})=1\{x_{2}\le f_{\bs{\theta}}(x_{1})\}:f_{\bs{\theta}}\in\mathcal{F}^{\mathrm{mon}}_{J}\right\}.
\end{aligned}
\]

For a sieve index $\ell$, we define $J_{\ell}:=2^{\ell}$ and $\Pi_{\ell}:=\Pi^{\mathrm{mon}}_{J_{\ell}}$.
This refinement is similar to that discussed in the literature \citet[Example 3.2]{mbakop2021Model},
with an inconsequential shift in the index.

\begin{rem}
The basis $\{\psi_{J,j}\}^{J}_{j=0}$ interpolates the coefficient
vector at the grid points: $f_{\bs{\theta}}(j/J)=\theta_{J,j}$. On the interval $[j/J,(j+1)/J]$, we have $f_{\bs{\theta}}(x)=\theta_{J,j}\{1-(Jx-j)\}+\theta_{J,j+1}(Jx-j)$,
which shows that $f_{\bs{\theta}}$ is linear on that interval. Therefore, $\mathcal{F}^{\mathrm{mon}}_{J}$ consists precisely of continuous piecewise-linear functions with knots at $j/J$, where the values at the knots are non-increasing.
\end{rem}

\subsubsection{Approximation error rate}

We will derive the approximation error rate in $d_{\Delta}(\pi,\pi^{*})$ and then convert it to a welfare approximation error rate using \eqref{eq:approx-error-transfer}.

\begin{assumption}[Regularity for the monotone example]
\label{ass:monotone-density} The optimal policy, $\pi^{*}$, belongs to the set $\Pi^{\mathrm{mon}}_{\infty}$, which implies that $\pi^{*}(x_{1},x_{2})=1\{x_{2}\le f^{*}(x_{1})\}$ for some non-increasing $f^{*}:[0,1]\to[0,1]$. Additionally, for every
$x_{2}\in[0,1]$, the conditional distribution of $X_{1}\mid X_{2}=x_{2}$
is absolutely continuous and has density bounded by
$A_{X}<\infty$.
\end{assumption}
\begin{prop}[Monotone-sieve approximation error rate and VC dimension]
\label{prop:monotone-approx-vc} Let $\Pi_{\ell}=\Pi^{\mathrm{mon}}_{2^{\ell}}$, and suppose that Assumptions~\ref{ass:high-level} and \ref{ass:monotone-density} hold. For every $\ell\ge1$, we have $\VC(\Pi_{\ell})\le2^{\ell}+3$ and \(W(\pi^{*})-\max_{\pi\in\Pi_{\ell}}W(\pi)\le C_{W}A_{X}\,2^{-\ell}\).
\end{prop}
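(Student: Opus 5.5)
The plan is to treat the two claims separately. The VC bound comes from a linear-span argument. The approximation bound comes from an explicit interpolating element of $\Pi_\ell$ combined with the transfer inequality \eqref{eq:approx-error-transfer}. Throughout, write $J=J_\ell=2^\ell$.

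\emph{VC dimension.} Each $\pi_{\bs\theta}\in\Pi^{\mathrm{mon}}_J$ is the indicator of $\{(x_1,x_2):g_{\bs\theta}(x_1,x_2)\le 0\}$, where
\[
g_{\bs\theta}(x_1,x_2)=x_2-\sum_{j=0}^{J}\theta_{J,j}\psi_{J,j}(x_1).
\]
All such $g_{\bs\theta}$ lie in the finite-dimensional vector space $\mathcal G$ spanned by the functions $(x_1,x_2)\mapsto x_2$ and $\psi_{J,0}(x_1),\ldots,\psi_{J,J}(x_1)$, so $\dim\mathcal G\le J+2$. By the classical result of Dudley (Wenocur--Dudley), the class $\{\{g\le 0\}:g\in\mathcal G\}$ has VC dimension at most $\dim\mathcal G$. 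The constraints $\bs\theta_J\in[0,1]^{J+1}$ and $D_J\bs\theta_J\le 0$ only restrict us to a subclass, and passing to a subclass cannot increase the VC dimension. Hence $\VC(\Pi_\ell)\le J+2\le 2^\ell+3$.

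\emph{Approximation.} By \eqref{eq:approx-error-transfer} it suffices to find some $\pi\in\Pi_\ell$ with $d_\Delta(\pi,\pi^*)\le A_X/J$.
\begin{enumerate}
\item[(a)] Take the interpolant $\theta_{J,j}=a_j:=f^*(j/J)$. Since $f^*$ is non-increasing with values in $[0,1]$, we have $\bs\theta_J\in[0,1]^{J+1}$ and $D_J\bs\theta_J\le 0$, so $\pi_{\bs\theta}\in\Pi_\ell$.
\item[(b)] On each cell $[j/J,(j+1)/J]$, both $f^*$ (by monotonicity) and the linear piece $f_{\bs\theta}$ take values in $[a_{j+1},a_j]$.
\item[(c)] The two policies disagree at $(x_1,x_2)$ exactly when $\min\{f^*(x_1),f_{\bs\theta}(x_1)\}<x_2\le\max\{f^*(x_1),f_{\bs\theta}(x_1)\}$. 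So on cell $j$, disagreement forces $x_2\in(a_{j+1},a_j]$.
\item[(d)] Because the $a_j$ are monotone, these half-open bands are pairwise disjoint. Hence for each fixed $x_2$, the $x_1$-section of the disagreement set lies in at most one cell and has Lebesgue measure at most $1/J$. Cell endpoints form a null set.
\item[(e)] Conditioning on $X_2$ and using Assumption~\ref{ass:monotone-density}, the conditional probability that $X_1$ lies in this section is at most $A_X/J$. Integrating over the law of $X_2$ gives $d_\Delta(\pi_{\bs\theta},\pi^*)\le A_X 2^{-\ell}$.
\end{enumerate}
Multiplying by $C_W$ via \eqref{eq:approx-error-transfer} yields the stated rate.

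The main obstacle is step (d), getting the constant $A_X$ rather than $2A_X$. With closed bands $[a_{j+1},a_j]$, a value $x_2$ equal to a knot value could meet two adjacent cells. This matters because $X_2$ may have atoms, so these ties cannot simply be discarded as null events. The fix is to work with the exact one-sided form of the disagreement event in (c) rather than the closed band: on the cell to the left of such a knot, both curves stay $\ge a_j$ in the interior, so $\min<x_2=a_j$ fails there. I expect this bookkeeping to take the most care.
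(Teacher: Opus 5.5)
Your proposal is correct and follows essentially the paper's route. For the VC bound, the paper applies van der Vaart--Wellner's Lemma~2.6.15 to the subgraphs of the $(J+1)$-dimensional span of the hat functions, while you apply Dudley's theorem to the $(J+2)$-dimensional space obtained by adding $x_2$; for the approximation bound, both arguments use the grid interpolant $f^*(j/J)$, bound the disagreement region by bands, and control it with the bounded conditional density of $X_1$ given $X_2$.

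Your treatment of the horizontal band edges, via the strict inequality in the disagreement event, is the correct justification. The paper instead discards those edges by appeal to the conditional-density assumption, which does not make lines $\{x_2=c\}$ null when $X_2$ has atoms. Its final inclusion into half-open rectangles is nonetheless correct, for exactly the reason you give.
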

\begin{proof}
To define the parameter $J$, we set $J=J_{\ell}=2^{\ell}$. By construction, every function $f\in\mathcal{F}^{\mathrm{mon}}_{J}$
can be expressed as: $f(x_{1})=\sum^{J}_{j=0}\theta_{j}\psi_{J,j}(x_{1})$, which indicates that $\mathcal{F}^{\mathrm{mon}}_{J}$ is contained in the linear span of the basis functions $\{\psi_{J,0},\ldots,\psi_{J,J}\}$. These basis functions are linearly independent. Specifically, if $\sum^{J}_{j=0}a_{j}\psi_{J,j}(x)=0$ for
all $x\in[0,1]$, then evaluating this equation at the grid points $x=i/J$ leads
to $a_{i}=0$ for every $i=0,\ldots,J$, because $\psi_{J,j}(i/J)=1\{i=j\}$.
Consequently, the dimension of the ambient linear space is $J+1$.

Next, $\Pi^{\mathrm{mon}}_{J}$ represents the class of subgraphs generated
by $\mathcal{F}^{\mathrm{mon}}_{J}$, as we define $\pi_{\bs{\theta}}(x_{1},x_{2})=1\{x_{2}\le f_{\bs{\theta}}(x_{1})\}$.
Thus, $\Pi^{\mathrm{mon}}_{J}$ is a subclass of the subgraph
class generated by the $(J+1)$-dimensional linear span of the functions $\{\psi_{J,0},\ldots,\psi_{J,J}\}$. According to Lemma~2.6.15 from \citet{vaart1996Weak}, the VC dimension of the subgraph class of a finite-dimensional vector space of measurable functions is bounded by the dimension plus $2$. Therefore, $\VC(\Pi_{\ell})=\VC(\Pi^{\mathrm{mon}}_{J})\le(J+1)+2=J+3=2^{\ell}+3$.

To prove the approximation error bound, we define $J=J_{\ell}=2^{\ell}$
and consider the grid $\xi_{j}:=\frac{j}{J}$, $j=0,1,\ldots,J$.
We define the coefficient vector by sampling the true boundary given by $\theta^{*}_{j}:=f^{*}(\xi_{j})$,
$j=0,1,\ldots,J$. Since $f^{*}$ is non-increasing, it follows that $\theta^{*}_{0}\ge\theta^{*}_{1}\ge\cdots\ge\theta^{*}_{J}$. This implies that $D_{J}\bs{\theta}^{*}_{J}\le0$, where $\bs{\theta}^{*}_{J}=(\theta^{*}_{0},\ldots,\theta^{*}_{J})^{\top}$. Thus, we can define the linear interpolant $\tilde{f}_{\ell}(x_{1}):=\sum^{J}_{j=0}\theta^{*}_{j}\psi_{J,j}(x_{1})$, which belongs to $\mathcal{F}^{\mathrm{mon}}_{J}$. Consequently, we define $\tilde{\pi}_{\ell}(x_{1},x_{2}):=1\{x_{2}\le\tilde{f}_{\ell}(x_{1})\}\in\Pi_{\ell}$.
By \eqref{eq:approx-error-transfer}, it is sufficient to bound $d_{\Delta}(\tilde{\pi}_{\ell},\pi^{*})$.

For $i=1,\ldots,J$, we define the rectangles $M_{i}:=[\xi_{i-1},\xi_{i}]\times[f^{*}(\xi_{i}),f^{*}(\xi_{i-1})]$.
Since $f^{*}$ is non-increasing, each $M_{i}$ forms a rectangle (which may have zero height). On the interval $[\xi_{i-1},\xi_{i}]$, both the graph of $f^{*}$ and the graph of its linear interpolant $\tilde{f}_{\ell}$
lie within $M_{i}$. This is due to the fact that the first graph respects monotonicity, while the second graph, $\tilde{f}_{\ell}$, is represented by the straight line segment connecting $(\xi_{i-1},f^{*}(\xi_{i-1}))$ and $(\xi_{i},f^{*}(\xi_{i}))$.

Thus, the area where $\pi^{*}$ and $\tilde{\pi}_{\ell}$ differ is included within the union of these rectangles: \(\{(x_{1},x_{2}):\tilde{\pi}_{\ell}(x_{1},x_{2})\neq\pi^{*}(x_{1},x_{2})\}\subseteq\bigcup^{J}_{i=1}M_{i}\).
Therefore, we have $d_{\Delta}(\tilde{\pi}_{\ell},\pi^{*})\le\sum^{J}_{i=1}P_{X}(M_{i})$. Let $M_{i}=M_{1i}\times M_{2i}$, where  $M_{1i}:=[\xi_{i-1},\xi_{i}]$ and $M_{2i}:=(f^{*}(\xi_{i}),f^{*}(\xi_{i-1})]$, with boundary points omitted because they have zero probability under the conditional-density assumption. Based on the conditional-density
assumption, we have:
\begin{align*}
P((X_{1},X_{2})\in M_{i}) & =\int_{M_{2i}}P(X_{1}\in M_{1i}\mid X_{2}=x_{2})\,dP_{X_{2}}(x_{2})\\
&\le\int_{M_{2i}}A_{X}(\xi_{i}-\xi_{i-1})\,dP_{X_{2}}(x_{2})=A_{X}J^{-1}P(X_{2}\in M_{2i}).
\end{align*}
Summing over $i$, we obtain $d_{\Delta}(\tilde{\pi}_{\ell},\pi^{*})\le A_{X}J^{-1}\sum^{J}_{i=1}P(X_{2}\in M_{2i})$.
Since $f^{*}$ is non-increasing, the intervals $M_{2i}$ are disjoint
and their union is contained within $[0,1]$. Therefore, we have $\sum^{J}_{i=1}P_{X_{2}}(M_{2i})\le1$.
Thus, $d_{\Delta}(\tilde{\pi}_{\ell},\pi^{*})\le A_{X}J^{-1}=A_{X}2^{-\ell}$.
Given that $\tilde{\pi}_{\ell}\in\Pi_{\ell}$, this implies $\inf_{\pi\in\Pi_{\ell}}d_{\Delta}(\pi,\pi^{*})\le d_{\Delta}(\tilde{\pi}_{\ell},\pi^{*})\leq A_{X}2^{-\ell}$.
The bound for the approximation error $W(\pi^{*})-\max_{\pi\in\Pi_{\ell}}W(\pi)$
follows immediately from \eqref{eq:approx-error-transfer}. This
completes the proof.
\end{proof}

\subsection{Examples~\ref{exa:Decision Trees} and~\ref{exa:Neural Networks}: A Smooth Decision Boundary Class}

We will now consider a class of policies characterized by a smooth decision boundary.
Let $d\ge2$ and $\bs X=(\bs X_{-d},X_{d})\in[0,1]^{d-1}\times[0,1]=[0,1]^{d}$.
For a positive number $M>0$ and $s\in\mathbb{N}_{+}$, we define the class of policies as follows:
\[
\begin{aligned}
\Pi^{\mathrm{SDB}}(M,s)&:=\left\{ \pi_{f}(\bs x_{-d},x_{d})=1\{x_{d}\le f(\bs x_{-d})\}: f\in C^{s}([0,1]^{d-1}),\ \|f\|_{C^{s}}\le M,\ 0\le f\le1\right\},\\
\|f\|_{C^{s}}&:=\max_{\bs{\alpha}\in\mathbb{N}^{d-1},\,\|\bs{\alpha}\|_{1}\le s}
\sup_{\bs x_{-d}\in[0,1]^{d-1}}\left|\partial^{\bs{\alpha}}f(\bs x_{-d})\right|.
\end{aligned}
\]
We will show that this class has an infinite VC dimension for
$d\ge2$.

\begin{lem}[Infinite VC dimension of the fixed-radius smooth decision-boundary
class]
\label{lem:infinite-vc-sdb-fixed} For every $s\in\mathbb{N}_{+}$
and every $M>0$, the class $\Pi^{\mathrm{SDB}}(M,s)$ has an infinite
VC dimension.
\end{lem}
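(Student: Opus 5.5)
To show infinite VC dimension, I will prove that for every $n\in\N_{+}$ the class $\Pi^{\mathrm{SDB}}(M,s)$ shatters some set of $n$ points in $[0,1]^{d}$. The construction uses small, disjointly supported smooth bumps whose amplitudes are scaled down so that the $C^{s}$ norm stays below $M$. The key fact is that shattering only needs the boundary $f$ to lie slightly above or slightly below a given height at finitely many locations. Arbitrarily small perturbations of a constant achieve this, so the fixed radius $M$ is never binding.

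\textbf{Step 1 (bump function).} Fix a $C^{\infty}$ bump $\phi:\R^{d-1}\to[0,1]$ supported in the open unit ball, with $\phi(\bs 0)=1$. Let $B:=\max_{\|\bs\alpha\|_{1}\le s}\sup|\partial^{\bs\alpha}\phi|<\infty$. Given $n$, choose distinct centers $\bs z_{1},\ldots,\bs z_{n}$ in the interior of $[0,1]^{d-1}$, for instance on a line. Pick a radius $r>0$ such that the balls $B(\bs z_{i},r)$ are pairwise disjoint and contained in $(0,1)^{d-1}$. Set $\phi_{i}(\bs x):=\phi((\bs x-\bs z_{i})/r)$. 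Then $|\partial^{\bs\alpha}\phi_{i}|\le B r^{-\|\bs\alpha\|_{1}}\le B r^{-s}$ for $r\le1$.

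\textbf{Step 2 (points and functions).} For a labeling $\bs b\in\{0,1\}^{n}$, define
\[
f_{\bs b}(\bs x):=\tfrac12+\eta\sum_{i=1}^{n}(2b_{i}-1)\phi_{i}(\bs x),
\]
where $\eta>0$ is to be chosen. Take the points $P_{i}:=(\bs z_{i},\tfrac12)\in[0,1]^{d}$. Because the supports are disjoint, $f_{\bs b}(\bs z_{i})=\tfrac12+\eta(2b_{i}-1)$. Hence $\pi_{f_{\bs b}}(P_{i})=1\{\tfrac12\le f_{\bs b}(\bs z_{i})\}=b_{i}$, so every labeling is realized.

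\textbf{Step 3 (membership in the class).} This is the only place requiring care. Disjointness of the supports means that at each point at most one bump is active. Thus for $\bs\alpha\neq\bs 0$ we have $\sup|\partial^{\bs\alpha}f_{\bs b}|\le\eta B r^{-s}$, and for $\bs\alpha=\bs 0$ we have $\sup|f_{\bs b}|\le\tfrac12+\eta$. Choosing $\eta\le\min\{\tfrac12,\,M r^{s}/B\}$ makes the derivative terms at most $M$ and keeps $0\le f_{\bs b}\le1$. The zeroth-order term $\sup|f_{\bs b}|$, however, also needs to be at most $M$, and this fails when $M<\tfrac12$. I would handle this by replacing the baseline $\tfrac12$ with a height $h:=\min\{\tfrac12, M/2\}$, taking $P_{i}:=(\bs z_{i},h)$, and imposing $\eta\le h/2$ as well. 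Then $0\le f_{\bs b}\le\min\{1,M\}$ and all derivative bounds hold, so $f_{\bs b}\in C^{s}$ with $\|f_{\bs b}\|_{C^{s}}\le M$.

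Since $n$ was arbitrary, $\Pi^{\mathrm{SDB}}(M,s)$ shatters sets of every finite size, and its VC dimension is infinite. The main obstacle is the norm bookkeeping in Step 3, specifically the zeroth-order term when $M$ is small. Using a small baseline height $h$ resolves it, so I expect no real difficulty.
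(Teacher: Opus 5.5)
Your proof is correct and is essentially the paper's argument: disjointly supported rescaled bumps with amplitude of order $r^{s}$ around the baseline height $h=\min\{M,1\}/2$ (exactly the paper's $b_M$), with the shattered points placed at $(\bs z_i,h)$. One small remark: with baseline $\tfrac12$ the zeroth-order bound can fail for any $M<1$, not only for $M<\tfrac12$; this does not matter, because your final choice $h=\min\{\tfrac12,M/2\}$ and $\eta\le\min\{h/2,\,Mr^{s}/B\}$ applies to every $M$.
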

\begin{proof}
Let $n$ be any integer with $n\geq 1$. We will demonstrate that $\Pi^{\mathrm{SDB}}(M,s)$ can
shatter $n$ points in the space $[0,1]^{d}$. Define the points as follows:  $\bs u_{j}:=\left(\frac{j}{n+1},\frac{1}{2},\ldots,\frac{1}{2}\right)^{\top}\in(0,1)^{d-1}$,
$j=1,\ldots,n$. Let $r_{n}:=\frac{1}{n+1}$. The points $\bs u_{1},\ldots,\bs u_{n}$
are distinct, and the distance, measured by the supremum norm $\left\Vert \bs u_{j+1}-\bs u_{j}\right\Vert _{\infty}$,
between consecutive points is exactly $r_{n}$. Choose a nonnegative
function $\psi\in C^{\infty}(\mathbb{R}^{d-1})$ such that $0\le\psi\le1$, $\psi(\bs 0)=1$, and $\mathrm{supp}(\psi)\subset\mathcal{B}(\bs 0,1/4)$,
where $\mathcal{B}(\bs 0,1/4)$ denotes the closed Euclidean ball
of radius $1/4$ centered at $\bs 0$. Define
$C_{\psi}:=\max\{1,\max_{\bs{\alpha}\in\mathbb{N}^{d-1},\,\|\bs{\alpha}\|_{1}\le s}\sup_{\bs u\in\mathbb{R}^{d-1}}|\partial^{\bs{\alpha}}\psi(\bs u)|\}<\infty$.
For each $j=1,\ldots,n$, define the rescaled bump function as $\varphi_{j}(\bs u):=\psi((\bs u-\bs u_{j})/r_{n})$, $\bs u\in[0,1]^{d-1}$.
Since the support of $\psi$ is contained in $\mathcal{B}(\bs 0,1/4)$, we have $\mathrm{supp}(\varphi_{j})\subset\mathcal{B}(\bs u_{j},r_{n}/4)$. Given that the centers $\bs u_{j}$ are spaced $r_{n}$ apart, these supports are pairwise disjoint. Therefore, at every point $\bs u\in[0,1]^{d-1}$, at most one of the functions $\varphi_{1}(\bs u),\ldots,\varphi_{n}(\bs u)$ can be nonzero.

Next, set $b_{M}:=\min\{M,1\}/2$ and $\varepsilon_{n}:=\min\{M,1\}r^{s}_{n}/(8C_{\psi})$. For any label vector $\bs{\eta}=(\eta_{1},\ldots,\eta_{n})\in\{0,1\}^{n}$, define
$f_{\bs{\eta}}(\bs u):=b_{M}+\varepsilon_{n}\sum^{n}_{j=1}(2\eta_{j}-1)\varphi_{j}(\bs u)$, $\bs u\in[0,1]^{d-1}$.
We now verify that $f_{\bs{\eta}}\in C^{s}([0,1]^{d-1})$, $\|f_{\bs{\eta}}\|_{C^{s}}\le M$, and $0\le f_{\bs{\eta}}\le1$.

First, since the supports of the $\varphi_{j}$'s are pairwise disjoint,
at each $\bs u$ at most one term in the sum contributes, so $b_{M}-\varepsilon_{n}\le f_{\bs{\eta}}(\bs u)\le b_{M}+\varepsilon_{n}$ for all $\bs u\in[0,1]^{d-1}$. Because $r^{s}_{n}\le1$, we have $\varepsilon_{n}=\frac{\min\{M,1\}}{8C_{\psi}}r^{s}_{n}\le\frac{\min\{M,1\}}{8}$, and thus $0\le b_{M}-\varepsilon_{n}\le f_{\bs{\eta}}(\bs u)\le b_{M}+\varepsilon_{n}\le\frac{5}{8}\min\{M,1\}\le1$,
so, $0\le f_{\bs{\eta}}\le1$. In particular, $\sup_{\bs u\in[0,1]^{d-1}}|f_{\bs{\eta}}(\bs u)|\le\frac{5}{8}\min\{M,1\}\le M$.

Next, let $\bs{\alpha}\in\mathbb{N}^{d-1}$ satisfy $1\le\|\bs{\alpha}\|_{1}\le s$.
Since the supports are disjoint, at each $\bs u$ at most one term is active. According to the chain rule,
$\partial^{\bs{\alpha}}\varphi_{j}(\bs u)=r^{-\|\bs{\alpha}\|_{1}}_{n}(\partial^{\bs{\alpha}}\psi)((\bs u-\bs u_{j})/r_{n})$; hence
\[
\begin{aligned}
\sup_{\bs u\in[0,1]^{d-1}}\left|\partial^{\bs{\alpha}}f_{\bs{\eta}}(\bs u)\right|&\le\varepsilon_{n}\sup_{1\le j\le n}\sup_{\bs u\in[0,1]^{d-1}}\left|\partial^{\bs{\alpha}}\varphi_{j}(\bs u)\right|\\
&\leq\varepsilon_{n}C_{\psi}r^{-\|\bs{\alpha}\|_{1}}_{n}=\frac{\min\{M,1\}}{8}r^{s-\|\bs{\alpha}\|_{1}}_{n}\le\frac{\min\{M,1\}}{8}\le M,
\end{aligned}
\]
where the last line uses $\|\bs{\alpha}\|_{1}\le s$ and $r_{n}\le1$.
Since $\|f_{\bs{\eta}}\|_{C^{s}}\le M$, it follows that $f_{\bs{\eta}}$ belongs
to the function class defining $\Pi^{\mathrm{SDB}}(M,s)$. Consequently,
the corresponding policy is defined as $\pi_{\bs{\eta}}(\bs x_{-d},x_{d}):=1\{x_{d}\le f_{\bs{\eta}}(\bs x_{-d})\}$, which also
belongs to $\Pi^{\mathrm{SDB}}(M,s)$.

Define the $n$ points in $[0,1]^{d}$ as $\bs{\xi}_{j}:=(\bs u_{j},b_{M})$, for
$j=1,\ldots,n$. Since $\varphi_{j}(\bs u_{j})=1$ and $\varphi_{i}(\bs u_{j})=0$
for $i\neq j$, we obtain that $f_{\bs{\eta}}(\bs u_{j})=b_{M}+\varepsilon_{n}(2\eta_{j}-1)$
for every $j=1,\ldots,n$. Therefore, $\pi_{\bs{\eta}}(\bs{\xi}_{j})=1\{b_{M}\le f_{\bs{\eta}}(\bs u_{j})\}=1\{b_{M}\le b_{M}+\varepsilon_{n}(2\eta_{j}-1)\}=\eta_{j}$.
Every labeling of the set $\{\bs{\xi}_{1},\ldots,\bs{\xi}_{n}\}$
can be achieved by a policy in $\Pi^{\mathrm{SDB}}(M,s)$. Since $n\ge1$
is arbitrary, the class $\Pi^{\mathrm{SDB}}(M,s)$ can shatter arbitrarily
large finite sets. Thus, $\VC\bigl(\Pi^{\mathrm{SDB}}(M,s)\bigr)=\infty$.
\end{proof}

To obtain explicit rates of approximation error, we make the following
assumption:
\begin{assumption}
\label{ass:sdb} There exist $s\in\mathbb{N}_{+}$, $M>0$, and a function
$f^{*}\in C^{s}([0,1]^{d-1})$ with $\left\Vert f^{*}\right\Vert _{C^{s}}\le M$
and $0\le f^{*}\le1$ such that $\pi^{*}(\bs x_{-d},x_{d})=1\{x_{d}\le f^{*}(\bs x_{-d})\}$.
Furthermore, for every $\bs x_{-d}\in[0,1]^{d-1}$, the conditional
distribution of $X_{d}\mid\bs X_{-d}=\bs x_{-d}$ is absolutely continuous
with a density that is bounded above by a constant $A_{X}<\infty$, uniformly across all $\bs x_{-d}$.
\end{assumption}
The next lemma is a crucial tool that transforms a uniform approximation
of $f^{*}$ into a $d_{\Delta}$-approximation of $\pi^{*}$.
\begin{lem}[Policy distance controlled by decision-boundary sup-norm error]
\label{lem:sdb-to-policy} Assuming that Assumption~\ref{ass:sdb} holds, for any measurable functions $f,g:[0,1]^{d-1}\to[0,1]$, let $\pi_{f}(\bs x_{-d},x_{d})=1\{x_{d}\le f(\bs x_{-d})\}$ and $\pi_{g}(\bs x_{-d},x_{d})=1\{x_{d}\le g(\bs x_{-d})\}$. Then \(d_{\Delta}(\pi_{f},\pi_{g})\le A_{X}\,\e[|f(\bs X_{-d})-g(\bs X_{-d})|]\leq A_{X}\,\left\Vert f-g\right\Vert _{\infty}\).
\end{lem}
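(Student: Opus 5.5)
The plan is to compute $d_{\Delta}(\pi_{f},\pi_{g})=P(\pi_f(\bs X)\neq\pi_g(\bs X))$ by conditioning on $\bs X_{-d}$, and then to use the bounded conditional density from Assumption~\ref{ass:sdb}.

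First, fix $\bs x_{-d}\in[0,1]^{d-1}$. The two policies disagree at $(\bs x_{-d},x_d)$ exactly when $x_d$ lies between the two thresholds. Concretely, this happens when $x_d\in(\min\{f(\bs x_{-d}),g(\bs x_{-d})\},\max\{f(\bs x_{-d}),g(\bs x_{-d})\}]$. So the disagreement event, viewed as a set in $[0,1]^d$, is
\[
\{(\bs x_{-d},x_d): \min\{f,g\}(\bs x_{-d})<x_d\le\max\{f,g\}(\bs x_{-d})\}.
\]
This set is measurable because $f$ and $g$ are measurable.

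Second, by the tower property (Fubini applied to the regular conditional distribution),
\[
d_{\Delta}(\pi_f,\pi_g)=\e\bigl[P\bigl(\min\{f,g\}(\bs X_{-d})<X_d\le\max\{f,g\}(\bs X_{-d})\mid\bs X_{-d}\bigr)\bigr].
\]
Given $\bs X_{-d}=\bs x_{-d}$, the conditional law of $X_d$ has a density bounded by $A_X$. The conditional probability of an interval of length $|f(\bs x_{-d})-g(\bs x_{-d})|$ is therefore at most $A_X|f(\bs x_{-d})-g(\bs x_{-d})|$. Taking expectations gives the first inequality, $d_\Delta\le A_X\e[|f(\bs X_{-d})-g(\bs X_{-d})|]$. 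The second inequality is immediate, since $|f-g|\le\|f-g\|_\infty$ pointwise.

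The only delicate point is justifying the conditioning step. Here I would invoke the existence of a regular conditional distribution of $X_d$ given $\bs X_{-d}$; this is guaranteed on $[0,1]^d$, and Assumption~\ref{ass:sdb} presupposes it. With that in hand, the disintegration formula $P(A)=\int P(A_{\bs x_{-d}}\mid\bs X_{-d}=\bs x_{-d})\,dP_{\bs X_{-d}}$ applies to the measurable disagreement set $A$ and its sections $A_{\bs x_{-d}}$. Endpoint conventions are irrelevant because the conditional law is absolutely continuous. The argument parallels the rectangle bound in the proof of Proposition~\ref{prop:monotone-approx-vc}, with the roles of the coordinates swapped.
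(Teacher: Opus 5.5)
Your proposal is correct and follows the paper's proof: condition on $\bs X_{-d}$, identify the disagreement set as $\{\min(f,g)(\bs X_{-d})<X_d\le\max(f,g)(\bs X_{-d})\}$, bound its conditional probability by $A_X|f-g|$ using the density bound, and take expectations. Your remarks on measurability and on the existence of a regular conditional distribution make explicit details that the paper leaves implicit.
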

\begin{proof}
Condition on $\bs X_{-d}$. The two policies differ when
$X_{d}$ lies between $f(\bs X_{-d})$ and $g(\bs X_{-d})$. Hence, by the uniform bound $A_{X}$ on the conditional density of $X_{d}\mid\bs X_{-d}=\bs x_{-d}$,
\[
\begin{aligned}
d_{\Delta}(\pi_{f},\pi_{g})&=\e\!\left[P\{\min(f,g)(\bs X_{-d})<X_{d}\le\max(f,g)(\bs X_{-d})\mid\bs X_{-d}\}\right]\\
&\leq A_{X}\e\left[\left|f(\bs X_{-d})-g(\bs X_{-d})\right|\right]\leq A_{X}\left\Vert f-g\right\Vert _{\infty}.
\end{aligned}
\]
This completes the proof.
\end{proof}
For Examples~\ref{exa:Decision Trees} and~\ref{exa:Neural Networks}, we impose the smooth decision-boundary condition $\pi^{*}\in\Pi^{\mathrm{SDB}}(M,s)$ for some fixed $M$ and $s$, and use decision-tree and neural-network sieves as finite-dimensional approximation classes.

\subsubsection{Example~\ref{exa:Decision Trees}: decision trees}

Fix an integer $\ell\ge1$. A binary decision tree with a depth of at most
$\ell$ is defined as follows: Each internal node is assigned a coordinate
index $j\in\{1,\ldots,d\}$ and a threshold $b\in\mathbb{R}$. An
observation $\bs x\in[0,1]^{d}$ reaching that node is sent to the
left child if $x_{j}<b$ and to the right child otherwise. Each leaf node
is assigned an action label in $\{0,1\}$. The tree defines a binary
policy $\pi:\mathcal{X}\to\{0,1\}$ by guiding $\bs x$ from the root
to a terminal leaf according to these splitting rules, and then outputting
the label of the terminal leaf. Let $\Pi^{\mathrm{DT}}_{\ell}$ denote
the class of all such policies whose depth is at most $\ell$.

This class is sufficiently robust to represent the piecewise-constant threshold
approximants developed below. It also facilitates the VC-dimension
analysis, as every tree in the class $\Pi^{\mathrm{DT}}_{\ell}$ can be
embedded into a complete binary tree of depth $\ell$. This is achieved by padding
premature leaves with dummy descendants without altering the induced
policy.

\begin{prop}[Approximation error and VC dimension for the decision-tree sieve]
\label{prop:dt-approx-vc} Assuming that $d\geq2$ and Assumptions~\ref{ass:high-level}
and \ref{ass:sdb} are satisfied, let $\Pi_{\ell}=\Pi^{\mathrm{DT}}_{\ell}$.
Then, there exists a constant $C>0$, independent of $\ell$, such
that for every $\ell\ge1$,
\[
W(\pi^{*})-\max_{\pi\in\Pi_{\ell}}W(\pi)\le C\cdot2^{-\left\lfloor \frac{\ell-1}{d-1}\right\rfloor },\qquad
\VC(\Pi_{\ell})\le C\cdot2^{\ell}(\ell+\log d).
\]
\end{prop}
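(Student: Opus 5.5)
The proof splits into two independent parts: an approximation bound via an explicit tree, and a VC bound via counting.

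\emph{Approximation.} By \eqref{eq:approx-error-transfer}, it suffices to exhibit $\tilde\pi_\ell\in\Pi^{\mathrm{DT}}_\ell$ with $d_\Delta(\tilde\pi_\ell,\pi^*)\lesssim 2^{-\lfloor(\ell-1)/(d-1)\rfloor}$. Set $k:=\lfloor(\ell-1)/(d-1)\rfloor$ and partition $[0,1]^{d-1}$ into the $2^{k(d-1)}$ dyadic cubes $Q$ of side $2^{-k}$. Define the piecewise-constant boundary $\tilde f(\bs x_{-d}):=f^*(\bs c_Q)$ for $\bs x_{-d}\in Q$, where $\bs c_Q$ is any point of $Q$. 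Since $\|f^*\|_{C^s}\le M$ with $s\ge1$, $f^*$ is Lipschitz with constant at most $M\sqrt{d-1}$. Hence $\|\tilde f-f^*\|_\infty\le M(d-1)2^{-k}$, and $0\le\tilde f\le1$. Lemma~\ref{lem:sdb-to-policy} then gives $d_\Delta(\pi_{\tilde f},\pi^*)\le A_XM(d-1)2^{-k}$.

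It remains to realize $\pi_{\tilde f}$ as a tree of depth at most $\ell$:
\begin{itemize}
\item Identify the cube containing $\bs x_{-d}$ by bisecting each of the $d-1$ coordinates $k$ times, using $k(d-1)\le \ell-1$ levels of axis-aligned splits $x_j<b$ with dyadic $b$.
\item At each resulting node, add one split on $x_d$ at threshold $\tilde f(\bs c_Q)$. The right child (which receives $x_d\ge b$) gets label $0$ and the left child gets label $1$.
\end{itemize}
The resulting total depth is at most $k(d-1)+1\le\ell$; if it is strictly smaller, pad with dummy nodes. The split on $x_d$ reproduces $1\{x_d\le b\}$ up to the event $\{x_d=b\}$. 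That event has probability zero by the conditional-density part of Assumption~\ref{ass:sdb}, so $d_\Delta$ is unchanged. Dyadic boundary ties between cubes are harmless for the same reason, or can be avoided by using half-open cubes matched to the strict-inequality splitting rule.

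\emph{VC dimension.} Embed every tree into the complete binary tree of depth $\ell$. Such a tree has $2^\ell-1$ internal nodes, each carrying a coordinate $j\in\{1,\ldots,d\}$ and a threshold, and $2^\ell$ leaves, each carrying a label. I will use the standard growth-function count for $n$ points:
\begin{itemize}
\item A single node with a fixed coordinate induces at most $n+1$ distinct dichotomies of the points.
\item Over all coordinates, this gives at most $d(n+1)$ choices per node.
\item The leaf labels contribute a factor of $2^{2^\ell}$.
\end{itemize}
This yields $S(n)\le (d(n+1))^{2^\ell-1}2^{2^\ell}$. Shattering requires $2^n\le S(n)$, i.e. $n\le 2^\ell\{\log_2(d(n+1))+1\}$. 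Solving this inequality (the standard step: $n\le a\log n+b$ implies $n\lesssim a\log a+b$) gives $n\le C2^\ell(\ell+\log d)$, with an absolute constant $C$.

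The main obstacle is this last inversion together with the constant bookkeeping. I would handle it by substituting $n=c2^\ell(\ell+\log_2 d)$ and checking that the inequality fails for large $c$. All remaining steps are routine given Lemma~\ref{lem:sdb-to-policy}. The constant $C$ in the approximation bound depends on $A_X$, $M$, $d$ and $C_W$, but not on $\ell$, as required.
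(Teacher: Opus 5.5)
Your proposal is correct and follows the paper's proof essentially step for step: the same dyadic piecewise-constant boundary with $k=\lfloor(\ell-1)/(d-1)\rfloor$, realized by a tree of depth $k(d-1)+1\le\ell$ and transferred via Lemma~\ref{lem:sdb-to-policy} and \eqref{eq:approx-error-transfer}, and the same complete-tree count $2^{n}\le 2^{2^{\ell}}\{d(n+1)\}^{2^{\ell}-1}$ inverted to $n\lesssim 2^{\ell}(\ell+\log d)$. One small correction: cube-boundary ties are not null events under Assumption~\ref{ass:sdb}, which constrains only the conditional law of $X_d$ given $\bs X_{-d}$, so rely on your second option (half-open cubes matched to the rule $x_j<b$, exactly as the paper does); your density argument for the $\{x_d=b\}$ tie is correct, and it is a bit more careful than the paper's treatment.
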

\begin{proof}
Let $J_{\ell}:=\left\lfloor \frac{\ell-1}{d-1}\right\rfloor $ and $h_{\ell}:=2^{-J_{\ell}}$. According to Assumption~\ref{ass:sdb}, we have
$s\in\mathbb{N}_{+}$ and $f^{*}\in C^{s}([0,1]^{d-1})$ with $\|f^{*}\|_{C^{s}}\le M$.
In particular, every first-order partial derivative of $f^{*}$ is bounded by $M$. Thus, for any $\bs u,\bs v\in[0,1]^{d-1}$, \(f^{*}(\bs u)-f^{*}(\bs v)=\int^{1}_{0}\nabla f^{*}\!\bigl(\bs v+t(\bs u-\bs v)\bigr)^{\top}(\bs u-\bs v)\,dt\).
Consequently, $|f^{*}(\bs u)-f^{*}(\bs v)|\le\int^{1}_{0}\sum^{d-1}_{j=1}|\partial_{j}f^{*}(\bs v+t(\bs u-\bs v))|\,|u_{j}-v_{j}|\,dt\le(d-1)M\|\bs u-\bs v\|_{\infty}$.
Thus, $f^{*}$ is Lipschitz continuous with Lipschitz constant $L_{f}:=(d-1)M$.

We now construct a piecewise-constant approximation to $f^{*}$ on
a dyadic partition of $[0,1]^{d-1}$. The term ``dyadic'' refers to the property that
every side length is a negative power of $2$. Specifically, for
each multi-index $\bs k=(k_{1},\ldots,k_{d-1})$ where $k_{r}\in\{0,1,\ldots,2^{J_{\ell}}-1\}$,
let $I_{r,k_{r}}=[k_{r}/2^{J_{\ell}},(k_{r}+1)/2^{J_{\ell}})$ if $k_{r}<2^{J_{\ell}}-1$ and $I_{r,k_{r}}=[k_{r}/2^{J_{\ell}},1]$ otherwise. Define $Q_{\bs k}:=\prod^{d-1}_{r=1}I_{r,k_{r}}$. Each set $Q_{\bs k}$ is a $(d-1)$-dimensional
cube whose side length is exactly $h_{\ell}=2^{-J_{\ell}}$; we refer to
such a set as a dyadic cube. The collection of all these cubes forms
a partition of $[0,1]^{d-1}$, which we denote by $\mathcal{Q}_{\ell}$.
Thus every point $\bs u\in[0,1]^{d-1}$ belongs to exactly one cube
$Q\in\mathcal{Q}_{\ell}$. A two-dimensional illustration of the dyadic
partition, a terminal cube $Q$, and the representative point $\bs u_{Q}$
is provided in Figure~\ref{fig:dt-dyadic-partition}.

\begin{figure}[t]
\centering
\begin{tikzpicture}[scale=3.3,>=Stealth]
  \draw[->] (-0.05,0) -- (1.08,0) node[right] {$x_1$};
  \draw[->] (0,-0.05) -- (0,1.08) node[above] {$x_2$};

  \draw[thick] (0,0) rectangle (1,1);
  \foreach \x in {0.25,0.5,0.75} {\draw[thin] (\x,0) -- (\x,1);}
  \foreach \y in {0.25,0.5,0.75} {\draw[thin] (0,\y) -- (1,\y);}

  \fill[gray!18] (0.50,0.25) rectangle (0.75,0.50);

  \node at (0.57,0.44) {$Q$};

  \fill (0.645,0.36) circle (0.012);

  \draw[->] (0.55,0.31) -- (0.638,0.355);
  \node at (0.49,0.285) {$\bs u_Q$};

  \draw[<->] (0.50,0.215) -- (0.75,0.215);
  \node[below] at (0.625,0.215) {$h_{\ell}$};

  \draw[->] (0.80,0.42) -- (0.98,0.62);
  \node[anchor=west,align=left] at (1.00,0.66)
    {$f_{\ell}(\bs u)=f^*(\bs u_Q)$\\ for all $\bs u\in Q$};

  \node[below left] at (0,0) {$0$};
  \node[below] at (1,0) {$1$};
  \node[left] at (0,1) {$1$};
\end{tikzpicture}
\caption{The dyadic partition of the square $[0,1]^2$ is depicted for the case where $J_{\ell}=2$ (illustrated with $d-1=2$). Each terminal cube has a side length of $h_{\ell}=2^{-J_{\ell}}$. Within each cube $Q$, the approximation $f_{\ell}$ is constant and equals $f^*(\bs u_Q)$ at a representative point $\bs u_Q\in Q$. The representative point $\bs u_Q$ is indicated by the black dot.}
\label{fig:dt-dyadic-partition}
\end{figure}

For each cube $Q\in\mathcal{Q}_{\ell}$, choose one representative
point $\bs u_{Q}\in Q$ and define $f_{\ell}(\bs u):=\sum_{Q\in\mathcal{Q}_{\ell}}f^{*}(\bs u_{Q})1\{\bs u\in Q\}$.
Since every $\bs u$ belongs to exactly one cube, the function $f_{\ell}$ is well-defined and constant on each cube $Q$. If $\bs u\in Q$, then $\|\bs u-\bs u_{Q}\|_{\infty}\le h_{\ell}$
because $Q$ has a side length of $h_{\ell}$. Therefore, we can express the difference as follows:  $|f_{\ell}(\bs u)-f^{*}(\bs u)|=|f^{*}(\bs u_{Q})-f^{*}(\bs u)|\le L_{f}h_{\ell}=L_{f}2^{-J_{\ell}}$.
Thus, $\|f_{\ell}-f^{*}\|_{\infty}\le L_{f}2^{-J_{\ell}}$.

Next, we define the policy $\pi_{\ell}(\bs x_{-d},x_{d}):=1\{x_{d}\le f_{\ell}(\bs x_{-d})\}$.
We now demonstrate that $\pi_{\ell}\in\Pi_{\ell}$ by explicitly
constructing a decision tree of a depth of at most $\ell$ that implements
this policy. Figure~\ref{fig:dt-tree-construction} illustrates this
construction in the case $d=3$ and $J_{\ell}=1$: the first $d-1$
coordinates determine the terminal cube containing $\bs x_{-d}$,
and the final split on $x_{d}$ compares $x_{d}$ with the cube-specific
threshold.

\begin{figure}[t]

\centering
\begin{tikzpicture}[>=Stealth, x=1cm, y=1cm,
  split/.style={draw, rounded corners=2pt, minimum width=16mm, minimum height=8mm, align=center},
  leaf/.style={draw, circle, minimum size=7mm, inner sep=0pt},
  edge label/.style={font=\scriptsize, fill=white, inner sep=1pt}]

  \node[split] (r) at (0,0) {$x_1<1/2$};
  \node[split] (a) at (-3.4,-1.6) {$x_2<1/2$};
  \node[split] (b) at ( 3.4,-1.6) {$x_2<1/2$};

  \node[split] (a1) at (-5.1,-3.2) {$x_3\le c_1$};
  \node[split] (a2) at (-1.7,-3.2) {$x_3\le c_2$};
  \node[split] (b1) at ( 1.7,-3.2) {$x_3\le c_3$};
  \node[split] (b2) at ( 5.1,-3.2) {$x_3\le c_4$};

  \node[leaf] (l11) at (-6.0,-4.8) {$1$};
  \node[leaf] (l12) at (-4.2,-4.8) {$0$};
  \node[leaf] (l21) at (-2.6,-4.8) {$1$};
  \node[leaf] (l22) at (-0.8,-4.8) {$0$};
  \node[leaf] (l31) at ( 0.8,-4.8) {$1$};
  \node[leaf] (l32) at ( 2.6,-4.8) {$0$};
  \node[leaf] (l41) at ( 4.2,-4.8) {$1$};
  \node[leaf] (l42) at ( 6.0,-4.8) {$0$};

  \draw (r) -- (a) node[midway, above left, edge label] {yes};
  \draw (r) -- (b) node[midway, above right, edge label] {no};

  \draw (a) -- (a1) node[midway, above left, edge label] {yes};
  \draw (a) -- (a2) node[midway, above right, edge label] {no};
  \draw (b) -- (b1) node[midway, above left, edge label] {yes};
  \draw (b) -- (b2) node[midway, above right, edge label] {no};

  \draw (a1) -- (l11);
  \draw (a1) -- (l12);
  \draw (a2) -- (l21);
  \draw (a2) -- (l22);
  \draw (b1) -- (l31);
  \draw (b1) -- (l32);
  \draw (b2) -- (l41);
  \draw (b2) -- (l42);

  \node[font=\small] at (-5.1,-2.55) {$Q_1$};
  \node[font=\small] at (-1.7,-2.55) {$Q_2$};
  \node[font=\small] at ( 1.7,-2.55) {$Q_3$};
  \node[font=\small] at ( 5.1,-2.55) {$Q_4$};
\end{tikzpicture}
\caption{In this illustration of tree construction where $d=3$ and $J_{\ell}=1$, the first two levels divide the $(x_1,x_2)$-space into four dyadic cubes $Q_1,\ldots,Q_4$. The final division on $x_3$ compares it to the cube-specific constant $c_i=f^*(\bs u_{Q_i})=f_{\ell}(\bs x_{-3})$ for $\bs x_{-3}\in Q_i$.}
\label{fig:dt-tree-construction}
\end{figure}

Start from the root node, which corresponds to the entire domain $[0,1]^{d-1}$
for the first $d-1$ coordinates. We first build the partition $\mathcal{Q}_{\ell}$
using only these initial coordinates. The \emph{refinement } process is defined as follows. Suppose the current cell in the first $d-1$
coordinates is of the form $\prod^{d-1}_{r=1}[a_{r},a_{r}+\delta)$
for some side length $\delta>0$. We split this cell once in the first coordinate
at the midpoint $a_{1}+\delta/2$. Then, we split each resulting
child cell once in the second coordinate at its midpoint. This process continues sequentially through coordinates 3 to $d-1$. After these $d-1$ successive
splits, the original cell is divided into $2^{d-1}$ smaller
cubes, each with a side length of $\delta/2$.

Apply this refinement procedure repeatedly. At the beginning, there is a
unique cell, which is $[0,1]^{d-1}$, so its side length is $1$. After one
refinement round, every resulting cell has a side length of $2^{-1}$.
After two rounds of refinement, each resulting cell has a side length of
$2^{-2}$. Continuing in this pattern, after exactly $J_{\ell}$ refinement
rounds, every resulting cell will have a side length of $2^{-J_{\ell}}$ and
will therefore be one of the cubes in $\mathcal{Q}_{\ell}$. These final
cells are precisely the leaves produced by the partitioning stage of
the construction. When we refer to a \emph{terminal cube}, we mean
one of these final cubes in $\mathcal{Q}_{\ell}$, or equivalently,
one leaf cell obtained after the $J_{\ell}$ refinement rounds on
the first $d-1$ coordinates.

Along any root-to-leaf path in this partitioning stage, each refinement
round uses exactly $d-1$ binary splits, one for each of the first
$d-1$ coordinates. Therefore, the total number of splits needed to
determine which terminal cube in $\mathcal{Q}_{\ell}$ contains $\bs x_{-d}$
is exactly $J_{\ell}(d-1)$.

We then add one final split at each terminal cube $Q\in\mathcal{Q}_{\ell}$
using the last coordinate $x_{d}$. Since $f_{\ell}$ is constant
on $Q$, the value $f_{\ell}(\bs u)$ remains the same for all $\bs u\in Q$
and equals $f^{*}(\bs u_{Q})$ by construction. Thus, once the path
has identified that $\bs x_{-d}\in Q$, we can perform a split on the last coordinate
at the threshold $f^{*}(\bs u_{Q})$. We send the observation to the left
child if $x_{d}\le f^{*}(\bs u_{Q})$ and to the right child otherwise.
We assign label $1$ to the left leaf and label $0$ to the right leaf.
Because $f_{\ell}(\bs x_{-d})=f^{*}(\bs u_{Q})$ whenever $\bs x_{-d}\in Q$,
this final split effectively implements exactly the rule $1\{x_{d}\le f_{\ell}(\bs x_{-d})\}$
for all $(\bs x_{-d},x_{d})\in Q\times[0,1]$.

Since this process is performed for every terminal cube $Q\in\mathcal{Q}_{\ell}$,
the resulting tree implements the policy $\pi_{\ell}(\bs x_{-d},x_{d})=1\{x_{d}\le f_{\ell}(\bs x_{-d})\}$
across the entire space. The depth of the tree is $J_{\ell}(d-1)+1$, because the first
$J_{\ell}(d-1)$ splits determine the cube containing $\bs x_{-d}$
and the final split compares $x_{d}$ with the constant value associated
with that cube. By definition, $J_{\ell}(d-1)+1\le\ell$, so $\pi_{\ell}\in\Pi_{\ell}$.

From Lemma~\ref{lem:sdb-to-policy}, $\inf_{\pi\in\Pi_{\ell}}d_{\Delta}(\pi,\pi^{*})\le d_{\Delta}(\pi_{\ell},\pi^{*})\le A_{X}\|f_{\ell}-f^{*}\|_{\infty}\le A_{X}L_{f}2^{-J_{\ell}}$.
The bound for the welfare approximation error follows from (\ref{eq:approx-error-transfer}).

To prove the VC-dimension bound, we first reduce the problem
to complete trees. A tree in the class $\Pi_{\ell}$ may terminate early at some branches, resulting in leaves appearing at a depth of $r<\ell$. To address this, we can attach
a full binary subtree of depth $\ell-r$ below each leaf that appears at depth $r$. All newly created leaves in this subtree will be assigned the same
label as the original leaf. This adjustment does not alter the policy
implemented by the tree: once the original leaf is reached, the continuation will still return the same label. Thus, for the purpose
of bounding the VC dimension, we can focus on complete binary
trees of depth $\ell$, where every root-to-leaf path has a length of exactly $\ell$. Such a tree contains $2^{\ell}-1$ internal nodes and
$2^{\ell}$ leaves. Here, an internal node refers to a non-terminal splitting
node, while a leaf denotes a terminal node that carries a label from the set $\{0,1\}$.
Figure~\ref{fig:dt-complete-tree} illustrates the padding argument, which
reduces the problem to complete binary trees of depth $\ell$
without changing the induced policy.

\begin{figure}[t]

\centering
\begin{tikzpicture}[>=Stealth, x=1cm, y=1cm,
  split/.style={draw, rounded corners=2pt, minimum width=16mm, minimum height=8mm, align=center},
  leaf/.style={draw, circle, minimum size=7mm, inner sep=0pt},
  dummy/.style={draw, dashed, rounded corners=2pt, minimum width=16mm, minimum height=8mm, align=center}]

  \node at (-4.1,1.1) {original tree};
  \node[split] (A) at (-4.1,0) {$x_j<b$};
  \node[leaf]  (B) at (-5.3,-1.4) {$1$};
  \node[split] (C) at (-2.9,-1.4) {$x_k<c$};
  \node[leaf]  (D) at (-3.7,-2.8) {$0$};
  \node[leaf]  (E) at (-2.1,-2.8) {$1$};
  \draw (A) -- (B);
  \draw (A) -- (C);
  \draw (C) -- (D);
  \draw (C) -- (E);

  \draw[->, thick] (-0.9,-1.4) -- (0.9,-1.4);
  \node at (0,-0.95) {pad};

  \node at (4.1,1.1) {complete tree of depth $2$};
  \node[split] (A2) at (4.1,0) {$x_j<b$};
  \node[dummy] (B2) at (2.9,-1.4) {dummy};
  \node[split] (C2) at (5.3,-1.4) {$x_k<c$};
  \node[leaf]  (F) at (2.1,-2.8) {$1$};
  \node[leaf]  (G) at (3.7,-2.8) {$1$};
  \node[leaf]  (D2) at (4.5,-2.8) {$0$};
  \node[leaf]  (E2) at (6.1,-2.8) {$1$};
  \draw (A2) -- (B2);
  \draw (A2) -- (C2);
  \draw (B2) -- (F);
  \draw (B2) -- (G);
  \draw (C2) -- (D2);
  \draw (C2) -- (E2);
\end{tikzpicture}
\caption{Expanding an early leaf to achieve a complete tree of depth $\ell$ without altering the induced policy. This reduction is utilized in the VC-dimension argument.}
\label{fig:dt-complete-tree}
\end{figure}

To analyze a sample set $S=\{\bs x_{1},\ldots,\bs x_{m}\}\subset[0,1]^{d}$, we need to determine how many distinct binary labelings of these $m$ sample
points can be generated by trees in $\Pi_{\ell}$.

We begin by considering a single internal node of the tree. The decision to split at that node involves choosing a coordinate $j\in\{1,\ldots,d\}$ and a threshold $b\in\mathbb{R}$. A sample point $\bs x$ will be sent to the left child if $x_{j}<b$
and to the right child otherwise. Focusing on the chosen coordinate $j$, the
threshold $b$ can only create changes in the left/right split of the sample when it passes one of the sample values $x_{1j},\ldots,x_{mj}$. Thus, for a fixed coordinate, there
can be at most $m+1$ distinct ways to divide the sample points into
left and right groups. Since there are $d$ coordinate options available, a single internal node can lead to at most $d(m+1)$ different ways to split sample points.

A complete tree has $2^{\ell}-1$ internal nodes. As a result, the
decisions made at all internal nodes can lead to at most $\{d(m+1)\}^{2^{\ell}-1}$
different ways of sending the sample points down the tree. This means there are at most $\{d(m+1)\}^{2^{\ell}-1}$ different ways to determine which leaf each sample point ultimately reaches.
Once these paths are established, each of the $2^{\ell}$ leaves can be assigned
label $0$ or label $1$ independently. Therefore, the total number of binary
labelings of the set $S$ induced by $\Pi_{\ell}$ is at most $2^{2^{\ell}}\{d(m+1)\}^{2^{\ell}-1}$.

If $S$ were to be completely shattered by $\Pi_{\ell}$, then every one of the $2^{m}$
possible binary labelings of the sample would need to be achievable.
Consequently, we must have $2^{m}\le2^{2^{\ell}}\{d(m+1)\}^{2^{\ell}-1}$. Taking logarithms gives $m\log2\le2^{\ell}\log2+(2^{\ell}-1)\log d+(2^{\ell}-1)\log(m+1)$, and hence $m\log2\le2\cdot2^{\ell}\log2+2^{\ell}\log d+2^{\ell}\log(m+1)$.
Next, we will eliminate the remaining $\log(m+1)$ term from the right-hand
side. First, we note that $\log(m+1)\le\log(2m)$. We can rewrite $\log(2m)=\log(4\cdot2^{\ell})+\log\!\bigl(m/(2\cdot2^{\ell})\bigr)$.
Furthermore, for every $t>0$, we know that $\log t\le t/2$. By applying this inequality with $t=m/(2\cdot2^{\ell})$, we obtain $\log(m+1)\le\log(4\cdot2^{\ell})+m/(4\cdot2^{\ell})$. Substituting this bound back and rearranging gives \(\bigl(\log2-\frac{1}{4}\bigr)m\le2^{\ell}\bigl(2\log2+\log d+\log(2^{\ell+2})\bigr)\).
Since $\log2-\tfrac{1}{4}>0$, we conclude that $m\le C\cdot2^{\ell}\log(2^{\ell}d)$
for some universal constant $C>0$. Thus, no sample larger
than $C\cdot2^{\ell}\log(2^{\ell}d)$ can be shattered by $\Pi_{\ell}$. Therefore, $\VC(\Pi_{\ell})\le C\cdot2^{\ell}\log(2^{\ell}d)\leq C\cdot2^{\ell}(\ell+\log d)$.
This completes the proof.
\end{proof}

\subsubsection{Example~\ref{exa:Neural Networks}: deep neural networks}

Let $\mathcal{F}_{\mathrm{DNN},\ell}$ denote the class of fully connected
feedforward ReLU networks defined on the domain $[0,1]^{d}$, with a specified width $\mathcal{H}_{\ell}$
and depth $\mathcal{D}_{\ell}$ (see \eqref{eq:DNN class}). We define the threshold-policy sieve $\Pi_{\mathrm{DNN},\ell}=\{\pi_{g}(\bs x)=1\{g(\bs x)\ge0\}:g\in\mathcal{F}_{\mathrm{DNN},\ell}\}$.

\begin{prop}[Approximation rate and VC dimension for the neural-network sieve]
\label{prop:nn-approx-vc} Assuming that Assumptions~\ref{ass:high-level}
and \ref{ass:sdb} hold, let $\Pi_{\ell}=\Pi_{\mathrm{DNN},\ell}$. For all large $\mathcal{H}_{\ell}$ and $\mathcal{D}_{\ell}$,
\[
W(\pi^{*})-\max_{\pi\in\Pi_{\ell}}W(\pi)\lesssim\left(\frac{\mathcal{H}_{\ell}}{\log\mathcal{H}_{\ell}}\right)^{-2s/(d-1)}\left(\frac{\mathcal{D}_{\ell}}{\log\mathcal{D}_{\ell}}\right)^{-2s/(d-1)},\quad
\VC(\Pi_{\ell})\lesssim\mathcal{D}^{2}_{\ell}\mathcal{H}^{2}_{\ell}\log\left(\mathcal{D}_{\ell}\mathcal{H}^{2}_{\ell}\right).
\]
\end{prop}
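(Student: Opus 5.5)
The proof has two independent parts, as in Propositions~\ref{prop:monotone-approx-vc} and~\ref{prop:dt-approx-vc}. For the approximation error, I would use (\ref{eq:approx-error-transfer}) together with Lemma~\ref{lem:sdb-to-policy}. The aim is to construct some $g\in\mathcal{F}_{\mathrm{DNN},\ell}$ for which $d_{\Delta}(\pi_g,\pi^{*})$ is controlled by the sup-norm error of a ReLU approximation to the boundary $f^{*}$.

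\textbf{Approximation step.} Apply a known ReLU approximation theorem for $C^{s}$ functions on $[0,1]^{d-1}$, for instance Theorem~1.1 of Lu, Shen, Yang and Zhang (2021), as used in \citet{jiao2023Deep}. It gives a network $\tilde f$ of width $\asymp \mathcal{H}_{\ell}$ and depth $\asymp\mathcal{D}_{\ell}$ with
\[
\|\tilde f-f^{*}\|_{\infty}\lesssim \Bigl(\tfrac{\mathcal{H}_{\ell}}{\log\mathcal{H}_{\ell}}\Bigr)^{-2s/(d-1)}\Bigl(\tfrac{\mathcal{D}_{\ell}}{\log\mathcal{D}_{\ell}}\Bigr)^{-2s/(d-1)}=:\varepsilon_{\ell}.
\]
The constants depend on $s,d,M$, and the ``for all large $\mathcal{H}_{\ell},\mathcal{D}_{\ell}$'' qualifier absorbs the usual constant-factor adjustments to width and depth.

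Since $f^{*}\in[0,1]$, I would clip $\tilde f$ to $[0,1]$ via $\min\{\max\{\tilde f,0\},1\}=\sigma(\tilde f)-\sigma(\tilde f-1)$. This costs two extra layers and a bounded width increase, and it does not increase the error. Then set $g(\bs x):=\tilde f(\bs x_{-d})-x_d$. This is again a ReLU network: the coordinate $x_d$ is carried through the layers using $x_d=\sigma(x_d)$ for $x_d\ge0$, which adds only $O(1)$ width. With this choice, $\pi_g=1\{x_d\le \tilde f(\bs x_{-d})\}$ exactly.

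Lemma~\ref{lem:sdb-to-policy} then yields $d_{\Delta}(\pi_g,\pi^{*})\le A_X\varepsilon_{\ell}$, and (\ref{eq:approx-error-transfer}) converts this into the stated welfare rate. The main obstacle is bookkeeping: the cited theorem is stated for networks of width $C(\mathcal{H}+2)\log\mathcal{H}$-type and depth $C\mathcal{D}\log\mathcal{D}$-type. I would handle this by reparametrizing, applying the theorem with $\mathcal{H}'\asymp\mathcal{H}_{\ell}/\log\mathcal{H}_{\ell}$ and $\mathcal{D}'\asymp\mathcal{D}_{\ell}/\log\mathcal{D}_{\ell}$, so that the resulting network fits inside $\mathcal{F}_{\mathrm{DNN},\ell}$. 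Those logarithmic factors are exactly the ones that appear in the displayed rate.

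\textbf{VC step.} $\Pi_{\ell}$ is the class of sign patterns of ReLU networks with $W\asymp \mathcal{D}_{\ell}\mathcal{H}^{2}_{\ell}$ parameters and $L=\mathcal{D}_{\ell}$ layers. I would invoke Theorem~7 of Bartlett, Harvey, Liaw and Mehrabian (2019), which gives $\VC\lesssim W L\log W$. This yields $\VC(\Pi_{\ell})\lesssim \mathcal{D}^{2}_{\ell}\mathcal{H}^{2}_{\ell}\log(\mathcal{D}_{\ell}\mathcal{H}^{2}_{\ell})$. The thresholding at $0$ versus the sign function changes nothing, because $1\{g\ge0\}$ shatters a set iff the corresponding sign class does up to an arbitrarily small bias shift.
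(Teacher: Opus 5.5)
Your proposal is correct and follows essentially the same route as the paper. The paper likewise approximates $f^{*}$ on $[0,1]^{d-1}$ by a Lu et al.\ ReLU network; it cites their Corollary~1.2, which is in effect your reparametrized Theorem~1.1. It then forms $g(\bs x)=\phi(\bs x_{-d})-x_d$ with $x_d$ carried by an identity neuron, applies Lemma~\ref{lem:sdb-to-policy} together with (\ref{eq:approx-error-transfer}), and bounds the VC dimension by Bartlett et al.'s $WL\log W$ with $W\asymp\mathcal{D}_\ell\mathcal{H}_\ell^{2}$. Your clipping to $[0,1]$ and your remark on $1\{g\ge0\}$ versus the sign convention are harmless refinements the paper omits; the clipping is not actually needed, since the lemma's proof never uses the range restriction.
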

\begin{proof}
We begin by proving the approximation error bound. We apply the results from \citet[Corollary~1.2]{lu2021Deep}
to the decision-boundary function $f^{*}:[0,1]^{d-1}\to[0,1]$ as stated in
Assumption~\ref{ass:sdb}. Since $f^{*}\in C^{s}([0,1]^{d-1})$ and
$\|f^{*}\|_{C^{s}}\le M$, this result indicates that there exists a
ReLU network $\phi_{\ell}$ defined on $[0,1]^{d-1}$ with width at most $\mathcal{H}_{\ell}-1$ and depth $\mathcal{D}_{\ell}$. For large $\mathcal{H}_{\ell}$ and $\mathcal{D}_{\ell}$, we have:
\[
\|\phi_{\ell}-f^{*}\|_{\infty}\le C_{s,d}M\left(\frac{\mathcal{H}_{\ell}}{\log(8\mathcal{H}_{\ell}+8)}\right)^{-2s/(d-1)}\left(\frac{\mathcal{D}_{\ell}}{\log(4\mathcal{D}_{\ell}+4)}\right)^{-2s/(d-1)},
\]
after absorbing fixed depth-convention differences into $C_{s,d}$.

We will next convert $\phi_{\ell}$ into a neural network defined on the domain $[0,1]^{d}$.
We define $g_{\ell}(\bs x_{-d},x_{d}):=\phi_{\ell}(\bs x_{-d})-x_{d}$.
Since $x_{d}\in[0,1]$, we can propagate the identity function $x_{d}\mapsto x_{d}$ through a ReLU network. This can be achieved by dedicating a neuron in each
hidden layer to carry the value of $x_{d}$. In the first hidden layer,
this neuron outputs $\sigma(x_{d})=x_{d}$, and in each subsequent
hidden layer, applying the activation function $\sigma$ leaves the value unchanged.
Thus, the function $g_{\ell}$ can be implemented by a ReLU network on $[0,1]^{d}$
with depth $\mathcal{D}_{\ell}$ and width at most $\mathcal{H}_{\ell}$. Since $\mathcal{H}_{\ell}-1\asymp\mathcal{H}_{\ell}$ for large $\mathcal{H}_{\ell}$, the approximation rate is unchanged. Therefore, we can consider $g_{\ell}$ to be an element of the space
$\mathcal{F}_{\mathrm{DNN},\ell}$.

Define $\pi_{\ell}(\bs x_{-d},x_{d}):=1\{g_{\ell}(\bs x_{-d},x_{d})\ge0\}=1\{x_{d}\le\phi_{\ell}(\bs x_{-d})\}$.
Then $\pi_{\ell}\in\Pi_{\mathrm{DNN},\ell}$. By Lemma~\ref{lem:sdb-to-policy}, $d_{\Delta}(\pi_{\ell},\pi^{*})\le A_{X}\|\phi_{\ell}-f^{*}\|_{\infty}$. Substituting the approximation error bound for $\phi_{\ell}$, with
\[
r_{\ell}:=\left(\frac{\mathcal{H}_{\ell}}{\log(8\mathcal{H}_{\ell}+8)}\right)^{-2s/(d-1)}\left(\frac{\mathcal{D}_{\ell}}{\log(4\mathcal{D}_{\ell}+4)}\right)^{-2s/(d-1)},
\]
gives $d_{\Delta}(\pi_{\ell},\pi^{*})\le C_{s,d}A_{X}Mr_{\ell}$. Since $\pi_{\ell}\in\Pi_{\mathrm{DNN},\ell}$, $\inf_{\pi\in\Pi_{\mathrm{DNN},\ell}}d_{\Delta}(\pi,\pi^{*})\le C_{s,d}A_{X}Mr_{\ell}$.
The welfare approximation error bound follows from (\ref{eq:approx-error-transfer}).

To prove the VC-dimension bound, we observe that $\mathcal{D}_{\ell}$
counts the hidden layers in (\ref{eq:DNN class}), a fully connected network. A network with a width of $\mathcal{H}_{\ell}$ and a depth of $\mathcal{D}_{\ell}$ has $\mathcal{D}_{\ell}$ hidden layers plus one
output layer, resulting in $\mathcal{D}_{\ell}+1$ weight matrices. The total number of trainable parameters in such a network is given by the formula  $
(d+1)\mathcal{H}_{\ell}+(\mathcal{D}_{\ell}-1)(\mathcal{H}_{\ell}+1)\mathcal{H}_{\ell}+(\mathcal{H}_{\ell}+1)$, provided that $\mathcal{D}_{\ell}\ge1$. Therefore, there exists a $d$-dependent $C_d$ such that the total number
of trainable parameters can be bounded by $C_{d}\mathcal{D}_{\ell}\mathcal{H}^{2}_{\ell}$
for large $\mathcal{H}_{\ell}$ and $\mathcal{D}_{\ell}$.

According to a result from \citet{bartlett2019Nearlytight}, the VC dimension of the sign class associated with a piecewise-linear network is bounded by a universal
constant multiplied by the number of parameters, the number of layers, and the logarithm of the number of parameters. In this context, we note that the number of layers is bounded by $\mathcal{D}_{\ell}+1=O(\mathcal{D}_{\ell})$. Furthermore, the number of parameters is at most $C_{d}\mathcal{D}_{\ell}\mathcal{H}^{2}_{\ell}$. Using this information, we can derive the following result:
\[
\VC(\Pi_{\ell})\le C\bigl(C_{d}\mathcal{D}_{\ell}\mathcal{H}^{2}_{\ell}\bigr)(\mathcal{D}_{\ell}+1)\log\!\bigl(C_{d}\mathcal{D}_{\ell}\mathcal{H}^{2}_{\ell}\bigr)\lesssim\mathcal{D}^{2}_{\ell}\mathcal{H}^{2}_{\ell}\log\left(\mathcal{D}_{\ell}\mathcal{H}^{2}_{\ell}\right)
\]
for some generic constant $C>0$. This completes the proof.
\end{proof}

\section{Some useful lemmas}
\begin{lem}
[Nearness of argmins of convex functions]\label{lem:Nearness of the argmins of the convex functions} Consider the functions
$A_{n}(\bs s;\pi):\mathcal{S}\times\Pi\to\mathbb{R}$ and $B_{n}(\bs s;\pi):\mathcal{S}\times\Pi_{\infty}\to\mathbb{R}$,
which represent two sequences of random functions. Here, $\Pi\subset\Pi_{\infty}$
is a policy class and $\mathcal{S}\subset\R^{p}$ is an open convex
set. For every policy $\pi\in\Pi$, the function $A_{n}(\bs s;\pi)$ is convex with respect
to $\bs s$. Let $\bs{\alpha}_{n}(\pi)$ be a measurable minimizer of $A_{n}(\bs s;\pi)$ for any $\pi \in \Pi$, and assume that $B_{n}(\bs s;\pi)$ has a unique minimum at $\bs{\beta}_{n}(\pi)$. Then, for each $\delta\geq0$, we have:
\[
P\left(\sup_{\pi\in\Pi}\left\Vert \bs{\alpha}_{n}(\pi)-\bs{\beta}_{n}(\pi)\right\Vert >\delta\right)\leq P\left(\Delta_{n}(\delta;\pi)\geq h_{n}(\delta;\pi),\ \exists\pi\in\Pi\right),
\]
where $h_{n}(\delta;\pi)=\inf_{\bs s\in\mathcal S:\left\Vert \bs s-\bs{\beta}_{n}(\pi)\right\Vert =\delta}B_{n}(\bs s;\pi)-B_{n}(\bs{\beta}_{n}(\pi);\pi)$, and
\[
\Delta_{n}(\delta;\pi)=\sup_{\bs s\in\mathcal S:\left\Vert \bs s-\bs{\beta}_{n}(\pi)\right\Vert =\delta}\left|A_{n}(\bs s;\pi)-B_{n}(\bs s;\pi)-\left\{ A_{n}(\bs{\beta}_{n}(\pi);\pi)-B_{n}(\bs{\beta}_{n}(\pi);\pi)\right\} \right|.
\]
\end{lem}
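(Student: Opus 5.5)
The plan is to prove the statement pointwise: I will show that the event $\{\sup_{\pi\in\Pi}\|\bs\alpha_n(\pi)-\bs\beta_n(\pi)\|>\delta\}$ is contained in the event $\{\exists\pi\in\Pi:\Delta_n(\delta;\pi)\ge h_n(\delta;\pi)\}$. Monotonicity of $P$ then gives the inequality. This is the standard convexity argument, as in Hjort and Pollard's ``argmin nearness'' lemma, carried out for each fixed $\pi$ and each fixed outcome $\omega$. The supremum over $\pi$ is handled by the existential quantifier: if the supremum exceeds $\delta$, then some $\pi\in\Pi$ has $\|\bs\alpha_n(\pi)-\bs\beta_n(\pi)\|>\delta$.

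Fix such an $\omega$ and $\pi$, and write $\bs\alpha=\bs\alpha_n(\pi)$ and $\bs\beta=\bs\beta_n(\pi)$, so that $\|\bs\alpha-\bs\beta\|>\delta$. The case $\delta=0$ is handled separately: the sphere reduces to $\{\bs\beta\}$, so $\Delta_n=h_n=0$ and the inclusion is trivial.

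For $\delta>0$, consider the unit vector $\bs u=(\bs\alpha-\bs\beta)/\|\bs\alpha-\bs\beta\|$ and the point $\bs s=\bs\beta+\delta\bs u$. This point lies on the segment between $\bs\beta$ and $\bs\alpha$, so $\bs s\in\mathcal S$ by convexity, and $\|\bs s-\bs\beta\|=\delta$. Write $\bs s=(1-\lambda)\bs\beta+\lambda\bs\alpha$ with $\lambda=\delta/\|\bs\alpha-\bs\beta\|\in(0,1)$. Convexity of $A_n(\cdot;\pi)$ together with minimality of $\bs\alpha$ gives
\[
A_n(\bs s;\pi)\le(1-\lambda)A_n(\bs\beta;\pi)+\lambda A_n(\bs\alpha;\pi)\le A_n(\bs\beta;\pi).
\]
Hence $A_n(\bs s;\pi)-A_n(\bs\beta;\pi)\le0$.

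Next, decompose
\[
B_n(\bs s;\pi)-B_n(\bs\beta;\pi)=\bigl[A_n(\bs s;\pi)-A_n(\bs\beta;\pi)\bigr]-\bigl[A_n(\bs s;\pi)-B_n(\bs s;\pi)-\{A_n(\bs\beta;\pi)-B_n(\bs\beta;\pi)\}\bigr].
\]
The first bracket is nonpositive, and the second is bounded in absolute value by $\Delta_n(\delta;\pi)$ because $\bs s$ lies on the $\delta$-sphere. So $B_n(\bs s;\pi)-B_n(\bs\beta;\pi)\le\Delta_n(\delta;\pi)$. By definition of the infimum, $h_n(\delta;\pi)\le B_n(\bs s;\pi)-B_n(\bs\beta;\pi)$. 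Combining the two bounds gives $h_n(\delta;\pi)\le\Delta_n(\delta;\pi)$, which establishes the event inclusion.

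The only step requiring care, which I expect to be the main (minor) obstacle, is measurability. The event on the right involves an existential quantifier over a possibly uncountable $\Pi$, and the left side is a supremum. I would either interpret $P$ as outer probability, as is customary in empirical process arguments, or note that the inclusion of events gives the inequality for outer measures directly. The uniqueness of the minimizer of $B_n$ is used only to make $\bs\beta_n(\pi)$, and hence $h_n$ and $\Delta_n$, well defined. Measurability of $\bs\alpha_n(\pi)$ is assumed in the statement.
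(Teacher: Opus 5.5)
Your argument is correct and is essentially the paper's proof, the convexity argument from Hjort and Pollard's Lemma~2. The paper shows that if $h_n(\delta;\pi)>\Delta_n(\delta;\pi)$, then $A_n(\cdot;\pi)$ is strictly larger at every point of $\mathcal S$ on or outside the $\delta$-sphere than at $\bs\beta_n(\pi)$, so the minimizer lies strictly inside; you run the same chord inequality in contrapositive form, taking the exterior point to be $\bs\alpha_n(\pi)$ itself, which is slightly more direct, and your outer-probability remark covers a measurability point the paper leaves implicit.
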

\begin{proof}
The proof is implicit in the reference \citet[Lemma 2]{hjort1993Asymptotics}. However, we will provide a detailed proof here for completeness. The case $\delta=0$ is immediate, so take $\delta>0$. To begin, we fix any policy $\pi\in\Pi$.
Let $\bs s\in\left\{ \bs x\in\mathcal S:\left\Vert \bs x-\bs{\beta}_{n}(\pi)\right\Vert \geq\delta\right\} $
represent an arbitrary point that is located either on or outside the ball surrounding $\bs{\beta}_{n}(\pi)$ with radius $\delta$. Specifically, we define $\bs s=\bs{\beta}_{n}(\pi)+l\cdot\bs u$,
where $\bs u$ is a unit vector and $l\geq\delta$. Due to the convexity
of $A_{n}(\bs s;\pi)$, we have
\[
(1-\delta/l)A_{n}(\bs{\beta}_{n}(\pi);\pi)+(\delta/l)A_{n}(\bs s;\pi)\geq A_{n}((1-\delta/l)\bs{\beta}_{n}(\pi)+(\delta/l)\bs s;\pi)=A_{n}(\bs{\beta}_{n}(\pi)+\delta\cdot\bs u;\pi).
\]
By rearranging the inequality above, we obtain:
\begin{align*}
 & (\delta/l)\left\{ A_{n}(\bs s;\pi)-A_{n}(\bs{\beta}_{n}(\pi);\pi)\right\} \geq A_{n}(\bs{\beta}_{n}(\pi)+\delta\cdot\bs u;\pi)-A_{n}(\bs{\beta}_{n}(\pi);\pi)\\
= & B_{n}(\bs{\beta}_{n}(\pi)+\delta\cdot\bs u;\pi)-B_{n}(\bs{\beta}_{n}(\pi);\pi)+A_{n}(\bs{\beta}_{n}(\pi)+\delta\cdot\bs u;\pi)-B_{n}(\bs{\beta}_{n}(\pi)+\delta\cdot\bs u;\pi)\\
 & \qquad-A_{n}(\bs{\beta}_{n}(\pi);\pi)+B_{n}(\bs{\beta}_{n}(\pi);\pi)\geq h_{n}(\delta;\pi)-\Delta_{n}(\delta;\pi).
\end{align*}
If $h_{n}(\delta;\pi)>\Delta_{n}(\delta;\pi)$, then $A_{n}(\bs s;\pi)>A_{n}(\bs{\beta}_{n}(\pi);\pi)$
for all $\bs s$ located either on or outside the $\delta$-ball,
which implies that
\(
\left\{ h_{n}(\delta;\pi)>\Delta_{n}(\delta;\pi)\right\} \subset\left\{ \left\Vert \bs{\alpha}_{n}(\pi)-\bs{\beta}_{n}(\pi)\right\Vert <\delta\right\}
\).
As a result,
\(
\left\{ h_{n}(\delta;\pi)>\Delta_{n}(\delta;\pi),\ \forall\pi\in\Pi\right\} \subset\left\{ \left\Vert \bs{\alpha}_{n}(\pi)-\bs{\beta}_{n}(\pi)\right\Vert <\delta,\ \forall\pi\in\Pi\right\}
\), which leads to
\begin{align*}
\left\{ \sup_{\pi\in\Pi}\left\Vert \bs{\alpha}_{n}(\pi)-\bs{\beta}_{n}(\pi)\right\Vert >\delta\right\}  & \subset\left\{ \left\Vert \bs{\alpha}_{n}(\pi)-\bs{\beta}_{n}(\pi)\right\Vert \geq\delta,\ \exists\pi\in\Pi\right\} \\
 & \subset\left\{ \Delta_{n}(\delta;\pi)\geq h_{n}(\delta;\pi),\ \exists\pi\in\Pi\right\} .
\end{align*}
This completes the proof.
\end{proof}
\begin{lem}
[Uniform entropy for the transformed function class]\label{lem:transformed uniform entropy} Let $\mathcal{F}_{1},\ldots,\mathcal{F}_{M}$ represent classes of measurable
functions $S\to\R$, equipped with measurable envelopes $F_{1},\ldots,F_{M}$,
respectively. Also, let $\phi:\R^{M}\to\R$ be a map satisfying:
\begin{equation}
\left|\phi\circ\bs f(x)-\phi\circ\bs g(x)\right|^{2}\leq\sum_{m=1}^{M}L_{m}^{2}(x)\left|f_{m}(x)-g_{m}(x)\right|^{2},\label{eq:transformed uniform entropy condition}
\end{equation}
for all $\bs f=(f_{1},\ldots,f_{M}),\boldsymbol{g}=(g_{1},\ldots,g_{M})\in\mathcal{F}_{1}\times\cdots\times\mathcal{F}_{M}=:\mathcal{F}$
and every $x\in S$, where $L_{1},\ldots,L_{M}$ are non-negative
measurable functions on $S$. Let $\phi\circ\mathcal{F}:=\left\{ \phi\circ\bs f:\bs f\in\mathcal{F}\right\} $
represent a class of functions. Additionally, define $L\cdot F(x):=\sqrt{\sum_{m=1}^{M}L_{m}^{2}(x)F_{m}^{2}(x)}$
as a measurable function. Then,
\[
\sup_{Q}\log N\left(\epsilon\left\Vert L\cdot F\right\Vert _{Q,2},\phi\circ\mathcal{F},\left\Vert \cdot\right\Vert _{Q,2}\right)\leq\sum_{m=1}^{M}\sup_{R_{m}}\log N\left(\epsilon\left\Vert F_{m}\right\Vert _{R_{m},2},\mathcal{F}_{m},\left\Vert \cdot\right\Vert _{R_{m},2}\right),
\]
for all $0<\epsilon\leq1$, where the suprema are taken over all finite
discrete probability measures on $(S,\mathcal{S})$. Additionally, let $\Phi(x):=\left|\phi\circ\bs f_{0}(x)\right|+2\sqrt{\sum_{m=1}^{M}L_{m}^{2}(x)F_{m}^{2}(x)}$
be a measurable function, where $\bs f_{0}:=(f_{01},\ldots,f_{0M})$
is any function in $\mathcal{F}$. Then, $\Phi$ serves as an envelope of
$\phi\circ\mathcal{F}$ and
\[
\sup_{Q}\log N\left(\epsilon\left\Vert \Phi\right\Vert _{Q,2},\phi\circ\mathcal{F},\left\Vert \cdot\right\Vert _{Q,2}\right)\leq\sum_{m=1}^{M}\sup_{R_{m}}\log N\left(\epsilon\left\Vert F_{m}\right\Vert _{R_{m},2},\mathcal{F}_{m},\left\Vert \cdot\right\Vert _{R_{m},2}\right),
\]
for all $0<\epsilon\leq1$, where the suprema are taken over all finite
discrete probability measures on $(S,\mathcal{S})$.
\end{lem}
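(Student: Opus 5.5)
The plan is to reduce the uniform entropy of $\phi\circ\mathcal{F}$ to that of the product class $\mathcal{F}=\mathcal{F}_{1}\times\cdots\times\mathcal{F}_{M}$, in the spirit of the standard Lipschitz-transformation argument for covering numbers. Fix a finitely discrete probability measure $Q$ on $S$ and $0<\epsilon\le1$. The key device is a change of measure: for each $m$ with $\|L_mF_m\|_{Q,2}>0$, define the finitely discrete probability measure $R_m$ by $dR_m:=L_m^2\,dQ/\|L_m\|_{Q,2}^2$. If $\|L_m\|_{Q,2}=0$, coordinate $m$ contributes nothing to the right side of (\ref{eq:transformed uniform entropy condition}) in $L^2(Q)$ and can be dropped; in that case any single function serves as a cover for that coordinate.

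\textbf{Covering each coordinate.} For each $m$, take an $\epsilon\|F_m\|_{R_m,2}$-net $\{f_m^{(k)}\}$ of $\mathcal{F}_m$ in $L^2(R_m)$ whose cardinality is at most $\sup_{R}N(\epsilon\|F_m\|_{R,2},\mathcal{F}_m,\|\cdot\|_{R,2})$. Given $\bs f\in\mathcal{F}$, choose for each $m$ a net element $g_m$ with
\[
\|f_m-g_m\|_{R_m,2}<\epsilon\|F_m\|_{R_m,2}.
\]
By construction of $R_m$, this is equivalent to
\[
\int L_m^2|f_m-g_m|^2\,dQ<\epsilon^2\int L_m^2F_m^2\,dQ.
\]
Integrating (\ref{eq:transformed uniform entropy condition}) against $Q$ and summing over $m$ gives
\[
\|\phi\circ\bs f-\phi\circ\bs g\|_{Q,2}^2<\epsilon^2\sum_m\|L_mF_m\|_{Q,2}^2=\epsilon^2\|L\cdot F\|_{Q,2}^2 .
\]
Hence $\{\phi\circ\bs g\}$, with $\bs g$ ranging over products of net elements, is an $\epsilon\|L\cdot F\|_{Q,2}$-net of $\phi\circ\mathcal{F}$. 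Its cardinality is the product of the coordinate cardinalities, so taking logarithms and then the supremum over $Q$ yields the first inequality.

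\textbf{The point to check.} The one subtlety is that the net elements $g_m$ should belong to $\mathcal{F}_m$, so that $\bs g\in\mathcal{F}$ and (\ref{eq:transformed uniform entropy condition}) applies to the pair $(\bs f,\bs g)$. An $\epsilon$-net with centers outside $\mathcal{F}_m$ can always be replaced by one with centers inside $\mathcal{F}_m$ at radius $2\epsilon$. To avoid this factor, we interpret the covering numbers as internal (centers in the class), which is how the definition in the notation section can be read. If external centers are required, we would first pass to internal centers and absorb the factor $2$ in the radius.

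\textbf{Envelope statement.} For any $\bs f\in\mathcal{F}$, (\ref{eq:transformed uniform entropy condition}) applied to the pair $(\bs f,\bs f_0)$ gives
\[
|\phi\circ\bs f|\le|\phi\circ\bs f_0|+\Bigl(\sum_m L_m^2|f_m-f_{0m}|^2\Bigr)^{1/2}.
\]
Since $|f_m-f_{0m}|\le2F_m$, the last term is at most $2\sqrt{\sum_m L_m^2F_m^2}$, so $|\phi\circ\bs f|\le\Phi$ and $\Phi$ is an envelope. Because $\Phi\ge L\cdot F$ pointwise, we have $\|\Phi\|_{Q,2}\ge\|L\cdot F\|_{Q,2}$. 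Covering numbers are nonincreasing in the radius, so
\[
N(\epsilon\|\Phi\|_{Q,2},\phi\circ\mathcal{F},\|\cdot\|_{Q,2})\le N(\epsilon\|L\cdot F\|_{Q,2},\phi\circ\mathcal{F},\|\cdot\|_{Q,2}),
\]
and the second inequality follows from the first.
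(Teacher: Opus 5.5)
Your proof is correct and is exactly the standard change-of-measure argument ($dR_m\propto L_m^2\,dQ$, a product of coordinate nets, and the envelope bound via $|f_m-f_{0m}|\le 2F_m$). The paper gives no proof of its own and only cites van der Vaart and Wellner (p.~199) and Chernozhukov et al.\ (Lemma~A.6), which use this same argument.

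Two small remarks. First, the coordinates to discard are those with $\|L_mF_m\|_{Q,2}=0$, not only those with $\|L_m\|_{Q,2}=0$; this is harmless because $L_m|f_m-g_m|\le 2L_mF_m$. Second, your factor-$2$ fallback for external centers is unnecessary. Set $V(x):=\{\bs f(x):\bs f\in\mathcal{F}\}$ and replace $\phi\circ\bs g$ by $h(x):=\inf_{\bs v\in V(x)}\{\phi(\bs v)+(\sum_m L_m^2(x)(g_m(x)-v_m)^2)^{1/2}\}$, a McShane extension. This gives $|\phi\circ\bs f(x)-h(x)|^2\le\sum_m L_m^2(x)|f_m(x)-g_m(x)|^2$ for every $\bs f\in\mathcal{F}$, so the stated inequality holds under either convention for the covering centers.
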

\begin{proof}
The proof is implicit in \citet[p.~199]{vaart1996Weak}; see also \citet[Lemma~A.6]{chernozhukov2014Gaussian} and \citet[Lemma~9.13]{kosorok2008Introduction}.
\end{proof}

\section{Proofs for Section \ref{sec:A-General-Learning}}\label{sec:proofs-general-learning}

\begin{proof}[Proof of Theorem \ref{thm:oracle inequality-general}]

Let $n=N/K$ denote the common fold size. Define
\[
R_{CV}(\ell):=K^{-1}\sum_{k=1}^{K}\{ \widehat{W}_{I_{k}}(\widehat{\pi}_{\ell,I_{-k}})-\log\ell/\sqrt{N}\},\quad
\mathcal{E}_{k\ell}:=\widehat{W}_{I_{k}}(\widehat{\pi}_{\ell,I_{-k}})-W(\widehat{\pi}_{\ell,I_{-k}}).
\]
By \eqref{eq:def of l_hat-general}, $\widehat{\ell}=\arg\max_{\ell=1,2,\ldots}R_{CV}(\ell)$.
Then, for every $\ell\geq1$, it holds that
\begin{align}
W(\pi^{*})-W(\widehat{\pi}) & =W(\pi^{*})-R_{CV}(\widehat{\ell})+R_{CV}(\widehat{\ell})-W(\widehat{\pi})\nonumber\\
&\leq\underbrace{\left\{ W(\pi^{*})-R_{CV}(\ell)\right\} }_{\text{term 1}}+\underbrace{\left\{ R_{CV}(\widehat{\ell})-W(\widehat{\pi})\right\} }_{\text{term 2}}.\label{eq:decomposition of the regret}
\end{align}
We will handle term 1 and term 2 sequentially.

\textbf{Term 1.} Note that
\begin{equation}
\frac{1}{K}\sum_{k=1}^{K}W(\widehat{\pi}_{\ell,I_{-k}})-R_{CV}(\ell)=\frac{1}{K}\sum_{k=1}^{K}\left\{ W(\widehat{\pi}_{\ell,I_{-k}})-\widehat{W}_{I_{k}}(\widehat{\pi}_{\ell,I_{-k}})\right\} +\frac{\log\ell}{\sqrt{N}}\label{eq:term 1 decomposition}
\end{equation}
We handle every summand on the right-hand side of the above equality. Since $\widehat{\pi}_{\ell,I_{-k}}$ only depends on $\{\bs Z_{i}:=(\bs X_{i},Y_{i},T_{i})\}_{i\in I_{-k}}$, it remains independent of $\{\bs Z_{i}\}_{i\in I_{k}}$. For every $k=1,\ldots,K$, Assumption \ref{assu:general assu on hatW} gives
$P(|\mathcal{E}_{k\ell}|\geq\epsilon+C_{1}/\sqrt{n})\leq C_{2}\exp\{ -C_{3}n\epsilon^{2}\}$.
A standard integration argument then gives
$\e|\mathcal{E}_{k\ell}|\leq C_{1}/\sqrt{n}+\int_{0}^{\infty}C_{2}\exp\{ -C_{3}n\epsilon^{2}\} d\epsilon\leq C_{5}/\sqrt{N}$,
for every $\ell=1,2,\ldots$, where in the last step we let $C_{5}:=\sqrt{K}\left\{ C_{1}+\int_{0}^{\infty}C_{2}\cdot\exp\left\{ -C_{3}\epsilon^{2}\right\} d\epsilon\right\} <\infty$.
From the decomposition of (\ref{eq:term 1 decomposition}), we obtain the following results:
$\e[\text{term 1}]\leq W(\pi^{*})-K^{-1}\sum_{k=1}^{K}\e[W(\widehat{\pi}_{\ell,I_{-k}})]+\log\ell/\sqrt{N}+C_{5}/\sqrt{N}$
for all $\ell=1,2,\ldots$. Since the observed data are i.i.d., $\frac{1}{K}\sum_{k=1}^{K}\e\left[W(\widehat{\pi}_{\ell,I_{-k}})\right]=\e\left[W(\widehat{\pi}_{\ell,I_{-1}})\right]$. Since the decomposition above holds for every $\ell=1,2,\ldots$, we obtain
\begin{equation}
\e\left[\text{term 1}\right]\leq\inf_{\ell=1,2,\ldots}\left\{ W(\pi^{*})-\e\left[W(\widehat{\pi}_{\ell,I_{-1}})\right]+\frac{\log\ell}{\sqrt{N}}\right\} +\frac{C_{5}}{\sqrt{N}},\label{eq:result-for-term-1}
\end{equation}

\textbf{Term 2.} We now address term 2 in (\ref{eq:decomposition of the regret}).
Based on the definition of $\widehat{\pi}$ in (\ref{eq:CV criterion-general}),
we have:
\begin{align}
\text{term 2}=R_{CV}(\widehat{\ell})-W(\widehat{\pi}) & =\frac{1}{K}\sum_{k=1}^{K}\widehat{W}_{I_{k}}(\widehat{\pi}_{\widehat{\ell},I_{-k}})-W(\widehat{\pi})-\frac{\log\widehat{\ell}}{\sqrt{N}}\nonumber\\
&\leq\widehat{W}_{I_{\widehat{k}}}(\widehat{\pi}_{\widehat{\ell},I_{-\widehat{k}}})-W(\widehat{\pi}_{\widehat{\ell},I_{-\widehat{k}}})-\frac{\log\widehat{\ell}}{\sqrt{N}}.\label{eq:bound for term 2}
\end{align}
Since $\widehat{\pi}_{\ell,I_{-k}}$ only depends on $\{\bs Z_{i}\}_{i\in I_{-k}}$, it is independent of $\{\bs Z_{i}\}_{i\in I_{k}}$. For every $k=1,\ldots,K$ and $\ell=1,2,\ldots$, Assumption \ref{assu:general assu on hatW} gives
\[
P\left(|\mathcal{E}_{k\ell}|\geq\epsilon+\frac{\log\ell}{\sqrt{N}}+\frac{C_{1}}{\sqrt{n}}\right)\leq C_{2}\exp\left(-C_{3}\frac{\log^{2}\ell}{K}\right)\exp\left\{ -C_{3}n\epsilon^{2}\right\}.
\]
When combined with the union bound, this leads to
\begin{align}
 & P\left(\mathcal{E}_{\widehat{k}\widehat{\ell}}-\frac{\log\widehat{\ell}}{\sqrt{N}}\geq\epsilon+\frac{C_{1}}{\sqrt{n}}\right)\leq\sum_{k=1}^{K}\sum_{\ell=1}^{\infty}P\left(\mathcal{E}_{k\ell}\geq\epsilon+\frac{\log\ell}{\sqrt{N}}+\frac{C_{1}}{\sqrt{n}}\right)\nonumber \\
\leq & \sum_{\ell=1}^{\infty}K\cdot C_{2}\exp\left(-C_{3}\frac{\log^{2}\ell}{K}\right)\exp\left\{ -C_{3}n\epsilon^{2}\right\} =:\Delta\cdot\exp\left\{ -C_{3}n\epsilon^{2}\right\} \label{eq:bound-for-term-2-main-term}
\end{align}
for all $\epsilon\geq0$ and $n\geq C_{4}$, where in the last step
we define $\Delta:=\sum_{\ell=1}^{\infty}KC_{2}\exp\left(-C_{3}\frac{\log^{2}\ell}{K}\right)<\infty$.
Now, using a standard integration argument, we have
\[
\e\left[\mathcal{E}_{\widehat{k}\widehat{\ell}}-\frac{\log\widehat{\ell}}{\sqrt{N}}\right]\leq\frac{C_{1}}{\sqrt{n}}+\int_{0}^{\infty}\Delta\cdot\exp\left\{ -C_{3}n\epsilon^{2}\right\} d\epsilon\leq\frac{C_{6}}{\sqrt{N}},
\]
where $C_{6}:=\sqrt{K}\{ C_{1}+\int_{0}^{\infty}\Delta\cdot\exp( -C_{3}\epsilon^{2}) d\epsilon\}<\infty$.
Thus, we conclude that
\begin{equation}
\e\left[\text{term 2}\right]\leq\e\left[\mathcal{E}_{\widehat{k}\widehat{\ell}}-\frac{\log\widehat{\ell}}{\sqrt{N}}\right]\leq\frac{C_{6}}{\sqrt{N}}.\label{eq:result for term 2}
\end{equation}

\textbf{Aggregating the results} by combining (\ref{eq:decomposition of the regret}),
(\ref{eq:result-for-term-1}) and (\ref{eq:result for term 2}) leads to

\[
\begin{aligned}
\e\left[W(\pi^{*})-W(\widehat{\pi})\right]&\leq\inf_{\ell=1,2,\ldots}\left\{ W(\pi^{*})-\max_{\pi\in\Pi_{\ell}}W(\pi)+\max_{\pi\in\Pi_{\ell}}W(\pi)\right.\\
&\qquad\left.-\e\left[W(\widehat{\pi}_{\ell,I_{-1}})\right]+\frac{\log\ell}{\sqrt{N}}\right\} +\sqrt{\frac{C}{N}} .
\end{aligned}
\]
for large $N$, where $C>0$ is a constant.
\end{proof}

\begin{proof}[Proof of Corollary \ref{cor:oracle inequality prob bound}]

By applying the union bound, we obtain:
\begin{align*}
&P\left(\sup_{\ell\geq1,1\leq k\leq K}\left\{ \left|\widehat{W}_{I_{k}}(\widehat{\pi}_{\ell,I_{-k}})-W(\widehat{\pi}_{\ell,I_{-k}})\right|-\frac{\log\ell}{\sqrt{N}}\right\} \geq\epsilon+\frac{C_{1}}{\sqrt{n}}\right)\\
&\leq \sum_{\ell=1}^{\infty}\sum_{k=1}^{K}P\left(\left|\widehat{W}_{I_{k}}(\widehat{\pi}_{\ell,I_{-k}})-W(\widehat{\pi}_{\ell,I_{-k}})\right|-\frac{\log\ell}{\sqrt{N}}\geq\epsilon+\frac{C_{1}}{\sqrt{n}}\right)\\
&\leq \sum_{\ell=1}^{\infty}K\cdot C_{2}\exp\left(-C_{3}\frac{\log^{2}\ell}{K}\right)\exp\left\{ -C_{3}n\epsilon^{2}\right\} =\Delta\cdot\exp\left\{ -C_{3}n\epsilon^{2}\right\} ,
\end{align*}
where $\Delta=\sum_{\ell=1}^{\infty}KC_{2}\exp\left(-C_{3}\frac{\log^{2}\ell}{K}\right)<\infty$. We note that term 1 defined in (\ref{eq:decomposition of the regret}) can be bounded as
\begin{align*}
  & \text{term 1}=W(\pi^{*})-\frac{1}{K}\sum_{k=1}^{K}W(\widehat{\pi}_{\ell,I_{-k}})+\frac{1}{K}\sum_{k=1}^{K}W(\widehat{\pi}_{\ell,I_{-k}})-R_{CV}(\ell)\\
 = & W(\pi^{*})-\frac{1}{K}\sum_{k=1}^{K}W(\widehat{\pi}_{\ell,I_{-k}})+\frac{1}{K}\sum_{k=1}^{K}\left\{ W(\widehat{\pi}_{\ell,I_{-k}})-\widehat{W}_{I_{k}}(\widehat{\pi}_{\ell,I_{-k}})\right\} +\frac{\log\ell}{\sqrt{N}}\\
 \leq & W(\pi^{*})-\frac{1}{K}\sum_{k=1}^{K}W(\widehat{\pi}_{\ell,I_{-k}})+\sup_{1\leq k\leq K}\left|W(\widehat{\pi}_{\ell,I_{-k}})-\widehat{W}_{I_{k}}(\widehat{\pi}_{\ell,I_{-k}})\right|+\frac{\log\ell}{\sqrt{N}}\\
 \leq & W(\pi^{*})-\frac{1}{K}\sum_{k=1}^{K}W(\widehat{\pi}_{\ell,I_{-k}})+2\frac{\log\ell}{\sqrt{N}}+\sup_{\ell\geq1,1\leq k\leq K}\left\{ \left|\widehat{W}_{I_{k}}(\widehat{\pi}_{\ell,I_{-k}})-W(\widehat{\pi}_{\ell,I_{-k}})\right|-\frac{\log\ell}{\sqrt{N}}\right\} .
 \end{align*}
Combining the two inequalities with (\ref{eq:decomposition of the regret})
and (\ref{eq:bound for term 2})--(\ref{eq:bound-for-term-2-main-term})
leads to the conclusion that
\[
W(\pi^{*})-W(\widehat{\pi})\leq\inf_{\ell\geq1}\left\{ W(\pi^{*})-\frac{1}{K}\sum_{k=1}^{K}W(\widehat{\pi}_{\ell,I_{-k}})+2\frac{\log\ell}{\sqrt{N}}\right\} +\sqrt{\frac{C+\delta}{N}}
\]
w.p. at least $1-C_{1}\exp(-C_{2}\delta)$ for all
$\delta>0$ and large $N$, where $C>0$ is a constant.
\end{proof}

\begin{proof}[Proof of Corollary \ref{cor:estimation-error-bound-general}]

First, establish an arbitrary deterministic sequence $\{\ell_{N}\}_{N\geq1}$
such that $\VC(\Pi_{\ell_{N}})/N\to0$. It follows from Theorem~\ref{thm:oracle inequality-general} that:
\[
\begin{aligned}
\e\left[W(\pi^{*})-W(\widehat{\pi})\right]&\leq W(\pi^{*})-\max_{\pi\in\Pi_{\ell_{N}}}W(\pi)+\max_{\pi\in\Pi_{\ell_{N}}}W(\pi)\\
&\quad-\e\left[W(\widehat{\pi}_{\ell_{N},I_{-1}})\right]+\frac{\log\ell_{N}}{\sqrt{N}}+\sqrt{\frac{C}{N}} .
\end{aligned}
\]
Let $\pi_{\ell_{N}}^{*}\in\arg\max_{\pi\in\Pi_{\ell_{N}}}W(\pi)$. According to the definition of $\widehat{\pi}_{\ell_{N},I_{-1}}$, we can derive:
\[
\begin{aligned}
W(\pi_{\ell_{N}}^{*})-W(\widehat{\pi}_{\ell_{N},I_{-1}})
&\leq\left\{ W(\pi_{\ell_{N}}^{*})-\widehat{W}_{I_{-1}}(\pi_{\ell_{N}}^{*})\right\}\!+\!\left\{\widehat{W}_{I_{-1}}(\widehat{\pi}_{\ell_{N},I_{-1}})-W(\widehat{\pi}_{\ell_{N},I_{-1}})\right\}\\
&\leq2\sup_{\pi\in\Pi_{\ell_{N}}}\left|\widehat{W}_{I_{-1}}(\pi)-W(\pi)\right|.
\end{aligned}
\]
Taking expectations on both sides of the inequality leads us to the conclusion that:
\[
\max_{\pi\in\Pi_{\ell_{N}}}W(\pi)-\e\left[W(\widehat{\pi}_{\ell_{N},I_{-1}})\right]
\leq
2\e\left[\sup_{\pi\in\Pi_{\ell_{N}}}\left|\widehat{W}_{I_{-1}}(\pi)-W(\pi)\right|\right].
\]
Since $|I_{-1}|=(K-1)N/K$ and $\VC(\Pi_{\ell_{N}})/|I_{-1}|\to0$,
Assumption~\ref{assu:general ass uniform convergence} implies that
\[
P\left(\sup_{\pi\in\Pi_{\ell_{N}}}\left|\widehat{W}_{I_{-1}}(\pi)-W(\pi)\right|\geq\delta+C_{1}\sqrt{\frac{\VC(\Pi_{\ell_{N}})}{|I_{-1}|}}\right)\leq C_{2}\exp(-C_{3}|I_{-1}|\delta^{2})
\]
for all $\delta>0$ and large $N$. Integrating this tail bound yields
\[
\begin{aligned}
  &\e\left[\sup_{\pi\in\Pi_{\ell_{N}}}\left|\widehat{W}_{I_{-1}}(\pi)-W(\pi)\right|\right]
\leq C_{1}\sqrt{\frac{\VC(\Pi_{\ell_{N}})}{|I_{-1}|}}+\int_{0}^{\infty}C_{2}\exp(-C_{3}|I_{-1}|\delta^{2})d\delta\\
&\leq C_{1}\sqrt{\frac{K}{K-1}}\sqrt{\frac{\VC(\Pi_{\ell_{N}})}{N}}+\sqrt{\frac{K}{K-1}}\sqrt{\frac{1}{N}}\int_{0}^{\infty}C_{2}\exp(-C_{3}\delta^{2})d\delta .
\end{aligned}
\]
Combining the previous inequalities and renaming constants as needed,
we conclude that for every deterministic sequence $\{\ell_{N}\}_{N\geq1}$
satisfying $\VC(\Pi_{\ell_{N}})/N\to0$, the following holds:
\begin{equation}
\begin{aligned}
\e\left[W(\pi^{*})-W(\widehat{\pi})\right]&\leq W(\pi^{*})-\max_{\pi\in\Pi_{\ell_{N}}}W(\pi)+C^{\prime}\sqrt{\frac{K}{K-1}}\sqrt{\frac{\VC(\Pi_{\ell_{N}})}{N}}+\frac{\log\ell_{N}}{\sqrt{N}}+\sqrt{\frac{C}{N}} .
\end{aligned}\label{eq:deterministic-sequence-bound-corollary}
\end{equation}

We still need to select an admissible sequence. Define
\[
\Psi_{N}(\ell):=W(\pi^{*})-\max_{\pi\in\Pi_{\ell}}W(\pi)+C^{\prime}\sqrt{\frac{K}{K-1}}\sqrt{\frac{\VC(\Pi_{\ell})}{N}}+\frac{\log\ell}{\sqrt{N}},
\]
and let $\ell_{N}$ be the smallest minimizer of $\Psi_{N}(\ell)$ over $\ell=1,2,\ldots$. A minimizer exists because all three terms are non-negative and $\log\ell/\sqrt{N}\to\infty$ as $\ell\to\infty$ for each fixed $N$. Next, we demonstrate that $\VC(\Pi_{\ell_{N}})/N\to0$. Fix any $\epsilon>0$. By the standing condition on the approximating sequence, there exists a fixed integer $L$ such that $W(\pi^{*})-\max_{\pi\in\Pi_{L}}W(\pi)<\epsilon/2$. Since $\VC(\Pi_{L})<\infty$, for all large $N$ we also have $C^{\prime}\sqrt{K/(K-1)}\sqrt{\VC(\Pi_{L})/N}+\log L/\sqrt{N}<\epsilon/2$. Thus $\Psi_{N}(\ell_{N})\le\Psi_{N}(L)<\epsilon$ for all large $N$, so $\Psi_{N}(\ell_{N})\to0$. Since $0\leq C^{\prime}\sqrt{K/(K-1)}\sqrt{\VC(\Pi_{\ell_{N}})/N}\leq\Psi_{N}(\ell_{N})$, we have $\VC(\Pi_{\ell_{N}})/N\to0$. It follows that $\ell_{N}$ satisfies the condition for (\ref{eq:deterministic-sequence-bound-corollary}). In addition, $\Psi_{N}(\ell_{N})=\inf_{\ell=1,2,\ldots}\Psi_{N}(\ell)$.
Substituting this identity into
(\ref{eq:deterministic-sequence-bound-corollary}), we can conclude:
\[
\begin{aligned}
\e\left[W(\pi^{*})-W(\widehat{\pi})\right]
&\leq \inf_{\ell}\left\{W(\pi^{*})-\max_{\pi\in\Pi_{\ell}}W(\pi)+C^{\prime}\sqrt{\frac{K}{K-1}}\sqrt{\frac{\VC(\Pi_{\ell})}{N}}+\frac{\log\ell}{\sqrt{N}}\right\}+\sqrt{\frac{C}{N}}.
\end{aligned}
\]
\end{proof}

\begin{proof}[Proof of Remark \ref{rem:finite L}]

The proof closely resembles that of Theorem \ref{thm:oracle inequality-general}.
Let $n=N/K$ denote the common fold size, let $L_N\ge2$, and define $\mathcal{E}_{k\ell}:=\widehat{W}_{I_{k}}(\widehat{\pi}_{\ell,I_{-k}})-W(\widehat{\pi}_{\ell,I_{-k}})$ and $r_N:=C^{*}\sqrt{\log L_N/N}$. For $\ell=1,\ldots,L_{N}$, set $R_{CV}(\ell):=K^{-1}\sum_{k=1}^{K}\widehat{W}_{I_{k}}(\widehat{\pi}_{\ell,I_{-k}})$; then $\widehat{\ell}=\arg\max_{\ell=1,\ldots,L_{N}}R_{CV}(\ell)$.
Note that for every $\ell=1,\ldots,L_{N}$, the following statement holds:
\begin{align}
W(\pi^{*})-W(\widehat{\pi}) & =W(\pi^{*})-R_{CV}(\widehat{\ell})+R_{CV}(\widehat{\ell})-W(\widehat{\pi})\nonumber\\
&\leq\underbrace{\left\{ W(\pi^{*})-R_{CV}(\ell)\right\} }_{\text{term 1}}+\underbrace{\left\{ R_{CV}(\widehat{\ell})-W(\widehat{\pi})\right\} }_{\text{term 2}}.\label{eq:decomposition of the regret-1}
\end{align}
We will discuss term 1 and term 2 one at a time.

\textbf{term 1.} Note that
\begin{equation}
\frac{1}{K}\sum_{k=1}^{K}W(\widehat{\pi}_{\ell,I_{-k}})-R_{CV}(\ell)=\frac{1}{K}\sum_{k=1}^{K}\left\{ W(\widehat{\pi}_{\ell,I_{-k}})-\widehat{W}_{I_{k}}(\widehat{\pi}_{\ell,I_{-k}})\right\} \label{eq:term 1 decomposition-1}
\end{equation}
We address each term on the right-hand side of the above equality. Since $\widehat{\pi}_{\ell,I_{-k}}$ only relies on $\{\bs Z_{i}:=(\bs X_{i},Y_{i},T_{i})\}_{i\in I_{-k}}$, it remains independent of $\{\bs Z_{i}\}_{i\in I_{k}}$. For every $k=1,\ldots,K$, Assumption \ref{assu:general assu on hatW} gives
$P(|\mathcal{E}_{k\ell}|\geq\epsilon+C_{1}/\sqrt{n})\leq C_{2}\exp\{ -C_{3}n\epsilon^{2}\}$.
By a standard integration argument, $\e|\mathcal{E}_{k\ell}|\leq C_{1}/\sqrt{n}+\int_{0}^{\infty}C_{2}\exp\{ -C_{3}n\epsilon^{2}\}d\epsilon\leq C_{5}/\sqrt{N}$,
for every $\ell=1,\ldots,L_{N}$, where $C_{5}:=\sqrt{K}\{ C_{1}+\int_{0}^{\infty}C_{2}\exp( -C_{3}\epsilon^{2})d\epsilon\}<\infty$.
From (\ref{eq:term 1 decomposition-1}), we obtain
$\e[\text{term 1}]\leq W(\pi^{*})-K^{-1}\sum_{k=1}^{K}\e[W(\widehat{\pi}_{\ell,I_{-k}})]+C_{5}/\sqrt{N}$
for all $\ell=1,\ldots,L_{N}$. Since the observed data are i.i.d., we can express the expected value as $\frac{1}{K}\sum_{k=1}^{K}\e\left[W(\widehat{\pi}_{\ell,I_{-k}})\right]=\e\left[W(\widehat{\pi}_{\ell,I_{-1}})\right]$.
Since the decomposition above holds for every $\ell=1,\ldots,L_N$, we obtain
\begin{equation}
\e\left[\text{term 1}\right]\leq\inf_{\ell=1,\ldots,L_{N}}\left\{ W(\pi^{*})-\e\left[W(\widehat{\pi}_{\ell,I_{-1}})\right]\right\} +\frac{C_{5}}{\sqrt{N}},\label{eq:result-for-term-1-1}
\end{equation}

\textbf{term 2.} Now we address term 2 in (\ref{eq:decomposition of the regret-1}).
According to the finite-candidate definition of $\widehat{\pi}$ in Remark~\ref{rem:finite L}, we have
\[
\text{term 2}=R_{CV}(\widehat{\ell})-W(\widehat{\pi})=\frac{1}{K}\sum_{k=1}^{K}\widehat{W}_{I_{k}}(\widehat{\pi}_{\widehat{\ell},I_{-k}})-W(\widehat{\pi})\leq\widehat{W}_{I_{\widehat{k}}}(\widehat{\pi}_{\widehat{\ell},I_{-\widehat{k}}})-W(\widehat{\pi}_{\widehat{\ell},I_{-\widehat{k}}}).
\]
Since $\widehat{\pi}_{\ell,I_{-k}}$ depends solely on $\{\bs Z_{i}\}_{i\in I_{-k}}$, it is independent of $\{\bs Z_{i}\}_{i\in I_{k}}$. For every $k=1,\ldots,K$ and $\ell=1,\ldots,L_{N}$, Assumption \ref{assu:general assu on hatW} gives
$P(|\mathcal{E}_{k\ell}|\geq\epsilon+r_N+C_{1}/\sqrt{n})\leq C_{2}\exp(-(C^{*})^{2}C_{3}\log L_{N}/K)\exp\{ -C_{3}n\epsilon^{2}\}$,
where $C^{*}>\sqrt{K/C_{3}}$ is a constant. The union bound gives
\begin{align*}
 & P\left(\mathcal{E}_{\widehat{k}\widehat{\ell}}\geq\epsilon+\frac{C_{1}}{\sqrt{n}}+r_{N}\right)\leq\sum_{k=1}^{K}\sum_{\ell=1}^{L_{N}}P\left(|\mathcal{E}_{k\ell}|\geq\epsilon+\frac{C_{1}}{\sqrt{n}}+r_{N}\right)\\
\leq & L_{N}K\cdot C_{2}\exp\left(-\left(C^{*}\right)^{2}C_{3}\frac{\log L_{N}}{K}\right)\exp\left\{ -C_{3}n\epsilon^{2}\right\} \leq KC_{2}\exp\left\{ -C_{3}n\epsilon^{2}\right\}
\end{align*}
for all $\epsilon\geq0$ and $n\geq C_{4}$. By applying a standard integration argument, we have
$\e[\mathcal{E}_{\widehat{k}\widehat{\ell}}-r_N]\leq C_{1}/\sqrt{n}+\int_{0}^{\infty}KC_{2}\exp\{ -C_{3}n\epsilon^{2}\}d\epsilon\leq C_{6}/\sqrt{N}$,
where $C_{6}:=\sqrt{K}\{ C_{1}+\int_{0}^{\infty}KC_{2}\exp( -C_{3}\epsilon^{2}) d\epsilon\}<\infty$.
Thus, we obtain that
\begin{equation}
\e\left[\text{term 2}\right]\leq\e\left[\mathcal{E}_{\widehat{k}\widehat{\ell}}\right]\leq C^{*}\sqrt{\frac{\log L_{N}}{N}}+\frac{C_{6}}{\sqrt{N}}.\label{eq:result for term 2-1}
\end{equation}

Combining the results from (\ref{eq:decomposition of the regret-1}),
(\ref{eq:result-for-term-1-1}), and (\ref{eq:result for term 2-1}) leads
to
\[
\begin{aligned}
\e\left[W(\pi^{*})-W(\widehat{\pi})\right]&\leq\inf_{\ell=1,\ldots,L_{N}}\left\{ W(\pi^{*})-\max_{\pi\in\Pi_{\ell}}W(\pi)+\max_{\pi\in\Pi_{\ell}}W(\pi)\right.\\
&\qquad\left.-\e\left[W(\widehat{\pi}_{\ell,I_{-1}})\right]\right\} +C\sqrt{\frac{\log L_{N}}{N}}
\end{aligned}
\]
for large $N$ and $L_N\ge2$, where $C>0$ is a constant.
\end{proof}

\section{Proof of Theorem \ref{thm:oracle holdout unknown propensity}}

\subsection[Asymptotic properties of nuisance estimators]{The asymptotic properties of $\widehat{e}_{I}(\boldsymbol{X})$ and
$\widehat{\mu}_{I,jt}(\boldsymbol{X};\widehat{\boldsymbol{\beta}}_{I}^{\mathrm{init}}(\pi))$}
\begin{lem}
\label{lem:converge of propensity} Suppose that Assumptions
\ref{assu:unconfounded}, \ref{assu:Overlap assumption and smoothness},
\ref{assu:regularity-conditions-on-L},
\ref{assu:regularity-assumptions-on-U},
\ref{assu:smoothness for mu}, \ref{assu:regularity on mu}, and
\ref{assu:regularity-assumptions-on-nuisance} hold. Let $I$ be a fixed index set such that $I\subset\left\{ 1,\ldots,N\right\} $ and $\left|I\right|=m$. If the DNN class $\mathcal{F}_{\mathrm{DNN}}(\mathcal{H}_{e},\mathcal{D}_{e})$
is constructed with $\mathcal{H}_{e}\mathcal{D}_{e}\asymp m^{d/(4s_{e}+2d)}\left(\log m\right)^{2}$,
then the DNN estimator defined in (\ref{eq:logistic reg for propensity})
satisfies
\(P\left(\left\Vert \widehat{e}_{I}(\bs X)-e^{*}(\bs X)\right\Vert _{P,2}>\rho_{e,m}+t\right)\leq c_{1}\exp\left(-c_{2}mt^{2}\right)\)
for all $t>0$, where $c_{1},c_{2}>0$ are finite constants independent
of $t$ and $I$, and $\rho_{e,m}$ is a non-negative sequence satisfying $\rho_{e,m}=o(m^{-1/4})$.
\end{lem}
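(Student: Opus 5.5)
The plan is to treat (\ref{eq:logistic reg for propensity}) as a bounded M-estimator over a logistic-transformed DNN sieve. I will then apply a localized empirical-process argument (peeling plus Talagrand's concentration inequality), in the style of \citet{farrell2021Deepa}, to obtain an exponential tail bound around a deterministic rate.

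\emph{Step 1: reparametrize and check curvature.} Write $e=\mathrm{logistic}(f)$ with $\|f\|_\infty\le M$, and let $f^*:=\log\{e^*/(1-e^*)\}$. By Assumption~\ref{assu:Overlap assumption and smoothness}, $\|f^*\|_\infty\le M$, so $f^*$ lies in the admissible range. Let $\ell(f;\bs Z):=T e^{-2}-2e^{-1}+(1-T)(1-e)^{-2}-2(1-e)^{-1}$ with $e=\mathrm{logistic}(f)$. By the display preceding (\ref{eq:propensity-objective}), the population excess risk equals
\[
\e\bigl[e^*(\bs X)\{1/e(\bs X)-1/e^*(\bs X)\}^2+(1-e^*(\bs X))\{1/(1-e(\bs X))-1/(1-e^*(\bs X))\}^2\bigr].
\]
Since $e$ and $e^*$ take values in $[\mathrm{logistic}(-M),\mathrm{logistic}(M)]$, this excess risk is bounded above and below by constant multiples of $\|f-f^*\|_{P,2}^2$ and of $\|e-e^*\|_{P,2}^2$. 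On the same bounded range, $\ell$ is bounded and Lipschitz in $f$ uniformly in $\bs Z$. This gives the Bernstein-type variance condition $\Var(\ell(f)-\ell(f^*))\lesssim \e[\ell(f)-\ell(f^*)]$.

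\emph{Step 2: approximation and complexity.} For approximation, I use \citet{lu2021Deep} (as in Proposition~\ref{prop:nn-approx-vc}, with a clipping layer to enforce $\|f\|_\infty\le M$). It yields $f_0$ in the class with $\|f_0-f^*\|_\infty\lesssim (\mathcal H_e\mathcal D_e/\log^{2})^{-2s_e/d}$. For complexity, I use the pseudo-dimension bound of \citet{bartlett2019Nearlytight}, which gives a VC-type entropy with index $V\lesssim \mathcal D_e^2\mathcal H_e^2\log(\mathcal D_e\mathcal H_e^2)$. The entropy transfers to $\{\ell(f)-\ell(f_0)\}$ by Lemma~\ref{lem:transformed uniform entropy}.

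\emph{Step 3: localized deviation.} Combining the basic inequality $P_m\ell(\widehat f)\le P_m\ell(f_0)$ with Step 1 gives a bound on the excess risk of $\widehat f$ by the approximation error plus the empirical process $(P_m-P)\{\ell(f_0)-\ell(\widehat f)\}$. I then peel over shells $\{\|f-f_0\|_{P,2}\in(2^{j-1}r,2^jr]\}$ and apply Talagrand's inequality with a maximal inequality for VC-type classes on each shell. The result is
\[
\|\widehat e_I-e^*\|_{P,2}\le C\Bigl(\sqrt{V\log m/m}+\|f_0-f^*\|_\infty\Bigr)+t
\]
with probability at least $1-c_1e^{-c_2mt^2}$. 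I define $\rho_{e,m}$ as the bracketed deterministic term.

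\emph{Step 4: rate check.} Plugging in $\mathcal H_e\mathcal D_e\asymp m^{d/(4s_e+2d)}(\log m)^2$ balances the two terms and gives $\rho_{e,m}\lesssim m^{-s_e/(2s_e+d)}\log^{c} m$, consistent with the rate quoted in Section~\ref{sec:empirical welfare}. Since $s_e>d/2$, we have $s_e/(2s_e+d)>1/4$, so $\rho_{e,m}=o(m^{-1/4})$.

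\emph{Main obstacle.} I expect the delicate part to be making Step 3 yield a genuinely sub-Gaussian tail $e^{-c mt^2}$ rather than a tail of the form $e^{-cmt^2/(\text{variance})}$ with an extra linear term. The plan is to use only the boundedness of $\ell$ and to absorb the Bernstein linear part by taking $t$ above the rate (for smaller $t$ the bound is trivial after enlarging $c_1$). A second point needing care is matching the complexity bound to the stated width–depth product $\mathcal H_e\mathcal D_e$, since the pseudo-dimension depends on $\mathcal H_e^2\mathcal D_e^2$. Any polylogarithmic slack there is absorbed because the margin $s_e/(2s_e+d)>1/4$ is strict.
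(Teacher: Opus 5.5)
Your proposal is correct and is essentially the paper's own argument. Both use the logit reparametrization with two-sided curvature of the inverse-propensity loss, the \citet{lu2021Deep} approximation and \citet{bartlett2019Nearlytight} complexity bounds, a localized maximal inequality with Bousquet--Talagrand and dyadic peeling, and a closing fixed-point step. The only difference is the localization. You localize around the approximant $f_0$ over $L_2$-shells, whereas the paper localizes around the sieve population minimizer $\bar{\lambda}$ over excess-risk levels; your version has the minor advantage of not needing $\bar{\lambda}$ to exist.

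The ``main obstacle'' you anticipate does not arise. Set $x:=\|\widehat{f}-f_0\|_{P,2}$ and take $u=mt^{2}$ in the fixed-point inequality $x^{2}\lesssim x\sqrt{(V\log m+u)/m}+(V\log m+u)/m+\|f_0-f^*\|_\infty^{2}$. This gives $x\lesssim\sqrt{V\log m/m}+\|f_0-f^*\|_\infty+t$ directly, so no case split on $t$ is needed. Moreover, your claim that small $t$ is trivial after enlarging $c_1$ would fail for $t$ between order $m^{-1/2}$ and the rate.
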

\begin{proof}
Without loss of generality, we let $I=\{1,\ldots,m\}$. In what follows, we write $\mathcal{F}_{e,m}:=\mathcal{F}_{\mathrm{DNN}}(\mathcal{H}_{e},\mathcal{D}_{e})\cap\{f:\|f\|_{\infty}\le M\}$.
Denote the logistic function $\frac{1}{1+\exp(-x)}$ by $\iota(x)$. Let
\[
\begin{aligned}
L(t,u)&:=\frac{t}{\iota(u)^{2}}-\frac{2}{\iota(u)}+\frac{1-t}{\left(1-\iota(u)\right)^{2}}-\frac{2}{1-\iota(u)},\\
\mathcal{L}(f)&:=\e\left[L(T,f(\bs X))\right],\qquad \widehat{\mathcal{L}}_{m}(f):=\frac{1}{m}\sum_{i\in I}L(T_{i},f(\bs X_{i})).
\end{aligned}
\]
Let $\widehat{\lambda}_{I}(\bs X):=\log\{\widehat{e}_{I}(\bs X)/(1-\widehat{e}_{I}(\bs X))\}$. By construction and Assumption~\ref{assu:regularity-assumptions-on-nuisance}(i),
$\widehat{\lambda}_{I}\in\mathcal{F}_{e,m}$ almost surely. Since $\widehat{e}_{I}$ minimizes the empirical risk
in (\ref{eq:logistic reg for propensity}) over
$\mathrm{logistic}\circ\mathcal{F}_{\mathrm{DNN}}(\mathcal{H}_{e},\mathcal{D}_{e})$, and the logistic map is one-to-one, we may regard $\widehat{\lambda}_{I}$
as satisfying $\widehat{\lambda}_{I}=\arg\min_{f\in\mathcal{F}_{e,m}}\widehat{\mathcal{L}}_{m}(f)$. Let $\lambda^{*}:=\log\{e^{*}/(1-e^{*})\}$ and $\overline{\lambda}:=\arg\min_{f\in\mathcal{F}_{e,m}}\mathcal{L}(f)$.
Then, $\widehat{e}_{I}=\iota\circ\widehat{\lambda}_{I}$ and $e^{*}=\iota\circ\lambda^{*}$.
We will first establish the convergence results for $\widehat{\lambda}_{I}(\bs X)$,
and then extend these findings to derive the convergence properties
of $\widehat{e}_{I}(\bs X)$. According to Assumption~\ref{assu:unconfounded}(ii)
and Assumption~\ref{assu:Overlap assumption and smoothness}, $e^{*}$
is bounded away from zero and one and
$\left\Vert \lambda^{*}\right\Vert _{\infty}\leq M$. It is important to note that
\begin{align*}
\mathcal{L}(f)-\mathcal{L}(\lambda^{*})
&=\e\left[e^{*}(\bs X)\left\{ \frac{1}{\iota(f(\bs X))}-\frac{1}{e^{*}(\bs X)}\right\} ^{2}\right]\\
&\quad+\e\left[\left\{ 1-e^{*}(\bs X)\right\} \left\{ \frac{1}{1-\iota(f(\bs X))}-\frac{1}{1-e^{*}(\bs X)}\right\} ^{2}\right].
\end{align*}
For $f\in\mathcal{F}_{e,m}$, since both $f$ and $\lambda^{*}$ are bounded by $M$, both $\iota(f(\bs X))$ and $e^{*}(\bs X)=\iota(\lambda^{*}(\bs X))$ are bounded away from zero and one. Additionally,
\begin{align*}
e^{*}(\bs X)\left\{ \frac{1}{\iota(f(\bs X))}-\frac{1}{e^{*}(\bs X)}\right\} ^{2}
&=\frac{\left\{ \iota(f(\bs X))-e^{*}(\bs X)\right\} ^{2}}{\iota(f(\bs X))^{2}e^{*}(\bs X)},\\
\left\{ 1-e^{*}(\bs X)\right\} \left\{ \frac{1}{1-\iota(f(\bs X))}-\frac{1}{1-e^{*}(\bs X)}\right\} ^{2}
&=\frac{\left\{ \iota(f(\bs X))-e^{*}(\bs X)\right\} ^{2}}{\left(1-\iota(f(\bs X))\right)^{2}\left(1-e^{*}(\bs X)\right)}.
\end{align*}
By substituting these two identities into the expression for
$\mathcal{L}(f)-\mathcal{L}(\lambda^{*})$, we obtain:
\[
\mathcal{L}(f)-\mathcal{L}(\lambda^{*})
=\e\left[\left\{ \iota(f(\bs X))-e^{*}(\bs X)\right\} ^{2}\left\{ \frac{1}{\iota(f(\bs X))^{2}e^{*}(\bs X)}+\frac{1}{\left(1-\iota(f(\bs X))\right)^{2}\left(1-e^{*}(\bs X)\right)}\right\} \right].
\]
The multiplier of $\left\{ \iota(f(\bs X))-e^{*}(\bs X)\right\} ^{2}$
is bounded above and below by positive constants that depend solely on
$M$. Additionally, by applying the mean value theorem and considering that $\iota'$ is bounded above and below on the interval $[-M,M]$, the norms
$\left\Vert \iota\circ f-\iota\circ\lambda^{*}\right\Vert _{P,2}$ and
$\left\Vert f-\lambda^{*}\right\Vert _{P,2}$ are equivalent. Therefore,
\begin{equation}
\begin{aligned}
c^{*}\left\Vert f-\lambda^{*}\right\Vert _{P,2}^{2}&\leq c_{1}^{*}\left\Vert \iota\circ f-\iota\circ\lambda^{*}\right\Vert _{P,2}^{2}\leq\mathcal{L}(f)-\mathcal{L}(\lambda^{*})\\
&\le C_{1}^{*}\left\Vert \iota\circ f-\iota\circ\lambda^{*}\right\Vert _{P,2}^{2}\leq C^{*}\left\Vert f-\lambda^{*}\right\Vert _{P,2}^{2}.
\end{aligned}\label{eq:curvature condition}
\end{equation}
for all $f\in\mathcal{F}_{e,m}$, where
$c_{1}^{*},C_{1}^{*},c^{*},C^{*}>0$ are constants depending only on the assumptions.
For any $t\in\{0,1\}$ and $-M\leq v\leq u\leq M$, we have
\begin{align}
\left|L(t,u)-L(t,v)\right| & \leq C_{lip}^{\prime}\left|\iota(u)-\iota(v)\right|\leq C_{lip}\left|u-v\right|,\label{eq:lip of loss in propensity}
\end{align}
where $C_{lip},C_{lip}^{\prime}>0$ are constants.

We first document the approximation floor for the population projection.
Passing from $\mathcal{F}_{\mathrm{DNN}}(\mathcal{H}_{e},\mathcal{D}_{e})$ to $\mathcal{F}_{e,m}$ only imposes $\left\Vert f\right\Vert _{\infty}\leq M$, which does not change the approximation rate because $\left\Vert \lambda^{*}\right\Vert _{\infty}\leq M$. By Assumption~\ref{assu:Overlap assumption and smoothness} and
\citet[Corollary~1.2]{lu2021Deep}, there exists $\widetilde{\lambda}\in\mathcal{F}_{e,m}$ such that
\[
\left\Vert \widetilde{\lambda}-\lambda^{*}\right\Vert _{\infty}
\leq C\left(\mathcal{H}_{e}/\log\mathcal{H}_{e}\right)^{-2s_{e}/d}
\left(\mathcal{D}_{e}/\log\mathcal{D}_{e}\right)^{-2s_{e}/d}.
\]
Since $\overline{\lambda}$ minimizes $\mathcal{L}$ over $\mathcal{F}_{e,m}$, it can be concluded from (\ref{eq:curvature condition}) that:
\begin{equation}
0\leq\mathcal{L}(\overline{\lambda})-\mathcal{L}(\lambda^{*})
\leq\mathcal{L}(\widetilde{\lambda})-\mathcal{L}(\lambda^{*})
\leq C\left\Vert \widetilde{\lambda}-\lambda^{*}\right\Vert _{\infty}^{2}
\leq C m^{-2s_{e}/(2s_{e}+d)}(\log m)^{6},\label{eq:approximation floor propensity}
\end{equation}
where the last inequality utilizes the construction and growth conditions
on $\mathcal{H}_{e}$ and $\mathcal{D}_{e}$ in
Assumption~\ref{assu:regularity-assumptions-on-nuisance}(ii), and
incorporates the resulting logarithmic factors into $(\log m)^{6}$. Define $a_{e,m}:=C m^{-2s_{e}/(2s_{e}+d)}(\log m)^{6}$.
By (\ref{eq:approximation floor propensity}) and the definition of $a_{e,m}$, it follows that $\mathcal{L}(\overline{\lambda})-\mathcal{L}(\lambda^{*})\leq a_{e,m}$.
This error bound will be utilized in the localization argument
below. In particular, by (\ref{eq:curvature condition}), for every
$f\in\mathcal{F}_{e,m}$, and by substituting $f=\overline{\lambda}$ into the resulting inequality, we obtain
\begin{align}
\mathcal{L}(f)-\mathcal{L}(\overline{\lambda})
&=\left\{\mathcal{L}(f)-\mathcal{L}(\lambda^{*})\right\}-\left\{\mathcal{L}(\overline{\lambda})-\mathcal{L}(\lambda^{*})\right\}\geq c^{*}\left\Vert f-\lambda^{*}\right\Vert _{P,2}^{2}-a_{e,m},\label{eq:localized lower propensity}\\
\left\Vert \overline{\lambda}-\lambda^{*}\right\Vert _{P,2}^{2}
&\leq C a_{e,m}.\label{eq:bar lambda approximation propensity}
\end{align}
We also apply the crude bound $0\leq\mathcal{L}(f)-\mathcal{L}(\overline{\lambda})
\leq\mathcal{L}(f)-\mathcal{L}(\lambda^{*})
\leq C\| f-\lambda^{*}\| _{P,2}^{2}\leq4CM^{2}$, derived from the definition
of $\overline{\lambda}$ and (\ref{eq:curvature condition}).

Next, we bound the empirical process. Let $v_{e,m}:=\mathcal{H}_{e}^{2}\mathcal{D}_{e}^{2}\log(\mathcal{H}_{e}\mathcal{D}_{e})$.
By \citet[Theorem~7]{bartlett2019Nearlytight}, the class $\mathcal{F}_{e,m}$ is a VC-subgraph class with a VC index that is bounded by
$Cv_{e,m}$. Hence, \citet[Theorem~2.6.7]{vaart1996Weak} implies that,
for constants $a>1$ and $C>0$,
\[
\sup_{Q}\log N\left(M\epsilon,\mathcal{F}_{e,m},\left\Vert \cdot\right\Vert _{Q,2}\right)
\leq Cv_{e,m}\log(a/\epsilon),\qquad0<\epsilon<1,
\]
By (\ref{eq:lip of loss in propensity}) and Lemma~\ref{lem:transformed uniform entropy}, the loss-difference class
\(\{(T,\bs X)\mapsto L(T,f(\bs X))-L(T,h(\bs X)): f,h\in\mathcal{F}_{e,m}\}\)
is VC-type with index \(v_{e,m}\), up to constants.

For $s>0$, define the localized class
\(\mathcal{G}_{e}(s):=\{(T,\bs X)\mapsto L(T,f(\bs X))-L(T,\overline{\lambda}(\bs X)):f\in\mathcal{F}_{e,m},\ \mathcal{L}(f)-\mathcal{L}(\overline{\lambda})\leq s\}\).
Every function in $\mathcal{G}_{e}(s)$ is bounded by a constant that depends solely on $M$. Furthermore, if $s\geq2a_{e,m}$ and
$g_{f}\in\mathcal{G}_{e}(s)$, it then follows from (\ref{eq:lip of loss in propensity})
that $Pg_{f}^{2}\leq C\left\Vert f-\overline{\lambda}\right\Vert _{P,2}^{2}$.
Since $g_{f}\in\mathcal{G}_{e}(s)$, we have:
$\mathcal{L}(f)-\mathcal{L}(\overline{\lambda})\leq s$. Thus, the implication from
(\ref{eq:localized lower propensity}) leads to:
$\left\Vert f-\lambda^{*}\right\Vert _{P,2}^{2}\leq C(s+a_{e,m})$.
Combining these bounds with \eqref{eq:bar lambda approximation propensity} and using $s\geq2a_{e,m}$ gives
\[
Pg_{f}^{2}\leq C\left\Vert f-\overline{\lambda}\right\Vert _{P,2}^{2}\leq C\left\{\left\Vert f-\lambda^{*}\right\Vert _{P,2}^{2}+\left\Vert \overline{\lambda}-\lambda^{*}\right\Vert _{P,2}^{2}\right\}\leq Cs.
\]
For every $s\geq s_{0}:=\max\{2a_{e,m},m^{-2}\}$, the work by
\citet[Lemma~6.2]{chernozhukov2018Double} provides the following result:
\[
\e\left[\sup_{g\in\mathcal{G}_{e}(s)}
\left|\left(P_{m}-P\right)g\right|\right]
\leq C\left(\sqrt{\frac{sv_{e,m}\log m}{m}}+\frac{v_{e,m}\log m}{m}\right).
\]
Applying Bousquet's version of Talagrand's
inequality \citep[Theorem~2.3]{bousquet2002Bennett} results in the following for
all $u\geq0$:
\[
P\left(\sup_{g\in\mathcal{G}_{e}(s)}
\left|\left(P_{m}-P\right)g\right|>
C\left[\sqrt{\frac{s(v_{e,m}\log m+u)}{m}}+\frac{v_{e,m}\log m+u}{m}\right]\right)
\leq e^{-u}.
\]

Let $s_{k}:=2^{k}s_{0}$, $k=0,1,\ldots,K_{m}$, where
$K_{m}:=\max\{0,\lceil\log_{2}(4CM^{2}/s_{0})\rceil\}$. Then $K_{m}=O(\log m)$ and the dyadic intervals
$(s_{k-1},s_{k}]$, with $s_{-1}:=0$, cover all possible values of
$\mathcal{L}(f)-\mathcal{L}(\overline{\lambda})$. By applying the previous concentration bound to $\mathcal{G}_{e}(s_{k})$ with a tail parameter of $u+k\log2$ and taking a union bound, there exists an event
$\mathcal{E}_{e,m}(u)$ such that $P\{\mathcal{E}_{e,m}(u)\}\geq1-c_{1}\exp(-c_{2}u)$ and, on $\mathcal{E}_{e,m}(u)$, for every $k=0,\ldots,K_{m}$,
\[
\sup_{g\in\mathcal{G}_{e}(s_{k})}\left|\left(P_{m}-P\right)g\right|
\leq C\left[\sqrt{\frac{s_{k}(v_{e,m}\log m+u)}{m}}
+\frac{v_{e,m}\log m+u}{m}\right],
\]
where $k\leq K_{m}=O(\log m)$ and $v_{e,m}\geq1$ for large $m$.

Set $R_{m}:=\mathcal{L}(\widehat{\lambda}_{I})-\mathcal{L}(\overline{\lambda})$.
Since $\widehat{\lambda}_{I}$ minimizes the empirical risk, we have that
$P_{m}\{L(T,\widehat{\lambda}_{I}(\bs X))-L(T,\overline{\lambda}(\bs X))\}\leq0$. Therefore,
\[
R_{m}\leq(P-P_{m})\left\{L(T,\widehat{\lambda}_{I}(\bs X))-L(T,\overline{\lambda}(\bs X))\right\}
\leq\left|\left(P_{m}-P\right)\left\{L(T,\widehat{\lambda}_{I}(\bs X))-L(T,\overline{\lambda}(\bs X))\right\}\right|.
\]
On $\mathcal{E}_{e,m}(u)$, if $R_{m}\leq s_{0}$, then since $s_{0}=\max\{2a_{e,m},m^{-2}\}$ and
$m^{-2}\leq(v_{e,m}\log m+u)/m$ for large $m$, we have
$R_{m}\leq C\{a_{e,m}+(v_{e,m}\log m+u)/m\}$. If $R_{m}>s_{0}$,
choose $k$ such that $s_{k}<R_{m}\leq s_{k+1}$. Then,
$\widehat{\lambda}_{I}$ belongs to the localized class $\mathcal{G}_{e}(s_{k+1})$, and hence \(R_{m}\leq C[\sqrt{s_{k+1}(v_{e,m}\log m+u)/m}+(v_{e,m}\log m+u)/m]\).
Since $s_{k+1}=2s_{k}\leq2R_{m}$, the elementary inequality
$r\leq A\sqrt{r}+B\Rightarrow r\leq2A^{2}+2B$ leads to
\begin{equation}
R_{m}\leq C\left\{a_{e,m}+\frac{v_{e,m}\log m+u}{m}\right\}
\label{eq:excess risk propensity}
\end{equation}
on $\mathcal{E}_{e,m}(u)$. By combining (\ref{eq:excess risk propensity})
with (\ref{eq:bar lambda approximation propensity}) and
(\ref{eq:curvature condition}), we derive a result that holds w.p. at least $1-c_{1}\exp(-c_{2}u)$:
\[
\left\Vert \widehat{\lambda}_{I}-\lambda^{*}\right\Vert _{P,2}^{2}\leq C R_{m}+C\left\Vert \overline{\lambda}-\lambda^{*}\right\Vert _{P,2}^{2}\leq C a_{e,m}+C\frac{v_{e,m}\log m}{m}+C\frac{u}{m}.
\]

With $u=mt^{2}$ and using $\sqrt{x+y}\leq\sqrt{x}+\sqrt{y}$, we find that, w.p. at least
$1-c_{1}\exp(-c_{2}mt^{2})$, \(\left\Vert \widehat{\lambda}_{I}-\lambda^{*}\right\Vert _{P,2}\leq C(a_{e,m}+v_{e,m}\log m/m)^{1/2}+Ct\).
Since \(\left(a_{e,m}+v_{e,m}\log m/m\right)^{1/2}
\leq C m^{-s_{e}/(2s_{e}+d)}(\log m)^{3}=o(m^{-1/4})\),
where the last equality follows from $s_{e}>d/2$, the desired deterministic
rate follows for the logit estimator. Finally, the mean value theorem
and the boundedness of the logits imply
\(\left|\widehat{e}_{I}(\bs X)-e^{*}(\bs X)\right|
\leq C\left|\widehat{\lambda}_{I}(\bs X)-\lambda^{*}(\bs X)\right|\).
For every $t>0$, the following expression holds w.p. at least $1-c_{1}\exp(-c_{2}mt^{2})$:
\begin{align}
\left\Vert \widehat{e}_{I}(\bs X)-e^{*}(\bs X)\right\Vert _{P,2}
& \leq C m^{-s_{e}/(2s_{e}+d)}(\log m)^{3}+Ct \leq o(m^{-1/4})+Ct.\label{eq:convergence rate of hat e}
\end{align}
After enlarging $\rho_{e,m}$, and renaming constants if necessary, we obtain the result for all $t>0$. This completes the proof.
\end{proof}
\begin{lem}
\label{lem:convergence of initial beta} Suppose Assumptions
\ref{assu:unconfounded}, \ref{assu:Overlap assumption and smoothness},
\ref{assu:regularity-conditions-on-L},
\ref{assu:regularity-assumptions-on-U},
\ref{assu:smoothness for mu}, \ref{assu:regularity on mu}, and
\ref{assu:regularity-assumptions-on-nuisance} hold. Let $I\subset\left\{ 1,\ldots,N\right\} $
be an index set with $\left|I\right|=m$. For any policy class $\Pi\subset\Pi_{\infty}$
with a VC dimension $\mathrm{VC}(\Pi)$, if $\mathrm{VC}(\Pi)/m\to0$
 as $m\to\infty$, then the quantity $\widehat{\bs{\beta}}_{I}^{\mathrm{init}}(\pi)$
defined by (\ref{eq:beta-init-pi-I}) satisfies
  \[
P\left(\sup_{\pi\in\Pi}\left\Vert \widehat{\bs{\beta}}_{I}^{\mathrm{init}}(\pi)-\bs{\beta}^{*}(\pi)\right\Vert \geq C\sqrt{\frac{\mathrm{VC}(\Pi)}{m}}+C\rho_{e,m}+\delta\right)\leq c_{1}\exp\left(-c_{2}m\delta^{2}\right)
\]
 for all $0<\delta<c_{3}$ and large $m$, where $c_{1},c_{2},c_{3}>0$
are finite constants independent of $\delta$, $m$, $I$
and $\Pi$, and $C>0$ is finite.
\end{lem}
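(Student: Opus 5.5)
The plan is to apply Lemma~\ref{lem:Nearness of the argmins of the convex functions} coordinatewise. The objective in (\ref{eq:beta-init-pi-I}) separates across $j$, so $\widehat{\beta}^{\mathrm{init}}_{I,j}(\pi)$ minimizes the convex function
\[
A_m(s;\pi):=\frac{1}{m}\sum_{i\in I}\widehat{\omega}_i(\pi)\{\mathcal{L}_j(Y_i-s)-\mathcal{L}_j(Y_i-\beta_j^*(\pi))\},
\]
where $\widehat{\omega}_i(\pi)$ is the IPW factor built from $\widehat{e}_I$. The comparison function will be $B(s;\pi):=Q_j(s;\pi)-Q_j(\beta_j^*(\pi);\pi)$, which has its unique minimum at $\beta_j^*(\pi)$.

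\textbf{Curvature.} By Assumption~\ref{assu:regularity-conditions-on-L}(ii) and a second-order Taylor expansion, for $\delta\le \min\{c_0,\underline{Q}''/Q''_{lip}\}$ we have
\[
h(\delta;\pi)\ge \tfrac{1}{2}\underline{Q}''\delta^2-\tfrac{1}{6}Q''_{lip}\delta^3\ge c\,\delta^2,
\]
uniformly in $\pi$. This explains the restriction $\delta<c_3$ in the statement.

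\textbf{Discrepancy.} The lemma then reduces the claim to bounding $\sup_{\pi\in\Pi}\sup_{|s-\beta_j^*|=\delta}|A_m-B|$ at a radius $\delta$ slightly larger than the target rate. I split $A_m-B$ into two pieces.
\begin{itemize}
\item[(a)] An oracle empirical-process term $(P_m-P)g_{\pi,s}$, with $g_{\pi,s}=\omega^*(\pi)\{\mathcal{L}_j(Y-s)-\mathcal{L}_j(Y-\beta_j^*(\pi))\}$.
\item[(b)] A nuisance term $\frac{1}{m}\sum_i(\widehat{\omega}_i-\omega^*_i)\{\mathcal{L}_j(Y_i-s)-\mathcal{L}_j(Y_i-\beta^*_j)\}$.
\end{itemize}

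\textbf{Term (a).} By part (i) of Assumption~\ref{assu:regularity-conditions-on-L}, $|\mathcal{L}_j(Y-s)-\mathcal{L}_j(Y-\beta^*_j)|=|\int \mathcal{L}_j'|\le (M/4)|s-\beta_j^*|$, so $g_{\pi,s}$ has envelope $C\delta$ by overlap. The class indexed by $(\pi,s)$ is the product of the VC class $\Pi$ with monotone-in-$s$ convex-loss differences. Lemma~\ref{lem:transformed uniform entropy} therefore gives a VC-type class with index $O(\mathrm{VC}(\Pi))$. A maximal inequality (as in \citet[Lemma~6.2]{chernozhukov2018Double}) plus Talagrand--Bousquet gives, with probability at least $1-e^{-u}$,
\[
\sup|{\rm (a)}|\le C\delta\sqrt{(\mathrm{VC}(\Pi)\log m+u)/m}.
\]

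\textbf{Term (b).} Using $|\widehat{\omega}-\omega^*|\le C|\widehat{e}_I-e^*|$ (since the logits are bounded by Assumption~\ref{assu:regularity-assumptions-on-nuisance}(i)) and the same Lipschitz envelope, I bound it by $C\delta\cdot\frac{1}{m}\sum_i|\widehat{e}_I(\bs X_i)-e^*(\bs X_i)|$. This is at most $C\delta\{\|\widehat{e}_I-e^*\|_{P,2}+\text{deviation}\}$, handled by Lemma~\ref{lem:converge of propensity} (tail $\rho_{e,m}+t$ with probability at least $1-c_1e^{-c_2mt^2}$). The empirical-versus-population gap of the fitted difference is controlled uniformly over the DNN class, or more simply by the cruder observation that $\widehat{e}_I$ lies in a bounded class with the same entropy used in that proof.

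Note that (b) has no mean-zero structure, so the bias is first order, of size $\delta\rho_{e,m}$; this is why $C\rho_{e,m}$ appears in the bound rather than being removed.

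\textbf{Combining.} Combining gives $\Delta\le C\delta\{\sqrt{\mathrm{VC}(\Pi)/m}+\rho_{e,m}+t\}$ on an event of probability at least $1-c_1e^{-c_2mt^2}$, taking $u=mt^2$. This is smaller than $c\delta^2$ as soon as $\delta> C'\{\sqrt{\mathrm{VC}(\Pi)/m}+\rho_{e,m}+t\}$. Renaming $t$ as the lemma's $\delta$ and taking a union over $j=1,\ldots,p$ yields the stated bound.

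\textbf{Main obstacle.} The delicate step is the uniform control of (a) and (b) over $\pi\in\Pi$ and $s$ with a \emph{localized} envelope of order $\delta$. This localization is needed so that the cancellation with the quadratic curvature produces the rate linearly rather than as a square root. I will also need to absorb the stray $\log m$ factor, either through a sharper VC-type maximal inequality without the log, or by allowing it into the constant via $\mathrm{VC}(\Pi)/m\to0$ as the paper's statement implicitly does.
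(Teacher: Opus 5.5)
Your overall architecture is the same as the paper's. You apply Lemma~\ref{lem:Nearness of the argmins of the convex functions} coordinatewise, take a quadratic lower bound on $h(\delta;\pi)$ from Assumption~\ref{assu:regularity-conditions-on-L}(ii), and need a discrepancy bound that is linear in $\delta$, so that the radius is of order $\sqrt{\mathrm{VC}(\Pi)/m}+\rho_{e,m}+t$. Your treatment of the nuisance part is a legitimate alternative to the paper's split. You use an oracle process plus the crude bound $C\delta\,m^{-1}\sum_i|\widehat e_I(\bs X_i)-e^*(\bs X_i)|$, with the empirical-to-population gap controlled uniformly over the bounded DNN class and $\sqrt{v_{e,m}/m}$ absorbed into $\rho_{e,m}$. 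The paper instead uses an empirical process indexed by $e$ in an $L_2$-ball plus a population bias $\lesssim\|\widehat e-e^*\|_{P,2}$.

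The genuine gap is in term (a). As written, your bound $C\delta\sqrt{(\mathrm{VC}(\Pi)\log m+u)/m}$ only proves a rate $\sqrt{\mathrm{VC}(\Pi)\log m/m}$, which is weaker than the statement.
\begin{itemize}
\item Your second remedy is invalid: a $\log m$ factor cannot be absorbed into a constant, and $\mathrm{VC}(\Pi)/m\to0$ is irrelevant to it.
\item Your first remedy needs the localized class $\{\omega^*(\pi)\{\mathcal{L}_j(Y-s)-\mathcal{L}_j(Y-\beta_j^*(\pi))\}\}$ to be VC-type relative to its own envelope $C\delta$. Lemma~\ref{lem:transformed uniform entropy} does not give this when applied to a difference of two loss-translate classes, because their envelopes are of order one, not $\delta$. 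So the claim is unjustified.
\end{itemize}
The missing device is the one the paper uses. Write $\mathcal{L}_j(Y-s)-\mathcal{L}_j(Y-\beta_j^*)=-\int_{\beta_j^*}^{s}\mathcal{L}_j'(Y-\tilde\beta)\,d\tilde\beta$ and exchange the integral with $P_m-P$ by Fubini, which gives
\[
\sup_{|s-\beta_j^*(\pi)|=\delta}\left|(P_m-P)g_{\pi,s}\right|\le\delta\sup_{|\tilde\beta-\beta_j^*(\pi)|\le\delta}\left|(P_m-P)\{\omega^*(\pi)\mathcal{L}_j'(Y-\tilde\beta)\}\right|.
\]
The class on the right is unlocalized, has a constant envelope, and is VC-type with index $O(\mathrm{VC}(\Pi))$: it consists of translates of the monotone $\mathcal{L}_j'$ (via \citet[Lemma~2.6.16]{vaart1996Weak}) multiplied by the VC weight class. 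You can therefore take $\sigma$ equal to the envelope in \citet[Lemma~6.2]{chernozhukov2018Double}. The factor $\log(A\|F\|_{P,2}/\sigma)$ is then a constant, so the bound is $C\sqrt{\mathrm{VC}(\Pi)/m}$ with no logarithm. A bounded-differences inequality supplies the $t$-deviation. The factor $\delta$ now comes from the length of the integration interval rather than from variance localization; this is exactly the paper's reduction $\Delta(\delta;\pi)\le\delta\widetilde\Delta(\delta;\pi)$. With this replacement, the rest of your argument closes and yields the stated bound.
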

\begin{proof}
Without loss of generality, we assume $I=\{1,\ldots,m\}$. For notational
simplicity, we present the argument for one coordinate. Applying the same
argument to $j=1,\ldots,p$ and taking a finite union bound yields the
displayed vector-norm result, since $p$ is fixed; the constants below
absorb $p$. In this proof, we write $\mathcal{F}_{\mathrm{DNN},e}$
for $\text{logistic}\circ\left\{ \mathcal{F}_{\mathrm{DNN}}(\mathcal{H}_{e},\mathcal{D}_{e})\bigcap\{f:\left\Vert f\right\Vert _{\infty}\le M\}\right\} $.
Let $v_{e,m}:=\mathcal{H}_{e}^{2}\mathcal{D}_{e}^{2}\log(\mathcal{H}_{e}\mathcal{D}_{e})$. By Assumption~\ref{assu:regularity-assumptions-on-nuisance}(ii), we have
$\sqrt{v_{e,m}/m}=o(m^{-1/4})$. By \citet[Theorem 7]{bartlett2019Nearlytight}
and \citet[Theorem 2.6.7]{vaart1996Weak}, the truncated DNN class inside $\mathcal{F}_{\mathrm{DNN},e}$ is VC-type with index \(v_{e,m}\). Since the logistic map is bounded Lipschitz on $[-M,M]$, Lemma~\ref{lem:transformed uniform entropy} yields
\begin{equation}
\sup_{Q}\log N\left(\epsilon,\mathcal{F}_{\mathrm{DNN},e},\left\Vert \cdot\right\Vert _{Q,2}\right)\leq Cv_{e,m}\log\left(a/\epsilon\right)\text{ for all }0<\epsilon<1,\label{eq:entropy for DNNe}
\end{equation}
  where $a,C>0$ do not depend on $m$. Enlarging $\rho_{e,m}$ if
necessary, we may assume $\sqrt{v_{e,m}/m}\leq\rho_{e,m}$ for large $m$.
 Let
\begin{align*}
Q(\beta;\pi,e)&=\e\left[\left\{ \frac{\pi(\bs X)T}{e(\bs X)}+\frac{\left(1-\pi(\bs X)\right)\left(1-T\right)}{1-e(\bs X)}\right\} \mathcal{L}_{1}(Y-\beta)\right],\\
Q_{m}(\beta;\pi,e)&=\frac{1}{m}\sum_{i\in I}\Bigl\{\frac{\pi(\bs X_{i})T_{i}}{e(\bs X_{i})}+\frac{(1-\pi(\bs X_{i}))(1-T_{i})}{1-e(\bs X_{i})}\Bigr\}\mathcal{L}_{1}(Y_{i}-\beta).
\end{align*}
Then $\widehat{\beta}_{I}^{\mathrm{init}}(\pi)=\underset{\beta\in\R}{\arg\min}\ Q_{m}(\beta;\pi,\widehat{e})$
and $\beta^{*}(\pi)=\underset{\beta\in\R}{\arg\min}\ Q(\beta;\pi,e^{*})$, where $\widehat{e}:=\widehat{e}_{I}$.
Since $Q_{m}(\beta;\pi,\widehat{e})$ is convex in $\beta$, we apply
Lemma \ref{lem:Nearness of the argmins of the convex functions} to
bound $\widehat{\beta}_{I}^{\mathrm{init}}(\pi)-\beta^{*}(\pi)$.
We decompose the proof into three steps and write $a_{m}(\Pi):=\sqrt{\mathrm{VC}(\Pi)/m}+\rho_{e,m}$.

\textbf{Step 1. (Developing a lower bound for $h(\delta;\pi)$).}
For all $\delta\in(0,\min\{(3\underline{Q}^{\prime\prime})/(4Q_{lip}^{\prime\prime}),c_{0}\})$, let $h(\delta;\pi):=\inf_{|\beta-\beta^{*}(\pi)|=\delta}Q(\beta;\pi,e^{*})-Q(\beta^{*}(\pi);\pi,e^{*})$.
By the first-order condition for $\beta^{*}(\pi)$, the mean value theorem, and Assumption~\ref{assu:regularity-conditions-on-L}(ii), we have
\begin{equation}
\inf_{\pi\in\Pi_{\infty}}h(\delta;\pi)\geq\frac{\underline{Q}^{\prime\prime}}{4}\delta^{2}\ \text{ for all }0<\delta\leq\min\left\{\frac{3\underline{Q}^{\prime\prime}}{4Q_{lip}^{\prime\prime}},c_{0}\right\}.\label{eq:init-beta-lower-bound}
\end{equation}

\textbf{Step 2. (Fluctuation bound).}
For any $\delta$ in the interval specified in Step 1,
we let
\[
\Delta(\delta;\pi):=\sup_{\left|\beta-\beta^{*}(\pi)\right|=\delta}\left|Q_{m}(\beta;\pi,\widehat{e})-Q_{m}(\beta^{*}(\pi);\pi,\widehat{e})-\left\{ Q(\beta;\pi,e^{*})-Q(\beta^{*}(\pi);\pi,e^{*})\right\} \right|.
\]
Let
\begin{align*}
Q_{m}^{\prime}(\beta;\pi,e)&:=\frac{1}{m}\sum_{i\in I}\Bigl\{\frac{\pi(\bs X_{i})T_{i}}{e(\bs X_{i})}+\frac{(1-\pi(\bs X_{i}))(1-T_{i})}{1-e(\bs X_{i})}\Bigr\}\mathcal{L}_{1}^{\prime}(Y_{i}-\beta),\\
Q^{\prime}(\beta;\pi,e)&:=\e\left[\left\{ \frac{\pi(\bs X)T}{e(\bs X)}+\frac{\left(1-\pi(\bs X)\right)\left(1-T\right)}{1-e(\bs X)}\right\} \mathcal{L}_{1}^{\prime}(Y-\beta)\right].
\end{align*}
By Assumption \ref{assu:regularity-conditions-on-L},
we have $-\int_{\beta^{*}(\pi)}^{\beta}Q_{m}^{\prime}(\widetilde{\beta};\pi,\widehat{e})d\widetilde{\beta}=Q_{m}(\beta;\pi,\widehat{e})-Q_{m}(\beta^{*}(\pi);\pi,\widehat{e})$, and analogously for $Q^{\prime}$,
and thus
\begin{align*}
\Delta(\delta;\pi) & =\sup_{\left|\beta-\beta^{*}(\pi)\right|=\delta}\left|-\int_{\beta^{*}(\pi)}^{\beta}\left\{ Q_{m}^{\prime}(\widetilde{\beta};\pi,\widehat{e})-Q^{\prime}(\widetilde{\beta};\pi,e^{*})\right\} d\widetilde{\beta}\right|\nonumber \\
 & \leq\delta\sup_{\left|\beta-\beta^{*}(\pi)\right|\leq\delta}\left|Q_{m}^{\prime}(\beta;\pi,\widehat{e})-Q^{\prime}(\beta;\pi,e^{*})\right|=\delta\widetilde{\Delta}(\delta;\pi),
\end{align*}
where we have let $\widetilde{\Delta}(\delta;\pi):=\sup_{\left|\beta-\beta^{*}(\pi)\right|\leq\delta}\left|Q_{m}^{\prime}(\beta;\pi,\widehat{e})-Q^{\prime}(\beta;\pi,e^{*})\right|$.
Now, it suffices to bound $\widetilde{\Delta}(\delta;\pi)$. Note
that $\widetilde{\Delta}(\delta;\pi)$ can be decomposed into
\begin{align}
\sup_{\pi\in\Pi}\widetilde{\Delta}(\delta;\pi)
&=\sup_{\pi\in\Pi}\sup_{\left|\beta-\beta^{*}(\pi)\right|\leq\delta}\left|Q_{m}^{\prime}(\beta;\pi,\widehat{e})-Q^{\prime}(\beta;\pi,e^{*})\right|\nonumber\\
&\leq\sup_{\pi\in\Pi}\sup_{\left|\beta-\beta^{*}(\pi)\right|\leq\delta}\left|Q_{m}^{\prime}(\beta;\pi,\widehat{e})-Q^{\prime}(\beta;\pi,\widehat{e})\right|\nonumber\\
&\quad+\sup_{\pi\in\Pi}\sup_{\left|\beta-\beta^{*}(\pi)\right|\leq\delta}\left|Q^{\prime}(\beta;\pi,\widehat{e})-Q^{\prime}(\beta;\pi,e^{*})\right|.\label{eq:decomp-of-tilde-delta}
\end{align}
We bound these two terms one by one.

\textbf{Step 2.1: Empirical-process term.}
By Lemma~\ref{lem:converge of propensity}, together with the construction
of $\widehat{e}$ and Assumption~\ref{assu:regularity-assumptions-on-nuisance}(i),
we have
\[
\sup_{\pi\in\Pi}\sup_{\left|\beta-\beta^{*}(\pi)\right|\leq\delta}\left|Q_{m}^{\prime}(\beta;\pi,\widehat{e})-Q^{\prime}(\beta;\pi,\widehat{e})\right|\leq\sup_{f\in\mathcal{M}}\left|\left\{ P_{m}-P\right\} f(Z)\right|
\]
w.p. at least $1-c_{1}\exp\left(-c_{2}mt^{2}\right)$
for all $0<t<c_{3}$ and large $m$, where
\begin{align*}
\mathcal{M} & :=\Bigl\{ Z\mapsto\left\{ \frac{\pi(\bs X)T}{e(\bs X)}+\frac{(1-\pi(\bs X))(1-T)}{1-e(\bs X)}\right\} \mathcal{L}_{1}^{\prime}(Y-\beta):\left|\beta-\beta^{*}(\pi)\right|\leq\delta,\\
 & \qquad\qquad e\in\mathcal{F}_{\mathrm{DNN},e},\left\Vert e-e^{*}\right\Vert _{P,2}\leq\rho_{e,m}+t,\pi\in\Pi\Bigr\}
\end{align*}
is a function class with envelope $C>0$. Since $\mathcal{L}_{1}^{\prime}$ is non-decreasing, \citet[Lemma 2.6.16]{vaart1996Weak} implies that $\{\bs Z\mapsto\mathcal{L}_{1}^{\prime}(Y-\beta):|\beta-\beta^{*}(\pi)|\leq\delta,\pi\in\Pi\}$ is VC-type with a fixed index. The map $(\pi,e,\ell)\mapsto\{\pi T/e+(1-\pi)(1-T)/(1-e)\}\ell$ is bounded Lipschitz on the overlap range, so Lemma~\ref{lem:transformed uniform entropy}, the VC-subgraph bound for $\Pi$, and (\ref{eq:entropy for DNNe}) give
\[
\sup_{Q}\log N\left(C\epsilon,\mathcal{M},\left\Vert \cdot\right\Vert _{Q,2}\right)\leq C\left\{ \mathrm{VC}(\Pi)+v_{e,m}\right\}\log\left(a/\epsilon\right)
\]
for all $0<\epsilon<1$, where $a>1$ is a constant. Applying
\citet[Lemma 6.2]{chernozhukov2018Double} gives that
  \[
\e\left[\sup_{f\in\mathcal{M}}\left|\left\{ P_{m}-P\right\} f(Z)\right|\right]\leq C\sqrt{\frac{\mathrm{VC}(\Pi)}{m}}+C\sqrt{\frac{v_{e,m}}{m}}.
\]
By the bounded difference inequality, we have
\[
P\left(\sup_{f\in\mathcal{M}}\left|\left\{ P_{m}-P\right\} f(Z)\right|-\e\left[\sup_{f\in\mathcal{M}}\left|\left\{ P_{m}-P\right\} f(Z)\right|\right]\geq t\right)\leq c_{1}\exp\left(-c_{2}mt^{2}\right)
\]
for any $t\geq0$. Thus, we obtain that
  \[
\sup_{\pi\in\Pi}\sup_{\left|\beta-\beta^{*}(\pi)\right|\leq\delta}\left|Q_{m}^{\prime}(\beta;\pi,\widehat{e})-Q^{\prime}(\beta;\pi,\widehat{e})\right|\leq C\sqrt{\frac{\mathrm{VC}(\Pi)}{m}}+C\sqrt{\frac{v_{e,m}}{m}}+t
\]
w.p. at least $1-2c_{1}\exp\left(-c_{2}mt^{2}\right)$
for all $0<t<c_{3}$ and large $m$.

\textbf{Step 2.2: Bound $\sup_{\pi\in\Pi}\sup_{\left|\beta-\beta^{*}(\pi)\right|\leq\delta}\left|Q^{\prime}(\beta;\pi,\widehat{e})-Q^{\prime}(\beta;\pi,e^{*})\right|$.}
By Lemma~\ref{lem:converge of propensity}, together with the construction
of $\widehat{e}$ and Assumption~\ref{assu:regularity-assumptions-on-nuisance}(i),
we have
\begin{align*}
 & \sup_{\pi\in\Pi}\sup_{\left|\beta-\beta^{*}(\pi)\right|\leq\delta}\left|Q^{\prime}(\beta;\pi,\widehat{e})-Q^{\prime}(\beta;\pi,e^{*})\right|\\
\leq & \sup_{\pi\in\Pi}\sup_{e\in\mathcal{F}_{\mathrm{DNN},e}:\left\Vert e-e^{*}\right\Vert _{P,2}\leq\rho_{e,m}+t}\sup_{\left|\beta-\beta^{*}(\pi)\right|\leq\delta}\left|Q^{\prime}(\beta;\pi,e)-Q^{\prime}(\beta;\pi,e^{*})\right|
\end{align*}
w.p. at least $1-c_{1}\exp\left(-c_{2}mt^{2}\right)$
for all $0<t<c_{3}$ and large $m$. By Assumption~\ref{assu:regularity-conditions-on-L}
and the boundedness of the functions in $\mathcal{F}_{\mathrm{DNN},e}$,
we have
\begin{align*}
&\sup_{\pi\in\Pi}\sup_{e\in\mathcal{F}_{\mathrm{DNN},e}:\| e-e^{*}\| _{P,2}\leq\rho_{e,m}+t}
\sup_{\left|\beta-\beta^{*}(\pi)\right|\leq\delta}
\left|Q^{\prime}(\beta;\pi,e)-Q^{\prime}(\beta;\pi,e^{*})\right|\\
&\qquad\lesssim\sup_{\left\Vert e-e^{*}\right\Vert _{P,2}\leq\rho_{e,m}+t}\left\Vert e-e^{*}\right\Vert _{P,2}\leq\rho_{e,m}+t.
\end{align*}
Therefore,
  \[
\sup_{\pi\in\Pi}\sup_{\left|\beta-\beta^{*}(\pi)\right|\leq\delta}\left|Q^{\prime}(\beta;\pi,\widehat{e})-Q^{\prime}(\beta;\pi,e^{*})\right|\leq C\rho_{e,m}+Ct
\]
w.p. at least $1-c_{1}\exp\left(-c_{2}mt^{2}\right)$
for all $0<t<c_{3}$ and large $m$.

\textbf{Step 3: Aggregating the results.} Recalling (\ref{eq:decomp-of-tilde-delta}),
we have
  \begin{equation}
\sup_{\pi\in\Pi}\widetilde{\Delta}(\delta;\pi)\leq C\sqrt{\frac{\mathrm{VC}(\Pi)}{m}}+C\rho_{e,m}+Ct\label{eq:bound for tilde Delta}
\end{equation}
w.p. at least $1-3c_{1}\exp\left(-c_{2}mt^{2}\right)$
for all $0<t<c_{3}$ and large $m$. Since $\sqrt{v_{e,m}/m}\leq\rho_{e,m}$ for large $m$, the bound in (\ref{eq:bound for tilde Delta}) holds as stated. For any $\delta<\min\{(3\underline{Q}^{\prime\prime})/(8Q_{lip}^{\prime\prime}),c_{0}/2\}$
and large $m$, let \(\delta^{\prime}=\delta+\frac{8C}{\underline{Q}^{\prime\prime}}\sqrt{\frac{\mathrm{VC}(\Pi)}{m}}+\frac{8C}{\underline{Q}^{\prime\prime}}\rho_{e,m}<\min\{(3\underline{Q}^{\prime\prime})/(4Q_{lip}^{\prime\prime}),c_{0}\}\).
 Applying Lemma~\ref{lem:Nearness of the argmins of the convex functions},
we have
\[
\begin{aligned}
& P\left(\sup_{\pi\in\Pi}\left|\widehat{\beta}_{I}^{\mathrm{init}}(\pi)-\beta^{*}(\pi)\right|>\delta^{\prime}\right)\leq P\left(\Delta(\delta^{\prime};\pi)\geq h(\delta^{\prime};\pi),\ \exists\pi\in\Pi\right)\leq P\left(\sup_{\pi\in\Pi}\widetilde{\Delta}(\delta^{\prime};\pi)\geq\frac{\underline{Q}^{\prime\prime}}{4}\delta^{\prime}\right)\\
\leq & P\left(\sup_{\pi\in\Pi}\widetilde{\Delta}(\delta^{\prime};\pi)\geq C\sqrt{\frac{\mathrm{VC}(\Pi)}{m}}+C\rho_{e,m}+\frac{\underline{Q}^{\prime\prime}}{8}\delta\right)\leq3c_{1}\exp\left(-c_{2}m\frac{\left(\underline{Q}^{\prime\prime}\right)^{2}}{64}\delta^{2}\right)
\end{aligned}
\]
where the second inequality follows from (\ref{eq:init-beta-lower-bound})
and the last one follows from (\ref{eq:bound for tilde Delta}). Since
$\delta^{\prime}\leq C\delta+C\sqrt{\mathrm{VC}(\Pi)/m}+C\rho_{e,m}$,
the stated result follows for $0<\delta<c_{3}$ and large $m$. This completes the proof.
\end{proof}
\begin{lem}
\label{lem:convergence of reg at any beta} Suppose that Assumptions
\ref{assu:unconfounded}, \ref{assu:Overlap assumption and smoothness},
\ref{assu:regularity-conditions-on-L},
\ref{assu:regularity-assumptions-on-U},
\ref{assu:smoothness for mu}, \ref{assu:regularity on mu}, and
\ref{assu:regularity-assumptions-on-nuisance} hold and $I\subset\left\{ 1,\ldots,N\right\} $
is an index set with $\left|I\right|=m$. Let $\Pi\subset\Pi_{\infty}$
be any policy class. If the DNN class $\mathcal{F}_{\mathrm{DNN}}(\mathcal{H}_{\mu},\mathcal{D}_{\mu})$
is constructed with $\mathcal{H}_{\mu}\mathcal{D}_{\mu}\asymp m^{d/(4s_{\mu}+2d)}\left(\log m\right)^{2}$,
then the DNN estimators defined in (\ref{eq:DNN reg for L}) and
(\ref{eq:DNN reg for U}) satisfy
  \[
P\left(\sup_{\pi\in\Pi}\sup_{\left\Vert \bs{\beta}-\bs{\beta}^{*}(\pi)\right\Vert \leq c_{0}}\left\Vert \widehat{\mu}_{I,jt}(\bs X;\bs{\beta})-\mu_{jt}^{*}(\bs X;\bs{\beta})\right\Vert _{P,2}>\rho_{\mu,m}+\delta\right)\leq c_{1}\exp\left(-c_{2}m\delta^{2}\right)
\]
for $t=0,1$, $j=0,\ldots,p$, $\delta\geq0$, and large $m$,
where $c_{1},c_{2}>0$ are finite constants independent
of $I$, $m$, $\Pi$, $j$, $t$, and $\delta$, and $\rho_{\mu,m}$ is
a non-negative deterministic sequence satisfying $\rho_{\mu,m}=o(m^{-1/4})$.
In particular, this lemma does not require any restriction on $\mathrm{VC}(\Pi)$.
\end{lem}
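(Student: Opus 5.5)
We will follow the template of the proof of Lemma~\ref{lem:converge of propensity} and run a localized least-squares analysis. The new feature is a class indexed jointly by the regression function and by the nuisance parameter $\bs{\beta}$. Fix $j$ and $t$, and take $I=\{1,\ldots,m\}$. For each $\bs{\beta}$ in the set
\[
\mathcal{B}:=\bigcup_{\pi\in\Pi_{\infty}}\{\bs{\beta}:\|\bs{\beta}-\bs{\beta}^{*}(\pi)\|\le c_{0}\},
\]
let $g_{\bs{\beta}}(Y,\bs X)$ denote either $\mathcal{L}_{j}^{\prime}(Y-\beta_{j})$ or $U(Y,\bs X,\bs{\beta})$. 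Then $\widehat{\mu}_{I,jt}(\cdot;\bs{\beta})$ minimizes $P_{m}1(T=t)(g_{\bs{\beta}}-f)^{2}$ over $\mathcal{F}_{\mu,m}$, the DNN class truncated at level $M$; the truncation is harmless by Assumption~\ref{assu:regularity-assumptions-on-nuisance}(i). Since the supremum over $\pi\in\Pi$ only enlarges the set of admissible $\bs{\beta}$ to a subset of $\mathcal{B}$, it suffices to bound the supremum over all of $\mathcal{B}$. This explains why no VC restriction on $\Pi$ is needed.

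The population excess risk is
\[
\e\bigl[1(T=t)(f-\mu^{*}_{jt}(\cdot;\bs{\beta}))^{2}\bigr]=\e\bigl[P(T=t\mid\bs X)(f-\mu^{*}_{jt})^{2}\bigr].
\]
By overlap (Assumption~\ref{assu:unconfounded}(ii)), this is equivalent to $\|f-\mu^{*}_{jt}(\cdot;\bs{\beta})\|_{P,2}^{2}$ up to constants $\kappa$ and $1$, which gives the curvature condition analogous to (\ref{eq:curvature condition}). For the approximation floor, Assumption~\ref{assu:smoothness for mu} places every $\mu^{*}_{jt}(\cdot;\bs{\beta})$, $\bs{\beta}\in\mathcal{B}$, in a fixed $C^{s_{\mu}}$ ball. \citet[Corollary~1.2]{lu2021Deep} then supplies $\widetilde{\mu}_{\bs{\beta}}\in\mathcal{F}_{\mu,m}$ with sup-norm error at most $Cm^{-s_{\mu}/(2s_{\mu}+d)}(\log m)^{3}$, uniformly in $\bs{\beta}$, under the width and depth choice in Assumption~\ref{assu:regularity-assumptions-on-nuisance}(ii). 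Writing $a_{\mu,m}$ for the square of this bound, the bias $\|\overline{\mu}_{\bs{\beta}}-\mu^{*}_{\bs{\beta}}\|_{P,2}^{2}\le Ca_{\mu,m}$ holds uniformly over $\mathcal{B}$.

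For the empirical process, we will consider the localized class
\[
\mathcal{G}_{\mu}(s):=\bigl\{1(T=t)\bigl[(g_{\bs{\beta}}-f)^{2}-(g_{\bs{\beta}}-\overline{\mu}_{\bs{\beta}})^{2}\bigr]:\bs{\beta}\in\mathcal{B},\ f\in\mathcal{F}_{\mu,m},\ \text{excess risk}\le s\bigr\}.
\]
Its entropy is controlled by Lemma~\ref{lem:transformed uniform entropy}, with the map $(g,f,h)\mapsto(g-f)^{2}-(g-h)^{2}$ Lipschitz on bounded ranges. Three ingredients enter. First, the DNN class is VC-type with index $v_{\mu,m}=\mathcal{H}_{\mu}^{2}\mathcal{D}_{\mu}^{2}\log(\mathcal{H}_{\mu}\mathcal{D}_{\mu})$ by \citet{bartlett2019Nearlytight}. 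Second, $\{g_{\bs{\beta}}\}$ is VC-type with a fixed index: for $U$ this is Assumption~\ref{assu:regularity-assumptions-on-U}(ii), and for $\mathcal{L}_{j}^{\prime}$ it follows from monotonicity and \citet[Lemma~2.6.16]{vaart1996Weak}.

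The third ingredient is the main obstacle: the set of centering functions $\{\overline{\mu}_{\bs{\beta}}\}$ need not be finite-dimensional in any obvious sense. We will try to sidestep this by not centering at $\overline{\mu}_{\bs{\beta}}$ at all. Instead we center at the fixed approximant $\widetilde{\mu}_{\bs{\beta}}\in\mathcal{F}_{\mu,m}$, which lies in the DNN class, and we use the empirical-minimizer inequality
\[
P_{m}\ell(\widehat{\mu}_{\bs{\beta}})\le P_{m}\ell(\widetilde{\mu}_{\bs{\beta}}).
\]
With this centering the class is indexed by $(\bs{\beta},f,h)$ with $f,h\in\mathcal{F}_{\mu,m}$, so its entropy is at most $C(v_{\mu,m}+\nu)\log(a/\epsilon)$. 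The variance bound $Pg^{2}\lesssim s+a_{\mu,m}$ follows from the curvature condition.

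The remaining steps repeat the end of the proof of Lemma~\ref{lem:converge of propensity}. We combine the maximal inequality of \citet[Lemma~6.2]{chernozhukov2018Double} with Bousquet's concentration inequality, apply a dyadic peeling over $s$ (with $O(\log m)$ shells and a union bound), and use the fixed-point inequality $r\le A\sqrt r+B$. This gives, with probability at least $1-c_{1}e^{-c_{2}u}$,
\[
\sup_{\bs{\beta}\in\mathcal{B}}\|\widehat{\mu}_{I,jt}(\cdot;\bs{\beta})-\mu^{*}_{jt}(\cdot;\bs{\beta})\|_{P,2}^{2}\le C\bigl\{a_{\mu,m}+v_{\mu,m}\log m/m+u/m\bigr\}.
\]
Setting $u=m\delta^{2}$ and $\rho_{\mu,m}:=C\bigl(a_{\mu,m}+v_{\mu,m}\log m/m\bigr)^{1/2}\asymp m^{-s_{\mu}/(2s_{\mu}+d)}\,\mathrm{polylog}(m)$ yields the stated tail bound, and $\rho_{\mu,m}=o(m^{-1/4})$ because $s_{\mu}>d/2$. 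Finally, a union bound over the finitely many pairs $(j,t)$ only adjusts the constants.
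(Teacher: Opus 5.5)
Your proposal is correct and follows essentially the same route as the paper's proof: curvature via overlap, a uniform approximation floor from \citet{lu2021Deep}, entropy of the loss-difference class via Lemma~\ref{lem:transformed uniform entropy}, \citet[Lemma~6.2]{chernozhukov2018Double} plus Bousquet's inequality, dyadic peeling with the fixed-point inequality, and the supremum over $\Pi$ absorbed into the deterministic set $\mathcal{B}$. The one deviation, centering at the approximant $\widetilde{\mu}_{\bs\beta}$ rather than at $\overline{\mu}_{\bs\beta}$, is harmless but not needed. The obstacle you describe does not arise, because $\overline{\mu}_{\bs\beta}\in\mathcal{F}_{\mu,m}$ as well, and the paper simply uses the same containment in $\{L(\cdot,f;\bs\beta)-L(\cdot,h;\bs\beta):f,h\in\mathcal{F}_{\mu,m},\ \bs\beta\in\mathcal{B}_{0}\}$. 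Your version does have the minor advantage of not requiring the population minimizer over $\mathcal{F}_{\mu,m}$ to be attained.
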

\begin{proof}
Without loss of generality, we assume $I=\{1,\ldots,m\}$. We abbreviate
$\widehat{\mu}_{I,jt}(\cdot;\bs{\beta})$ by $\widehat{\mu}(\cdot;\bs{\beta})$
and $\mu_{jt}^{*}(\cdot;\bs{\beta})$ by $\mu^{*}(\cdot;\bs{\beta})$
when no confusion arises. Fix $j\in\{0,1,\ldots,p\}$ and $t\in\{0,1\}$.
The proof below is written for this fixed pair; the constants are
uniform over the finitely many choices of $(j,t)$. Let
\(R_{j}(Y,\bs X;\bs{\beta}):=U(Y,\bs X,\bs{\beta})\) for \(j=0\), and
\(R_{j}(Y,\bs X;\bs{\beta}):=\mathcal{L}_{j}^{\prime}(Y-\beta_{j})\)
for \(j=1,\ldots,p\).
Define $\mathcal{B}_{0}:=\{\bs{\beta}\in\mathbb{R}^{p}:\|\bs{\beta}-\bs{\beta}^{*}(\pi)\|\leq c_{0}\text{ for some }\pi\in\Pi_{\infty}\}$. This set is deterministic, so the argument below does not involve
the complexity of $\Pi$.
We write $\mathcal{F}_{\mu,m}:=\mathcal{F}_{\mathrm{DNN}}(\mathcal{H}_{\mu},\mathcal{D}_{\mu})\cap\{f:\|f\|_{\infty}\leq M\}$. This bounded class will be used throughout the proof; the truncation to $\left\Vert f\right\Vert _{\infty}\leq M$ does not change the approximation rate used below because Assumption~\ref{assu:smoothness for mu} implies $\left\Vert \mu_{\ell r}^{*}(\cdot;\bs{\beta})\right\Vert _{\infty}\leq M$ uniformly over $\ell=0,\ldots,p$, $r\in\{0,1\}$ and $\bs{\beta}\in\mathcal{B}_{0}$.
By construction and
Assumption~\ref{assu:regularity-assumptions-on-nuisance}(i),
for every $\bs{\beta}\in\mathcal{B}_{0}$ the estimator defined in
(\ref{eq:DNN reg for L})--(\ref{eq:DNN reg for U}) belongs to
$\mathcal{F}_{\mu,m}$ almost surely. Since this estimator minimizes
the empirical risk over the larger class
$\mathcal{F}_{\mathrm{DNN}}(\mathcal{H}_{\mu},\mathcal{D}_{\mu})$,
we may also regard $\widehat{\mu}(\cdot;\bs{\beta})$ as an empirical
risk minimizer over $\mathcal{F}_{\mu,m}$ throughout the proof. Also
define the approximation and complexity
quantities
\[
\epsilon_{\mu,m}:=\sup_{0\leq\ell\leq p}\sup_{r\in\{0,1\}}\sup_{\bs{\beta}\in\mathcal{B}_{0}}\inf_{f\in\mathcal{F}_{\mu,m}}\left\Vert f-\mu_{\ell r}^{*}(\cdot;\bs{\beta})\right\Vert _{\infty}
\]
and $v_{\mu,m}:=\mathcal{H}_{\mu}^{2}\mathcal{D}_{\mu}^{2}\log(\mathcal{H}_{\mu}\mathcal{D}_{\mu})$.

First, the response is uniformly bounded. For $j=0$ this follows
from Assumption~\ref{assu:regularity-assumptions-on-U}(i),
and for $j\geq1$ from Assumption~\ref{assu:regularity-conditions-on-L}(iii):
\begin{equation}
\sup_{\bs{\beta}\in\mathcal{B}_{0}}\left|R_{j}(Y,\bs X;\bs{\beta})\right|\leq M/4
\qquad\text{almost surely.}\label{eq:uniform response bound reg beta}
\end{equation}
Assumption~\ref{assu:smoothness for mu} gives the corresponding
uniform smoothness of $\mu_{jt}^{*}(\cdot;\bs{\beta})$ over
$\bs{\beta}\in\mathcal{B}_{0}$. Also, Assumption~\ref{assu:Overlap assumption and smoothness}
implies that there exists $\underline{e}\in(0,1/2)$ such that $\underline{e}\leq e^{*}(\bs X)\leq1-\underline{e}$ almost surely.
Finally, by \citet[Theorem 7]{bartlett2019Nearlytight} and
\citet[Theorem 2.6.7]{vaart1996Weak}, there exist constants
$a>1$ and $C>0$ such that
\begin{equation}
\sup_{Q}\log N\left(M\epsilon,\mathcal{F}_{\mu,m},\left\Vert \cdot\right\Vert _{Q,2}\right)\leq Cv_{\mu,m}\log(a/\epsilon),\qquad0<\epsilon<1,\label{eq:entropy dnn reg beta}
\end{equation}
The response class $\{(Y,\bs X)\mapsto R_{j}(Y,\bs X;\bs{\beta}):\bs{\beta}\in\mathcal{B}_{0}\}$ is VC-type with a fixed index by Assumption~\ref{assu:regularity-assumptions-on-U}(ii) for $j=0$ and by \citet[Lemma 2.6.16]{vaart1996Weak} and monotonicity of $\mathcal{L}_{j}^{\prime}$ for $j\geq1$.

For $\bs z=(\tau,y,\bs x^{\trans})^{\trans}$ and $v\in\mathbb{R}$, define
\[
\begin{aligned}
L(\bs z,v;\bs{\beta})&:=1(\tau=t)\{R_{j}(y,\bs x;\bs{\beta})-v\}^{2},\\
\mathcal{L}(f;\bs{\beta})&:=\e[L(\bs Z,f(\bs X);\bs{\beta})],\qquad \widehat{\mathcal{L}}_{m}(f;\bs{\beta}):=m^{-1}\sum_{i=1}^{m}L(\bs Z_{i},f(\bs X_{i});\bs{\beta}).
\end{aligned}
\]
Then $\widehat{\mu}(\cdot;\bs{\beta})=\arg\min_{f\in\mathcal{F}_{\mu,m}}\widehat{\mathcal{L}}_{m}(f;\bs{\beta})$ and $\mu^{*}(\cdot;\bs{\beta})=\arg\min_{f}\mathcal{L}(f;\bs{\beta})$. Let \(\overline{\mu}(\cdot;\bs{\beta}):=\arg\min_{f\in\mathcal{F}_{\mu,m}}\mathcal{L}(f;\bs{\beta})\), \(\mathscr{E}(f;\bs{\beta}):=\mathcal{L}(f;\bs{\beta})-\mathcal{L}(\overline{\mu}(\cdot;\bs{\beta});\bs{\beta})\), and \(a_{\mu,m}:=\sup_{\bs{\beta}\in\mathcal{B}_{0}}\{\mathcal{L}(\overline{\mu}(\cdot;\bs{\beta});\bs{\beta})-\mathcal{L}(\mu^{*}(\cdot;\bs{\beta});\bs{\beta})\}\).

\textbf{Step 1.} We first record the curvature identity and the
approximation floor. Since $\mu^{*}(\cdot;\bs{\beta})=\e[R_{j}(Y,\bs X;\bs{\beta})\mid\bs X,T=t]$,
for every measurable $f$, by overlap,
\begin{align}
\mathcal{L}(f;\bs{\beta})-\mathcal{L}(\mu^{*}(\cdot;\bs{\beta});\bs{\beta})
&=\e\left[1(T=t)\{f(\bs X)-\mu^{*}(\bs X;\bs{\beta})\}^{2}\right],\label{eq:basic identity reg beta}\\
\underline{e}\left\Vert f(\bs X)-\mu^{*}(\bs X;\bs{\beta})\right\Vert _{P,2}^{2}
&\leq\mathcal{L}(f;\bs{\beta})-\mathcal{L}(\mu^{*}(\cdot;\bs{\beta});\bs{\beta})
\leq\left\Vert f(\bs X)-\mu^{*}(\bs X;\bs{\beta})\right\Vert _{P,2}^{2}.\label{eq:curvature reg beta}
\end{align}
By the definition of $\epsilon_{\mu,m}$, for each $\bs{\beta}\in\mathcal{B}_{0}$
there exists $f_{\bs{\beta}}\in\mathcal{F}_{\mu,m}$ with
$\left\Vert f_{\bs{\beta}}-\mu^{*}(\cdot;\bs{\beta})\right\Vert _{\infty}\leq\epsilon_{\mu,m}$.
Since $\overline{\mu}(\cdot;\bs{\beta})$ minimizes $\mathcal{L}(\cdot;\bs{\beta})$
over $\mathcal{F}_{\mu,m}$, (\ref{eq:basic identity reg beta}) yields
\begin{equation}
a_{\mu,m}\leq\epsilon_{\mu,m}^{2}.\label{eq:approx floor reg beta}
\end{equation}

\textbf{Step 2.} For $s>0$, define the localized class
\(\mathscr{F}(s;\bs{\beta}):=\{f\in\mathcal{F}_{\mu,m}:\mathscr{E}(f;\bs{\beta})\leq s\}\) and
\[
\mathcal{G}(s):=\left\{\bs z\mapsto L(\bs z,f(\bs x);\bs{\beta})-L(\bs z,\overline{\mu}(\bs x;\bs{\beta});\bs{\beta}):\bs{\beta}\in\mathcal{B}_{0},\ f\in\mathscr{F}(s;\bs{\beta})\right\}.
\]
Let $s_{0}:=\max\{2a_{\mu,m},m^{-2}\}$. If $s\geq s_{0}$ and
$f\in\mathscr{F}(s;\bs{\beta})$, then
\(\mathcal{L}(f;\bs{\beta})-\mathcal{L}(\mu^{*}(\cdot;\bs{\beta});\bs{\beta})
\leq s+a_{\mu,m}\leq\frac{3}{2}s\). Combining this bound with
(\ref{eq:curvature reg beta}) and applying
the same argument to $\overline{\mu}(\cdot;\bs{\beta})$ gives
$\left\Vert f(\bs X)-\overline{\mu}(\bs X;\bs{\beta})\right\Vert _{P,2}\leq C\sqrt{s}$.
For any $g\in\mathcal{G}(s)$, write $r=R_{j}(y,\bs x;\bs{\beta})$,
$u=f(\bs x)$ and $v=\overline{\mu}(\bs x;\bs{\beta})$. By
(\ref{eq:uniform response bound reg beta}) and the definition of
$\mathcal{F}_{\mu,m}$, $\left|(r-u)^{2}-(r-v)^{2}\right|\leq3M|u-v|$.
Thus $\mathcal{G}(s)$ has a bounded envelope depending only on $M$, and
\begin{equation}
\sup_{g\in\mathcal{G}(s)}\left\Vert g\right\Vert _{P,2}^{2}\leq Cs,\qquad s\geq s_{0}.\label{eq:variance reg beta}
\end{equation}

\textbf{Step 3.} We next bound the entropy of $\mathcal{G}(s)$. Since
$\overline{\mu}(\cdot;\bs{\beta})\in\mathcal{F}_{\mu,m}$, the class
$\mathcal{G}(s)$ is contained in
\[
\left\{\bs z\mapsto L(\bs z,f(\bs x);\bs{\beta})-L(\bs z,g(\bs x);\bs{\beta}):f,g\in\mathcal{F}_{\mu,m},\ \bs{\beta}\in\mathcal{B}_{0}\right\}.
\]
This containment, together with the bounded Lipschitz map $(r,u,v)\mapsto1(\tau=t)\{(r-u)^{2}-(r-v)^{2}\}$, Lemma~\ref{lem:transformed uniform entropy}, (\ref{eq:entropy dnn reg beta}), and the fixed response-class entropy, implies that, for constants $A>1$ and $C>0$,
\begin{equation}
\sup_{Q}\log N\left(C\epsilon,\mathcal{G}(s),\left\Vert \cdot\right\Vert _{Q,2}\right)\leq Cv_{\mu,m}\log(A/\epsilon),\qquad0<\epsilon<1,\label{eq:entropy loss class reg beta}
\end{equation}
uniformly over $s>0$.

\textbf{Step 4.} By (\ref{eq:variance reg beta}) and
(\ref{eq:entropy loss class reg beta}), \citet[Lemma 6.2]{chernozhukov2018Double}
gives, for every $s\geq s_{0}$ and all large $m$,
\[
\e\left[\sup_{g\in\mathcal{G}(s)}\left|\left(P_{m}-P\right)g\right|\right]
\leq C\left(\sqrt{\frac{sv_{\mu,m}\log m}{m}}+\frac{v_{\mu,m}\log m}{m}\right).
\]
Applying Bousquet's version of Talagrand's inequality
\citep[Theorem 2.3]{bousquet2002Bennett} together with
(\ref{eq:variance reg beta}) yields, for all $u\geq0$,
\begin{equation}
P\left(\sup_{g\in\mathcal{G}(s)}\left|\left(P_{m}-P\right)g\right|>
C\left[\sqrt{\frac{s(v_{\mu,m}\log m+u)}{m}}+\frac{v_{\mu,m}\log m+u}{m}\right]\right)\leq e^{-u}.\label{eq:localized concentration reg beta}
\end{equation}
Because $\left|R_{j}(Y,\bs X;\bs{\beta})\right|\leq M/4$ uniformly
over $\bs{\beta}\in\mathcal{B}_{0}$ and all functions in
$\mathcal{F}_{\mu,m}$ are bounded by $M$, the loss satisfies
$0\leq L(\bs Z,f(\bs X);\bs{\beta})\leq25M^{2}/16$ for every
$f\in\mathcal{F}_{\mu,m}$ and $\bs{\beta}\in\mathcal{B}_{0}$. Hence
$0\leq\mathscr{E}(f;\bs{\beta})\leq25M^{2}/16$. Let
$s_{k}:=2^{k}s_{0}$ for $k=0,1,\ldots,K_{m}$, where
$K_{m}:=\max\{0,\lceil\log_{2}(25M^{2}/(16s_{0}))\rceil\}$.
Then $s_{K_{m}}\geq25M^{2}/16$. Since $s_{0}\geq m^{-2}$, we also
have $K_{m}\leq\lceil\log_{2}(25M^{2}m^{2}/16)\rceil=O(\log m)$.
Thus the dyadic intervals $\{(s_{k-1},s_{k}]\}_{k=0}^{K_{m}}$, with
the convention $s_{-1}:=0$, cover all possible values of
$\mathscr{E}(f;\bs{\beta})$. Applying
(\ref{eq:localized concentration reg beta}) to each $\mathcal{G}(s_{k})$
with tail parameter $u+k\log2$ gives, for each $k=0,\ldots,K_{m}$,
\begin{align*}
&P\left(\sup_{g\in\mathcal{G}(s_{k})}\left|\left(P_{m}-P\right)g\right|>C\left[\sqrt{\frac{s_{k}(v_{\mu,m}\log m+u+k\log2)}{m}}\right.\right.\\
&\left.\left.\qquad+\frac{v_{\mu,m}\log m+u+k\log2}{m}\right]\right)\leq2^{-k}e^{-u}.
\end{align*}
By the union bound, the preceding display holds simultaneously for
all $k=0,\ldots,K_{m}$ w.p. at least $1-2e^{-u}$. Moreover,
since $k\leq K_{m}=O(\log m)$ and $v_{\mu,m}\geq1$ for
large $m$, the term $k\log2/m$ can be absorbed into
$v_{\mu,m}\log m/m$ after enlarging the
constant. Therefore, for constants $c_{1},c_{2},C>0$, there
exists an event $\mathcal{E}_{m}(u)$ with $P\{\mathcal{E}_{m}(u)\}\geq1-c_{1}\exp(-c_{2}u)$ such that, on $\mathcal{E}_{m}(u)$, for every $k=0,\ldots,K_{m}$,
\begin{equation}
\sup_{g\in\mathcal{G}(s_{k})}\left|\left(P_{m}-P\right)g\right|\leq C\left[\sqrt{\frac{s_{k}(v_{\mu,m}\log m+u)}{m}}+\frac{v_{\mu,m}\log m+u}{m}\right].\label{eq:peeling event reg beta}
\end{equation}

\textbf{Step 5.} Define $R_{m}:=\sup_{\bs{\beta}\in\mathcal{B}_{0}}\mathscr{E}(\widehat{\mu}(\cdot;\bs{\beta});\bs{\beta})$.
Since $\widehat{\mu}(\cdot;\bs{\beta})$ minimizes the empirical risk,
for every $\bs{\beta}\in\mathcal{B}_{0}$,
\begin{equation}
\mathscr{E}(\widehat{\mu}(\cdot;\bs{\beta});\bs{\beta})
\leq\left|\left(P_{m}-P\right)\left[L(\bs Z,\widehat{\mu}(\bs X;\bs{\beta});\bs{\beta})-L(\bs Z,\overline{\mu}(\bs X;\bs{\beta});\bs{\beta})\right]\right|.\label{eq:erm excess reg beta}
\end{equation}
On $\mathcal{E}_{m}(u)$, if $R_{m}\leq s_{0}$, then
$R_{m}\leq C(a_{\mu,m}+(v_{\mu,m}\log m+u)/m)$
for all large $m$, since $s_{0}=\max\{2a_{\mu,m},m^{-2}\}$
and $m^{-2}\leq(v_{\mu,m}\log m+u)/m$. If $R_{m}>s_{0}$, choose
$k\in\{0,\ldots,K_{m}-1\}$ such that $s_{k}<R_{m}\leq s_{k+1}$.
For any $\eta>0$, there exists
$\bs{\beta}_{\eta}\in\mathcal{B}_{0}$ such that
$R_{m}-\eta<\mathscr{E}(\widehat{\mu}(\cdot;\bs{\beta}_{\eta});\bs{\beta}_{\eta})\leq R_{m}$.
Then $\widehat{\mu}(\cdot;\bs{\beta}_{\eta})\in\mathscr{F}(s_{k+1};\bs{\beta}_{\eta})$,
so (\ref{eq:erm excess reg beta}) and (\ref{eq:peeling event reg beta})
imply $R_{m}-\eta\leq C[\sqrt{s_{k+1}(v_{\mu,m}\log m+u)/m}+(v_{\mu,m}\log m+u)/m]$.
Letting $\eta\to0$ and using $s_{k+1}\leq2R_{m}$, we get
$R_{m}\leq C[\sqrt{R_{m}(v_{\mu,m}\log m+u)/m}+(v_{\mu,m}\log m+u)/m]$.
The elementary inequality $r\leq A\sqrt{r}+B\Rightarrow r\leq2A^{2}+2B$
therefore gives
\begin{equation}
R_{m}\leq C\left(a_{\mu,m}+\frac{v_{\mu,m}\log m+u}{m}\right)\label{eq:restricted excess reg beta}
\end{equation}
on $\mathcal{E}_{m}(u)$.

\textbf{Step 6.} Combining (\ref{eq:curvature reg beta}) and
(\ref{eq:restricted excess reg beta}), on $\mathcal{E}_{m}(u)$,
\begin{align*}
&\sup_{\bs{\beta}\in\mathcal{B}_{0}}\left\Vert \widehat{\mu}(\bs X;\bs{\beta})-\mu^{*}(\bs X;\bs{\beta})\right\Vert _{P,2}\leq C\left(\sqrt{a_{\mu,m}}+\sqrt{\frac{v_{\mu,m}\log m}{m}}+\sqrt{\frac{u}{m}}\right)\\
\leq & C\left(\epsilon_{\mu,m}+\sqrt{\frac{v_{\mu,m}\log m}{m}}+\sqrt{\frac{u}{m}}\right),
\end{align*}
where the last inequality uses (\ref{eq:approx floor reg beta}).
Since $\Pi\subset\Pi_{\infty}$, the supremum over
$\{\bs{\beta}:\left\Vert \bs{\beta}-\bs{\beta}^{*}(\pi)\right\Vert \leq c_{0},\pi\in\Pi\}$
is bounded by the supremum over $\mathcal{B}_{0}$. Taking
$u$ to be a large enough multiple of $m\delta^{2}$
and absorbing constants into $C_{1},C_{2},c_{2}$ gives
\begin{align*}
&P\left(\sup_{\pi\in\Pi}\sup_{\left\Vert \bs{\beta}-\bs{\beta}^{*}(\pi)\right\Vert \leq c_{0}}
\left\Vert \widehat{\mu}_{I,jt}(\bs X;\bs{\beta})-\mu_{jt}^{*}(\bs X;\bs{\beta})\right\Vert _{P,2}\right.\\
&\qquad\left.>C_{1}\epsilon_{\mu,m}+C_{2}\sqrt{\frac{v_{\mu,m}\log m}{m}}+\delta\right)\leq c_{1}\exp\left(-c_{2}m\delta^{2}\right).
\end{align*}

It remains to verify that the two deterministic terms have the claimed
rates under the existing assumptions. By Assumption~\ref{assu:smoothness for mu},
\citet[Corollary 1.2]{lu2021Deep}, and
Assumption~\ref{assu:regularity-assumptions-on-nuisance}(ii), uniformly over
$j=0,\ldots,p$, $t\in\{0,1\}$ and $\bs{\beta}\in\mathcal{B}_{0}$,
\[
\begin{aligned}
  &\epsilon_{\mu,m}
\leq C\left(\mathcal{H}_{\mu}/\log\mathcal{H}_{\mu}\right)^{-2s_{\mu}/d}\left(\mathcal{D}_{\mu}/\log\mathcal{D}_{\mu}\right)^{-2s_{\mu}/d}=O\left(m^{-s_{\mu}/(2s_{\mu}+d)}(\log m)^{3}\right)=o(m^{-1/4}),\\
&\sqrt{\frac{v_{\mu,m}\log m}{m}}
=\sqrt{\frac{\mathcal{H}_{\mu}^{2}\mathcal{D}_{\mu}^{2}\log(\mathcal{H}_{\mu}\mathcal{D}_{\mu})\log m}{m}}=O\left(m^{-s_{\mu}/(2s_{\mu}+d)}(\log m)^{3}\right)=o(m^{-1/4}),
\end{aligned}
\]
where the first line also uses $s_{\mu}>d/2$.
Set $\rho_{\mu,m}:=C_{1}\epsilon_{\mu,m}+C_{2}\sqrt{v_{\mu,m}\log m/m}$.
The preceding displays imply $\rho_{\mu,m}=o(m^{-1/4})$, and the
claimed statement follows.
This completes the proof.
\end{proof}
\begin{lem}
\label{lem:converge of regression} Suppose that Assumptions
\ref{assu:unconfounded}, \ref{assu:Overlap assumption and smoothness},
\ref{assu:regularity-conditions-on-L},
\ref{assu:regularity-assumptions-on-U},
\ref{assu:smoothness for mu}, \ref{assu:regularity on mu}, and
\ref{assu:regularity-assumptions-on-nuisance} hold and $I\subset\left\{ 1,\ldots,N\right\} $
is an index set with $\left|I\right|=m$. Let $\Pi\subset\Pi_{\infty}$
be a policy class with VC dimension $\mathrm{VC}(\Pi)$ and $\mathrm{VC}(\Pi)/m\to0$
 as $m\to\infty$. If the DNN class $\mathcal{F}_{\mathrm{DNN}}(\mathcal{H}_{\mu},\mathcal{D}_{\mu})$
is constructed with $\mathcal{H}_{\mu}\mathcal{D}_{\mu}\asymp m^{d/(4s_{\mu}+2d)}\left(\log m\right)^{2}$,
then the DNN estimators $\widehat{\mu}_{I,jt}(\bs X;\widehat{\bs{\beta}}_{I}^{\mathrm{init}}(\pi))$
($t=0,1$ and $j=0,1,\ldots,p$) satisfy
\[
\begin{aligned}
&P\begin{Bmatrix}\sup_{\pi\in\Pi}\left\Vert \widehat{\mu}_{I,jt}(\bs X;\widehat{\bs{\beta}}_{I}^{\mathrm{init}}(\pi))-\mu_{jt}^{*}(\bs X;\bs{\beta}^{*}(\pi))\right\Vert _{P,2}\leq C\rho_{\mu,m}+C\sqrt{\frac{\mathrm{VC}(\Pi)}{m}}+C\rho_{e,m}+\delta\\
\widehat{\mu}_{I,jt}(\bs X;\widehat{\bs{\beta}}_{I}^{\mathrm{init}}(\pi))\in\mathcal{F}_{\mathrm{DNN}}(\mathcal{H}_{\mu},\mathcal{D}_{\mu})\bigcap\{f:\left\Vert f\right\Vert _{\infty}\le M\},\ \forall\pi\in\Pi
\end{Bmatrix}\\
\geq& \ 1-c_{1}\exp\left(-c_{2}m\delta^{2}\right)
\end{aligned}
\]
for all $0<\delta<c_{3}$, where $c_{1},c_{2},c_{3}>0$ are finite
constants independent of $\delta$, $I$, $m$, and $\Pi$,
and $\rho_{\mu,m}$ is a non-negative sequence satisfying
$\rho_{\mu,m}=o(m^{-1/4})$.
\end{lem}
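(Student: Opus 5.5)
The plan is to reduce Lemma~\ref{lem:converge of regression} to Lemma~\ref{lem:convergence of reg at any beta} and Lemma~\ref{lem:convergence of initial beta} through the triangle inequality
\[
\left\Vert \widehat{\mu}_{I,jt}(\cdot;\widehat{\bs{\beta}}_{I}^{\mathrm{init}}(\pi))-\mu_{jt}^{*}(\cdot;\bs{\beta}^{*}(\pi))\right\Vert _{P,2}\leq\left\Vert \widehat{\mu}_{I,jt}(\cdot;\widehat{\bs{\beta}}_{I}^{\mathrm{init}}(\pi))-\mu_{jt}^{*}(\cdot;\widehat{\bs{\beta}}_{I}^{\mathrm{init}}(\pi))\right\Vert _{P,2}+\left\Vert \mu_{jt}^{*}(\cdot;\widehat{\bs{\beta}}_{I}^{\mathrm{init}}(\pi))-\mu_{jt}^{*}(\cdot;\bs{\beta}^{*}(\pi))\right\Vert _{P,2}.
\]
Here the norm is understood with $\widehat{\bs{\beta}}_{I}^{\mathrm{init}}(\pi)$ held fixed, and the integration is over an independent copy of $\bs X$.

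First I would define a good event $\mathcal{A}_{m}(\delta)$ on which
\[
\sup_{\pi\in\Pi}\left\Vert \widehat{\bs{\beta}}_{I}^{\mathrm{init}}(\pi)-\bs{\beta}^{*}(\pi)\right\Vert \leq C\sqrt{\mathrm{VC}(\Pi)/m}+C\rho_{e,m}+\delta .
\]
Lemma~\ref{lem:convergence of initial beta} gives $P(\mathcal{A}_{m}(\delta)^{c})\leq c_{1}\exp(-c_{2}m\delta^{2})$ for $0<\delta<c_{3}$. Since $\mathrm{VC}(\Pi)/m\to0$ and $\rho_{e,m}\to0$, for large $m$ and $\delta<c_{3}$ with $c_{3}\leq c_{0}/2$, the right-hand side is at most $c_{0}$. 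Hence on $\mathcal{A}_{m}(\delta)$ each $\widehat{\bs{\beta}}_{I}^{\mathrm{init}}(\pi)$ lies in the ball $\{\bs{\beta}:\|\bs{\beta}-\bs{\beta}^{*}(\pi)\|\leq c_{0}\}$, and therefore in the deterministic set $\mathcal{B}_{0}$. This localization is the step that matters most.

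On this event the second term is handled by Assumption~\ref{assu:regularity on mu}(i). It is bounded by $L_{\mu}\|\widehat{\bs{\beta}}_{I}^{\mathrm{init}}(\pi)-\bs{\beta}^{*}(\pi)\|$, which is at most $C\sqrt{\mathrm{VC}(\Pi)/m}+C\rho_{e,m}+C\delta$ uniformly in $\pi$.

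For the first term I would intersect with the event $\mathcal{E}(\delta)$ from Lemma~\ref{lem:convergence of reg at any beta}. On that event $\sup_{\pi}\sup_{\|\bs{\beta}-\bs{\beta}^{*}(\pi)\|\leq c_{0}}\|\widehat{\mu}_{I,jt}(\cdot;\bs{\beta})-\mu_{jt}^{*}(\cdot;\bs{\beta})\|_{P,2}\leq\rho_{\mu,m}+\delta$, and $P(\mathcal{E}(\delta)^{c})\leq c_{1}\exp(-c_{2}m\delta^{2})$. The essential point is that this bound is uniform over the whole $c_{0}$-ball, so it holds deterministically on the event. It may therefore be evaluated at the random point $\widehat{\bs{\beta}}_{I}^{\mathrm{init}}(\pi)$, even though that point depends on the same sample as $\widehat{\mu}_{I,jt}$. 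This is exactly why Lemma~\ref{lem:convergence of reg at any beta} was proved uniformly in $\bs{\beta}\in\mathcal{B}_{0}$ rather than pointwise, and I expect this to be the only delicate point.

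The membership claim $\widehat{\mu}_{I,jt}(\cdot;\widehat{\bs{\beta}}_{I}^{\mathrm{init}}(\pi))\in\mathcal{F}_{\mathrm{DNN}}(\mathcal{H}_{\mu},\mathcal{D}_{\mu})\cap\{\|f\|_{\infty}\leq M\}$ follows on $\mathcal{A}_{m}(\delta)$ for all $\pi\in\Pi$. The estimator is by construction in the DNN class, and Assumption~\ref{assu:regularity-assumptions-on-nuisance}(i) gives the bound $M$ for every $\bs{\beta}$ in the $c_{0}$-ball.

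Finally I would take a union bound over $\mathcal{A}_{m}(\delta)^{c}$, $\mathcal{E}(\delta)^{c}$, and the finitely many pairs $(j,t)$, then rename constants. This gives probability at least $1-c_{1}\exp(-c_{2}m\delta^{2})$ for the bound $C\rho_{\mu,m}+C\sqrt{\mathrm{VC}(\Pi)/m}+C\rho_{e,m}+\delta$, after rescaling $\delta$ by a constant. All constants remain independent of $I$, $m$, $\Pi$ and $\delta$, because both input lemmas have this property.
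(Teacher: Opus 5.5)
Your proposal is correct and follows essentially the same route as the paper. You localize $\widehat{\bs{\beta}}_{I}^{\mathrm{init}}(\pi)$ in the $c_{0}$-ball via Lemma~\ref{lem:convergence of initial beta}, evaluate the uniform-in-$\bs{\beta}$ bound of Lemma~\ref{lem:convergence of reg at any beta} at the data-dependent point, control the remaining term with the Lipschitz condition in Assumption~\ref{assu:regularity on mu}(i), and get membership from Assumption~\ref{assu:regularity-assumptions-on-nuisance}(i) on the localization event. The only difference is your explicit rescaling of $\delta$ to absorb the factor $L_{\mu}+1$, which the paper leaves in its final display.
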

\begin{proof}
We focus on establishing the result for $\widehat{\mu}_{I,01}(\bs X;\widehat{\bs{\beta}}_{I}^{\mathrm{init}}(\pi))$,
as the proofs for the remaining cases follow analogously. According
to Lemma \ref{lem:convergence of initial beta}, we have
\begin{equation}
\sup_{\pi\in\Pi}\left\Vert \widehat{\bs{\beta}}_{I}^{\mathrm{init}}(\pi)-\bs{\beta}^{*}(\pi)\right\Vert \leq c_{0}\label{eq:beta-init-near-beta-pi}
\end{equation}
w.p. at least $1-c_{1}\exp\left(-c_{2}m\delta^{2}\right)$
for all $0<\delta<c_{4}$ and large $m$, where $c_{4}>0$
is finite. Applying Lemma \ref{lem:convergence of reg at any beta}
we have
  \begin{align*}
 & P\left(\sup_{\pi\in\Pi}\left\Vert \widehat{\mu}_{I,01}(\bs X;\widehat{\bs{\beta}}_{I}^{\mathrm{init}}(\pi))-\mu_{01}^{*}(\bs X;\widehat{\bs{\beta}}_{I}^{\mathrm{init}}(\pi))\right\Vert _{P,2}>\rho_{\mu,m}+\delta\right)\\
\leq & P\left(\sup_{\pi\in\Pi}\sup_{\left\Vert \bs{\beta}-\bs{\beta}^{*}(\pi)\right\Vert \leq c_{0}}\left\Vert \widehat{\mu}_{I,01}(\bs X;\bs{\beta})-\mu_{01}^{*}(\bs X;\bs{\beta})\right\Vert _{P,2}>\rho_{\mu,m}+\delta\right)\\
&\quad+P\left(\sup_{\pi\in\Pi}\left\Vert \widehat{\bs{\beta}}_{I}^{\mathrm{init}}(\pi)-\bs{\beta}^{*}(\pi)\right\Vert>c_{0}\right)\\
\leq & c_{1}\exp\left(-c_{2}m\delta^{2}\right)+c_{1}\exp\left(-c_{2}m\delta^{2}\right)\leq2c_{1}\exp\left(-c_{2}m\delta^{2}\right)
\end{align*}
for all $0<\delta<c_{4}$ and large $m$. In addition,
by Assumption~\ref{assu:regularity on mu} and Lemma~\ref{lem:convergence of initial beta}
we have
\begin{align*}
\sup_{\pi\in\Pi}\left\Vert \mu_{01}^{*}(\bs X;\widehat{\bs{\beta}}_{I}^{\mathrm{init}}(\pi))-\mu_{01}^{*}(\bs X;\bs{\beta}^{*}(\pi))\right\Vert _{P,2}
&\leq L_{\mu}\sup_{\pi\in\Pi}\left\Vert \widehat{\bs{\beta}}_{I}^{\mathrm{init}}(\pi)-\bs{\beta}^{*}(\pi)\right\Vert\\
&\leq C\sqrt{\frac{\mathrm{VC}(\Pi)}{m}}+C\rho_{e,m}+L_{\mu}\delta
\end{align*}
w.p. at least $1-c_{1}\exp\left(-c_{2}m\delta^{2}\right)$
for all $0<\delta<c_{3}$ and large $m$. Combining the
last two displays, we have
  \begin{align*}
 & \sup_{\pi\in\Pi}\left\Vert \widehat{\mu}_{I,01}(\bs X;\widehat{\bs{\beta}}_{I}^{\mathrm{init}}(\pi))-\mu_{01}^{*}(\bs X;\bs{\beta}^{*}(\pi))\right\Vert _{P,2}\leq\sup_{\pi\in\Pi}\left\Vert \widehat{\mu}_{I,01}(\bs X;\widehat{\bs{\beta}}_{I}^{\mathrm{init}}(\pi))-\mu_{01}^{*}(\bs X;\widehat{\bs{\beta}}_{I}^{\mathrm{init}}(\pi))\right\Vert _{P,2}\\
 & \quad+\sup_{\pi\in\Pi}\left\Vert \mu_{01}^{*}(\bs X;\widehat{\bs{\beta}}_{I}^{\mathrm{init}}(\pi))-\mu_{01}^{*}(\bs X;\bs{\beta}^{*}(\pi))\right\Vert _{P,2}\leq\rho_{\mu,m}+C\sqrt{\frac{\mathrm{VC}(\Pi)}{m}}+C\rho_{e,m}+(L_{\mu}+1)\delta
\end{align*}
w.p. at least $1-3c_{1}\exp\left(-c_{2}m\delta^{2}\right)$
for all $0<\delta<\min\{c_{3},c_{4}\}$ and large $m$.

Besides, under (\ref{eq:beta-init-near-beta-pi}), by construction
and Assumption~\ref{assu:regularity-assumptions-on-nuisance}(i) we also have
\(\widehat{\mu}_{I,01}(\bs X;\widehat{\bs{\beta}}_{I}^{\mathrm{init}}(\pi))\in\mathcal{F}_{\mathrm{DNN}}(\mathcal{H}_{\mu},\mathcal{D}_{\mu})\cap\{f:\left\Vert f\right\Vert _{\infty}\le M\}\) for all \(\pi\in\Pi\).
Thus, for all $0<\delta<\min\{c_{3},c_{4}\}$ and large $m$, w.p. at least $1-4c_{1}\exp\left(-c_{2}m\delta^{2}\right)$,
\[
\sup_{\pi\in\Pi}\left\Vert \widehat{\mu}_{I,01}(\bs X;\widehat{\bs{\beta}}_{I}^{\mathrm{init}}(\pi))-\mu_{01}^{*}(\bs X;\bs{\beta}^{*}(\pi))\right\Vert _{P,2}
\leq\rho_{\mu,m}+C\sqrt{\frac{\mathrm{VC}(\Pi)}{m}}+C\rho_{e,m}+(L_{\mu}+1)\delta,
\]
and the membership condition above holds for all \(\pi\in\Pi\).
This completes the proof.
\end{proof}

\subsection[Asymptotic properties of calibrated weights]{The asymptotic properties of the weights $\widehat{w}_{I,i}(\pi)$}

To begin with, we derive the asymptotic properties of the weights
$\widehat{w}_{I,i}(\pi)$. Without loss of generality, we assume
$I=\{1,\ldots,m\}$.

We let $\widehat{\bs{\xi}}_{0}(\bs X):=(\widehat{\mu}_{I,j0}(\bs X;\widehat{\bs{\beta}}_{I}^{\mathrm{init}}(\pi)):j=0,1,\ldots,p)^{\trans}$ and $\widehat{\bs{\xi}}_{1}(\bs X)=(\widehat{\mu}_{I,j1}(\bs X;\widehat{\bs{\beta}}_{I}^{\mathrm{init}}(\pi)):j=0,1,\ldots,p)^{\trans}$.
We also write $\widehat{e}:=\widehat{e}_{I}$ and $\widehat{\bs{\xi}}(\bs X):=(\widehat{\bs{\xi}}_{0}(\bs X)^{\trans},\widehat{\bs{\xi}}_{1}(\bs X)^{\trans})^{\trans}$, and let
\[
\bs A(T,\bs X;e,\bs{\xi},\pi):=\frac{(1-T)(1-\pi(\bs X))}{1-e(\bs X)}\bs{\xi}_{0}(\bs X)+\frac{T\pi(\bs X)}{e(\bs X)}\bs{\xi}_{1}(\bs X).
\]
Write \(\bs A_{i}:=\bs A(T_{i},\bs X_{i};\widehat{e},\widehat{\bs{\xi}},\pi)\) for \(i\in I\). Besides, let $\bs w=(w_{1},\ldots,w_{m})^{\trans}$ and $\bs B=m^{-1}\sum_{i\in I}\{ (1-\pi(\bs X_{i}))\widehat{\bs{\xi}}_{0}(\bs X_{i})+\pi(\bs X_{i})\widehat{\bs{\xi}}_{1}(\bs X_{i})\}$.
For any $v\in\R$, denote the derivative of $D(v)$ by $D^{\prime}(v)$.
Then $D^{\prime}(1)=0$ and $D^{\prime\prime}(1)>0$. Let $D(\bs w)=\sum_{i\in I}D(w_{i})$.
Then the calibration problem can be rewritten as
\begin{equation}
\left(\widehat{w}_{I,i}(\pi)\right)_{i\in I}=\arg\min_{w_{i}>0:i\in I}D(\bs w)\ \text{ s.t. }\ \left(\bs A_{1},\ldots,\bs A_{m}\right)\bs w=m\bs B.\label{eq:EL step rewritten}
\end{equation}
Note that the conjugate function of $D(\bs w)$ is
\[
D^{*}(\bs z)=\sum_{i\in I}\left\{ z_{i}\cdot(D^{\prime})^{-1}(z_{i})-D\left\{ (D^{\prime})^{-1}(z_{i})\right\} \right\} =\sum_{i\in I}-\rho(-z_{i}),\quad\forall\bs z=(z_{1},\ldots,z_{m})^{\trans},
\]
where $\rho(v):=D\left\{ (D^{\prime})^{-1}(-v)\right\} +v\cdot(D^{\prime})^{-1}(-v)$
for any $v\in\R$. The dual problem of (\ref{eq:EL step rewritten})
is
\begin{align*}
&\max_{\bs{\lambda}\in\R^{p+1}}\left\{ \bs{\lambda}^{\trans}m\bs B-D^{*}((\bs A_{1},\ldots,\bs A_{m})^{\trans}\bs{\lambda})\right\}=\max_{\bs{\lambda}\in\R^{p+1}}\sum_{i\in I}\left\{ \bs{\lambda}^{\trans}\bs B+\rho(-\bs{\lambda}^{\trans}\bs A_{i})\right\} \\
&\quad=\max_{\bs{\lambda}\in\R^{p+1}}\frac{1}{m}\sum_{i\in I}\left\{ \rho(\bs{\lambda}^{\trans}\bs A_{i})-\bs{\lambda}^{\trans}\bs B\right\} .
\end{align*}
Let $\hat{G}(\bs{\lambda};\pi):=\frac{1}{m}\sum_{i\in I}\left\{ \rho(\bs{\lambda}^{\trans}\bs A_{i})-\bs{\lambda}^{\trans}\bs B\right\} $
and $\hat{\bs{\lambda}}_{I}(\pi):=\arg\max_{\bs{\lambda}\in\R^{p+1}}\hat{G}(\bs{\lambda};\pi)$.
On any event on which the affine constraint in (\ref{eq:EL step rewritten})
admits a vector $\bs w=(w_i)_{i\in I}$ with $w_i>0$ for every $i\in I$, Slater's condition for this
equality-constrained entropy program holds, and strong duality follows from
Section~5 of \citet{boyd2004Convex}. The bounded-dual argument in
Lemma~\ref{lem:convergence rate for lambda} below verifies this strict
feasibility on the high-probability event used in the asymptotic analysis.
On this event, the first-order condition of the dual problem gives
\begin{equation}
\widehat{w}_{I,i}(\pi)=\rho^{\prime}(\hat{\bs{\lambda}}_{I}(\pi)^{\trans}\bs A_{i}),\text{ for all }i\in I,\label{eq:expression for wi}
\end{equation}
where $\rho^{\prime}(\cdot)$ is the derivative of $\rho(\cdot)$.
We now give the asymptotic behavior of $\widehat{w}_{I,i}(\pi)$,
which is critical in the subsequent analysis.
\begin{lem}
\label{lem:convergence rate for lambda} Suppose that Assumptions
\ref{assu:unconfounded}, \ref{assu:Overlap assumption and smoothness},
\ref{assu:regularity-conditions-on-L},
\ref{assu:regularity-assumptions-on-U},
\ref{assu:smoothness for mu}, \ref{assu:regularity on mu}, and
\ref{assu:regularity-assumptions-on-nuisance} hold and $I\subset\left\{ 1,\ldots,N\right\} $
is an index set with $\left|I\right|=m$. Let $\Pi\subset\Pi_{\infty}$
be a policy class with VC dimension $\mathrm{VC}(\Pi)$ and $\mathrm{VC}(\Pi)/m\to0$
 as $m\to\infty$. Then
  \[
P\left(\sup_{\pi\in\Pi}\left\Vert \hat{\bs{\lambda}}_{I}(\pi)\right\Vert \geq C\sqrt{\frac{\mathrm{VC}(\Pi)}{m}}+C\rho_{e,m}+C\rho_{\mu,m}+\delta\right)\leq c_{1}\exp\left(-c_{2}m\delta^{2}\right)
\]
for all $0<\delta<c_{3}$ and large $m$, where $c_{1},c_{2},c_{3}>0$
are finite constants independent of $\delta$, $I$, $m$
and $\Pi$, and $C>0$ is finite.
\end{lem}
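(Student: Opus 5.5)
Since $\hat{\bs{\lambda}}_{I}(\pi)$ maximizes the concave dual objective $\hat G(\bs{\lambda};\pi)$, I will run the same convex-argmin argument used for Lemma~\ref{lem:convergence of initial beta}. Concretely, I apply Lemma~\ref{lem:Nearness of the argmins of the convex functions} with $A_{n}(\bs{\lambda};\pi)=-\hat G(\bs{\lambda};\pi)$. The comparison function is the deterministic quadratic
\[
B_{n}(\bs{\lambda};\pi)=-\rho(0)-\rho^{\prime}(0)\bs{\lambda}^{\trans}\e[\bs A^{*}]+\bs{\lambda}^{\trans}\e[\bs B^{*}]-\tfrac{1}{2}\rho^{\prime\prime}(0)\bs{\lambda}^{\trans}\bs H^{*}(\pi)\bs{\lambda},
\]
with the sign arranged so that $B_{n}$ is uniquely minimized at $\bs 0$. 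Here $\bs A^{*}$ and $\bs B^{*}$ are the population versions evaluated at $(e^{*},\bs{\xi}^{*}(\cdot;\pi))$. Since $D(w)=w\log w-w$, we have $\rho(v)=e^{v}$ up to sign conventions, so $\rho^{\prime}(0)=1$ and $\rho^{\prime\prime}(0)=1$. The matrix is
\[
\bs H^{*}(\pi)=\e\bigl[\bs A^{*}\bs A^{*\trans}\bigr]=\e\bigl[(1-\pi)\bs{\xi}_{0}^{*}\bs{\xi}_{0}^{*\trans}/(1-e^{*})+\pi\bs{\xi}_{1}^{*}\bs{\xi}_{1}^{*\trans}/e^{*}\bigr].
\]
By overlap and Assumption~\ref{assu:regularity on mu}(ii), $\lambda_{\min}(\bs H^{*}(\pi))\ge c_{\xi}$ uniformly in $\pi$. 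Hence $h_{n}(\delta;\pi)\gtrsim\delta^{2}$ for small $\delta$, which is the analogue of Step~1 in the proof of Lemma~\ref{lem:convergence of initial beta}.

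The key identity is the population balance
\[
\e[\bs A(T,\bs X;e^{*},\bs{\xi}^{*},\pi)]=\e[(1-\pi)\bs{\xi}_{0}^{*}+\pi\bs{\xi}_{1}^{*}],
\]
so the linear terms in $B_{n}$ cancel and $\bs 0$ is its unique minimizer. For the fluctuation $\Delta_{n}(\delta;\pi)$, I write $\hat G(\bs{\lambda})-\hat G(\bs 0)$ as $\bs{\lambda}^{\trans}\hat{\bs g}(\pi)+\tfrac12\bs{\lambda}^{\trans}\hat{\bs H}(\tilde{\bs{\lambda}};\pi)\bs{\lambda}$. Here $\hat{\bs g}(\pi)=m^{-1}\sum_{i}(\bs A_{i}-\bs B_{i})$ and $\hat{\bs H}=m^{-1}\sum_{i}\rho^{\prime\prime}(\tilde{\bs{\lambda}}^{\trans}\bs A_{i})\bs A_{i}\bs A_{i}^{\trans}$. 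This gives
\[
\Delta_{n}(\delta;\pi)\le\delta\sup_{\pi}\|\hat{\bs g}(\pi)\|+\delta^{2}\sup_{\pi}\sup_{\|\bs{\lambda}\|\le\delta}\|\hat{\bs H}(\bs{\lambda};\pi)-\bs H^{*}(\pi)\|.
\]
The second term is at most $\delta^{2}\cdot o(1)$ with exponential probability. This uses three facts: $\bs A_{i}$ is bounded, because $\hat e$ has bounded logit and $\hat{\bs{\xi}}$ is bounded on the event of Lemma~\ref{lem:converge of regression}; $\rho^{\prime\prime}$ is Lipschitz near $0$; and uniform convergence holds via Lemmas~\ref{lem:converge of propensity} and \ref{lem:converge of regression} together with an empirical-process bound of the same form as Step~2.1. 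Thus it is dominated by $h_{n}$.

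The main obstacle is bounding $\sup_{\pi}\|\hat{\bs g}(\pi)\|$ at the stated rate. I decompose $\hat{\bs g}(\pi)$ into three pieces.

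\begin{enumerate}
\item[(a)] The first piece is the centered empirical process $(P_{m}-P)$ applied to $\bs A(\cdot;e,\bs{\xi},\pi)-[(1-\pi)\bs{\xi}_{0}+\pi\bs{\xi}_{1}]$. This is taken over the class indexed by $\pi\in\Pi$, $e\in\mathcal F_{\mathrm{DNN},e}$ with $\|e-e^{*}\|_{P,2}\le\rho_{e,m}+t$, and $\bs{\xi}$ in the bounded $\mu$-DNN class near $\bs{\xi}^{*}$. Lemma~\ref{lem:transformed uniform entropy} gives entropy $\lesssim(\mathrm{VC}(\Pi)+v_{e,m}+v_{\mu,m})\log(a/\epsilon)$. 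Then \citet[Lemma~6.2]{chernozhukov2018Double} and the bounded difference inequality give a bound of order $\sqrt{\mathrm{VC}(\Pi)/m}+\rho_{e,m}+\rho_{\mu,m}+t$ with probability at least $1-c_{1}e^{-c_{2}mt^{2}}$, after enlarging $\rho_{e,m},\rho_{\mu,m}$ to absorb $\sqrt{v/m}$.
\item[(b)] The second piece is the population bias $P\{\bs A(\cdot;\hat e,\hat{\bs{\xi}},\pi)-(1-\pi)\hat{\bs{\xi}}_{0}-\pi\hat{\bs{\xi}}_{1}\}$. Conditioning on $\bs X$, it equals
\[
\e\bigl[(1-\pi)\{(1-e^{*})/(1-\hat e)-1\}\hat{\bs{\xi}}_{0}+\pi\{e^{*}/\hat e-1\}\hat{\bs{\xi}}_{1}\bigr].
\]
By boundedness this is $\lesssim\|\hat e-e^{*}\|_{P,2}\le\rho_{e,m}+t$, using Lemma~\ref{lem:converge of propensity}.
\end{enumerate}

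A union bound over the three events then gives $\sup_{\pi}\|\hat{\bs g}(\pi)\|\le C\sqrt{\mathrm{VC}(\Pi)/m}+C\rho_{e,m}+C\rho_{\mu,m}+Ct$.

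To finish, set $\delta^{\prime}=\delta+C^{\prime}(\sqrt{\mathrm{VC}(\Pi)/m}+\rho_{e,m}+\rho_{\mu,m})$, which stays small for large $m$. Applying Lemma~\ref{lem:Nearness of the argmins of the convex functions} at radius $\delta^{\prime}$ gives $\Delta_{n}<h_{n}$ for all $\pi$ outside an event of probability $c_{1}\exp(-c_{2}m\delta^{2})$, exactly as in Step~3 of Lemma~\ref{lem:convergence of initial beta}. This yields the claimed bound. As a by-product, $\hat{\bs{\lambda}}_{I}(\pi)$ is small uniformly in $\pi$, so $\rho^{\prime}(\hat{\bs{\lambda}}^{\trans}\bs A_{i})>0$ gives strictly positive feasible weights. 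This verifies the Slater condition invoked before (\ref{eq:expression for wi}).
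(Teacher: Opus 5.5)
Your proposal is correct and is essentially the paper's proof. The paper also bounds $\partial\hat G(\bs 0;\pi)/\partial\bs\lambda$ by splitting it into a centered process $(P_m-P)\bs\phi$, over a class with entropy $\lesssim\{\mathrm{VC}(\Pi)+v_{e,m}+v_{\mu,m}\}\log(a/\epsilon)$, plus the bias $P\bs\phi\lesssim\|\hat e-e^{*}\|_{P,2}$. It controls curvature through uniform convergence of $m^{-1}\sum_{i}\bs A_i\bs A_i^{\trans}$ to the population matrix with $\lambda_{\min}\ge c_\xi$. The only difference is packaging: the paper compares $\hat G$ on a sphere of radius $C_\Upsilon(a_m(\Pi)+t)$ with $\hat G(\bs 0)$ directly, whereas you invoke Lemma~\ref{lem:Nearness of the argmins of the convex functions}, whose proof is the same convexity argument.

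One slip to fix: for $D(w)=w\log w-w$ one gets $\rho(v)=-e^{-v}$, so $\rho''(0)=-1$ rather than $1$. It is exactly this negative sign that makes your quadratic $B_n$ convex and uniquely minimized at $\bs 0$.
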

\begin{proof}
Note that $\rho^{\prime}(v)=(-v)\frac{d\{(D^{\prime})^{-1}(-v)\}}{dv}+v\frac{d\{(D^{\prime})^{-1}(-v)\}}{dv}+(D^{\prime})^{-1}(-v)=(D^{\prime})^{-1}(-v)$ and $\rho^{\prime\prime}(v)=-1/D^{\prime\prime}((D^{\prime})^{-1}(-v))$.
Since $D^{\prime}(1)=0$, $D^{\prime\prime}(1)>0$ and $D^{\prime\prime}(v),(D^{\prime})^{-1}(v)$
are continuous, we have $\rho^{\prime}(0)=1$, $\rho^{\prime\prime}(0)<0$
and $\rho^{\prime\prime}(v)$ is continuous. Since $D(v)$ is strictly
convex, $\rho(v)$ is also strictly concave.

In this proof we write $\mathcal{F}_{\mathrm{DNN},e}$ and $\mathcal{F}_{\mathrm{DNN},\mu}$
for $\text{logistic}\circ\left\{ \mathcal{F}_{\mathrm{DNN}}(\mathcal{H}_{e},\mathcal{D}_{e})\bigcap\{f:\left\Vert f\right\Vert _{\infty}\le M\}\right\} $
and $\mathcal{F}_{\mathrm{DNN}}(\mathcal{H}_{\mu},\mathcal{D}_{\mu})\bigcap\{f:\left\Vert f\right\Vert _{\infty}\le M\}$,
respectively. We let $v_{e,m}:=\mathcal{H}_{e}^{2}\mathcal{D}_{e}^{2}\log(\mathcal{H}_{e}\mathcal{D}_{e})$, $v_{\mu,m}:=\mathcal{H}_{\mu}^{2}\mathcal{D}_{\mu}^{2}\log(\mathcal{H}_{\mu}\mathcal{D}_{\mu})$, and $a_{m}(\Pi):=\sqrt{\mathrm{VC}(\Pi)/m}+\rho_{e,m}+\rho_{\mu,m}$.
By Assumption~\ref{assu:Overlap assumption and smoothness} and Assumption~\ref{assu:regularity-assumptions-on-nuisance}(ii), we have $\sqrt{v_{e,m}/m}=o(m^{-1/4})$ and $\sqrt{v_{\mu,m}/m}=o(m^{-1/4})$. Enlarging $\rho_{e,m}$ and $\rho_{\mu,m}$ if necessary, we may assume $\sqrt{v_{e,m}/m}\leq\rho_{e,m}$ and $\sqrt{v_{\mu,m}/m}\leq\rho_{\mu,m}$ for all large $m$.

Note that
\[
\frac{\partial\hat{G}(\bs{\lambda};\pi)}{\partial\bs{\lambda}}=\frac{1}{m}\sum_{i\in I}\left\{ \rho^{\prime}(\bs{\lambda}^{\trans}\bs A_{i})\bs A_{i}-\bs B\right\} \quad\text{and }\frac{\partial^{2}\hat{G}(\bs{\lambda};\pi)}{\partial\bs{\lambda}\partial\bs{\lambda}^{\trans}}=\frac{1}{m}\sum_{i\in I}\left\{ \rho^{\prime\prime}(\bs{\lambda}^{\trans}\bs A_{i})\bs A_{i}\bs A_{i}^{\trans}\right\} .
\]
Then we have
\begin{equation}
\frac{\partial\hat{G}(\bs 0;\pi)}{\partial\bs{\lambda}}=\frac{1}{m}\sum_{i\in I}\bs A_{i}-\bs B=\left\{ P_{m}-P\right\} \bs{\phi}(T,\bs X;\widehat{e},\widehat{\bs{\xi}},\pi)+P\bs{\phi}(T,\bs X;\widehat{e},\widehat{\bs{\xi}},\pi)\label{eq:G-prime-zero}
\end{equation}
where $\bs{\phi}(T,\bs X;e,\bs{\xi},\pi):=\left\{ \frac{1-T}{1-e(\bs X)}-1\right\} (1-\pi(\bs X))\bs{\xi}_{0}(\bs X)+\left\{ \frac{T}{e(\bs X)}-1\right\} \pi(\bs X)\bs{\xi}_{1}(\bs X)$.
We bound $P\bs{\phi}(T,\bs X;\widehat{e},\widehat{\bs{\xi}},\pi)$
and $\left\{ P_{m}-P\right\} \bs{\phi}(T,\bs X;\widehat{e},\widehat{\bs{\xi}},\pi)$,
respectively. By Lemma~\ref{lem:converge of propensity},
Lemma~\ref{lem:converge of regression}, and the membership of the nuisance
estimators in the corresponding bounded DNN classes,
we have
\[
\sup_{\pi\in\Pi}\left\Vert P\bs{\phi}(T,\bs X;\widehat{e},\widehat{\bs{\xi}},\pi)\right\Vert \leq\sup_{\pi\in\Pi}\sup_{e\in\mathcal{F}_{\mathrm{DNN},e}:\left\Vert e-e^{*}\right\Vert _{P,2}\leq\rho_{e,m}+t}\sup_{\bs{\xi}\in\mathcal{F}_{\mathrm{DNN},\mu}^{2p+2}}\left\Vert P\bs{\phi}(T,\bs X;e,\bs{\xi},\pi)\right\Vert
\]
w.p. at least $1-c_{1}\exp(-c_{2}mt^{2})$ for all $0<t<c_{3}$
and large $m$. Note that by the definitions of $\mathcal{F}_{\mathrm{DNN},e}$
and $\mathcal{F}_{\mathrm{DNN},\mu}$, we have
\begin{align*}
\left\Vert P\bs{\phi}(T,\bs X;e,\bs{\xi},\pi)\right\Vert
&=\left\Vert \e\left[\left\{ \frac{e(\bs X)-e^{*}(\bs X)}{1-e(\bs X)}\right\} (1-\pi(\bs X))\bs{\xi}_{0}(\bs X)\right.\right.\\
&\qquad\left.\left.+\left\{ \frac{e^{*}(\bs X)-e(\bs X)}{e(\bs X)}\right\} \pi(\bs X)\bs{\xi}_{1}(\bs X)\right]\right\Vert\lesssim\left\Vert e(\bs X)-e^{*}(\bs X)\right\Vert _{P,2}.
\end{align*}
Thus,
\begin{equation}
\sup_{\pi\in\Pi}\left\Vert P\bs{\phi}(T,\bs X;\widehat{e},\widehat{\bs{\xi}},\pi)\right\Vert \leq C\rho_{e,m}+Ct\label{eq:partial G first term}
\end{equation}
w.p. at least $1-c_{1}\exp(-c_{2}mt^{2})$ for all $0<t<c_{3}$
and large $m$. Furthermore, by Lemmas~\ref{lem:converge of propensity}
and \ref{lem:converge of regression}, and the membership of the nuisance
estimators in the corresponding bounded DNN classes, we have
\begin{equation}
\sup_{\pi\in\Pi}\left\Vert \left\{ P_{m}-P\right\} \bs{\phi}(T,\bs X;\widehat{e},\widehat{\bs{\xi}},\pi)\right\Vert \leq\sup_{f\in\mathcal{M}_{1}}\left|\left\{ P_{m}-P\right\} f(T,\bs X)\right|,\label{eq:partial G second term}
\end{equation}
w.p. at least $1-c_{1}\exp(-c_{2}mt^{2})$ for all $0<t<c_{3}$
and large $m$, where
\begin{align*}
\mathcal{M}_{1} & :=\{(T,\bs X)\mapsto\bs{\alpha}^{\trans}\bs{\phi}(T,\bs X;e,\bs{\xi},\pi):\bs{\xi}\in\mathcal{F}_{\mathrm{DNN},\mu}^{2p+2},\bs{\alpha}\in\mathbb{S}^{p},e\in\mathcal{F}_{\mathrm{DNN},e},\pi\in\Pi\}
\end{align*}
is a class of functions with a measurable envelope $C$. Here $\mathbb{S}^{p}\assign\{\bs{\alpha}\in\mathbb{R}^{p+1}:\left\Vert \bs{\alpha}\right\Vert =1\}$ denotes the unit sphere. By the VC-subgraph property of $\Pi$, (\ref{eq:entropy for DNNe}), (\ref{eq:entropy dnn reg beta}), overlap, the fixed dimension of $\mathbb{S}^{p}$, and Lemma~\ref{lem:transformed uniform entropy},
\begin{equation}
\sup_{Q}\log N\left(C\epsilon,\mathcal{M}_{1},\left\Vert \cdot\right\Vert _{Q,2}\right)\leq C\left\{ \mathrm{VC}(\Pi)+v_{e,m}+v_{\mu,m}\right\}\log\left(a/\epsilon\right).
\label{eq:entropy for M1}
\end{equation}
for all $0<\epsilon<1$, where $a>1$ is a constant. By Lemma
6.2 in \citet{chernozhukov2018Double} we have
  \[
\e\left[\sup_{f\in\mathcal{M}_{1}}\left|\left\{ P_{m}-P\right\} f(T,\bs X)\right|\right]\leq C\sqrt{\frac{\mathrm{VC}(\Pi)}{m}}+C\sqrt{\frac{v_{e,m}}{m}}+C\sqrt{\frac{v_{\mu,m}}{m}}.
\]
Now, applying the bounded differences inequality (Corollary 2.21 in
\citet{wainwright2019HighDimensional}) gives that
\[
P\left(\sup_{f\in\mathcal{M}_{1}}\left|\left\{ P_{m}-P\right\} f(T,\bs X)\right|\geq\e\left[\sup_{f\in\mathcal{M}_{1}}\left|\left\{ P_{m}-P\right\} f(T,\bs X)\right|\right]+t\right)\leq c_{1}\exp\left(-c_{2}mt^{2}\right)
\]
for all $t\geq0$. Combining the above two displays yields that
  \[
P\left(\sup_{f\in\mathcal{M}_{1}}\left|\left\{ P_{m}-P\right\} f(T,\bs X)\right|\geq C\sqrt{\frac{\mathrm{VC}(\Pi)}{m}}+C\sqrt{\frac{v_{e,m}}{m}}+C\sqrt{\frac{v_{\mu,m}}{m}}+t\right)\leq c_{1}\exp\left(-c_{2}mt^{2}\right)
\]
for all $t\geq0$. Recalling (\ref{eq:G-prime-zero}), (\ref{eq:partial G first term}),
and (\ref{eq:partial G second term}), together with $\sqrt{v_{e,m}/m}\leq\rho_{e,m}$ and $\sqrt{v_{\mu,m}/m}\leq\rho_{\mu,m}$,
we
can obtain that
  \begin{equation}
P\left(\sup_{\pi\in\Pi}\left\Vert \frac{\partial\hat{G}(\bs 0;\pi)}{\partial\bs{\lambda}}\right\Vert \geq Ca_{m}(\Pi)+t\right)\leq c_{1}\exp\left(-c_{2}mt^{2}\right)\label{eq:EL step bound for derivative}
\end{equation}
for all $0<t<c_{3}$ and large $m$.

Now, we analyze $m^{-1}\sum_{i\in I}\bs A_{i}\bs A_{i}^{\trans}$. Using the definition of $\bs A(T,\bs X;e,\bs{\xi},\pi)$ above, let
	\[
	\begin{aligned}
	\mathcal{M}_{2}:=\{(T,\bs X)\mapsto{}&\bs{\alpha}^{\trans}\bs A(T,\bs X;e,\bs{\xi},\pi)\bs A(T,\bs X;e,\bs{\xi},\pi)^{\trans}\bs{\alpha}:\\
	&\bs{\xi}\in\mathcal{F}_{\mathrm{DNN},\mu}^{2p+2},\bs{\alpha}\in\mathbb{S}^{p},e\in\mathcal{F}_{\mathrm{DNN},e},\pi\in\Pi\}
	\end{aligned}
	\]
	be a function class with a measurable envelope $C$. By the same argument as for $\mathcal{M}_{1}$, the components of $\bs A$ have bounded envelopes and the same entropy bounds. The fixed-dimensional sphere $\mathbb{S}^{p}$ contributes only a constant-order entropy term, and the map $(\bs a,\bs{\alpha})\mapsto\bs{\alpha}^{\trans}\bs a\bs a^{\trans}\bs{\alpha}$ is Lipschitz on bounded sets. Lemma~\ref{lem:transformed uniform entropy} gives
  \[
\sup_{Q}\log N\left(C\epsilon,\mathcal{M}_{2},\left\Vert \cdot\right\Vert _{Q,2}\right)\leq C\left\{ \mathrm{VC}(\Pi)+v_{e,m}+v_{\mu,m}\right\} \log\left(a/\epsilon\right)
\]
for all $0<\epsilon<1$, where $a>1$ is a constant. By
\citet[Corollary 2.21]{wainwright2019HighDimensional}, we have
\[
P\left(\sup_{f\in\mathcal{M}_{2}}\left|\left\{ P_{m}-P\right\} f(T,\bs X)\right|\geq\e\left[\sup_{f\in\mathcal{M}_{2}}\left|\left\{ P_{m}-P\right\} f(T,\bs X)\right|\right]+t\right)\leq c_{1}\exp\left(-c_{2}mt^{2}\right)
\]
for all $t\geq0$. Besides, by \citet[Lemma 6.2]{chernozhukov2018Double}
we have
  \[
\e\left[\sup_{f\in\mathcal{M}_{2}}\left|\left\{ P_{m}-P\right\} f(T,\bs X)\right|\right]\leq C\sqrt{\frac{\mathrm{VC}(\Pi)}{m}}+C\sqrt{\frac{v_{e,m}}{m}}+C\sqrt{\frac{v_{\mu,m}}{m}}.
\]
Combining the above two displays we have
  \begin{align*}
	&\sup_{\pi\in\Pi,\bs{\xi}\in\mathcal{F}_{\mathrm{DNN},\mu}^{2p+2},e\in\mathcal{F}_{\mathrm{DNN},e}}\left\Vert \left\{ P_{m}-P\right\} \bs A(T,\bs X;e,\bs{\xi},\pi)\bs A(T,\bs X;e,\bs{\xi},\pi)^{\trans}\right\Vert \\
	&\leq \sup_{f\in\mathcal{M}_{2}}\left|\left\{ P_{m}-P\right\} f(T,\bs X)\right|\leq C\sqrt{\frac{\mathrm{VC}(\Pi)}{m}}+C\sqrt{\frac{v_{e,m}}{m}}+C\sqrt{\frac{v_{\mu,m}}{m}}+t
	\end{align*}
	w.p. at least $1-c_{1}\exp\left(-c_{2}mt^{2}\right)$
		for all $t\geq0$. By Weyl's inequality, we have
		  \begin{align}
	&\sup_{\pi\in\Pi,\bs{\xi}\in\mathcal{F}_{\mathrm{DNN},\mu}^{2p+2},e\in\mathcal{F}_{\mathrm{DNN},e}}\bigl|\lambda_{\min}\left\{ P_{m}\bs A(T,\bs X;e,\bs{\xi},\pi)\bs A(T,\bs X;e,\bs{\xi},\pi)^{\trans}\right\}\nonumber\\
	&\qquad\qquad-\lambda_{\min}\left\{ P\bs A(T,\bs X;e,\bs{\xi},\pi)\bs A(T,\bs X;e,\bs{\xi},\pi)^{\trans}\right\} \bigr|\nonumber\\
	&\leq \sup_{\pi\in\Pi,\bs{\xi}\in\mathcal{F}_{\mathrm{DNN},\mu}^{2p+2},e\in\mathcal{F}_{\mathrm{DNN},e}}\left\Vert \left\{ P_{m}-P\right\} \bs A(T,\bs X;e,\bs{\xi},\pi)\bs A(T,\bs X;e,\bs{\xi},\pi)^{\trans}\right\Vert \nonumber\\
	&\leq C\sqrt{\frac{\mathrm{VC}(\Pi)}{m}}+C\sqrt{\frac{v_{e,m}}{m}}+C\sqrt{\frac{v_{\mu,m}}{m}}+t\leq Ca_{m}(\Pi)+t.\label{eq:EL step empirical singular value bound}
	\end{align}
	w.p. at least $1-c_{1}\exp\left(-c_{2}mt^{2}\right)$
	for all $t\geq0$. Let $\bs{\xi}^{*}(\bs X;\pi):=(\bs{\xi}_{0}^{*}(\bs X;\pi)^{\trans},\allowbreak\bs{\xi}_{1}^{*}(\bs X;\pi)^{\trans})^{\trans}$. By Lemmas~\ref{lem:converge of propensity}
	and \ref{lem:converge of regression}, and a finite union bound over
	$j=0,\ldots,p$ and $t\in\{0,1\}$,
	\[
	\left\Vert \widehat{e}(\bs X)-e^{*}(\bs X)\right\Vert _{P,2}+\sup_{\pi\in\Pi}\max_{t=0,1}\left\Vert \widehat{\bs{\xi}}_{t}(\bs X)-\bs{\xi}_{t}^{*}(\bs X;\pi)\right\Vert _{P,2}\leq Ca_{m}(\Pi)+Ct,
	\]
	w.p. at least $1-c_{1}\exp(-c_{2}mt^{2})$ for all $0<t<c_{3}$ and large $m$. Hence, by Assumptions~\ref{assu:Overlap assumption and smoothness}, \ref{assu:smoothness for mu}, and \ref{assu:regularity-assumptions-on-nuisance}(i),
		\begin{equation}
		\begin{aligned}
		&\sup_{\pi\in\Pi}\bigl\Vert P\bs A(T,\bs X;\widehat{e},\widehat{\bs{\xi}},\pi)\bs A(T,\bs X;\widehat{e},\widehat{\bs{\xi}},\pi)^{\trans}\\
		&\qquad-P\bs A(T,\bs X;e^{*},\bs{\xi}^{*}(\pi),\pi)\bs A(T,\bs X;e^{*},\bs{\xi}^{*}(\pi),\pi)^{\trans}\bigr\Vert\leq Ca_{m}(\Pi)+Ct
		\end{aligned}
		\label{eq:EL step population perturbation bound}
		\end{equation}
	w.p. at least $1-c_{1}\exp(-c_{2}mt^{2})$ for all $0<t<c_{3}$ and large $m$. Moreover,
	\[
	\begin{aligned}
	&P\bs A(T,\bs X;e^{*},\bs{\xi}^{*}(\pi),\pi)\bs A(T,\bs X;e^{*},\bs{\xi}^{*}(\pi),\pi)^{\trans}=\e\left[\frac{1-\pi(\bs X)}{1-e^{*}(\bs X)}\bs{\xi}_{0}^{*}(\bs X;\pi)\bs{\xi}_{0}^{*}(\bs X;\pi)^{\trans}\right.\\
	&\qquad\left.+\frac{\pi(\bs X)}{e^{*}(\bs X)}\bs{\xi}_{1}^{*}(\bs X;\pi)\bs{\xi}_{1}^{*}(\bs X;\pi)^{\trans}\right],
	\end{aligned}
	\]
	so, because $\Pi\subset\Pi_{\infty}$, Assumption~\ref{assu:regularity on mu}(ii) and overlap imply
		\begin{equation}
	\inf_{\pi\in\Pi}\lambda_{\min}\left\{ P\bs A(T,\bs X;e^{*},\bs{\xi}^{*}(\pi),\pi)\bs A(T,\bs X;e^{*},\bs{\xi}^{*}(\pi),\pi)^{\trans}\right\}\geq c_{\xi}.
		\label{eq:EL step true weighted minimum singular value}
		\end{equation}
		Because $a_{m}(\Pi)\to0$, choose $c_{4}>0$ small enough and then $m$ large enough so that $Ca_{m}(\Pi)+Ct\leq c_{\xi}/2$ for all $0<t<c_{4}$.
		Combining \eqref{eq:EL step empirical singular value bound}, \eqref{eq:EL step population perturbation bound}, and \eqref{eq:EL step true weighted minimum singular value}, w.p. at least $1-c_{1}\exp\left(-c_{2}mt^{2}\right)$, for all $0<t<c_{4}$,
		  \begin{align}
	 & \inf_{\pi\in\Pi}\lambda_{\min}\left\{ \frac{1}{m}\sum_{i\in I}\bs A_{i}\bs A_{i}^{\trans}\right\} \nonumber
	=  \inf_{\pi\in\Pi}\lambda_{\min}\left\{ \frac{1}{m}\sum_{i\in I}\bs A(T_{i},\bs X_{i};\widehat{e},\widehat{\bs{\xi}},\pi)\bs A(T_{i},\bs X_{i};\widehat{e},\widehat{\bs{\xi}},\pi)^{\trans}\right\} \nonumber \\
	\geq & \inf_{\pi\in\Pi}\lambda_{\min}\left\{ P\bs A(T,\bs X;\widehat{e},\widehat{\bs{\xi}},\pi)\bs A(T,\bs X;\widehat{e},\widehat{\bs{\xi}},\pi)^{\trans}\right\} -Ca_{m}(\Pi)-t\nonumber\\
	\geq & \inf_{\pi\in\Pi}\lambda_{\min}\left\{ P\bs A(T,\bs X;e^{*},\bs{\xi}^{*}(\pi),\pi)\bs A(T,\bs X;e^{*},\bs{\xi}^{*}(\pi),\pi)^{\trans}\right\} -Ca_{m}(\Pi)-Ct\geq c_{\xi}/2\label{eq:EL step minimum singular value}
	\end{align}
		where the equality only expands $\bs A_{i}=\bs A(T_{i},\bs X_{i};\widehat{e},\widehat{\bs{\xi}},\pi)$, which depends on $\pi$.
		The first inequality follows from \eqref{eq:EL step empirical singular value bound} evaluated at $e=\widehat e$ and $\bs{\xi}=\widehat{\bs{\xi}}$.
		The second inequality follows from Weyl's inequality and the perturbation bound \eqref{eq:EL step population perturbation bound}.
		The last inequality follows from \eqref{eq:EL step true weighted minimum singular value} and the preceding choice of $c_{4}$ and $m$.

Let $\Upsilon(t):=\{\bs{\lambda}\in\R^{p+1}:\left\Vert \bs{\lambda}\right\Vert \leq C_{\Upsilon}(a_{m}(\Pi)+t)\}$
and $\partial\Upsilon(t):=\{\bs{\lambda}\in\R^{p+1}:\left\Vert \bs{\lambda}\right\Vert =C_{\Upsilon}(a_{m}(\Pi)+t)\}$,
where the constant $C_{\Upsilon}$ will be determined later. For any
$\bs{\lambda}\in\Upsilon(t)$, by Assumption~\ref{assu:regularity-assumptions-on-nuisance}
and Lemma~\ref{lem:converge of regression} we have
\[
\begin{aligned}
&\sup_{\pi\in\Pi}\sup_{i\in I}\sup_{\bs{\lambda}\in\Upsilon(t)}\left|\bs{\lambda}^{\trans}\bs A_{i}\right|\leq\sup_{\bs{\lambda}\in\Upsilon(t)}\left\Vert \bs{\lambda}\right\Vert \sup_{i\in I}\sup_{\pi\in\Pi}\left\Vert \bs A_{i}\right\Vert\leq CC_{\Upsilon}(a_{m}(\Pi)+t)
\end{aligned}
\]
w.p. at least $1-c_{1}\exp\left(-c_{2}mt^{2}\right)$
for all $0<t<c_{3}$ and large $m$. Then by $a_{m}(\Pi)\to0$, \begin{equation}
\rho^{\prime\prime}\left(\bs{\lambda}^{\trans}\bs A_{i}\right)\leq\rho^{\prime\prime}(0)/2<0,\ \forall\bs{\lambda}\in\Upsilon(t),\ \forall i\in I,\ \forall\pi\in\Pi\label{eq:second derivative is bounded-1}
\end{equation}
w.p. at least $1-c_{1}\exp\left(-c_{2}mt^{2}\right)$
for all $0<t<c_{5}$ and large $m$.

By Taylor's expansion, for any $\bs{\lambda}\in\partial\Upsilon(t)$,
we have, for some $\widetilde{\bs{\lambda}}$ on the line segment between $\bs 0$ and $\bs{\lambda}$,
\begin{align*}
  & \hat{G}(\bs{\lambda};\pi)-\hat{G}(\bs0;\pi)=\frac{\partial\hat{G}(\bs0;\pi)}{\partial\bs{\lambda}^{\trans}}\bs{\lambda}+\frac{1}{2}\bs{\lambda}^{\trans}\frac{\partial^{2}\hat{G}(\widetilde{\bs{\lambda}};\pi)}{\partial\bs{\lambda}\partial\bs{\lambda}^{\trans}}\bs{\lambda}\leq\left\Vert \frac{\partial\hat{G}(\bs0;\pi)}{\partial\bs{\lambda}}\right\Vert \left\Vert \bs{\lambda}\right\Vert \\
  & \quad +\frac{1}{2}\frac{1}{m}\sum_{i\in I}\left\{ \rho^{\prime\prime}(\widetilde{\bs{\lambda}}^{\trans}\bs A_{i})\bs{\lambda}^{\trans}\bs A_{i}\bs A_{i}^{\trans}\bs{\lambda}\right\} \leq C_{\Upsilon}^{-1}\left\Vert \bs{\lambda}\right\Vert ^{2}+\frac{\rho^{\prime\prime}(0)}{4}\frac{1}{m}\sum_{i\in I}\left\{ \bs{\lambda}^{\trans}\bs A_{i}\bs A_{i}^{\trans}\bs{\lambda}\right\} \\
  & \leq C_{\Upsilon}^{-1}\left\Vert \bs{\lambda}\right\Vert ^{2}+\frac{\rho^{\prime\prime}(0)}{4}\left\Vert \bs{\lambda}\right\Vert ^{2}\inf_{\pi\in\Pi}\lambda_{\min}\left\{ \frac{1}{m}\sum_{i\in I}\bs A_{i}\bs A_{i}^{\trans}\right\} \leq\left\Vert \bs{\lambda}\right\Vert ^{2}\left\{ C_{\Upsilon}^{-1}+\frac{\rho^{\prime\prime}(0)c_{\xi}}{8}\right\}
 \end{align*}
for all $\pi\in\Pi$ w.p. at least $1-4c_{1}\exp\left(-c_{2}mt^{2}\right)$
for all $0<t<\min\{c_{3},c_{4},c_{5}\}$, where the second inequality
follows from (\ref{eq:EL step bound for derivative}) and (\ref{eq:second derivative is bounded-1})
and the last one follows from (\ref{eq:EL step minimum singular value}).
If we take $C_{\Upsilon}>16/(c_{\xi}|\rho^{\prime\prime}(0)|)$, then
w.p. at least $1-4c_{1}\exp\left(-c_{2}mt^{2}\right)$:
$\hat{G}(\bs{\lambda};\pi)<\hat{G}(\bs 0;\pi)$ for all
$\bs{\lambda}\in\partial\Upsilon(t)$ and $\pi\in\Pi$.
For any $\pi\in\Pi$, since $\hat{G}(\bs{\lambda};\pi)$ is continuous
with respect to $\bs{\lambda}$, there exists a local maximum of $\hat{G}(\bs{\lambda};\pi)$
in the interior of $\Upsilon(t)$. Since $\hat{G}(\bs{\lambda};\pi)$
is also strictly concave and has a unique global maximum point $\hat{\bs{\lambda}}_{I}(\pi)$,
we conclude that $\hat{\bs{\lambda}}_{I}(\pi)\in\Upsilon(t)\backslash\partial\Upsilon(t)$,
which leads to $\left\Vert \hat{\bs{\lambda}}_{I}(\pi)\right\Vert \leq C_{\Upsilon}(a_{m}(\Pi)+t)$.
Therefore,
$P(\sup_{\pi\in\Pi}\Vert \hat{\bs{\lambda}}_{I}(\pi)\Vert \leq C_{\Upsilon}(a_{m}(\Pi)+t))\geq1-4c_{1}\exp(-c_{2}mt^{2})$
for all $0<t<\min\{c_{3},c_{4},c_{5}\}$ and large $m$. Setting $t=\delta$ completes the proof.
\end{proof}

\subsection[Asymptotic properties of beta and welfare estimators]{Asymptotic properties of $\widehat{\protect\bs{\beta}}_{I}(\pi)$
and $\widehat{W}_{I}(\pi)$}
\begin{lem}
[Convergence rate of $\widehat{\bs{\beta}}_{I}(\pi)$ with unknown propensity score]\label{lem:Convergence rate of beta_H-unknown propensity} Suppose
that Assumptions \ref{assu:unconfounded}, \ref{assu:Overlap assumption and smoothness},
\ref{assu:regularity-conditions-on-L},
\ref{assu:regularity-assumptions-on-U},
\ref{assu:smoothness for mu}, \ref{assu:regularity on mu}, and
\ref{assu:regularity-assumptions-on-nuisance}
hold and $I\subset\left\{ 1,\ldots,N\right\} $ is an index set with
$\left|I\right|=m$. For any policy class $\Pi\subset\Pi_{\infty}$
with VC dimension $\mathrm{VC}(\Pi)\geq1$, if $\mathrm{VC}(\Pi)/m\to0$
 as $m\to\infty$, then the $\widehat{\bs{\beta}}_{I}(\pi)$ defined
by (\ref{eq:def-of-beta-hat-pi-I-unknown-propensity})
satisfies
  \[
P\left(\sup_{\pi\in\Pi}\left\Vert \widehat{\bs{\beta}}_{I}(\pi)-\bs{\beta}^{*}(\pi)\right\Vert \geq C\sqrt{\frac{\mathrm{VC}(\Pi)}{m}}+\delta\right)\leq c_{1}\exp\left(-c_{2}m\delta^{2}\right)
\]
 for all $0<\delta<c_{3}$ and large $m$, where $c_{1},c_{2},c_{3}>0$
are finite constants independent of $\delta$, $m$, $I$
and $\Pi$, and $C>0$ is finite.
\end{lem}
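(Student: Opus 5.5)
The plan is to repeat the convex-argmin argument of Lemma~\ref{lem:convergence of initial beta}, replacing $\widehat{e}$-weighting by the calibrated weighting. The point to exploit is that the calibration constraints (\ref{eq:calibrated-weight-constraints}) turn the first-order nuisance bias into a product of nuisance errors. That product is $o(m^{-1/2})$, which is why $\rho_{e,m}$ and $\rho_{\mu,m}$ do not appear in the final rate. As before I argue coordinatewise and take a finite union bound over $j=1,\ldots,p$. Let
\[
Q^{w}_{m}(\beta;\pi):=\frac{1}{m}\sum_{i\in I}\widehat{w}_{I,i}(\pi)\Bigl\{\frac{\pi(\bs X_{i})T_{i}}{\widehat{e}(\bs X_{i})}+\frac{(1-\pi(\bs X_{i}))(1-T_{i})}{1-\widehat{e}(\bs X_{i})}\Bigr\}\mathcal{L}_{1}(Y_{i}-\beta),
\]
and let $Q^{w\prime}_{m}$ denote its ``derivative''. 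I apply Lemma~\ref{lem:Nearness of the argmins of the convex functions} with $A_{n}=Q^{w}_{m}$ and $B_{n}=Q(\cdot;\pi,e^{*})$. This is legitimate because $Q^{w}_{m}$ is convex: the weights are positive by (\ref{eq:expression for wi}). Step~1 of Lemma~\ref{lem:convergence of initial beta} gives the curvature bound $h(\delta;\pi)\geq\underline{Q}^{\prime\prime}\delta^{2}/4$. The same integration argument as there gives $\Delta(\delta;\pi)\leq\delta\sup_{|\beta-\beta^{*}(\pi)|\leq\delta}|Q^{w\prime}_{m}(\beta;\pi)-Q^{\prime}(\beta;\pi,e^{*})|$. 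The whole task is therefore to bound this sup uniformly over $\pi\in\Pi$ by $C\sqrt{\mathrm{VC}(\Pi)/m}+C\delta^{\prime}\eta_{m}+Ct$ on an event of probability $1-c_{1}e^{-c_{2}mt^{2}}$, where $\eta_{m}\to0$. Once that holds, the choice $\delta^{\prime}=\delta+C\sqrt{\mathrm{VC}(\Pi)/m}$ finishes exactly as in Step~3 there.

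I split $Q^{w\prime}_{m}(\beta)-Q^{\prime}(\beta)$ into an increment part and a value at $\beta^{*}$. The increment part is $[Q^{w\prime}_{m}(\beta)-Q^{w\prime}_{m}(\beta^{*})]-[Q^{\prime}(\beta)-Q^{\prime}(\beta^{*})]$; the value part is $Q^{w\prime}_{m}(\beta^{*})$, using $Q^{\prime}(\beta^{*};\pi,e^{*})=0$.

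For the increment part, I write $\widehat{w}_{i}=1+(\rho^{\prime}(\hat{\bs\lambda}^{\trans}\bs A_{i})-1)$. The second piece is bounded by $C\|\hat{\bs\lambda}_{I}(\pi)\|$ times an average of increments of the monotone function $\mathcal{L}_{1}^{\prime}$. By Lemma~\ref{lem:convergence rate for lambda}, $\|\hat{\bs\lambda}_{I}(\pi)\|\to0$ uniformly, so this piece contributes $o(1)\cdot\delta^{\prime}$ plus empirical-process error. The unweighted piece is treated as in Steps~2.1--2.2 of Lemma~\ref{lem:convergence of initial beta}. There is one change: the propensity perturbation $Q^{\prime}(\beta;\widehat{e})-Q^{\prime}(\beta^{*};\widehat{e})-[Q^{\prime}(\beta;e^{*})-Q^{\prime}(\beta^{*};e^{*})]$ must now be bounded by $C\|\widehat{e}-e^{*}\|_{P,2}\,\delta^{\prime}$ rather than by $C\rho_{e,m}$. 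This uses the Lipschitz second derivative in Assumption~\ref{assu:regularity-conditions-on-L}(ii), applied to the $e$-weighted analogue of $Q$. The empirical-process error is handled by the VC-type entropy bounds already used (Lemma~\ref{lem:transformed uniform entropy}, \citet[Lemma 6.2]{chernozhukov2018Double}) together with bounded differences.

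For the value part, I follow the decomposition of the weights-calibration subsection. First, $Q^{w\prime}_{m}(\beta^{*})$ splits into (a) a residual term, built from $\mathcal{L}_{1}^{\prime}(Y_{i}-\beta^{*})-\mu_{1T_{i}}^{*}(\bs X_{i};\bs\beta^{*}(\pi))$, and (b) a $\mu^{*}$-term. Term (a) is conditionally mean zero given $(T,\bs X)$. Its weights range over a class indexed by $(\bs\lambda,e,\bs\xi,\pi)$ with entropy $C\{\mathrm{VC}(\Pi)+v_{e,m}+v_{\mu,m}\}\log(a/\epsilon)$. For the part of the variance that grows with $v_{e,m},v_{\mu,m}$, I use that the weights deviate from $1$ by $O(\|\hat{\bs\lambda}\|)$ and that $\widehat{e}$ is close to $e^{*}$. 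A localized (Talagrand/Bousquet) bound then gives $C\sqrt{\mathrm{VC}(\Pi)/m}+o(m^{-1/2})+t$. For term (b), subtracting the $j=1$ calibration constraint leaves the following quantity:
\[
\frac{1}{m}\sum_{i\in I}\Bigl[(1-\pi)\Bigl\{\tfrac{\widehat{w}_{i}(1-T_{i})}{1-\widehat{e}}-1\Bigr\}(\mu^{*}_{10}-\widehat{\mu}_{10})+\pi\Bigl\{\tfrac{\widehat{w}_{i}T_{i}}{\widehat{e}}-1\Bigr\}(\mu^{*}_{11}-\widehat{\mu}_{11})\Bigr](\bs X_{i}),
\]
where $\widehat{\mu}_{1t}$ is evaluated at $\widehat{\bs\beta}^{\mathrm{init}}_{I}(\pi)$ and $\mu^{*}_{1t}$ at $\bs\beta^{*}(\pi)$. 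Its population mean is bounded by Cauchy--Schwarz by $C(\|\widehat{e}-e^{*}\|_{P,2}+\|\hat{\bs\lambda}\|)\|\widehat{\mu}-\mu^{*}\|_{P,2}$. By Lemmas~\ref{lem:converge of propensity}, \ref{lem:converge of regression} and \ref{lem:convergence rate for lambda}, this is at most $C(a_{m}(\Pi)+t)^{2}$, with $a_{m}(\Pi)=\sqrt{\mathrm{VC}(\Pi)/m}+\rho_{e,m}+\rho_{\mu,m}$. Since $\rho_{e,m}^{2}+\rho_{\mu,m}^{2}=o(m^{-1/2})\leq\sqrt{\mathrm{VC}(\Pi)/m}$ because $\mathrm{VC}(\Pi)\geq1$, and $\mathrm{VC}(\Pi)/m\leq\sqrt{\mathrm{VC}(\Pi)/m}$, this contributes $C\sqrt{\mathrm{VC}(\Pi)/m}+Ct$ once $t<c_{3}$ is small. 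The fluctuation of this term is an empirical process over a class whose $L_{2}$-radius shrinks like $\|\widehat{\mu}-\mu^{*}\|_{P,2}$, so it is of the same order or smaller.

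I expect the main obstacle to be term (a) and the nonsmooth increment part. There one needs the localization, i.e.\ the shrinking variance of the perturbation classes in $\bs\lambda$ and $e$. Without it, the crude entropy bound produces $\sqrt{v_{e,m}/m}$ and $\sqrt{v_{\mu,m}/m}$ terms that are only $o(m^{-1/4})$, which is too large for the stated rate. I would first try the peeling argument used in Lemma~\ref{lem:convergence of reg at any beta}, with radius $\|\hat{\bs\lambda}\|+\|\widehat{e}-e^{*}\|_{P,2}$. If that fails, the fallback is to absorb these terms using the $\sqrt{\mathrm{VC}(\Pi)/m}$ slack, at the cost of changing the constant $C$.
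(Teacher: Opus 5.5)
Your route is viable, but it is organized differently from the paper's. The paper never isolates the score at $\beta^{*}(\pi)$. For every $\beta$ in the $\delta$-ball it compares $Q_m'(\beta;\pi)$ with an AIPW-type criterion $\widetilde{Q}_m'(\beta;\widehat e,\widehat\mu,\pi)$, whose regression adjustment is $\widehat\mu_{I,1t}(\cdot;\widehat{\beta}^{\mathrm{init}}_I(\pi))$, i.e.\ exactly the functions in the calibration constraint. The constraint turns $Q_m'-\widetilde{Q}_m'$ into averages of $(\widehat w_{I,i}(\pi)-1)$ times inverse-weighted residuals $\mathcal{L}_1'(Y_i-\beta)-\widehat\mu_{I,1T_i}(\bs X_i)$, and these carry a factor $\|\widehat{\bs\lambda}_I(\pi)\|$. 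The remainder splits into three pieces: an empirical process at $(e^*,\mu^*)$ with only $\mathrm{VC}(\Pi)$ entropy; a nuisance-perturbation process over $\mathcal{M}_4(t)$, whose $L_2$-radius is $\lesssim\rho_{e,m}+\rho_{\mu,m}+t$; and a doubly robust bias $\lesssim\|e-e^*\|_{P,2}(L_\mu\delta+\|\mu-\mu^*\|_{P,2})$. The $\delta$-dependence enters only through $L_\mu\delta$. Your score-plus-increment split uses calibration only at $\beta^*(\pi)$, matching the heuristic of the weight-calibration subsection. Your weighted increment is also simpler: $|\rho'(\widehat{\bs\lambda}^{\top}\bs A_i)-1|\lesssim\|\widehat{\bs\lambda}\|$ pointwise, and increments of the monotone $\mathcal{L}_1'$ share a sign, so only a fixed-index class enters. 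The price is an extra term: the unweighted increment evaluated at $\widehat e$.

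That term is where your write-up fails as stated. Treating it ``as in Steps 2.1--2.2'' of Lemma~\ref{lem:convergence of initial beta} reproduces the unlocalized bound $C\sqrt{\mathrm{VC}(\Pi)/m}+C\sqrt{v_{e,m}/m}+t$, which is exactly why that lemma's rate contains $\rho_{e,m}$. Your fallback of absorbing $\sqrt{v_{e,m}/m}$ into $C\sqrt{\mathrm{VC}(\Pi)/m}$ cannot work. $\mathrm{VC}(\Pi)$ may stay bounded, while $\sqrt{v_{e,m}/m}$ is of order $m^{-s_e/(2s_e+d)}$ up to logarithms, far above $m^{-1/2}$.

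Peeling is unnecessary, because the relevant radii are bounded by deterministic sequences on high-probability events. Write $\mathrm{IPW}_e:=\pi T/e+(1-\pi)(1-T)/(1-e)$ and split $\mathrm{IPW}_{\widehat e}=\mathrm{IPW}_{e^*}+(\mathrm{IPW}_{\widehat e}-\mathrm{IPW}_{e^*})$. The first piece gives a $\mathrm{VC}(\Pi)$-type class. On the event $\|\widehat e-e^*\|_{P,2}\le\rho_{e,m}+t$ of Lemma~\ref{lem:converge of propensity}, the second ranges over a class of $L_2$-radius $\lesssim\rho_{e,m}+t$. Then \citet[Lemma 6.2]{chernozhukov2018Double} plus bounded differences gives $\lesssim a_m(\Pi)^2+Ct$, as with $\mathcal{M}_4(t)$ in the paper. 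Alternatively, localize the variance in $\delta'$ and use $v_{e,m}/m=o(m^{-1/2})$. The same localization, with radius $\|\bs\lambda\|+\|e-e^*\|_{P,2}$ around the weight $\mathrm{IPW}_{e^*}$, handles your term (a).

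Two smaller corrections:
\begin{enumerate}
\item Subtracting the $j=1$ constraint from term (b) also leaves the average $m^{-1}\sum_{i}\{(1-\pi)\mu^*_{10}+\pi\mu^*_{11}\}(\bs X_i;\bs\beta^*(\pi))$. Its mean is $Q'(\beta^*(\pi);\pi,e^*)=0$, and it contributes $C\sqrt{\mathrm{VC}(\Pi)/m}+t$.
\item Your bound $C\|\widehat e-e^*\|_{P,2}\,\delta'$ follows from Cauchy--Schwarz and the $L_\mu$-Lipschitz condition in Assumption~\ref{assu:regularity on mu}(i). It does not follow from Assumption~\ref{assu:regularity-conditions-on-L}(ii), which concerns only the $e^*$-weighted objective.
\end{enumerate}
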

\begin{proof}
Without loss of generality, we assume $I=\{1,\ldots,m\}$. In this
proof, we write $\mathcal{F}_{\mathrm{DNN},e}$ and $\mathcal{F}_{\mathrm{DNN},\mu}$
for $\text{logistic}\circ\left\{ \mathcal{F}_{\mathrm{DNN}}(\mathcal{H}_{e},\mathcal{D}_{e})\bigcap\{f:\left\Vert f\right\Vert _{\infty}\le M\}\right\} $
and $\mathcal{F}_{\mathrm{DNN}}(\mathcal{H}_{\mu},\mathcal{D}_{\mu})\bigcap\{f:\left\Vert f\right\Vert _{\infty}\le M\}$,
respectively. For notational simplicity, we present the argument for one
coordinate. Applying the same argument to $j=1,\ldots,p$ and taking a
finite union bound yields the displayed vector-norm result, since $p$ is
fixed; the constants below absorb $p$. Let
\begin{align*}
Q(\beta;\pi)&=\e\left[\left\{ \frac{\pi(\bs X)T}{e^{*}(\bs X)}+\frac{\left(1-\pi(\bs X)\right)\left(1-T\right)}{1-e^{*}(\bs X)}\right\} \mathcal{L}_{1}(Y-\beta)\right],\\
Q_{m}(\beta;\pi)&=\frac{1}{m}\sum_{i\in I}\widehat{w}_{I,i}(\pi)\Bigl\{\frac{\pi(\bs X_{i})T_{i}}{\widehat{e}_{I}(\bs X_{i})}+\frac{(1-\pi(\bs X_{i}))(1-T_{i})}{1-\widehat{e}_{I}(\bs X_{i})}\Bigr\}\mathcal{L}_{1}(Y_{i}-\beta).
\end{align*}
Then $\widehat{\beta}_{I}(\pi)=\underset{\beta\in\R}{\arg\min}\ Q_{m}(\beta;\pi)$
and $\beta^{*}(\pi)=\underset{\beta\in\R}{\arg\min}\ Q(\beta;\pi)$.
Since $Q_{m}(\beta;\pi)$ is convex in $\beta$, we apply Lemma \ref{lem:Nearness of the argmins of the convex functions}
to bound $\widehat{\beta}_{I}(\pi)-\beta^{*}(\pi)$. We decompose
the proof into two steps and write $v_{e,m}:=\mathcal{H}_{e}^{2}\mathcal{D}_{e}^{2}\log(\mathcal{H}_{e}\mathcal{D}_{e})$, $v_{\mu,m}:=\mathcal{H}_{\mu}^{2}\mathcal{D}_{\mu}^{2}\log(\mathcal{H}_{\mu}\mathcal{D}_{\mu})$, and $a_{m}(\Pi):=\sqrt{\mathrm{VC}(\Pi)/m}+\rho_{e,m}+\rho_{\mu,m}$.
By Assumption~\ref{assu:Overlap assumption and smoothness} and Assumption~\ref{assu:regularity-assumptions-on-nuisance}(ii), we have $\sqrt{v_{e,m}/m}=o(m^{-1/4})$ and $\sqrt{v_{\mu,m}/m}=o(m^{-1/4})$. Enlarging $\rho_{e,m}$ and $\rho_{\mu,m}$ if necessary, we may assume $\sqrt{v_{e,m}/m}\leq\rho_{e,m}$ and $\sqrt{v_{\mu,m}/m}\leq\rho_{\mu,m}$ for all large $m$.

\textbf{Step 1. (Developing a lower bound for $h(\delta;\pi)$).}
For all $\delta\in(0,\min\{(3\underline{Q}^{\prime\prime})/(4Q_{lip}^{\prime\prime}),c_{0}\})$, let $h(\delta;\pi)=\inf_{|\beta-\beta^{*}(\pi)|=\delta}Q(\beta;\pi)-Q(\beta^{*}(\pi);\pi)$.
By the first-order condition for $\beta^{*}(\pi)$, the mean value theorem, and Assumption~\ref{assu:regularity-conditions-on-L}(ii), we have
\begin{equation}
\inf_{\pi\in\Pi_{\infty}}h(\delta;\pi)\geq\frac{\underline{Q}^{\prime\prime}}{4}\delta^{2}\ \text{ for all }0<\delta\leq\min\left\{\frac{3\underline{Q}^{\prime\prime}}{4Q_{lip}^{\prime\prime}},c_{0}\right\}.\label{eq:debiased-beta-lower-bound}
\end{equation}

\textbf{Step 2. (Developing an upper bound for $\Delta(\delta;\pi)$).}
For any $\delta\in(0,\min\{(3\underline{Q}^{\prime\prime})/(4Q_{lip}^{\prime\prime}),c_{0}\})$,
we let
\[
\Delta(\delta;\pi):=\sup_{\left|\beta-\beta^{*}(\pi)\right|=\delta}\left|Q_{m}(\beta;\pi)-Q_{m}(\beta^{*}(\pi);\pi)-\left\{ Q(\beta;\pi)-Q(\beta^{*}(\pi);\pi)\right\} \right|.
\]
Let
\begin{align*}
Q_{m}^{\prime}(\beta;\pi)&:=\frac{1}{m}\sum_{i\in I}\widehat{w}_{I,i}(\pi)\Bigl\{\frac{\pi(\bs X_{i})T_{i}}{\widehat{e}_{I}(\bs X_{i})}+\frac{(1-\pi(\bs X_{i}))(1-T_{i})}{1-\widehat{e}_{I}(\bs X_{i})}\Bigr\}\mathcal{L}_{1}^{\prime}(Y_{i}-\beta),\\
Q^{\prime}(\beta;\pi)&:=\e\left[\left\{ \frac{\pi(\bs X)T}{e^{*}(\bs X)}+\frac{\left(1-\pi(\bs X)\right)\left(1-T\right)}{1-e^{*}(\bs X)}\right\} \mathcal{L}_{1}^{\prime}(Y-\beta)\right].
\end{align*}
By Assumption \ref{assu:regularity-conditions-on-L},
we have $-\int_{\beta^{*}(\pi)}^{\beta}Q_{m}^{\prime}(\widetilde{\beta};\pi)d\widetilde{\beta}=Q_{m}(\beta;\pi)-Q_{m}(\beta^{*}(\pi);\pi)$, and analogously for $Q^{\prime}$,
and thus
\begin{align}
\Delta(\delta;\pi)&=\sup_{\left|\beta-\beta^{*}(\pi)\right|=\delta}\left|-\int_{\beta^{*}(\pi)}^{\beta}\left\{ Q_{m}^{\prime}(\widetilde{\beta};\pi)-Q^{\prime}(\widetilde{\beta};\pi)\right\} d\widetilde{\beta}\right|\nonumber\\
&\leq\delta\sup_{\left|\beta-\beta^{*}(\pi)\right|\leq\delta}\left|Q_{m}^{\prime}(\beta;\pi)-Q^{\prime}(\beta;\pi)\right|=\delta\widetilde{\Delta}(\delta;\pi),\label{eq:proof-of-beta-bound-for-Delta-1}
\end{align}
where we have let $\widetilde{\Delta}(\delta;\pi):=\sup_{\left|\beta-\beta^{*}(\pi)\right|\leq\delta}\left|Q_{m}^{\prime}(\beta;\pi)-Q^{\prime}(\beta;\pi)\right|$.
Now, it suffices to bound $\widetilde{\Delta}(\delta;\pi)$. We let
\begin{align*}
 & \widetilde{Q}_{m}^{\prime}(\beta;e,\mu,\pi):=\frac{1}{m}\sum_{i\in I}\left\{ \frac{\pi(\bs X_{i})T_{i}}{e(\bs X_{i})}\left\{ \mathcal{L}_{1}^{\prime}(Y_{i}-\beta)-\mu_{11}(\bs X_{i})\right\}+\pi(\bs X_{i})\mu_{11}(\bs X_{i})\right\}\\
 & \quad+\frac{1}{m}\sum_{i\in I}\Biggl\{ \frac{(1-\pi(\bs X_{i}))(1-T_{i})}{1-e(\bs X_{i})}\left\{ \mathcal{L}_{1}^{\prime}(Y_{i}-\beta)-\mu_{10}(\bs X_{i})\right\} +(1-\pi(\bs X_{i}))\mu_{10}(\bs X_{i})\Biggr\},\\
 & Q^{\prime}(\beta;e,\mu,\pi)=\e\left[\frac{\pi(\bs X)T}{e(\bs X)}\left\{ \mathcal{L}_{1}^{\prime}(Y-\beta)-\mu_{11}(\bs X)\right\}\right]+\e\left[\pi(\bs X)\mu_{11}(\bs X)\right]\\
 & \quad+\e\left[\frac{(1-\pi(\bs X))(1-T)}{1-e(\bs X)}\left\{ \mathcal{L}_{1}^{\prime}(Y-\beta)-\mu_{10}(\bs X)\right\}\right]+\e\left[(1-\pi(\bs X))\mu_{10}(\bs X)\right].
\end{align*}
Write $\widehat{e}\assign\widehat{e}_{I}$. In the following decomposition, $\widehat{\mu}(\bs X)$ denotes the stacked vector formed from $\widehat{\mu}_{I,10}(\bs X;\widehat{\beta}_{I}^{\mathrm{init}}(\pi))$ and $\widehat{\mu}_{I,11}(\bs X;\widehat{\beta}_{I}^{\mathrm{init}}(\pi))$, and $\mu^{*}(\bs X)$ is defined analogously with $\mu_{10}^{*}(\bs X;\beta^{*}(\pi))$ and $\mu_{11}^{*}(\bs X;\beta^{*}(\pi))$. Then $\widetilde{\Delta}(\delta;\pi)$ can be decomposed into
\begin{align*}
& \sup_{\pi\in\Pi}\widetilde{\Delta}(\delta;\pi)\leq\sup_{\pi\in\Pi}\sup_{\left|\beta-\beta^{*}(\pi)\right|\leq\delta}\left|Q_{m}^{\prime}(\beta;\pi)-\widetilde{Q}_{m}^{\prime}(\beta;\widehat{e},\widehat{\mu},\pi)\right|\\
&\quad+\sup_{\pi\in\Pi}\sup_{\left|\beta-\beta^{*}(\pi)\right|\leq\delta}\left|\widetilde{Q}_{m}^{\prime}(\beta;\widehat{e},\widehat{\mu},\pi)-Q^{\prime}(\beta;\widehat{e},\widehat{\mu},\pi)\right|\\
&\quad+\sup_{\pi\in\Pi}\sup_{\left|\beta-\beta^{*}(\pi)\right|\leq\delta}\left|Q^{\prime}(\beta;\widehat{e},\widehat{\mu},\pi)-Q^{\prime}(\beta;e^{*},\mu^{*},\pi)\right|,
\end{align*}
where we have used the fact that $Q^{\prime}(\beta;e^{*},\mu^{*},\pi)=Q^{\prime}(\beta;\pi)$.
We bound these three terms one by one.

\textbf{Step 2.1: Bound $\sup_{\pi\in\Pi}\sup_{\left|\beta-\beta^{*}(\pi)\right|\leq\delta}\left|Q_{m}^{\prime}(\beta;\pi)-\widetilde{Q}_{m}^{\prime}(\beta;\widehat{e},\widehat{\mu},\pi)\right|$.}
Note that by the definition of $\widehat{w}_{I,i}(\pi)$ we have
\begin{align*}
 & Q_{m}^{\prime}(\beta;\pi)-\widetilde{Q}_{m}^{\prime}(\beta;\widehat{e},\widehat{\mu},\pi)\\
= & \frac{1}{m}\sum_{i\in I}\widehat{w}_{I,i}(\pi)\Bigl\{\frac{\pi(\bs X_{i})T_{i}}{\widehat{e}(\bs X_{i})}+\frac{(1-\pi(\bs X_{i}))(1-T_{i})}{1-\widehat{e}(\bs X_{i})}\Bigr\}\mathcal{L}_{1}^{\prime}(Y_{i}-\beta)\\
 & \qquad-\frac{1}{m}\sum_{i\in I}\left\{ \frac{\pi(\bs X_{i})T_{i}}{\widehat{e}(\bs X_{i})}\left\{ \mathcal{L}_{1}^{\prime}(Y_{i}-\beta)-\widehat{\mu}_{11}(\bs X_{i})\right\} +\pi(\bs X_{i})\widehat{\mu}_{11}(\bs X_{i})\right\} \\
 & \qquad-\frac{1}{m}\sum_{i\in I}\left\{ \frac{(1-\pi(\bs X_{i}))(1-T_{i})}{1-\widehat{e}(\bs X_{i})}\left\{ \mathcal{L}_{1}^{\prime}(Y_{i}-\beta)-\widehat{\mu}_{10}(\bs X_{i})\right\} +(1-\pi(\bs X_{i}))\widehat{\mu}_{10}(\bs X_{i})\right\} \\
= & \underbrace{\frac{1}{m}\sum_{i\in I}\left\{ \widehat{w}_{I,i}(\pi)-1\right\} \left\{ \frac{\pi(\bs X_{i})T_{i}}{\widehat{e}(\bs X_{i})}\left\{ \mathcal{L}_{1}^{\prime}(Y_{i}-\beta)-\widehat{\mu}_{11}(\bs X_{i})\right\} \right\} }_{\mathcal{Q}_{1}}\\
 & \qquad+\underbrace{\frac{1}{m}\sum_{i\in I}\left\{ \widehat{w}_{I,i}(\pi)-1\right\} \left\{ \frac{(1-\pi(\bs X_{i}))(1-T_{i})}{1-\widehat{e}(\bs X_{i})}\left\{ \mathcal{L}_{1}^{\prime}(Y_{i}-\beta)-\widehat{\mu}_{10}(\bs X_{i})\right\} \right\} }_{\mathcal{Q}_{2}}.
\end{align*}
By (\ref{eq:expression for wi}) and Taylor's expansion, we have
\begin{align}
\left\Vert \mathcal{Q}_{1}\right\Vert  & =\left\Vert \frac{1}{m}\sum_{i\in I}\left\{ \rho^{\prime}(\hat{\bs{\lambda}}_{I}(\pi)^{\trans}\bs A_{i})-1\right\} \left\{ \frac{\pi(\bs X_{i})T_{i}}{\widehat{e}(\bs X_{i})}\left\{ \mathcal{L}_{1}^{\prime}(Y_{i}-\beta)-\widehat{\mu}_{11}(\bs X_{i})\right\} \right\} \right\Vert \nonumber \\
 & =\left\Vert \hat{\bs{\lambda}}_{I}(\pi)^{\trans}\frac{1}{m}\sum_{i\in I}\rho^{\prime\prime}(\widetilde{\bs{\lambda}}_{i}(\pi)^{\trans}\bs A_{i})\bs A_{i}\left\{ \frac{\pi(\bs X_{i})T_{i}}{\widehat{e}(\bs X_{i})}\left\{ \mathcal{L}_{1}^{\prime}(Y_{i}-\beta)-\widehat{\mu}_{11}(\bs X_{i})\right\} \right\} \right\Vert \nonumber\\
 & \leq\sup_{\pi\in\Pi}\left\Vert \hat{\bs{\lambda}}_{I}(\pi)\right\Vert \sup_{f\in\mathcal{M}_{3}}\left|P_{m}f(\bs Z)\right|,\label{eq:Q1 decomposition}
\end{align}
w.p. at least $1-2c_{1}\exp(-c_{2}mt^{2})$ for all $0<t<c_{3}$
and large $m$, where the last step follows from
Lemma~\ref{lem:convergence rate for lambda}, Lemma~\ref{lem:converge of regression},
and the membership of the nuisance estimators in the corresponding bounded
DNN classes, and $\widetilde{\bs{\lambda}}_{i}(\pi)$ lies on the line segment between $\bs 0$ and $\hat{\bs{\lambda}}_{I}(\pi)$ for each $i$, where
\begin{align*}
  & \mathcal{M}_{3}:=\biggl\{\bs Z\mapsto\rho^{\prime\prime}(\bs{\lambda}^{\trans}\bs A(T,\bs X;e,\bs{\xi},\pi))\bs{\alpha}^{\trans}\bs A(T,\bs X;e,\bs{\xi},\pi)\\
  & \left.\ \times\left\{ \frac{\pi(\bs X)T}{e(\bs X)}\left\{ \mathcal{L}_{1}^{\prime}(Y-\beta)-\mu_{11}(\bs X)\right\} \right\} :\left\Vert \bs{\lambda}\right\Vert \leq Ca_{m}(\Pi)+t,\bs{\alpha}\in\mathbb{S}^{p},\right.\\
  & \left.\ e\in\mathcal{F}_{\mathrm{DNN},e},\bs{\xi}\in\mathcal{F}_{\mathrm{DNN},\mu}^{2p+2},\mu\in\mathcal{F}_{\mathrm{DNN},\mu}^{2},\left|\beta-\beta^{*}(\pi)\right|\leq\delta,\left\Vert \mu-\mu^{*}\right\Vert _{P,2}\leq C_{\mu}\rho_{\mu,m}+t,\pi\in\Pi\biggr\}\right.
 \end{align*}
is a function class with envelope $C$, and $C_{\mu}>0$ is a constant.
By the bounded difference inequality, we have
\begin{equation}
P\left(\sup_{f\in\mathcal{M}_{3}}\left|\left\{ P_{m}-P\right\} f(\bs Z)\right|-\e\left[\sup_{f\in\mathcal{M}_{3}}\left|\left\{ P_{m}-P\right\} f(\bs Z)\right|\right]>t\right)\leq c_{1}\exp(-c_{2}mt^{2})\label{eq:bounded diff ineq for M3}
\end{equation}
for all $t\geq0$. For the entropy bound, restrict to $0<t<c_{3}$ and large $m$. In this range, for the entropy objective $D(w)=w\log w-w$, $\rho^{\prime\prime}(v)=-\exp(-v)$ is Lipschitz on the bounded range of $\bs{\lambda}^{\trans}\bs A$ considered here. Lemma~\ref{lem:transformed uniform entropy}, the fixed dimensions of $\bs{\lambda}$ and $\bs{\alpha}$, the monotonicity of $\mathcal{L}_{1}^{\prime}$, and the entropy bounds for $\Pi$, $\mathcal{F}_{\mathrm{DNN},e}$, and $\mathcal{F}_{\mathrm{DNN},\mu}$ yield
\begin{equation}
\sup_{Q}\log N\left(C\epsilon,\mathcal{M}_{3},\left\Vert \cdot\right\Vert _{Q,2}\right)\leq C\left\{ \mathrm{VC}(\Pi)+v_{e,m}+v_{\mu,m}\right\}\log\left(a/\epsilon\right)\label{eq:entropy for M3}
\end{equation}
for all $0<\epsilon<1$, where $a>1$ is a constant. Then it
follows from \citet[Lemma 6.2]{chernozhukov2018Double} that
  \[
\e\left[\sup_{f\in\mathcal{M}_{3}}\left|\left\{ P_{m}-P\right\} f(\bs Z)\right|\right]\leq C\sqrt{\frac{\mathrm{VC}(\Pi)}{m}}+C\sqrt{\frac{v_{e,m}}{m}}+C\sqrt{\frac{v_{\mu,m}}{m}}.
\]
This, combined with (\ref{eq:bounded diff ineq for M3}) gives that
  \[
P\left(\sup_{f\in\mathcal{M}_{3}}\left|\left\{ P_{m}-P\right\} f(\bs Z)\right|\geq C\sqrt{\frac{\mathrm{VC}(\Pi)}{m}}+C\sqrt{\frac{v_{e,m}}{m}}+C\sqrt{\frac{v_{\mu,m}}{m}}+t\right)\leq c_{1}\exp(-c_{2}mt^{2})
\]
for all $t\geq0$. Furthermore, by the boundedness of the functions
in $\mathcal{F}_{\mathrm{DNN},\mu}$ and $\mathcal{F}_{\mathrm{DNN},e}$
and (\ref{eq:second derivative is bounded-1}) we have
\begin{align*}
  &\sup_{f\in\mathcal{M}_{3}}\left|Pf(\bs Z)\right|  \lesssim\sup_{\pi\in\Pi}\sup_{\left|\beta-\beta^{*}(\pi)\right|\leq\delta}\sup_{\left\Vert \mu-\mu^{*}\right\Vert _{P,2}\leq C_{\mu}\rho_{\mu,m}+t}\left|\e\left[\mu_{11}^{*}(\bs X;\beta)-\mu_{11}(\bs X)\right]\right|\\
 & \leq\sup_{\pi\in\Pi}\sup_{\left|\beta-\beta^{*}(\pi)\right|\leq\delta}\sup_{\left\Vert \mu-\mu^{*}\right\Vert _{P,2}\leq C_{\mu}\rho_{\mu,m}+t}\left\Vert \mu_{11}^{*}(\bs X;\beta)-\mu_{11}(\bs X)\right\Vert _{P,2}\\
 & \lesssim\sup_{\pi\in\Pi}\sup_{\left|\beta-\beta^{*}(\pi)\right|\leq\delta}\left\Vert \mu_{11}^{*}(\bs X;\beta)-\mu_{11}^{*}(\bs X;\beta^{*}(\pi))\right\Vert _{P,2}+C_{\mu}\rho_{\mu,m}+t\leq L_{\mu}\delta+C_{\mu}\rho_{\mu,m}+t,
\end{align*}
where the last step follows from Assumption~\ref{assu:regularity on mu}.
Therefore, combining the above two displays gives that
\(\sup_{f\in\mathcal{M}_{3}}\left|P_{m}f(\bs Z)\right|\leq C\sqrt{\mathrm{VC}(\Pi)/m}+C\rho_{\mu,m}+C\delta+Ct\)
w.p. at least $1-c_{1}\exp(-c_{2}mt^{2})$ for all $0<t<c_{3}$
and large $m$. Recalling (\ref{eq:Q1 decomposition}),
it follows from Lemma \ref{lem:convergence rate for lambda} and the same argument that
  \begin{align*}
\sup_{\pi\in\Pi}\sup_{\left|\beta-\beta^{*}(\pi)\right|\leq\delta}\left\Vert \mathcal{Q}_{1}\right\Vert &\leq C\left\{ a_{m}(\Pi)+t\right\} \left\{ \delta+a_{m}(\Pi)+t\right\},\\
\sup_{\pi\in\Pi}\sup_{\left|\beta-\beta^{*}(\pi)\right|\leq\delta}\left\Vert \mathcal{Q}_{2}\right\Vert &\leq C\left\{ a_{m}(\Pi)+t\right\} \left\{ \delta+a_{m}(\Pi)+t\right\},
\end{align*}
each w.p. at least $1-3c_{1}\exp(-c_{2}mt^{2})$ for all $0<t<c_{3}$
and large $m$. As a result,
  \begin{align*}
\sup_{\pi\in\Pi}\sup_{\left|\beta-\beta^{*}(\pi)\right|\leq\delta}\left|Q_{m}^{\prime}(\beta;\pi)-\widetilde{Q}_{m}^{\prime}(\beta;\widehat{e},\widehat{\mu},\pi)\right| & \leq\sup_{\pi\in\Pi}\sup_{\left|\beta-\beta^{*}(\pi)\right|\leq\delta}\left\Vert \mathcal{Q}_{1}\right\Vert +\sup_{\pi\in\Pi}\sup_{\left|\beta-\beta^{*}(\pi)\right|\leq\delta}\left\Vert \mathcal{Q}_{2}\right\Vert \\
 & \leq C\left\{ a_{m}(\Pi)+t\right\} \left\{ \delta+a_{m}(\Pi)+t\right\}
\end{align*}
w.p. at least $1-6c_{1}\exp(-c_{2}mt^{2})$ for all $0<t<c_{3}$
and large $m$.

\textbf{Step 2.2: Bound $\sup_{\pi\in\Pi}\sup_{\left|\beta-\beta^{*}(\pi)\right|\leq\delta}\left|\widetilde{Q}_{m}^{\prime}(\beta;\widehat{e},\widehat{\mu},\pi)-Q^{\prime}(\beta;\widehat{e},\widehat{\mu},\pi)\right|$.}
We have the following decomposition:
\begin{align*}
 & \sup_{\pi\in\Pi}\sup_{\left|\beta-\beta^{*}(\pi)\right|\leq\delta}\left|\widetilde{Q}_{m}^{\prime}(\beta;\widehat{e},\widehat{\mu},\pi)-Q^{\prime}(\beta;\widehat{e},\widehat{\mu},\pi)\right|\\
\leq & \underbrace{\sup_{\pi\in\Pi}\sup_{\left|\beta-\beta^{*}(\pi)\right|\leq\delta}\left|\widetilde{Q}_{m}^{\prime}(\beta;\widehat{e},\widehat{\mu},\pi)-Q^{\prime}(\beta;\widehat{e},\widehat{\mu},\pi)-\left\{ \widetilde{Q}_{m}^{\prime}(\beta;e^{*},\mu^{*},\pi)-Q^{\prime}(\beta;e^{*},\mu^{*},\pi)\right\} \right|}_{\mathcal{Q}_{3}}\\
 & \qquad+\underbrace{\sup_{\pi\in\Pi}\sup_{\left|\beta-\beta^{*}(\pi)\right|\leq\delta}\left|\widetilde{Q}_{m}^{\prime}(\beta;e^{*},\mu^{*},\pi)-Q^{\prime}(\beta;e^{*},\mu^{*},\pi)\right|}_{\mathcal{Q}_{4}}.
\end{align*}
We analyze $\mathcal{Q}_{3}$ and $\mathcal{Q}_{4}$ one by one. By
Lemma~\ref{lem:converge of propensity}, Lemma~\ref{lem:converge of regression},
and the membership of the nuisance estimators in the corresponding bounded
DNN classes,
we have
\begin{equation}
\mathcal{Q}_{3}\leq\sup_{f\in\mathcal{M}_{4}(t)}\left|\left\{ P_{m}-P\right\} f(\bs Z)\right|,\label{eq:Q3-M4-bound}
\end{equation}
w.p. at least $1-2c_{1}\exp(-c_{2}mt^{2})$ for all $0<t<c_{3}$
and large $m$, where
\begin{align*}
\mathcal{M}_{4}(t) & :=\biggl\{\bs Z\mapsto\psi(\bs Z;\beta,e,\mu,\pi)-\psi(\bs Z;\beta,e^{*},\mu^{*},\pi):\left|\beta-\beta^{*}(\pi)\right|\leq\delta,e\in\mathcal{F}_{\mathrm{DNN},e},\\
 & \qquad\mu\in\mathcal{F}_{\mathrm{DNN},\mu}^{2},\left\Vert e-e^{*}\right\Vert _{P,2}\leq\rho_{e,m}+t,\left\Vert \mu-\mu^{*}\right\Vert _{P,2}\leq C_{\mu}\rho_{\mu,m}+t,\pi\in\Pi\biggr\}
\end{align*}
is a function class with envelope $C$ and $\psi(\bs Z;\beta,e,\mu,\pi)$
is defined as
\begin{align*}
\psi(\bs Z;\beta,e,\mu,\pi) & :=\frac{\pi(\bs X)T}{e(\bs X)}\left\{ \mathcal{L}_{1}^{\prime}(Y-\beta)-\mu_{11}(\bs X)\right\} +\pi(\bs X)\mu_{11}(\bs X)\\
 & \qquad+\frac{(1-\pi(\bs X))(1-T)}{1-e(\bs X)}\left\{ \mathcal{L}_{1}^{\prime}(Y-\beta)-\mu_{10}(\bs X)\right\} +(1-\pi(\bs X))\mu_{10}(\bs X).
\end{align*}
By the bounded difference inequality, we have
\begin{equation}
P\left(\sup_{f\in\mathcal{M}_{4}(t)}\left|\left\{ P_{m}-P\right\} f(\bs Z)\right|-\e\left[\sup_{f\in\mathcal{M}_{4}(t)}\left|\left\{ P_{m}-P\right\} f(\bs Z)\right|\right]\geq t\right)\leq c_{1}\exp(-c_{2}mt^{2})\label{eq:bounded diff ineq for M4}
\end{equation}
for all $0<t<c_{3}$ and large $m$. Applying Lemma~\ref{lem:transformed uniform entropy} to the two summands in $\psi(\cdot;\beta,e,\mu,\pi)-\psi(\cdot;\beta,e^{*},\mu^{*},\pi)$, using the monotonicity of $\mathcal{L}_{1}^{\prime}$ and noting that the local $L_{2}$ restrictions only form subclasses, gives, uniformly in $t$,
  \[
\sup_{Q}\log N\left(C\epsilon,\mathcal{M}_{4}(t),\left\Vert \cdot\right\Vert _{Q,2}\right)\leq C\left\{ \mathrm{VC}(\Pi)+v_{e,m}+v_{\mu,m}\right\}\log\left(a/\epsilon\right)
\]
for all $0<\epsilon<1$, where $a>1$ is a constant. Note that
by the boundedness of the functions in $\mathcal{F}_{\mathrm{DNN},e}$
and $\mathcal{F}_{\mathrm{DNN},\mu}$, \(\left|\psi(\bs Z;\beta,e,\mu,\pi)-\psi(\bs Z;\beta,e^{*},\mu^{*},\pi)\right|\lesssim\left\Vert \mu(\bs X)-\mu^{*}(\bs X)\right\Vert +\left|e(\bs X)-e^{*}(\bs X)\right|\) almost surely. Then we have
\begin{align*}
\sigma^{2} & :=\sup_{f\in\mathcal{M}_{4}(t)}\e\left[f(\bs Z)^{2}\right]\lesssim\sup_{\left\Vert \mu-\mu^{*}\right\Vert _{P,2}\leq C_{\mu}\rho_{\mu,m}+t}\e\left[\left\Vert \mu-\mu^{*}\right\Vert ^{2}\right]+\sup_{\left\Vert e-e^{*}\right\Vert _{P,2}\leq \rho_{e,m}+t}\e\left[\left|e-e^{*}\right|^{2}\right]\\
 & \lesssim\left(C_{\mu}\rho_{\mu,m}+t\right)^{2}+\left(\rho_{e,m}+t\right)^{2}\leq\left(C_{\mu}\rho_{\mu,m}+\rho_{e,m}+2t\right)^{2}.
\end{align*}
Applying \citet[Lemma 6.2]{chernozhukov2018Double} gives that
\begin{align*}
\e\left[\sup_{f\in\mathcal{M}_{4}(t)}\left|\left\{ P_{m}-P\right\} f(\bs Z)\right|\right]
&\leq C\left\{ \sqrt{\frac{\mathrm{VC}(\Pi)}{m}}+\sqrt{\frac{v_{e,m}}{m}}+\sqrt{\frac{v_{\mu,m}}{m}}\right\}\left\{ C_{\mu}\rho_{\mu,m}+\rho_{e,m}+2t\right\}\\
&\leq Ca_{m}(\Pi)^{2}+Ct^{2}.
\end{align*}
This, combined with (\ref{eq:Q3-M4-bound}) and (\ref{eq:bounded diff ineq for M4})
yields that
\(\mathcal{Q}_{3}\leq\sup_{f\in\mathcal{M}_{4}(t)}\left|\left\{ P_{m}-P\right\} f(\bs Z)\right|\leq Ca_{m}(\Pi)^{2}+Ct\)
w.p. at least $1-3c_{1}\exp(-c_{2}mt^{2})$ for all $0<t<c_{3}$
and large $m$.

Let $\mathcal{M}_{5}:=\{\bs Z\mapsto\psi(\bs Z;\beta,e^{*},\mu^{*},\pi):\left|\beta-\beta^{*}(\pi)\right|\leq\delta,\pi\in\Pi\}$
be a function class with envelope $C$. By the VC-subgraph property of $\Pi$, the monotonicity of $\mathcal{L}_{1}^{\prime}$, overlap, the one-dimensional variation of $\mu_{1t}^{*}(\cdot;\beta)$ controlled by Assumption~\ref{assu:regularity on mu}, and Lemma~\ref{lem:transformed uniform entropy}, \(\sup_{Q}\log N(C\epsilon,\mathcal{M}_{5},\left\Vert \cdot\right\Vert _{Q,2})\lesssim\mathrm{VC}(\Pi)\log(a/\epsilon)\) for all \(0<\epsilon<1\), where $a>1$ is a constant. Applying the bounded difference
inequality and \citet[Lemma 6.2]{chernozhukov2018Double} we have
  \[
\mathcal{Q}_{4}\leq\sup_{f\in\mathcal{M}_{5}}\left|\left\{ P_{m}-P\right\} f(\bs Z)\right|\leq\e\left[\sup_{f\in\mathcal{M}_{5}}\left|\left\{ P_{m}-P\right\} f(\bs Z)\right|\right]+t\leq C\sqrt{\frac{\mathrm{VC}(\Pi)}{m}}+t
\]
w.p. at least $1-c_{1}\exp(-c_{2}mt^{2})$ for all $0<t<c_{3}$
and large $m$.

Aggregating the results for $\mathcal{Q}_{3}$ and $\mathcal{Q}_{4}$
we have
  \[
\sup_{\pi\in\Pi}\sup_{\left|\beta-\beta^{*}(\pi)\right|\leq\delta}\left|\widetilde{Q}_{m}^{\prime}(\beta;\widehat{e},\widehat{\mu},\pi)-Q^{\prime}(\beta;\widehat{e},\widehat{\mu},\pi)\right|\leq C\sqrt{\frac{\mathrm{VC}(\Pi)}{m}}+Ca_{m}(\Pi)^{2}+Ct
\]
w.p. at least $1-4c_{1}\exp(-c_{2}mt^{2})$ for all $0<t<c_{3}$
and large $m$.

\textbf{Step 2.3: Bound $\sup_{\pi\in\Pi}\sup_{\left|\beta-\beta^{*}(\pi)\right|\leq\delta}\left|Q^{\prime}(\beta;\widehat{e},\widehat{\mu},\pi)-Q^{\prime}(\beta;e^{*},\mu^{*},\pi)\right|$.}
We let
\begin{align*}
\psi_{1}(\bs Z;\beta,e,\mu,\pi)&:=\frac{\pi(\bs X)T}{e(\bs X)}\left\{ \mathcal{L}_{1}^{\prime}(Y-\beta)-\mu_{11}(\bs X)\right\} +\pi(\bs X)\mu_{11}(\bs X),\\
\psi_{0}(\bs Z;\beta,e,\mu,\pi)&:=\frac{(1-\pi(\bs X))(1-T)}{1-e(\bs X)}\left\{ \mathcal{L}_{1}^{\prime}(Y-\beta)-\mu_{10}(\bs X)\right\} +(1-\pi(\bs X))\mu_{10}(\bs X).
\end{align*}
By Lemma \ref{lem:converge of propensity} and Lemma \ref{lem:converge of regression},
we have
\begin{align}
 & \sup_{\pi\in\Pi}\sup_{\left|\beta-\beta^{*}(\pi)\right|\leq\delta}\left|Q^{\prime}(\beta;\widehat{e},\widehat{\mu},\pi)-Q^{\prime}(\beta;e^{*},\mu^{*},\pi)\right|\nonumber\\
\leq & \underbrace{\sup_{\pi\in\Pi}\sup_{\left|\beta-\beta^{*}(\pi)\right|\leq\delta}\sup_{\left\Vert e-e^{*}\right\Vert _{P,2}\leq\rho_{e,m}+t}\sup_{\left\Vert \mu-\mu^{*}\right\Vert _{P,2}\leq C_{\mu}\rho_{\mu,m}+t}\left|\e\psi_{1}(\bs Z;\beta,e,\mu,\pi)-\e\psi_{1}(\bs Z;\beta,e^{*},\mu^{*},\pi)\right|}_{\mathcal{Q}_{5}}\nonumber \\
 & +\underbrace{\sup_{\pi\in\Pi}\sup_{\left|\beta-\beta^{*}(\pi)\right|\leq\delta}\sup_{\left\Vert e-e^{*}\right\Vert _{P,2}\leq\rho_{e,m}+t}\sup_{\left\Vert \mu-\mu^{*}\right\Vert _{P,2}\leq C_{\mu}\rho_{\mu,m}+t}\left|\e\psi_{0}(\bs Z;\beta,e,\mu,\pi)-\e\psi_{0}(\bs Z;\beta,e^{*},\mu^{*},\pi)\right|}_{\mathcal{Q}_{6}}\label{eq:Q5-Q6-decomposition}
\end{align}
w.p. at least $1-2c_{1}\exp(-c_{2}mt^{2})$ for all $0<t<c_{3}$
and large $m$.

We analyze $\mathcal{Q}_{5}$. We let $\Delta_{e}(\bs X)=e(\bs X)^{-1}-e^{*}(\bs X)^{-1}$ and $\Delta_{\mu}=\mu_{11}(\bs X)-\mu_{11}^{*}(\bs X)$, and
\begin{align*}
f(r) & =r\e\left[\pi(\bs X)e^{*}(\bs X)\Delta_{e}\mathcal{L}_{1}^{\prime}(Y^{*}(1)-\beta)\right]\\
 & \qquad-r\e\left[\pi(\bs X)e^{*}(\bs X)\left\{ \Delta_{e}\mu_{11}^{*}(\bs X)+\frac{\Delta_{\mu}}{e^{*}(\bs X)}+r\Delta_{e}\Delta_{\mu}\right\} \right]+r\e\left[\pi(\bs X)\Delta_{\mu}\right]
\end{align*}
for $r\in[0,1]$. Then
\(\e\psi_{1}(\bs Z;\beta,e,\mu,\pi)-\e\psi_{1}(\bs Z;\beta,e^{*},\mu^{*},\pi)=f(1)-f(0)=\int_{0}^{1}f^{\prime}(r)dr\), where
\begin{align*}
f^{\prime}(r) & =\e\left[\pi(\bs X)e^{*}(\bs X)\Delta_{e}\mathcal{L}_{1}^{\prime}(Y^{*}(1)-\beta)\right]-\e\left[\pi(\bs X)e^{*}(\bs X)\Delta_{e}\mu_{11}^{*}(\bs X)+\pi(\bs X)e^{*}(\bs X)\frac{\Delta_{\mu}}{e^{*}(\bs X)}\right]\\
 & \qquad-2\e\left[\pi(\bs X)e^{*}(\bs X)\Delta_{e}\Delta_{\mu}\right]r+\e\left[\pi(\bs X)\Delta_{\mu}\right]\\
 & =\e\left[\pi(\bs X)e^{*}(\bs X)\Delta_{e}\left\{ \mathcal{L}_{1}^{\prime}(Y^{*}(1)-\beta)-\mu_{11}^{*}(\bs X)\right\} \right]-2\e\left[\pi(\bs X)e^{*}(\bs X)\Delta_{e}\Delta_{\mu}\right]r\\
 & =\e\left[\pi(\bs X)e^{*}(\bs X)\Delta_{e}\left\{ \mu_{11}^{*}(\bs X;\beta)-\mu_{11}^{*}(\bs X;\beta^{*}(\pi))\right\} \right]-2\e\left[\pi(\bs X)e^{*}(\bs X)\Delta_{e}\Delta_{\mu}\right]r.
\end{align*}
Note that by Assumptions \ref{assu:Overlap assumption and smoothness},
\ref{assu:regularity-conditions-on-L} and \ref{assu:regularity on mu}
we have
\begin{align*}
\sup_{r\in[0,1]}\left|f^{\prime}(r)\right| & \lesssim\left|\e\left[\Delta_{e}\left\{ \mu_{11}^{*}(\bs X;\beta)-\mu_{11}^{*}(\bs X;\beta^{*}(\pi))\right\} \right]\right|+\left|\e\left[\Delta_{e}\Delta_{\mu}\right]\right|\\
 & \lesssim\left\Vert e-e^{*}\right\Vert _{P,2}\left\Vert \mu_{11}^{*}(\bs X;\beta)-\mu_{11}^{*}(\bs X;\beta^{*}(\pi))\right\Vert _{P,2}+\left\Vert e-e^{*}\right\Vert _{P,2}\left\Vert \mu-\mu^{*}\right\Vert _{P,2}\\
 & \lesssim\left\{ \rho_{e,m}+t\right\} L_{\mu}\delta+\left\{ \rho_{e,m}+t\right\} \left\{ C_{\mu}\rho_{\mu,m}+t\right\} \leq Ca_{m}(\Pi)\delta+Ca_{m}(\Pi)^{2}+Ct
\end{align*}
for all $0<t<c_{3}$ provided that $\left|\beta-\beta^{*}(\pi)\right|\leq\delta$,
$\left\Vert e-e^{*}\right\Vert _{P,2}\leq\rho_{e,m}+t$ and $\left\Vert \mu-\mu^{*}\right\Vert _{P,2}\leq C_{\mu}\rho_{\mu,m}+t$.
Therefore, we can obtain that
  \begin{align*}
\mathcal{Q}_{5} & \leq\sup_{\pi\in\Pi}\sup_{\left|\beta-\beta^{*}(\pi)\right|\leq\delta}\sup_{\left\Vert e-e^{*}\right\Vert _{P,2}\leq\rho_{e,m}+t}\sup_{\left\Vert \mu-\mu^{*}\right\Vert _{P,2}\leq C_{\mu}\rho_{\mu,m}+t}\int_{0}^{1}\left|f^{\prime}(r)\right|dr\leq Ca_{m}(\Pi)\delta+Ca_{m}(\Pi)^{2}+Ct
\end{align*}
for all $0<t<c_{3}$ and large $m$. Similarly, we can
also derive that $\mathcal{Q}_{6}\leq Ca_{m}(\Pi)\delta+Ca_{m}(\Pi)^{2}+Ct$
for all $0<t<c_{3}$ and large $m$. Recalling (\ref{eq:Q5-Q6-decomposition}),
we have
  \[
\sup_{\pi\in\Pi}\sup_{\left|\beta-\beta^{*}(\pi)\right|\leq\delta}\left|Q^{\prime}(\beta;\widehat{e},\widehat{\mu},\pi)-Q^{\prime}(\beta;e^{*},\mu^{*},\pi)\right|\leq Ca_{m}(\Pi)\delta+Ca_{m}(\Pi)^{2}+Ct
\]
w.p. at least $1-2c_{1}\exp(-c_{2}mt^{2})$ for all $0<t<c_{3}$
and large $m$.

\textbf{Step 2.4: Aggregating the results.} Combining the results
obtained in Steps 2.1--2.3 we have
  \[
\sup_{\pi\in\Pi}\widetilde{\Delta}(\delta;\pi)\leq C\sqrt{\frac{\mathrm{VC}(\Pi)}{m}}+Ca_{m}(\Pi)^{2}+Ca_{m}(\Pi)\delta+Ct
\]
w.p. at least $1-c_{1}\exp(-c_{2}mt^{2})$ for all $0<t<c_{3}$,
$0<\delta<\min\{(3\underline{Q}^{\prime\prime})/(4Q_{lip}^{\prime\prime}),c_{0}\}$
and large $m$. Because $\mathrm{VC}(\Pi)\geq1$ and $\mathrm{VC}(\Pi)/m\to0$, we have \(\mathrm{VC}(\Pi)/m\leq\sqrt{\mathrm{VC}(\Pi)/m}\) and \(m^{-1/2}\leq\sqrt{\mathrm{VC}(\Pi)/m}\).
Since $\rho_{e,m}=o(m^{-1/4})$ and $\rho_{\mu,m}=o(m^{-1/4})$, the second-order terms satisfy
\begin{align*}
  &\rho_{e,m}\sqrt{\mathrm{VC}(\Pi)/m}=o(\sqrt{\mathrm{VC}(\Pi)/m}),\quad \rho_{\mu,m}\sqrt{\mathrm{VC}(\Pi)/m}=o(\sqrt{\mathrm{VC}(\Pi)/m}),\\  &\rho_{e,m}\rho_{\mu,m}=o(m^{-1/2})=O(\sqrt{\mathrm{VC}(\Pi)/m}),\quad \rho_{e,m}^{2}+\rho_{\mu,m}^{2}=o(m^{-1/2})=O(\sqrt{\mathrm{VC}(\Pi)/m}).
\end{align*}
Hence $a_{m}(\Pi)^{2}\leq C\sqrt{\mathrm{VC}(\Pi)/m}$ for some constant $C>0$. Also, after reducing $c_{3}$ if necessary, we restrict to $0<\delta<c_{3}\le1$. Then the inequality $xy\leq \eta y+x^{2}/(4\eta)$ for $x,y\ge0$ and $y\le1$ gives
\(Ca_{m}(\Pi)\delta\leq\frac{\underline{Q}^{\prime\prime}}{8}\delta+C_{\eta}a_{m}(\Pi)^{2}\leq\frac{\underline{Q}^{\prime\prime}}{8}\delta+C\sqrt{\mathrm{VC}(\Pi)/m}\).
Therefore, for all large $m$ and all $0<t<c_{3}$,
$\sup_{\pi\in\Pi}\widetilde{\Delta}(\delta;\pi)\leq C_{0}\sqrt{\mathrm{VC}(\Pi)/m}+\underline{Q}^{\prime\prime}\delta/8+C_{1}t$.
Taking $t=\delta$ and choosing a constant $C_{2}>0$ large enough, we obtain that for
$\delta^{\prime}:=C_{2}\delta+C_{2}\sqrt{\mathrm{VC}(\Pi)/m}$, it holds w.p. at least $1-c_{1}\exp(-c_{2}m\delta^{2})$ that
$\sup_{\pi\in\Pi}\widetilde{\Delta}(\delta^{\prime};\pi)\leq\underline{Q}^{\prime\prime}\delta^{\prime}/4$.

Recall the definitions of $h(\delta;\pi)$ and $\Delta(\delta;\pi)$.
Applying Lemma \ref{lem:Nearness of the argmins of the convex functions},
we have
\[
\begin{aligned}
P\left(\sup_{\pi\in\Pi}\left|\widehat{\beta}_{I}(\pi)-\beta^{*}(\pi)\right|>\delta^{\prime}\right)&\leq P\left(\Delta(\delta^{\prime};\pi)\geq h(\delta^{\prime};\pi),\ \exists\pi\in\Pi\right)\\
&\leq P\left(\sup_{\pi\in\Pi}\widetilde{\Delta}(\delta^{\prime};\pi)\geq\frac{\underline{Q}^{\prime\prime}}{4}\delta^{\prime}\right)\leq c_{1}\exp(-c_{2}m\delta^{2})
\end{aligned}
\]
for all $0<\delta<c_{3}$ and large $m$,
where the second inequality follows from (\ref{eq:debiased-beta-lower-bound})
and (\ref{eq:proof-of-beta-bound-for-Delta-1}). Since $\delta^{\prime}\leq C\delta+C\sqrt{\mathrm{VC}(\Pi)/m}$,
the stated result follows.
 \end{proof}
\begin{lem}
[Convergence rate of $\widehat{W}_{I}(\pi)$ with unknown propensity score]\label{lem:Convergence rate of WH_hat-unknown propensity} Suppose
that Assumptions \ref{assu:unconfounded}, \ref{assu:Overlap assumption and smoothness},
\ref{assu:regularity-conditions-on-L},
\ref{assu:regularity-assumptions-on-U},
\ref{assu:smoothness for mu}, \ref{assu:regularity on mu}, and
\ref{assu:regularity-assumptions-on-nuisance}
hold and $I\subset\left\{ 1,\ldots,N\right\} $ is an index set with
$\left|I\right|=m$. For any policy class $\Pi\subset\Pi_{\infty}$
with VC dimension $\mathrm{VC}(\Pi)\geq1$, if $\mathrm{VC}(\Pi)/m\to0$
as $m\to\infty$, then $\widehat{W}_{I}(\pi)$ defined in (\ref{eq:def-of-W-hat-pi-I-unknown-propensity})
satisfies
\[
P\left(\sup_{\pi\in\Pi}\left|\widehat{W}_{I}(\pi)-W(\pi)\right|\geq\delta+C_{1}\sqrt{\frac{\mathrm{VC}(\Pi)}{m}}\right)\leq C_{2}\exp(-C_{3}m\delta^{2})
\]
for all $\delta\geq0$ and $m>C_{4}$, where $C_{1},\ldots,C_{4}>0$
are finite constants independent of $\delta$, $m$, $I$, and $\Pi$.
\end{lem}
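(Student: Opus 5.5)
We follow the same template as Lemma~\ref{lem:Convergence rate of beta_H-unknown propensity}. We write $\widehat{W}_{I}(\pi)-W(\pi)$ as a sum of a ``$\beta$-plug-in'' part and a ``weighting'' part. Concretely, we set
\[
\widehat{\Psi}_{m}(\bs{\beta};\pi):=\frac{1}{m}\sum_{i\in I}\widehat{w}_{I,i}(\pi)\Bigl\{\frac{\pi(\bs X_{i})T_{i}}{\widehat{e}(\bs X_{i})}+\frac{(1-\pi(\bs X_{i}))(1-T_{i})}{1-\widehat{e}(\bs X_{i})}\Bigr\}U(Y_{i},\bs X_{i},\bs{\beta}),
\]
so that $\widehat{W}_{I}(\pi)=\widehat{\Psi}_{m}(\widehat{\bs{\beta}}_{I}(\pi);\pi)$ and $W(\pi)=\Psi(\bs{\beta}^{*}(\pi);\pi)$. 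This gives the decomposition
\[
\widehat{W}_{I}(\pi)-W(\pi)=\{\widehat{\Psi}_{m}-\Psi\}(\widehat{\bs{\beta}}_{I}(\pi);\pi)+\{\Psi(\widehat{\bs{\beta}}_{I}(\pi);\pi)-\Psi(\bs{\beta}^{*}(\pi);\pi)\}.
\]
The second term is bounded by $\overline{\Psi}^{\prime}\sup_{\pi}\|\widehat{\bs{\beta}}_{I}(\pi)-\bs{\beta}^{*}(\pi)\|$, using Assumption~\ref{assu:regularity-assumptions-on-U}(iii). We apply this on the event of Lemma~\ref{lem:Convergence rate of beta_H-unknown propensity}, where $\widehat{\bs{\beta}}_{I}(\pi)$ lies within $c_{0}$ of $\bs{\beta}^{*}(\pi)$; on that event this term is at most $C\sqrt{\mathrm{VC}(\Pi)/m}+C\delta$ with probability at least $1-c_{1}\exp(-c_{2}m\delta^{2})$.

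For the first term we bound $\sup_{\pi\in\Pi}\sup_{\|\bs{\beta}-\bs{\beta}^{*}(\pi)\|\le c_{0}}|\widehat{\Psi}_{m}(\bs{\beta};\pi)-\Psi(\bs{\beta};\pi)|$, repeating Steps 2.1--2.3 of the previous proof with $\mathcal{L}_{1}^{\prime}(Y-\beta)$ replaced by $U(Y,\bs X,\bs{\beta})$ and $\mu_{1t}$ replaced by $\mu_{0t}$. In Step 2.1 the weight deviation $\widehat{w}_{I,i}-1=\rho^{\prime\prime}(\cdot)\widehat{\bs{\lambda}}^{\trans}\bs A_{i}$ multiplies residuals $U-\widehat{\mu}_{0t}$. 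Lemma~\ref{lem:convergence rate for lambda} then gives a bound of order $(a_{m}(\Pi)+t)(\|\bs{\beta}-\bs{\beta}^{*}\|+a_{m}(\Pi)+t)$; the population mean of the residual part is controlled via Assumption~\ref{assu:regularity on mu}(i) and Lemma~\ref{lem:converge of regression}. In Step 2.2 the centered empirical process of the doubly robust score is handled in two pieces. The first piece is the score at $(\widehat{e},\widehat{\mu})$ minus the score at $(e^{*},\mu^{*})$; its variance is of order $a_{m}^{2}$, so \citet[Lemma 6.2]{chernozhukov2018Double} gives $Ca_{m}(\Pi)^{2}+Ct$. The second piece is at the truth, where the entropy is $\mathrm{VC}(\Pi)+\nu$ by Assumption~\ref{assu:regularity-assumptions-on-U}(ii) and Lemma~\ref{lem:transformed uniform entropy}; it contributes $C\sqrt{\mathrm{VC}(\Pi)/m}+t$ after the bounded difference inequality. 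In Step 2.3 the population bias has the same product form as in the bound for $f^{\prime}(r)$, namely $\|e-e^{*}\|(\|\mu_{0\cdot}^{*}(\cdot;\bs{\beta})-\mu_{0\cdot}^{*}(\cdot;\bs{\beta}^{*})\|+\|\mu-\mu^{*}\|)$.

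The main obstacle is that the $\widehat{\bs{\beta}}_{I}$-dependent cross terms in Steps 2.1 and 2.3 carry $\|\bs{\beta}-\bs{\beta}^{*}\|$ rather than $\delta$. We plan to evaluate them on the event $\|\widehat{\bs{\beta}}_{I}(\pi)-\bs{\beta}^{*}(\pi)\|\le C\sqrt{\mathrm{VC}(\Pi)/m}+\delta$ instead of on the full $c_{0}$-ball, so that the cross terms become $a_{m}(\Pi)(\sqrt{\mathrm{VC}(\Pi)/m}+\delta)$. We then use $a_{m}(\Pi)^{2}\le C\sqrt{\mathrm{VC}(\Pi)/m}$, exactly as in Step 2.4 of the previous proof, because $\rho_{e,m},\rho_{\mu,m}=o(m^{-1/4})$. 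Collecting all pieces yields $\sup_{\pi}|\widehat{W}_{I}(\pi)-W(\pi)|\le C_{1}\sqrt{\mathrm{VC}(\Pi)/m}+C\delta$ with probability at least $1-C_{2}\exp(-C_{3}m\delta^{2})$ for $0<\delta<c_{3}$. To reach all $\delta\ge0$, we handle $\delta\ge c_{3}$ through the almost-sure boundedness of $\widehat{W}_{I}$ and $W$. For $\delta$ larger than the a.s. bound the probability is zero; for intermediate $\delta$, we enlarge $C_{2}$ and shrink $C_{3}$ so that the tail bound at $c_{3}$ already covers it.
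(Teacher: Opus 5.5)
Your proposal is correct and follows essentially the same route as the paper. The paper's four-term split $\mathcal{Q}_1$--$\mathcal{Q}_4$ is exactly your decomposition:
\begin{itemize}
\item $\{\widehat{\Psi}_m-\Psi\}(\widehat{\bs{\beta}}_I(\pi);\pi)$ is bounded through the calibration-residual, centered-empirical-process, and product-bias analogues of Steps 2.1--2.3, with $U$ in place of $\mathcal{L}_1^{\prime}$. This is localized to $\|\bs{\beta}-\bs{\beta}^{*}(\pi)\|\le\delta+C\sqrt{\mathrm{VC}(\Pi)/m}$ via the $\widehat{\bs{\beta}}_I$ lemma, which is precisely the localization you identify as necessary.
\item The remaining term is handled by the $\overline{\Psi}^{\prime}$-Lipschitz bound.
\item The proof finishes by absorbing $a_m(\Pi)^2$ into $\sqrt{\mathrm{VC}(\Pi)/m}$ and extending to all $\delta$ through almost-sure boundedness.
\end{itemize}
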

\begin{proof}
Without loss of generality, we assume $I=\{1,\ldots,m\}$. For any
$\bs{\beta}\in\R^{p}$, we let
\begin{align*}
\psi_{1}(\bs Z;\bs{\beta},e,\mu,\pi)&:=\frac{\pi(\bs X)T}{e(\bs X)}\left\{ U(Y,\bs X,\bs{\beta})-\mu_{01}(\bs X)\right\} +\pi(\bs X)\mu_{01}(\bs X),\\
\psi_{0}(\bs Z;\bs{\beta},e,\mu,\pi)&:=\frac{(1-\pi(\bs X))(1-T)}{1-e(\bs X)}\left\{ U(Y,\bs X,\bs{\beta})-\mu_{00}(\bs X)\right\} +(1-\pi(\bs X))\mu_{00}(\bs X),\\
\widehat{\Psi}_{m}(\bs{\beta};\pi)&:=\frac{1}{m}\sum_{i\in I}\widehat{w}_{I,i}(\pi)\left\{ \frac{\pi(\bs X_{i})T_{i}}{\widehat{e}_{I}(\bs X_{i})}+\frac{(1-\pi(\bs X_{i}))(1-T_{i})}{1-\widehat{e}_{I}(\bs X_{i})}\right\} U(Y_{i},\bs X_{i},\bs{\beta}),\\
\widetilde{\Psi}_{m}(\bs{\beta};e,\mu,\pi)&=\frac{1}{m}\sum_{i\in I}\psi_{1}(\bs Z_{i};\bs{\beta},e,\mu,\pi)+\frac{1}{m}\sum_{i\in I}\psi_{0}(\bs Z_{i};\bs{\beta},e,\mu,\pi).
\end{align*}
Set $\Psi(\bs{\beta};e,\mu,\pi):=\e[\psi_{1}(\bs Z;\bs{\beta},e,\mu,\pi)+\psi_{0}(\bs Z;\bs{\beta},e,\mu,\pi)]$.
Let $a_{m}(\Pi):=\sqrt{\mathrm{VC}(\Pi)/m}+\rho_{e,m}+\rho_{\mu,m}$.
Write $\widehat{e}\assign\widehat{e}_{I}$. In this decomposition, $\widehat{\mu}(\bs X)$ denotes the stacked vector formed from $\widehat{\mu}_{I,00}(\bs X;\widehat{\bs{\beta}}_{I}^{\mathrm{init}}(\pi))$ and $\widehat{\mu}_{I,01}(\bs X;\widehat{\bs{\beta}}_{I}^{\mathrm{init}}(\pi))$, and $\mu^{*}(\bs X)$ is defined analogously with $\mu_{00}^{*}(\bs X;\bs{\beta}^{*}(\pi))$ and $\mu_{01}^{*}(\bs X;\bs{\beta}^{*}(\pi))$. By Lemma \ref{lem:Convergence rate of beta_H-unknown propensity}, we can decompose $\widehat{W}_{I}(\pi)-W(\pi)$ as
 \begin{align*}
 & \sup_{\pi\in\Pi}\left|\widehat{W}_{I}(\pi)-W(\pi)\right|=\sup_{\pi\in\Pi}\left|\widehat{\Psi}_{m}(\widehat{\bs{\beta}}_{I}(\pi);\pi)-\Psi(\bs{\beta}^{*}(\pi);e^{*},\mu^{*},\pi)\right|\\
\leq & \sup_{\pi\in\Pi}\left|\widehat{\Psi}_{m}(\widehat{\bs{\beta}}_{I}(\pi);\pi)-\widetilde{\Psi}_{m}(\widehat{\bs{\beta}}_{I}(\pi);\widehat{e},\widehat{\mu},\pi)\right|+\sup_{\pi\in\Pi}\left|\widetilde{\Psi}_{m}(\widehat{\bs{\beta}}_{I}(\pi);\widehat{e},\widehat{\mu},\pi)-\Psi(\widehat{\bs{\beta}}_{I}(\pi);\widehat{e},\widehat{\mu},\pi)\right|\\
 & +\sup_{\pi\in\Pi}\left|\Psi(\widehat{\bs{\beta}}_{I}(\pi);\widehat{e},\widehat{\mu},\pi)-\Psi(\widehat{\bs{\beta}}_{I}(\pi);e^{*},\mu^{*},\pi)\right|+\sup_{\pi\in\Pi}\left|\Psi(\widehat{\bs{\beta}}_{I}(\pi);e^{*},\mu^{*},\pi)-\Psi(\bs{\beta}^{*}(\pi);e^{*},\mu^{*},\pi)\right|\\
\leq & \underbrace{\sup_{\pi\in\Pi}\sup_{\left\Vert \bs{\beta}-\bs{\beta}^{*}(\pi)\right\Vert \leq\delta+C\sqrt{\mathrm{VC}(\Pi)/m}}\left|\widehat{\Psi}_{m}(\bs{\beta};\pi)-\widetilde{\Psi}_{m}(\bs{\beta};\widehat{e},\widehat{\mu},\pi)\right|}_{\mathcal{Q}_{1}}\\
 & +\underbrace{\sup_{\pi\in\Pi}\sup_{\left\Vert \bs{\beta}-\bs{\beta}^{*}(\pi)\right\Vert \leq\delta+C\sqrt{\mathrm{VC}(\Pi)/m}}\left|\widetilde{\Psi}_{m}(\bs{\beta};\widehat{e},\widehat{\mu},\pi)-\Psi(\bs{\beta};\widehat{e},\widehat{\mu},\pi)\right|}_{\mathcal{Q}_{2}}\\
 & +\underbrace{\sup_{\pi\in\Pi}\sup_{\left\Vert \bs{\beta}-\bs{\beta}^{*}(\pi)\right\Vert \leq\delta+C\sqrt{\mathrm{VC}(\Pi)/m}}\left|\Psi(\bs{\beta};\widehat{e},\widehat{\mu},\pi)-\Psi(\bs{\beta};e^{*},\mu^{*},\pi)\right|}_{\mathcal{Q}_{3}}\\
 & +\underbrace{\sup_{\pi\in\Pi}\left|\Psi(\widehat{\bs{\beta}}_{I}(\pi);e^{*},\mu^{*},\pi)-\Psi(\bs{\beta}^{*}(\pi);e^{*},\mu^{*},\pi)\right|}_{\mathcal{Q}_{4}}
\end{align*}
 w.p. at least $1-3c_{1}\exp\left(-c_{2}m\delta^{2}\right)$
for all $0<\delta<c_{3}$ and large $m$. The bounds for
$\mathcal{Q}_{1},\mathcal{Q}_{2},\mathcal{Q}_{3}$ follow from the same
localized classes as in Steps 2.1--2.3 in the proof of
Lemma~\ref{lem:Convergence rate of beta_H-unknown propensity}, with
$\mathcal{L}_{1}^{\prime}$ replaced by $U$ and with
$\|\bs{\beta}-\bs{\beta}^{*}(\pi)\|\leq\delta+C\sqrt{\mathrm{VC}(\Pi)/m}$.
Thus,
  \[
\mathcal{Q}_{1}+\mathcal{Q}_{2}+\mathcal{Q}_{3}\leq C\sqrt{\frac{\mathrm{VC}(\Pi)}{m}}+Ca_{m}(\Pi)^{2}+Ca_{m}(\Pi)\left\{\delta+C\sqrt{\frac{\mathrm{VC}(\Pi)}{m}}\right\}+Ct
\]
w.p. at least $1-c_{1}\exp(-c_{2}mt^{2})$ for all $0<t<c_{3}$,
$0<\delta<c_{4}$ and large $m$. By letting $t=\delta$
  and using the same second-order dominance relations verified in the
proof of Lemma~\ref{lem:Convergence rate of beta_H-unknown propensity},
we have \(\mathcal{Q}_{1}+\mathcal{Q}_{2}+\mathcal{Q}_{3}\leq C\sqrt{\mathrm{VC}(\Pi)/m}+\delta\)
w.p. at least $1-c_{1}\exp(-c_{2}m\delta^{2})$ for all $0<\delta<c_{3}$
and large $m$.

Now, we analyze $\mathcal{Q}_{4}$. Note that \(\Psi(\bs{\beta};e^{*},\mu^{*},\pi)=\e[\{\pi(\bs X)T/e^{*}(\bs X)+(1-\pi(\bs X))(1-T)/(1-e^{*}(\bs X))\} U(Y,\bs X,\bs{\beta})]=\Psi(\bs{\beta};\pi)\), where $\Psi(\bs{\beta};\pi)$ is defined in Assumption~\ref{assu:regularity-assumptions-on-U}.
Then $\mathcal{Q}_{4}=\sup_{\pi\in\Pi}\left|\Psi(\widehat{\bs{\beta}}_{I}(\pi);\pi)-\Psi(\bs{\beta}^{*}(\pi);\pi)\right|$.
On the event that $\widehat{\bs{\beta}}_{I}(\pi)$ lies in the $c_{0}$-neighborhood of $\bs{\beta}^{*}(\pi)$ uniformly over $\pi\in\Pi$, the mean value theorem and Assumption~\ref{assu:regularity-assumptions-on-U}(iii) give
$\mathcal{Q}_{4}\leq\overline{\Psi}^{\prime}\sup_{\pi\in\Pi}\Vert \widehat{\bs{\beta}}_{I}(\pi)-\bs{\beta}^{*}(\pi)\Vert$.
Applying Lemma~\ref{lem:Convergence rate of beta_H-unknown propensity}, we obtain
$P(\mathcal{Q}_{4}\geq\delta+C\sqrt{\mathrm{VC}(\Pi)/m})\leq c_{1}\cdot\exp(-c_{2}m\delta^{2})$
for all $0<\delta<c_{3}$ and large $m$.

Combining the results for $\mathcal{Q}_{1}+\mathcal{Q}_{2}+\mathcal{Q}_{3}$
and $\mathcal{Q}_{4}$ we have
  \[
P\left(\sup_{\pi\in\Pi}\left|\widehat{W}_{I}(\pi)-W(\pi)\right|\geq\delta+C\sqrt{\frac{\mathrm{VC}(\Pi)}{m}}\right)\leq C_{2}^{\prime}\exp(-C_{3}^{\prime}m\delta^{2})
\]
for all $0<\delta<C^{\prime}$ and large $m$, where
$C^{\prime},C_{1}^{\prime},\ldots,C_{3}^{\prime}>0$ are constants.
Note that $\sup_{\pi\in\Pi}\left|\widehat{W}_{I}(\pi)-W(\pi)\right|\leq C$
almost surely for some constant $C>0$, so the probability below is zero when $\delta\geq2C$. For $0<\delta<2C$, the preceding bound gives
\[
P\left(\sup_{\pi\in\Pi}\left|\widehat{W}_{I}(\pi)-W(\pi)\right|\geq\delta+C\sqrt{\frac{\mathrm{VC}(\Pi)}{m}}\right)
\leq C_{2}^{\prime}\exp\left\{ -C_{3}^{\prime}\frac{\left(C^{\prime}\right)^{2}}{4C^{2}}m\delta^{2}\right\}.
\]
Together with the zero-probability case, this bound holds for all $\delta>0$ and large $m$. This completes the proof.
\end{proof}

\subsection{Proof of Theorem \ref{thm:oracle holdout unknown propensity}}

We first verify Assumption~\ref{ass:high-level} for the welfare function
$W(\pi)$. For any $\pi_{1},\pi_{2}\in\Pi_{\infty}$, write
$d_{\Delta}:=P(\pi_{1}(\bs X)\neq\pi_{2}(\bs X))$ and $w_{\pi}:=\pi(\bs X)T/e^{*}(\bs X)+(1-\pi(\bs X))(1-T)/(1-e^{*}(\bs X))$.
By overlap, $\left|w_{\pi}\right|\le\kappa^{-1}$ and $\left|w_{\pi_{1}}-w_{\pi_{2}}\right|\le\kappa^{-1}1\{\pi_{1}(\bs X)\neq\pi_{2}(\bs X)\}$.
Recall that $Q_{j}(\beta;\pi):=\e[w_{\pi}\mathcal{L}_{j}(Y-\beta)]$.
For $j=1,\ldots,p$, since $\beta^{*}_{j}(\pi_{2})$ minimizes
$Q_{j}(\beta;\pi_{2})$, we have $Q_{j}^{\prime}(\beta^{*}_{j}(\pi_{2});\pi_{2})=0$.
Moreover, for any fixed $\beta$, the derivative of $Q_j(\beta;\pi)$
with respect to $\beta$ depends on $\pi$ only through $w_{\pi}$. Hence,
by the preceding display and Assumption~\ref{assu:regularity-conditions-on-L}(iii),
\begin{equation}
\label{eq:Qprime-policy-distance-bound}
\begin{aligned}
\left|Q_{j}^{\prime}(\beta^{*}_{j}(\pi_{2});\pi_{1})\right|
&=\left|Q_{j}^{\prime}(\beta^{*}_{j}(\pi_{2});\pi_{1})-Q_{j}^{\prime}(\beta^{*}_{j}(\pi_{2});\pi_{2})\right|\\
&\le \e\left[\left|w_{\pi_{1}}-w_{\pi_{2}}\right|\left|\mathcal{L}_{j}^{\prime}(Y-\beta^{*}_{j}(\pi_{2}))\right|\right]\le C_{L}d_{\Delta},
\end{aligned}
\end{equation}
where $C_{L}>0$ is a constant.
Choose $\eta>0$ small enough so that $\eta\le c_{0}/\sqrt{p}$, where
$c_{0}$ is the neighborhood radius in
Assumption~\ref{assu:regularity-conditions-on-L}, and
$Q_{j}^{\prime\prime}(\beta;\pi_{1})\ge\underline{Q}^{\prime\prime}/2$
whenever $|\beta-\beta^{*}_{j}(\pi_{1})|\le\eta$. Such an $\eta$ exists by
Assumption~\ref{assu:regularity-conditions-on-L}(ii).
Let $\delta_{0}:=\min\{1,\underline{Q}^{\prime\prime}\eta/(2C_{L})\}$.
If $d_{\Delta}<\delta_{0}$ and $|\beta^{*}_{j}(\pi_{2})-\beta^{*}_{j}(\pi_{1})|>\eta$, then
consider the point $\bar{\beta}_{j}$ between $\beta^{*}_{j}(\pi_{1})$
and $\beta^{*}_{j}(\pi_{2})$ such that
$|\bar{\beta}_{j}-\beta^{*}_{j}(\pi_{1})|=\eta$. Since
$Q_{j}^{\prime}(\beta^{*}_{j}(\pi_{1});\pi_{1})=0$, the local lower
bound on $Q_{j}^{\prime\prime}(\cdot;\pi_{1})$ gives \(\left|Q_{j}^{\prime}(\bar{\beta}_{j};\pi_{1})\right|=\left|\int_{\beta^{*}_{j}(\pi_{1})}^{\bar{\beta}_{j}}Q_{j}^{\prime\prime}(u;\pi_{1})du\right|\ge\underline{Q}^{\prime\prime}\eta/2\).
If $\beta^{*}_{j}(\pi_{2})>\beta^{*}_{j}(\pi_{1})$, then
$\beta^{*}_{j}(\pi_{2})\ge\bar{\beta}_{j}$ and
$Q_{j}^{\prime}(\bar{\beta}_{j};\pi_{1})\ge\underline{Q}^{\prime\prime}\eta/2$.
Since $Q_{j}^{\prime}(\cdot;\pi_{1})$ is nondecreasing, it follows
that $Q_{j}^{\prime}(\beta^{*}_{j}(\pi_{2});\pi_{1})\ge\underline{Q}^{\prime\prime}\eta/2$.
If instead $\beta^{*}_{j}(\pi_{2})<\beta^{*}_{j}(\pi_{1})$, then
$\beta^{*}_{j}(\pi_{2})\le\bar{\beta}_{j}$ and
$Q_{j}^{\prime}(\bar{\beta}_{j};\pi_{1})\le-\underline{Q}^{\prime\prime}\eta/2$,
so monotonicity gives
$Q_{j}^{\prime}(\beta^{*}_{j}(\pi_{2});\pi_{1})\le-\underline{Q}^{\prime\prime}\eta/2$.
In both cases,
$|Q_{j}^{\prime}(\beta^{*}_{j}(\pi_{2});\pi_{1})|\ge\underline{Q}^{\prime\prime}\eta/2$,
which contradicts (\ref{eq:Qprime-policy-distance-bound}), because
$d_{\Delta}<\delta_{0}\le\underline{Q}^{\prime\prime}\eta/(2C_{L})$ implies
$C_{L}d_{\Delta}<\underline{Q}^{\prime\prime}\eta/2$. Hence $|\beta^{*}_{j}(\pi_{2})-\beta^{*}_{j}(\pi_{1})|\le\eta$
whenever $d_{\Delta}<\delta_{0}$.

Now consider the case $d_{\Delta}<\delta_{0}$.
On the interval between
$\beta^{*}_{j}(\pi_{1})$ and $\beta^{*}_{j}(\pi_{2})$, the local lower
bound $Q_{j}^{\prime\prime}(\beta;\pi_{1})\ge\underline{Q}^{\prime\prime}/2$
therefore applies. By the mean value theorem applied to
$Q_{j}^{\prime}(\cdot;\pi_{1})$, together with
$Q_{j}^{\prime}(\beta^{*}_{j}(\pi_{1});\pi_{1})=0$ and
(\ref{eq:Qprime-policy-distance-bound}), \(\frac{\underline{Q}^{\prime\prime}}{2}|\beta^{*}_{j}(\pi_{2})-\beta^{*}_{j}(\pi_{1})|\le |Q_{j}^{\prime}(\beta^{*}_{j}(\pi_{2});\pi_{1})-Q_{j}^{\prime}(\beta^{*}_{j}(\pi_{1});\pi_{1})|\le C_{L}d_{\Delta}\).
Thus $|\beta^{*}_{j}(\pi_{2})-\beta^{*}_{j}(\pi_{1})|\le C d_{\Delta}$ for $j=1,\ldots,p$, and therefore $\|\bs{\beta}^{*}(\pi_{2})-\bs{\beta}^{*}(\pi_{1})\|\le C d_{\Delta}$.
The choice $\eta\le c_{0}/\sqrt{p}$ also ensures that the line segment
between $\bs{\beta}^{*}(\pi_{1})$ and $\bs{\beta}^{*}(\pi_{2})$ lies
inside the $c_{0}$-neighborhood of $\bs{\beta}^{*}(\pi_{1})$ whenever
$d_{\Delta}<\delta_{0}$.
Using $W(\pi)=\Psi(\bs{\beta}^{*}(\pi);\pi)$, Assumption~\ref{assu:regularity-assumptions-on-U}(iii),
and the boundedness of $U$ in Assumption~\ref{assu:regularity-assumptions-on-U}(i), we obtain
\[
\begin{aligned}
|W(\pi_{1})-W(\pi_{2})|
&\leq |\Psi(\bs{\beta}^{*}(\pi_{1});\pi_{1})-\Psi(\bs{\beta}^{*}(\pi_{2});\pi_{1})|+|\Psi(\bs{\beta}^{*}(\pi_{2});\pi_{1})-\Psi(\bs{\beta}^{*}(\pi_{2});\pi_{2})|\le C d_{\Delta}
\end{aligned}
\]
whenever $d_{\Delta}<\delta_{0}$.

If $d_{\Delta}\ge\delta_{0}$, then
the bound $\left|W(\pi)\right|\le C$ follows from
$|w_{\pi}|\le\kappa^{-1}$ and Assumption~\ref{assu:regularity-assumptions-on-U}(i),
so $|W(\pi_1)-W(\pi_2)|\le 2C\le (2C/\delta_0)d_{\Delta}$. Therefore there
exists a constant $C_{W}>0$ such that $\left|W(\pi_{1})-W(\pi_{2})\right|\le C_{W}P(\pi_{1}(\bs X)\neq\pi_{2}(\bs X))$
for all $\pi_{1},\pi_{2}\in\Pi_{\infty}$. This proves Assumption~\ref{ass:high-level}.

By Lemma~\ref{lem:Convergence rate of WH_hat-unknown propensity}, the
debiased empirical welfare function $\widehat{W}_{I}(\pi)$ satisfies
Assumption~\ref{assu:general ass uniform convergence} for policy classes
with VC dimension at least one. For Assumption~\ref{assu:general assu on hatW},
repeating the same proof for a fixed policy $\pi$ yields the simpler bound
$P(|\widehat{W}_{I}(\pi)-W(\pi)|\geq\delta+C/\sqrt{m})\leq C_{1}\exp(-C_{2}m\delta^{2})$
for all $\delta>0$ and large $m$. Together with the preceding
Lipschitz bound for $W$, this verifies all claims of
Theorem~\ref{thm:oracle holdout unknown propensity}. This completes the proof.

\makeatletter
\def\bibsection{\pdfbookmark[1]{\bibname}{bibliography-section}\section*{\bibname}}
\makeatother
\bibliographystyle{abbrvnat}
\bibliography{r}

\end{document}